\title[Dynamics of 
Apparent Horizon and a Null Comparison Principle]{Dynamics of Apparent Horizon  \\ and \\ a null comparison principle}
\date{\today}
\author{Xinliang An} 
\address{Department of Mathematics, National University of Singapore, Singapore 119076}
\email{matax@nus.edu.sg}
\author{Taoran He}
\address{Department of Mathematics, National University of Singapore, Singapore 119076}
\email{taoran\textunderscore he@u.nus.edu}
\theoremstyle{plain}
\newtheorem{lemma}{Lemma}[section]
\newtheorem{definition}[lemma]{Definition}
\newtheorem{proposition}[lemma]{Proposition}
\newtheorem{theorem}[lemma]{Theorem}
\newtheorem{remark}{Remark}
\numberwithin{equation}{section}
\begin{document}

\newcommand{\ub}{\underline{u}}
\newcommand{\Cb}{\underline{C}}
\newcommand{\Lb}{\underline{L}}
\newcommand{\Lh}{\hat{L}}
\newcommand{\Lbh}{\hat{\Lb}}
\newcommand{\phib}{\underline{\phi}}
\newcommand{\Phib}{\underline{\Phi}}
\newcommand{\Db}{\underline{D}}
\newcommand{\Dh}{\hat{D}}
\newcommand{\Dbh}{\hat{\Db}}
\newcommand{\omb}{\underline{\omega}}
\newcommand{\omh}{\hat{\omega}}
\newcommand{\ombh}{\hat{\omb}}
\newcommand{\Pb}{\underline{P}}
\newcommand{\chib}{\underline{\chi}}
\newcommand{\chih}{\hat{\chi}}
\newcommand{\chibh}{\hat{\chib}}
\newcommand{\kb}{\overline{\kappa}}
\newcommand{\kbb}{\overline{\underline{\kappa}}}
\newcommand{\vsgmb}{\underline{\varsigma}}
\newcommand{\ud}{\underline}
\newcommand{\alb}{\underline{\alpha}}
\newcommand{\zeb}{\underline{\zeta}}
\newcommand{\beb}{\underline{\beta}}
\newcommand{\etb}{\underline{\eta}}
\newcommand{\Mb}{\underline{M}}
\newcommand{\oth}{\hat{\otimes}}

\newcommand{\tR}{\tilde{R}}
\newcommand{\mL}{\mathcal{L}}
\newcommand{\tK}{\widetilde{K}}
\newcommand{\rh}{\hat{r}}
\newcommand{\tauh}{\hat{\tau}}
\newcommand{\prub}{\partial_{\ub}}
\newcommand{\bfD}{\mathbf{D}}
\newcommand{\bfg}{\mathbf{g}}
\newcommand{\bfR}{\mathbf{R}}
\newcommand{\bfK}{\mathbf{K}}

\newcommand{\red}{\textcolor{red}}
\newcommand{\blue}{\textcolor{blue}}
\newcommand{\purple}{\textcolor{purple}}

\def\a {\alpha}
\def\b {\beta}
\def\ab {\alphab}
\def\bb {\betab}
\def\nab {\nabla}
\def\ssnab {\slashed{\nabla}}

\def\ub {\underline{u}}
\def\th {\theta}
\def\Lb {\underline{L}}
\def\Hb {\underline{H}}
\def\chib {\underline{\chi}}
\def\chih {\hat{\chi}}
\def\chibh {\hat{\underline{\chi}}}
\def\omegab {\underline{\omega}}
\def\etab {\underline{\eta}}
\def\betab {\underline{\beta}}
\def\alphab {\underline{\alpha}}
\def\Psib {\underline{\Psi}}
\def\hot{\widehat{\otimes}}
\def\Phib {\underline{\Phi}}
\def\thb {\underline{\theta}}
\def\t {\tilde}
\def\st {\tilde{s}}

\def\d {\delta}
\def\f {\frac}
\def\i {\infty}
\def\l {\bigg(}
\def\r {\bigg)}
\def\S {S_{u,\underline{u}}}
\def\o{\omega}
\def\O{\Omega}
\def\be{\begin{equation}\begin{split}}
\def\en{\end{split}\end{equation}}

\def\od{\omega^{\dagger}}
\def\ombd{\underline{\omega}^{\dagger}}
\def\K{K-\frac{1}{|u|^2}}
\def\ut{\frac{1}{|u|^2}}
\def\Kb{K-\frac{1}{(u+\underline{u})^2}}
\def\M{\mathcal}
\def\p{\psi}

\def\D{\Delta}
\def\T{\Theta}
\def\s{S_{u',\underline{u}'}}
\def\Hu{H_u^{(0,\underline{u})}}
\def\Hbu{\underline{H}_{\underline{u}}^{(u_{\infty},u)}}
\def\ee{(\eta,\underline{\eta})}

\def\at{a^{\f12}}
\def\sigmac{\check{\sigma}}
\def\p{\psi}
\def\q{\underline{\psi}}
\def\ls{\leq}
\def\de{\delta}
\def\ls{\lesssim}
\def\oo{\Omega\mbox{tr}\chib-\frac{2}{u}}
\def\om{\omega}
\def\Om{\Omega}

\renewcommand{\div}{\mbox{div }}
\newcommand{\curl}{\mbox{curl }}
\newcommand{\trchb}{\mbox{tr} \chib}
\def\trch{\mbox{tr}\chi}

\newcommand{\Ls}{{\mathcal L} \mkern-10mu /\,}
\newcommand{\eps}{{\epsilon} \mkern-8mu /\,}

\newcommand{\tr}{\mbox{tr}}

\newcommand{\xib}{\underline{\xi}}
\newcommand{\psib}{\underline{\psi}}
\newcommand{\rhob}{\underline{\rho}}
\newcommand{\thetab}{\underline{\theta}}
\newcommand{\gammab}{\underline{\gamma}}
\newcommand{\nub}{\underline{\nu}}
\newcommand{\lb}{\underline{l}}
\newcommand{\mub}{\underline{\mu}}
\newcommand{\Xib}{\underline{\Xi}}
\newcommand{\Thetab}{\underline{\Theta}}
\newcommand{\Lambdab}{\underline{\Lambda}}
\newcommand{\vphb}{\underline{\varphi}}

\newcommand{\ih}{\hat{i}}
\newcommand{\ui}{u_{\infty}}
\newcommand{\shb}{L^2_{sc}(\underline{H}_{\ub}^{(u_{\infty},u)})}
\newcommand{\sh}{L^2_{sc}(H_{u}^{(0,\ub)})}
\newcommand{\Rb}{\underline{\mathcal{R}}}
\newcommand{\tc}{\widetilde{\tr\chib}}

\newcommand{\tcL}{\widetilde{\mathscr{L}}}

\newcommand{\sRic}{Ric\mkern-19mu /\,\,\,\,}
\newcommand{\sL}{{\cal L}\mkern-10mu /}
\newcommand{\sLh}{\hat{\sL}}
\newcommand{\sg}{g\mkern-9mu /}
\newcommand{\seps}{\epsilon\mkern-8mu /}
\newcommand{\sd}{d\mkern-10mu /}
\newcommand{\sR}{R\mkern-10mu /}
\newcommand{\snab}{\nabla\mkern-13mu /}
\newcommand{\sdiv}{\mbox{div}\mkern-19mu /\,\,\,\,}
\newcommand{\scurl}{\mbox{curl}\mkern-19mu /\,\,\,\,}
\newcommand{\slap}{\mbox{$\triangle  \mkern-13mu / \,$}}
\newcommand{\sGamma}{\Gamma\mkern-10mu /}
\newcommand{\somega}{\omega\mkern-10mu /}
\newcommand{\somb}{\omb\mkern-10mu /}
\newcommand{\spi}{\pi\mkern-10mu /}
\newcommand{\sJ}{J\mkern-10mu /}
\renewcommand{\sp}{p\mkern-9mu /}
\newcommand{\su}{u\mkern-8mu /}

\maketitle

\begin{abstract}
This paper investigates the global dynamics of the apparent horizon. We present an approach to establish its existence and its long-term behaviors. Our apparent horizon is constructed by solving the marginally outer trapped surface (MOTS) along each incoming null hypersurface. Based on the nonlinear hyperbolic estimates established in \cite{K-S} by Klainerman-Szeftel under polarized axial symmetry, we prove that the corresponding apparent horizon is smooth, asymptotically null and converging to the event horizon eventually. To further address the local achronality of the apparent horizon, a new concept, called the \textit{null comparison principle,} is introduced in this paper. For three typical scenarios of gravitational collapse, our null comparison principle is tested and verified, which guarantees that the apparent horizon must be piecewise spacelike or piecewise null. In addition, we also validate and provide new proofs for several physical laws along the apparent horizon.
\end{abstract}

\section{Introduction}
The explorations of trapped surfaces and the trapped region trace back to 1960s, when Penrose established his renowned incompleteness theorem. The boundary of the trapped region is a quasi-local object, known as the apparent horizon. Unlike the event horizon, i.e., the boundary of the causal past of the future null infinity, which requires global information about spacetime, the apparent horizon describes the boundary of the trapped region and demands only local information. This feature holds importance in practical applications, especially for numerical simulations of general relativity.

Over the decades, numerous efforts have been dedicated to unraveling the properties of the apparent horizon. However, the detailed understanding regarding the long-time behavior and the final state of apparent horizon is still lacking. Under spherical symmetry, Williams studied the asymptotic behavior of the apparent horizon in \cite{W}. In this current paper, we initiate the study of the global dynamics of apparent horizon beyond spherical symmetry. Relying on the hyperbolic estimates established in \cite{K-S} by Klainerman-Szeftel, we prove the existence and uniqueness of MOTS along each incoming null hypersurface. These MOTSs collectively form a three-dimensional hypersurface and we further prove it to be smooth, asymptotically null and eventually approaching the event horizon.

In general, a hypersurface being asymptotically null could be timelike at certain points. In this paper, we introduce a new concept, called the \textit{null comparison principle} and we use it to guarantee that the apparent horizon is piecewise spacelike or piecewise null. Furthermore, we present three typical scenarios of gravitational collapse, in which the null comparison principle holds. These scenarios are in the perturbative Schwarzschild spacetimes \cite{K-S}, isotropic and anisotropic gravitational-collapse spacetimes \cite{An-Han} and spacetimes containing naked singularities \cite{An naked singularity}.

Additionally, by virtue of achronal property proved for the apparent horizon, we also investigate the black hole thermodynamics along the apparent horizon. We verify and  offer new  proofs for several physical laws first formulated by Ashtekar and Krishnan in \cite{A-K2, A-K1}.

\subsection{Main Results} 
One of the main objectives of this paper is to study the global existence and the long-term behavior of the apparent horizon with the help of the hyperbolic estimates derived in the striking work of Klainerman-Szeftel \cite{K-S}. There they introduced the novel and powerful generally covariant modulation (GCM) approach for the first time. In the nonlinear setting, they established the asymptotic stability of the Schwarzschild black holes for Einstein vacuum equations under axially symmetric polarized perturbations. They also established hyperbolic estimates beyond the event horizon and provided comprehensive decay estimates for all geometric quantities. We use their hyperbolic estimates in the interior to trace the global dynamics of the apparent horizon. We also want to mention another important work \cite{D-H-R-T} by Dafermos-Holzegel-Rodnianski-Taylor. There they proved co-dimensional 3 asymptotic stability of the Schwarzschild black holes. We remark that our geometric frame in this paper is quite general and it does not depend on symmetry assumption. Since the interior hyperbolic estimates are provided in \cite{K-S} by Klainerman-Szeftel, in below we borrow some of their notations.

\vspace{2mm}
Denoting $\Hb_{\ub}$ to be the incoming null hypersurface with $\ub$ standing for the incoming optical function, relying on the existence theory and the hyperbolic estimates in \cite{K-S}, our first main result of this paper is
\begin{theorem}\label{Main thm 1: existence of AH in K-S}
    With initial data prescribed in \cite{K-S}, the solved Einstein vacuum spacetime contains an  apparent horizon that is asymptotically null and approaches the timelike infinity. Specifically, along each incoming null hypersurface $\Hb_{\ub}$, there exists a unique MOTS $M_{\ub}$, and the collection of $M_{\ub}$ forms a smooth three dimensional apparent horizon $\mathcal{AH}$. Moreover, the induced metric on $\mathcal{AH}$ is asymptotically degenerate and $\mathcal{AH}$ converges to the event horizon eventually.
\end{theorem}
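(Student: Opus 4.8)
The plan is to construct the MOTS along each incoming null hypersurface $\Hb_{\ub}$ by solving a scalar elliptic equation, then to promote the family $\{M_{\ub}\}$ to a smooth hypersurface and extract its asymptotics from the decay estimates of \cite{K-S}. Fix $\ub$. On $\Hb_{\ub}$ the leaves $S_{u,\ub}$ are parametrized by $u$, and a general section is the graph $u = w(\theta)$ over a sphere. Writing $\theta_{(w)} = \tr\chi$ for the outgoing null expansion of the graph, the MOTS condition is $\theta_{(w)} = 0$, which (after using the null structure equations, e.g. the Raychaudhuri/propagation equation for $\tr\chi$ and the change-of-foliation formulas) becomes a second-order, quasilinear elliptic equation for $w$ of the schematic form $\slap w + (\text{lower order in } \snab w, w) = 0$, with coefficients built from $\chih$, $\chibh$, $\tr\chib$, $\somega$, $\etab$ and curvature, all of which are controlled on $\Hb_{\ub}$ by the hyperbolic estimates. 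The first step is therefore to write down this equation precisely and to show existence and uniqueness of a solution $w = w_{\ub}$: existence via a degree-theoretic or sub/supersolution argument (the background Schwarzschild MOTS $u = u_{\mathcal H}(\ub)$ supplies an approximate solution, and the smallness of the perturbation gives barriers on either side), uniqueness via the strong maximum principle applied to the difference of two solutions against the strictly signed zeroth-order term coming from $\partial_u \tr\chi < 0$ (Raychaudhuri with the correct sign in the trapped region). This yields $M_{\ub} = \{u = w_{\ub}(\theta)\} \subset \Hb_{\ub}$ as the unique MOTS.

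The second step is smoothness of $\mathcal{AH} = \bigcup_{\ub} M_{\ub}$ as a three-dimensional hypersurface. Here I would differentiate the defining elliptic equation in the parameter $\ub$: because the zeroth-order coefficient is strictly negative (again Raychaudhuri), the linearized operator at $w_{\ub}$ is invertible on the sphere, so the implicit function theorem in Banach spaces gives $\ub \mapsto w_{\ub}$ as a smooth (indeed as smooth as the metric allows) map into $C^\infty(S^2)$, and elliptic regularity upgrades each $w_{\ub}$ to be smooth on its sphere. Patching these graphs gives that $\mathcal{AH}$ is an embedded smooth hypersurface. The third step is the asymptotic analysis: substitute the $r \to \infty$ (equivalently $\ub \to \infty$ along $\mathcal{AH}$) decay rates from \cite{K-S} into the equation for $w_{\ub}$ and into the formula for the induced metric $\sg|_{\mathcal{AH}}$ and its normal. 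One shows $w_{\ub} \to u_{\mathcal H}$ in the appropriate sense, that the Schwarzschildean part forces the $\mathcal{AH}$-normal to become null in the limit (the outgoing expansion vanishes identically while the incoming expansion stays strictly negative and bounded, so the induced metric degenerates in the $\partial_{\ub}$-direction), and hence that $\mathcal{AH}$ is asymptotically null with induced metric asymptotically degenerate and with $\mathcal{AH} \to$ (the Schwarzschild event horizon $\mathcal{H}^+$) as $\ub \to \infty$, i.e. it reaches timelike infinity $i^+$.

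The main obstacle I expect is twofold. First, setting up the elliptic problem in a way that is \emph{uniform} in $\ub$: one needs the coefficients of the MOTS equation — in particular the strictly negative zeroth-order term — to be controlled with constants independent of $\ub$ all the way out to $i^+$, which is exactly where the \cite{K-S} decay estimates must be invoked with care, and one must check that the Schwarzschild barriers remain valid (no loss of ellipticity, no degeneration of the maximum-principle coefficient) in this limit. Second, proving genuine smoothness of $\mathcal{AH}$ across all of $\ub$ rather than mere Lipschitz regularity: this hinges on the invertibility of the linearization being quantitative and on the $\ub$-dependence of all geometric coefficients being smooth, which in turn requires commuting the hyperbolic estimates of \cite{K-S} with $\partial_{\ub}$ and the angular derivatives — routine in principle but delicate in bookkeeping. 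Once these uniformity and regularity points are secured, the asymptotic statements follow by directly reading off the decay rates, and the convergence to the event horizon is a consequence of both $\mathcal{AH}$ and $\mathcal{H}^+$ limiting onto the same Schwarzschildean null generator at $i^+$.
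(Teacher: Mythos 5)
Your overall architecture matches the paper's: formulate the MOTS as a graph over the sphere inside each $\Hb_{\ub}$, solve a quasilinear elliptic equation, get uniqueness from a maximum principle with a strictly signed zeroth-order coefficient, obtain smoothness of $\mathcal{AH}=\cup_{\ub}M_{\ub}$ from the implicit function theorem via invertibility of the linearization, and read off the asymptotics and the convergence to the event horizon from the decay rates of \cite{K-S} (the paper uses Klainerman--Szeftel's localization $r=2m_{\infty}(1\pm O(\sqrt{\varepsilon_0}\,\ub^{-1-\delta_{dec}}))$ of $\mathcal{H}_+$ exactly as you suggest). Where you genuinely diverge is the existence mechanism: you propose sub/supersolutions or degree theory with Schwarzschild barriers, whereas the paper runs a continuity method in a parameter $\lambda$, for which it must first prove a priori $C^0$, $C^1$ and $C^{2,q}$ bounds (the $C^1$ bound via a modified Bochner formula with the auxiliary weight $h(R)=1+\tfrac{1}{2m_\infty^2}(R-2m_\infty)^2$). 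Your barrier route is plausible here because constant spheres $r=2m_\infty(1\pm C\varepsilon_0\ub^{-1-\delta_{dec}})$ are ordered barriers for the reduced equation, and it would spare you the Bochner-type gradient estimate; on the other hand, the continuity method's a priori estimates are reused later in the paper (for the $\partial_{\ub}R$ estimate and for verifying the null comparison principle), so they are not wasted effort there. Note also a technical point you skip: in \cite{K-S}'s interior coordinates $\partial_r$ is not null (it contains a $\ud{b}\,e_\theta$ component), so the paper first performs a change of angular coordinate to make the graph direction null while keeping the same null frame and hence the same estimates.

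Two justifications in your sketch would not survive as written. First, the strict sign of the zeroth-order coefficient (used for uniqueness and for invertibility of the linearization) is not a consequence of the Raychaudhuri equation or of being "in the trapped region": Raychaudhuri governs $e_4(\trch)$, while what is needed is the transversal derivative $e_3(\trch)$ along $\Hb_{\ub}$, which is controlled by the cross null structure equation and is signed only because the a priori $C^0$ estimate places $R$ within $O(\varepsilon_0\ub^{-1-\delta_{dec}})$ of $2m_\infty$, where the Schwarzschild value of $\partial_r\big(\tfrac{2}{r}(1-\tfrac{2m_\infty}{r})\big)$ dominates; so the sign must be derived from the perturbative estimates, not from general trapped-region monotonicity. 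Second, the inference "outgoing expansion vanishes while the incoming expansion stays strictly negative, so the induced metric degenerates in the $\partial_{\ub}$-direction" is false in general (spacelike dynamical horizons in the short-pulse regime satisfy exactly these expansion conditions with $g'_{\ub\,\ub}$ bounded away from zero). The paper instead proves the quantitative bound $|\partial_{\ub}R|\lesssim \varepsilon_0\ub^{-1-\delta_{dec}}$ by differentiating the MOTS equation in $\ub$, inverting the linearized operator, and estimating $\partial_{\ub}(f^{-1}\trch)$ via the null structure equations, and then combines this with the smallness of the background metric coefficient $g_{\ub\,\ub}$ near $r=2m_\infty$ to conclude $|(g'_{\ub\,\ub},g'_{\theta\ub},g'_{\varphi\ub})|\lesssim\varepsilon_0\ub^{-1-\delta_{dec}}$; you would need to add this quantitative $\partial_{\ub}$-estimate (which your implicit-function-theorem step can indeed produce) to make the asymptotically-null claim rigorous.
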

We remark that our geometric frame in this paper is not confined to the polarized  axial symmetry. It can apply to more general spacetimes as long as the hyperbolic decay estimates in the interior region as in \cite{K-S} are available.
\vspace{3mm}

When an apparent horizon is close to a null hypersurface, in general, it is still hard to draw a conclusion whether the apparent horizon is locally achronal (i.e., spacelike or null) or not. The difficulty stems from deriving the non-negative lower bound for solutions to the quasilinear elliptic equations (see \Cref{Rmk non-negative low bound} in \Cref{Subsubsec: application NCP Sch} for more details). In below, we introduce a new concept, called  the \textit{null comparison principle}. With the help of it, for three typical scenarios of gravitational collapse, we prove that the apparent horizon  must be piecewise spacelike or piecewise null. 

We now consider a $3+1$ dimensional spacetime $\mathcal{M}$ foliated by incoming null hypersurfaces $\Hb_{\ub}$. Let $(r, \theta_1, \theta_2)$ be the coordinate system within $\Hb_{\ub}$ so that $\partial_r$ is along the past-directed incoming null geodesic.  Define 2-sphere $S_{\ub, r}$ to be the level set of $r$ on $\Hb_{\ub}$. With the apparent horizon $\mathcal{AH}$ composed of MOTS $M_{\ub}$ along each $\Hb_{\ub}$,  the explicit definition of the null comparison principle is as below:
\begin{definition}
    Let $M_{\ub}=\{r= R(\theta_1, \theta_2)\}$ be a MOTS along $\Hb_{\ub}$. The null comparison principle with respect to the MOTS $M_{\ub}$ along $\Hb_{\ub}$ states that for any  (smooth) spacelike $2$-surface $\widetilde{\Sigma}=\{r=\tilde{R}(\theta_1, \theta_2) \}$  within $\Hb_{\ub}$ near $M_{\ub}$, if the outgoing null expansion of $\widetilde{\Sigma}$ is non-positive, then it must hold pointwise that $\tilde{R}(\theta_1, \theta_2) \leq R(\theta_1, \theta_2)$ and if the outgoing null expansion of $\widetilde{\Sigma}$ is non-negative, then it yields $\tilde{R}(\theta_1, \theta_2) \ge R(\theta_1, \theta_2)$. 
\end{definition}
\begin{figure}
    \centering

\tikzset{every picture/.style={line width=0.75pt}} 

\begin{tikzpicture}[x=0.75pt,y=0.75pt,yscale=-1,xscale=1]

\draw   (230.56,89.18) .. controls (230.56,78.68) and (264.01,70.16) .. (305.28,70.16) .. controls (346.55,70.16) and (380,78.68) .. (380,89.18) .. controls (380,99.68) and (346.55,108.2) .. (305.28,108.2) .. controls (264.01,108.2) and (230.56,99.68) .. (230.56,89.18) -- cycle ;

\draw    (246.4,44.56) -- (200.44,169.8) ;

\draw    (368.96,45.8) -- (405.6,176.56) ;

\draw    (210.91,141.37) .. controls (232.65,127.6) and (266.57,134.97) .. (307.71,149.26) .. controls (348.86,163.54) and (376.3,155.03) .. (395.43,140.69) ;

\draw  [dash pattern={on 0.84pt off 2.51pt}]  (210.91,141.37) .. controls (235.77,111.48) and (288,112.97) .. (330.29,129.83) .. controls (372.57,146.69) and (392.53,135.76) .. (395.43,140.69) ;

\draw (298.45,91.86) node [anchor=north west][inner sep=0.75pt]  [font=\tiny]  {$M_{\underline{u}}$};

\draw (300.34,158.56) node [anchor=north west][inner sep=0.75pt]  [font=\tiny]  {$\tilde{\Sigma }$};

\draw (394.27,106.49) node [anchor=north west][inner sep=0.75pt]  [font=\tiny]  {$\underline{H}_{\underline{u}}$};

\end{tikzpicture}

    \caption{Null comparison principle for MOTS $M_{\ub}$ along $\Hb_{\ub}$.}
    \label{NCP figure}
\end{figure}

In  \Cref{Subsec: application NCP}, we will verify that the above  null comparison principle holds in the setting of Klainerman-Szeftel \cite{K-S}, in the anisotropic gravitational-collapse spacetimes \cite{An-Han} by the first author and Han, and in the spacetimes containing naked singularities studied by the first author in \cite{An naked singularity}. Our next result demonstrates the use of this principle.
\begin{theorem}\label{Main thm 2: null comparison principle}
    With $\mathcal{AH}=\cup_{\ub} M_{\ub}$, suppose that the null comparison principle holds for MOTS $M_{\ub}$ along $\Hb_{\ub}$. Then the apparent horizon $\mathcal{AH}$ must be spacelike everywhere or (outgoing) null  everywhere when restricted to $M_{\ub}$, i.e., the (non-zero) tangent vector of $\mathcal{AH}$ that is normal to $M_{\ub}$ is either spacelike everywhere or (outgoing) null  everywhere on $M_{\ub}$. 
\end{theorem}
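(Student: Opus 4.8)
The plan is to work one leaf at a time on $\mathcal{AH}=\cup_{\ub}M_{\ub}$, decompose in the null frame of $M_{\ub}$ the tangent vector of $\mathcal{AH}$ that is normal to $M_{\ub}$, and reduce the statement to a sign dichotomy for a single scalar function on $M_{\ub}$ governed by the MOTS stability operator.

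\emph{Setup and reduction.} Fix $\ub$ and let $L,\Lb$ be the outgoing and incoming future null normals of $M_{\ub}$, normalised by $\bfg(L,\Lb)=-2$. Since $\nu$ is $\bfg$-orthogonal to $TM_{\ub}$ it lies in $\mathrm{span}\{L,\Lb\}$, so $\nu=\lambda_{+}L+\lambda_{-}\Lb$ for scalars $\lambda_{\pm}$ on $M_{\ub}$ and $\bfg(\nu,\nu)=-4\lambda_{+}\lambda_{-}$. Normalising $\nu$ to agree with $\partial_{\ub}$ modulo $TM_{\ub}$ along $\mathcal{AH}$, a short computation gives $\lambda_{+}=(L\ub)^{-1}>0$, because the incoming optical function $\ub$ increases along the outgoing null direction $L$. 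Hence $\nu$ is spacelike at a point iff $\lambda_{-}<0$ there, timelike iff $\lambda_{-}>0$, and outgoing null iff $\lambda_{-}=0$; since $\nu\neq0$ and $\lambda_{+}>0$, the theorem becomes the assertion that, for each $\ub$, either $\lambda_{-}<0$ everywhere on $M_{\ub}$ or $\lambda_{-}\equiv0$ on $M_{\ub}$.

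\emph{Two consequences of the null comparison principle.} First, $\lambda_{-}\le0$: for $\ub'>\ub$ close to $\ub$, let $\mathcal{N}$ be the outgoing null hypersurface issuing to the future from $M_{\ub}$ and set $\Sigma=\mathcal{N}\cap\Hb_{\ub'}$, a smooth spacelike $2$-surface in $\Hb_{\ub'}$ which is $C^{1}$-close to $M_{\ub'}$. Raychaudhuri's equation along $\mathcal{N}$ together with the null energy condition (valid in all three scenarios, in particular in vacuum) give $\theta^{(L)}\le0$ on $\Sigma$, so the null comparison principle on $\Hb_{\ub'}$ places $\Sigma$ inside $M_{\ub'}$, i.e.\ the defining function of $\Sigma$ is $\le R(\ub',\cdot)$ pointwise. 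Differentiating this inequality at $\ub'=\ub$ --- legitimate since $R$ depends smoothly on $\ub$ by \Cref{Main thm 1: existence of AH in K-S} and null geodesics depend smoothly on their data --- yields $\lambda_{-}\le0$ pointwise. Second, $M_{\ub}$ is a \emph{stable} MOTS, i.e.\ the principal eigenvalue $\lambda_{1}$ of its stability operator $\mathscr{L}$ satisfies $\lambda_{1}\ge0$: were $\lambda_{1}<0$, the inward normal graph perturbation of $M_{\ub}$ along the positive principal eigenfunction of the linearised outgoing-expansion operator on $\Hb_{\ub}$ would be a spacelike surface with positive outgoing expansion lying strictly inside $M_{\ub}$, contradicting the principle.

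\emph{The elliptic identity and the maximum principle.} Since every leaf of $\mathcal{AH}$ is a MOTS, $\theta^{(L)}\equiv0$ on $\mathcal{AH}$; differentiating along $\nu$ and invoking the first-variation formula for the outgoing null expansion at a MOTS gives
\[
\mathscr{L}(\lambda_{-})=-\lambda_{+}\bigl(|\chih|^{2}+\bfR(L,L)\bigr)=:-h\le0,
\]
where $\chih$ is the shear of $L$ on $M_{\ub}$, $\bfR(L,L)$ the Ricci term in Raychaudhuri's equation (non-negative by the null energy condition, zero in vacuum), and $\mathscr{L}=-\slap+X^{a}\snab_{a}+V$ is a second-order elliptic operator on the closed (connected) surface $M_{\ub}$; by Krein--Rutman it has a simple principal eigenvalue $\lambda_{1}$ with a positive eigenfunction $\phi$. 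Writing $\lambda_{-}=\phi\,w$ --- so $w\le0$ by the first consequence --- and using $\mathscr{L}(\phi w)=\phi\widetilde{\mathscr{L}}w+\lambda_{1}\phi w$, where $\widetilde{\mathscr{L}}=-\slap+(X^{a}-2\snab^{a}\log\phi)\snab_{a}$ has no zeroth-order term, the identity becomes $\widetilde{\mathscr{L}}w+\lambda_{1}w=-h/\phi\le0$; equivalently $w$ is a subsolution of $\mathcal{A}:=\slap-(X^{a}-2\snab^{a}\log\phi)\snab_{a}-\lambda_{1}$, whose zeroth-order coefficient $-\lambda_{1}$ is $\le0$ by the second consequence. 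If $\max_{M_{\ub}}w=0$ the strong maximum principle forces $w\equiv0$, hence $\lambda_{-}\equiv0$ and $\nu$ is outgoing null everywhere on $M_{\ub}$; if instead $\max_{M_{\ub}}w<0$ then $\lambda_{-}=\phi w<0$ everywhere and $\nu$ is spacelike everywhere on $M_{\ub}$. As the whole argument is carried out on a fixed $M_{\ub}$, this proves the dichotomy leaf by leaf.

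\emph{Main obstacle.} The functional-analytic steps (first-variation formula, Krein--Rutman, strong maximum principle) are routine; the delicate point is the consistent bookkeeping of orientations and signs --- the normalisation of $L,\Lb$, the orientation of $\nu$, the sign of $L\ub$, and above all identifying the linearisation of the graph form of $\theta^{(L)}$ on $\Hb_{\ub}$ with the geometric stability operator $\mathscr{L}$ up to a positive multiplier --- so that the two comparison-principle consequences indeed produce $\lambda_{-}\le0$ and $\lambda_{1}(\mathscr{L})\ge0$ with the signs appearing in the displayed identity; and verifying that the limit $\ub'\to\ub$ in the first consequence is legitimate (uniformity of the $C^{1}$-closeness and of the Raychaudhuri bound).
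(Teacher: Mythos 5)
Your proposal is essentially correct, but it follows a genuinely different route from the paper. The paper's proof (Section \ref{Section: NCP}) is causal--geometric: it shoots the outgoing null cone backwards from the \emph{nearby} MOTS $M_{\ub'}$ to $\Hb_{\ub}$, uses the Raychaudhuri equation together with \Cref{null expasion indenpendence along null hypersurface} to see that the resulting section of $\Hb_{\ub}$ has $\trch\ge 0$, applies the null comparison principle \emph{only on the fixed leaf} $\Hb_{\ub}$, converts the resulting disjointness $M_{\ub'}\cap I^{\pm}(M_{\ub})=\emptyset$ (and, in the strict case, $J^{\pm}$) into achronality via the elementary Lorentzian \Cref{lemma spacelike or null}, and settles the dichotomy by the strong maximum principle applied to $\tR-R$ (\Cref{strong comparison theorem}), the ``null'' alternative being the case where a sequence of leaves $M_{\ub_n'}$ lies on the cone emanating from $M_{\ub}$. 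You instead work infinitesimally in the spirit of Andersson--Mars--Simon: decompose the transverse tangent $\nu=\lambda_+ L+\lambda_-\Lb$, obtain $\lambda_-\le 0$ by differentiating in $\ub'$ the comparison between $\mathcal{AH}$ and the forward cone from $M_{\ub}$ (your bookkeeping does check out: with $X=e_3'(r-R)e_4'-e_4'(r-R)e_3'$ one has, after orienting, $\lambda_-\propto e_4'(r-R)$, and the one-sided derivative of $R(\ub',\cdot)-\tilde R_\Sigma(\ub',\cdot)\ge 0$ at $\ub'=\ub$ is exactly $e_4'(\ub)^{-1}e_4'(R-r)\ge 0$, consistent with \Cref{Rmk non-negative low bound}), obtain stability $\lambda_1\ge 0$ of the leaf from the eigenfunction-perturbation contradiction, and conclude from the master identity $\mathscr{L}(\lambda_-)=-\lambda_+(|\chih|^2+\mathrm{Ric}(L,L))$ via the ground-state transform and the strong maximum principle. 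What each approach buys: the paper's argument is softer --- it uses the comparison principle only at the leaf where the conclusion is drawn, needs no first-variation formula, no eigenvalue theory, and no smooth dependence of the foliation on $\ub$ beyond $\mathcal{AH}$ being a hypersurface; yours is sharper in mechanism --- it localizes the dichotomy to one elliptic identity on $M_{\ub}$, shows that the null comparison principle by itself forces stability of every MOTS, exhibits that the null alternative is exactly the non-expanding case $|\chih|\equiv 0$, and makes precise the paper's remark that the principle substitutes for the strict-stability hypothesis of \cite{Andersson: stability}.

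Two mild caveats, neither fatal. First, your first consequence invokes the comparison principle on the neighbouring leaves $\Hb_{\ub'}$, not just on $\Hb_{\ub}$; this is covered by the natural reading of the hypothesis (the principle holds for every leaf of $\mathcal{AH}$), but the paper's proof genuinely needs it only at the single leaf, so its statement is slightly stronger leafwise. Second, both the sign $h\ge 0$ in your identity and the paper's use of Raychaudhuri require the null energy condition (automatic in the vacuum settings considered), and your argument additionally requires the $C^1$ dependence of $R$ on $\ub$ and the identification, up to positive factors, of the graph linearization on $\Hb_{\ub}$ with the geometric stability operator --- points you correctly flag and which do work out, but which the paper's softer argument avoids entirely.
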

The proof of  \Cref{Main thm 2: null comparison principle} is purely geometric. For both \textit{null comparison principle} and \Cref{Main thm 2: null comparison principle}, we do not impose any symmetry conditions. A quick corollary of \Cref{Main thm 2: null comparison principle} is that the solved apparent horizon $\mathcal{AH}$ in \Cref{Main thm 1: existence of AH in K-S} is piecewise spacelike or piecewise null. It is worthwhile to mention that compared with the proof of spacelikeness for the apparent horizon in the short-pulse regime as established in \cite{An: AH, An-Han}, the advantage of verifying and employing the null comparison principle is avoiding the non-negative lower-bound estimates for the $\partial_{\ub}$ derivative of solutions to the quasilinear elliptic equation of MOTS. To guarantee the (local) achronality of the apparent horizon in the short-pulse regime, our approach here also relaxes the requirement of the prescribed initial data.
\vspace{2mm}

Throughout this paper, our MOTS is constructed and studied along an incoming null cone. We adopt the approach designed by the first author in \cite{An: AH}. It is also worth mentioning that MOTS related to the null Penrose's inequality is investigated by Alexakis in \cite{A: Penrose} and by Le in \cite{Le1, Le2} and references therein. MOTS along a spacelike leaf is extensively explored by numerous mathematicians and physicists. We refer the interested readers to  works \cite{ A-E, A-M, A-Met, A-G, A-K1, A-K2, E, M, S-Y, Y} and references therein. In particular, we want to mention that Andersson-Metzger \cite{A-Met} and Eichmair \cite{E} proved the existence of MOTS within a spacelike hypersurface, employing Jang's equation and associated barrier arguments. Meanwhile, in \cite{Andersson: stability}, Andersson-Mars-Simon introduced the concept of stability operator for the MOTS and proved the achronality of apparent horizon around a strictly stable outermost MOTS. In our paper, we conduct arguments based on null foliations and prove achronality of apparent horizon via introducing the null comparison principle. 

\subsection{Key Points and New Ingredients of the Paper}

\subsubsection{Choice of Coordinate System}
To solve for MOTS, we perform a coordinate transformation in the interior region ${}^{(int)}\mathcal{M}$ constructed by Klainerman-Szeftel \cite{K-S}. There they adopted the incoming geodesic foliation with coordinates $(\ub, r, \theta, \varphi)$. Let $(e_1, e_2, e_3, e_4)$ be the associate null frame. In their coordinates, the coordinates derivative  $\partial_r$ can be expressed as 
\begin{equation*}
    \partial_r=[e_3(r)]^{-1}(e_3-\sqrt{\gamma} \ud{b} e_\theta ) \quad \text{with} \quad \ud{b}=e_3(\theta) \quad \text{and} \quad \gamma=\f{1}{(e_\theta(\theta))^2}. 
\end{equation*}

Due to the presence of $e_\theta$,  their coordinate derivative $\partial_r$ is not necessarily null. To construct the MOTS along a null hypersurface, we prefer a coordinate derivative to be null. This condition is achieved by choosing coordinates $(\widetilde{\ub}, \widetilde{r}, \widetilde{\theta}, \widetilde{\varphi} )$ with
\begin{equation*}
    \widetilde{\ub}=\ub, \quad \widetilde{r}=r, \quad \widetilde{\theta}=\theta+f_1(\ub, r, \theta, \varphi), \quad \widetilde{\varphi}=\varphi.
\end{equation*}
Here we require 
\begin{equation}\label{intro: e3(f1+theta) eqn}
    e_3(f_1+\theta)=0  \quad \text{with} \quad 1+\frac{\partial f_1}{\partial \theta}\neq 0.
\end{equation}
Following from chain rules, we get
\begin{equation*}
    \partial_{\widetilde{r}}=[e_3(r)]^{-1}\Big(e_3+(1+\frac{\partial f_1}{\partial \theta})^{-1}e_3(f_1+\theta)\partial_\theta \Big).
\end{equation*}
By the requirement \eqref{intro: e3(f1+theta) eqn}, in our new coordinate system $(\widetilde{\ub}, \widetilde{r}, \widetilde{\theta}, \widetilde{\varphi} )$ we have 
$\partial_{\widetilde{r}}=[e_3(r)]^{-1}e_3$ being null. It is important to note that in our analysis, we utilize the same null frame $(e_3, e_4, e_1, e_2)$ as presented in Klainerman-Szeftel \cite{K-S}. Hence, the hyperbolic estimates established in \cite{K-S} remain unchanged and applicable in our framework.

\subsubsection{Existence and Asymptotics of Apparent Horizon}
Along each incoming null hypersurface $\Hb_{\ub}$, to find the MOTS on it, we employ the below deformation equation: Setting the MOTS $M_{\ub}$ on $\Hb_{\ub}$ to be with coordinates $\{(\ub, r, \theta, \varphi):  r=R(\theta, \varphi)\}$, then the outgoing null expansion $\trch'$ on $M_{\ub}$ can be expressed as 
\begin{equation*}
	\trch'=\trch+2f\Delta R+2(\nab f+f(\eta+\zeta))\cdot \nab R-4f^2\chibh_{bc}\nab^b R\nab^c R+(2fe_3(f)-f^2\trchb-4\omegab f^2)|\nab R|^2.
\end{equation*}
Here $f=[e_3(r)]^{-1}$, $\Delta$ and $\nab$ represent the Laplace-Beltrami operator and the induced covariant derivative, respectively, on the 2-sphere $S_{\ub, r}$. With $(e_a)_{a=1, 2}$ a tangent frame on $S_{\ub, r}$, we have the following definition for Ricci coefficients
\begin{equation*}
\begin{split}
\chib_{ab}=\bfg(\bfD_a e_3,e_b), \quad \eta_a=-\frac 12 \bfg(\bfD_3 e_a,e_4),\quad \zeta_a=\frac 1 2 \bfg(\bfD_a e_4,e_3),\quad \omegab=-\frac 14 \bfg(\bfD_3 e_4,e_3).
\end{split}
\end{equation*}
Here $\bfD$ is the spacetime covariant derivative and $\trchb$, $\chibh$ denote the trace and traceless parts of $\chibh$, respectively.
	
Using \Cref{change laplacian} (see also calculations in \cite{An: AH}), for $M_{\ub}=\{(\ub, r, \theta, \varphi):  r=R(\theta, \varphi)\}$ being the MOTS to be solved,  it holds
\begin{equation*}
	\Delta R =\Delta'_{M_{\ub}} R+2f \chibh_{bc}\nab^b R\nab^c R.
\end{equation*}
We can then rewrite the equation of MOTS $\trch'=0$ as
\begin{equation}\label{intro: MOTS eqn}
    \begin{split}
    0=L(R, \ub):=&\Delta R-2f\chibh_{bc}\nab^b R\nab^c R+\big(f^{-1}\nab f+(\eta+\zeta)\big)\cdot \nab R\\&+(e_3(f)-\f{1}{2}f\trchb-2\omegab f)|\nab R|^2+\f{1}{2}f^{-1}\trch.
\end{split}
\end{equation}
\begin{remark}
In \cite{An: AH} the first author studied the below equation
\begin{equation}\label{intro: MOTS eqn in An}
     \begin{split}
   \Delta'_{M_{\ub}} R+2\eta_a\nab^a R+(\f{1}{2}\Omega\trchb+4\Omega\omegab )|\nab R|^2-\f{1}{2}\Omega^{-1}\trch=0.
\end{split}
\end{equation}
 near the center of gravitational collapse with double null foliations based on hyperbolic estimates in  \cite{A-L} by the first author and Luk. In this paper, we use the aforementioned incoming geodesic foliation and we consider the regime when the spacetime settles down to its equilibrium building upon hyperbolic estimates in \cite{K-S} by Klainerman-Szeftel. Also note that our MOTS equation \eqref{intro: MOTS eqn} involves function $f=[e_3(r)]^{-1}$ and it differs from the lapse function $\O$ appearing in \eqref{intro: MOTS eqn in An}.
\end{remark}

We use the below estimates in \cite{K-S}: For $0\le k\le 1$, in ${}^{(int)}\mathcal{M}$ there holds
\begin{equation}\label{intro: decay est}
\begin{split}
   & \|(e_3, e_4, \nab)^k(f+1, \eta, \zeta, \trchb+\f{2}{r}, \trch-\f{2}{r}+\f{4m_{\infty}}{r^2})\|_{L^\infty(S_{\ub, r})}\lesssim \f{\varepsilon_0}{\ub^{1+\delta_{dec}}} \quad \text{and} \quad \omegab\equiv 0.
    \end{split}
\end{equation}
Here $m_{\infty}$ denotes the mass of the final Schwarzschild spacetime, $\varepsilon_0$ represents the size of initial perturbation and $\delta_{dec}$ is a small positive constant. We emphasize that the smallness of $\varepsilon_0$ is important when to establish the existence of MOTS along each $\Hb_{\ub}$ and the decaying rate $\ub^{-1-\delta_{dec}}$ in \eqref{intro: decay est} plays a critical role in analyzing the asymptotic behaviors of the apparent horizon.

Employing \eqref{intro: decay est} we can simplify \eqref{intro: MOTS eqn} to the form
    \begin{equation}\label{intro: MOTS eqn general form}
\begin{split}
   &\Delta'_M R-\f{1}{R}|\nab R|^2-\f{1}{R}+\f{2m_{\infty}}{R^2}+O(\f{\varepsilon_0}{\ub^{1+\delta_{dec}}})(\nab R+|\nab R|^2+1)=0.
\end{split}
\end{equation}
By applying maximum principle on $M_{\ub}$, in \Cref{Subsubsec: C0 est} we deduce
\begin{equation}\label{intro: C0 est}
    |R(\theta, \varphi)-2m_{\infty}|\lesssim \f{\varepsilon_0}{\ub^{1+\delta_{dec}}} \qquad \text{for all} \quad (\theta, \varphi)\in \mathbb{S}^2.
\end{equation}
We then use the modified Bochner’s formula\footnote{This type of modified Bochner's formula was first derived and applied by the first author in \cite{An: AH}.} for $\Delta'_M\Big(h(R)|\nab R|^2\Big)$ with
\begin{equation*}
h(R)=1+\f{1}{2m_\infty ^2}(R-2m_\infty)^2.
\end{equation*} 
With details provided in \Cref{Subsubsec: C1 est}, we arrive at 
\begin{equation*}
\begin{split}
&\Delta'_M\Big(h(R)|\nab R|^2\Big)-\f{2h'(R)\nab^{'a} R}{h(R)}\cdot \nab'_a\Big(h(R)|\nab R|^2\Big)-\f{2}{R^2}\nab'_a(h(R)|\nab R|^2) \\
\geq& \f{1}{8m_\infty^2}|\nab R|^4+\f{1}{2R^2}|\nab R|^2-|O(\f{\varepsilon_0}{\ub^{1+\delta_{dec}}})|^2.
\end{split}
\end{equation*}
Since $h(R)$ is close to $1$, another application of maximum principle implies
\begin{equation*}
    |\nab R(\theta, \varphi)|\lesssim \f{\varepsilon_0}{\ub^{1+\delta_{dec}}}  \qquad \text{for all} \quad (\theta, \varphi)\in \mathbb{S}^2.
\end{equation*}
Following the steps in \Cref{Subsubsec: C 1 q est}, we can further improve our estimates to
   \begin{equation*}
        \| R-2m_{\infty}\|_{C^{2, q}(M_{\ub})} \lesssim  \f{\varepsilon_0}{\ub^{1+\delta_{dec}}} 
    \end{equation*}
with some constant $q\in (0, 1)$ independent of $\varepsilon_0$ and $R(\theta, \varphi)$.
\vspace{2mm}

With these derived apriori estimates, in \Cref{Continuity1} we conduct the method of continuity to solve for the MOTS. We construct
\begin{equation*}
\begin{split}
      F(R, \lambda):&=\Delta'_M R+(e_3(f)-\f{1}{2}f\trchb-2\omegab f)|\nab R|^2\\&+\lambda\Big(\big(f^{-1}\nab f+(\eta+\zeta)\big)\cdot \nab R+\f{1}{2}f^{-1}\trch\Big)+(1-\lambda)\l-\f{1}{R}+\f{2m_\infty}{R^2}\r.
\end{split}
\end{equation*}
Note that straightforward that $R\equiv 2m_{\infty}$ is the solution to equation $F(R, 0)=0$. Assume that $F(\tR, \tilde{\lambda})=0$ admits a solution $\tR$ for parameter $\tilde{\lambda}\in [0, 1]$. A calculation in \Cref{Continuity1} gives
\begin{equation*}
    \partial_R F (\tR, \lambda)[W]:=\f{d}{d\epsilon}\Big|_{\epsilon=0}F(\tR+\epsilon W, \lambda)=\Delta'_{S_{\ub, \tR}} W+d_{a}\nab^a W+\tR^{-3}\Big[-2m_{\infty} +d\Big] W \quad \text{for all} \ \lambda\in [0, 1].
\end{equation*}
Here $d=d(\ub, \tR, \theta, \varphi)$  and $d_{a}=d_a(\ub, \tR, \theta, \varphi)$ are functions satisfying
\begin{equation*}
    |d|+|d_a|\lesssim \f{\varepsilon_0}{\ub^{1+\delta_{dec}}}.
\end{equation*}
Notice that $\tR^{-3}\Big[-2m_{\infty} +d\Big]<0$ provided $\varepsilon_0>0$  sufficiently small. The invertibility of elliptic operator $\partial_R F (\tR, \lambda)[W]: C^{2, q}(\mathbb{S}^2)\to C^{0, q}(\mathbb{S}^2)$  follows consequently. Therefore, by the details provided in \Cref{Continuity1}, we then conclude that the equation for MOTS, i.e., $0=F(R, 1)=L(R, \ub)$ possesses a solution $R$ for each $\ub \ge 1$. Further proof in \Cref{Subsec: uniqueness of MOTS} also provides the uniqueness of $R$ along $\Hb_{\ub}$.
\vspace{3mm}

We then move to consider the property of the apparent horizon $\mathcal{AH}$ and  $\mathcal{AH}:=\cup_{\ub \ge 1} M_{\ub}$. To investigate the regularity of $\mathcal{AH}$ we appeal to the implicit function theorem with regard to the elliptic operator 
\begin{equation*}
    L( R, \ub): C^{\infty}(\mathbb{S}^2)\times [1, \infty)  \to C^{\infty}(\mathbb{S}^2).
\end{equation*}
 Since the invertibility of linearized operator $ \partial_R L (R, \ub)=\partial_R F (\ub, R, 1)$  for any $\ub \ge 1$ has been proved in \Cref{Continuity1}, in \Cref{Subsec: regularity of AH} we deduce that $R$ is smooth in both $\ub$ and $\theta, \varphi$.

To study the asymptotics of $\mathcal{AH}$, we employ the implicit  function theorem again. From $L(R, \ub)=0$, we first get
\begin{equation*}
    \partial_R L(R, \ub)[\partial_{\ub} R]+\partial_{\ub} L(R, \ub)=0.
\end{equation*}
With a priori estimates for $R$, the hyperbolic estimates and the invertibility of $ \partial_R L$, in \Cref{Subsec: final state of AH} we show that
\begin{equation}\label{intro: paritial ub R est}
   \max\limits_{\mathbb{S}^2} |\partial_{\ub} R|\lesssim \max\limits_{\mathbb{S}^2}| \partial_{\ub} L(R, \ub)|\lesssim \f{\varepsilon_0}{\ub^{1+\delta_{dec}}}.
\end{equation}
Recall that the components of the induced metric on $\mathcal{AH}$ take the forms
\begin{equation*}
	g'_{\theta_i \theta_j}=g_{\theta_i \theta_j}+\partial_{\theta_i} R\cdot\partial_{\theta_j} R\cdot g(\partial_r, \partial_r)=g_{\theta_i \theta_j}, 
\end{equation*}
\begin{equation*}
	g'_{\ub\, \ub}=g_{\ub\, \ub}+2\partial_{\ub} R\cdot g(\partial_r, \partial_{\ub})=\vsgmb^2\l\f{\kb+A}{\kbb}+\f{1}{4}\gamma b^2\r-2\partial_{\ub} R\cdot\f{2\vsgmb}{r\kbb},
\end{equation*}
\begin{equation*}
	g'_{\theta \ub}=g_{\theta \ub}+\partial_{\theta} R \cdot g(\partial_r, \partial_{\ub})=-\f{1}{2}\vsgmb \gamma b-\partial_{\theta} R\cdot\f{2\vsgmb}{r\kbb},
\end{equation*}
\begin{equation*}
	g'_{\varphi \ub}=g_{\varphi \ub}+\partial_{\varphi} R \cdot g(\f{\partial}{\partial r}, \partial_{\ub})=-\partial_{\varphi} R\cdot\f{2\vsgmb}{r\kbb}.
\end{equation*}
Here  $\vsgmb, \kb, A, \kbb, \gamma, b$ represent metric components as employed in \cite{K-S} and their precise definitions are listed in \eqref{KS metric}, \eqref{KS metric component}. Hence, plugging $C^1$ estimate, \eqref{intro: paritial ub R est}, we arrive at
\begin{equation*}
	|(g'_{\ub\, \ub}, g'_{\theta \ub}, g'_{\varphi \ub})|\lesssim \f{\varepsilon_0}{\ub^{1+\delta_{dec}}}.
\end{equation*}
Therefore, as $\ub$ approaches infinity, we conclude that the directional derivative $\partial'_{\ub}$ manifests as an asymptotically degenerate null direction along  $\mathcal{AH}$.  In other words, our constructed apparent horizon $\mathcal{AH}$ is asymptotically null.
\vspace{2mm}

To show that $\mathcal{AH}$ converges to the event horizon as $\ub \to +\infty$, we appeal to the estimates for the event horizon in \cite{K-S}. There Klainerman-Szeftel proved that the event horizon $\mathcal{H}_{+}$ of $\mathcal{M}$ locates in the following region of ${}^{(int)}\mathcal{M}$:
\begin{equation*}
    2m_{\infty}(1-\f{\sqrt{\varepsilon_0}}{\ub^{1+\delta_{dec}}}) \le r \le 2m_{\infty}(1+\f{\sqrt{\varepsilon_0}}{\ub^{1+\delta_{dec}}}) \quad \text{for any} \quad \ub \ge 1.
\end{equation*}
Denoting $\mathcal{H}_{+}:=\{r=R_{\mathcal{H}_{+}}(\ub, \theta, \varphi): \ub\ge 1 \}$ to be the event horizon, it then follows from our $C^0$ estimate \eqref{intro: C0 est} that
\begin{equation*}
    |R(\ub, \theta, \varphi)-R_{\mathcal{H}_{+}}(\ub, \theta, \varphi)|\lesssim \f{\sqrt{\varepsilon_0}}{\ub^{1+\delta_{dec}}} \quad \text{for all} \quad (\theta, \varphi)\in \mathbb{S}^2.
\end{equation*}
Hence, the apparent horizon $\mathcal{AH}$ constructed in our setting converges to the event horizon $\mathcal{H}_{+}$ as $\ub\rightarrow +\infty$.

\subsubsection{Null Comparison Principle and its Applications}    
In \cite{An: AH} and \cite{An-Han} the spacelikeness of apparent horizon is directly examined in the short-pulse regime. And $g'_{\ub\,\ub}>0$ is sufficiently large related to short-pulse data. However, to prove the local achronality of the apparent horizon, the method adopted in \cite{An: AH,An-Han} is ineffective in the perturbative Schwarzschild spacetime regime as in this paper due to the fact that $g'_{\ub\,\ub}$ is small depending on the size of initial perturbation and it can approach $0$ as $\ub\to +\infty$. In this paper, we give a new robust approach to establish the achronality of apparent horizon $\mathcal{AH}$ through a geometric argument.  We point out that this approach does not rely on any symmetry assumptions.

Suppose that we have the apparent horizon $\mathcal{AH}=\cup_{\ub} M_{\ub}$ and  let $M_{\ub}=\{r= R(\theta_1, \theta_2)\}$ be the MOTS lying along the incoming null hypersurface $\Hb_{\ub}$ with coordinates $(r, \theta_1, \theta_2)$. In this paper we introduce the concept of the null comparison principle. This principle holds if  for any  (smooth) spacelike $2$-surface $\widetilde{\Sigma}=\{r=\tilde{R}(\theta_1, \theta_2) \}$  on $\Hb_{\ub}$ near $M_{\ub}$, when its outgoing null expansion $\trch|_{\widetilde{\Sigma}}\le 0$,  we have $\tilde{R}(\theta_1, \theta_2) \leq R(\theta_1, \theta_2)$ for all $(\theta_1, \theta_2)\in \mathbb{S}^2$ and when $\trch|_{\widetilde{\Sigma}}\ge 0$, then it holds $\tilde{R}(\theta_1, \theta_2) \ge R(\theta_1, \theta_2)$ pointwise.  

With this definition, in \Cref{Section: NCP} we prove 
\begin{theorem}\label{intro: NCP section NCP thm}
 For $\mathcal{AH}=\cup_{\ub} M_{\ub}$, assuming that the null comparison principle holds for MOTS $M_{\ub}$ on $\Hb_{\ub}$, then the apparent horizon $\mathcal{AH}$ must be either spacelike everywhere or  null everywhere when restricted to  $M_{\ub}$.   
\end{theorem}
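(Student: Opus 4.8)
The plan is to work entirely with the geometry of $\mathcal{AH}$, using the null comparison principle as the only analytic input. Fix a leaf $M_{\ub}$. First I would reduce the causal character of $\mathcal{AH}$ along $M_{\ub}$ to the sign of one scalar function on $M_{\ub}$. At $p\in M_{\ub}$ the tangent space $T_p\mathcal{AH}$ is spanned by the spacelike plane $T_pM_{\ub}$ together with $\partial'_{\ub}=\partial_{\ub}+(\partial_{\ub}R)\,\partial_r$, the vector obtained by tracking the MOTS as $\ub$ increases; let $W$ be the $\bfg$--orthogonal projection of $\partial'_{\ub}$ onto the Lorentzian plane $(T_pM_{\ub})^{\perp}=\mathrm{span}(L',\Lb')$, where $(L',\Lb')$ are the future null normals of $M_{\ub}$ adapted to the MOTS, $L'$ outgoing, $\bfg(L',\Lb')=-2$. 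Then $\mathcal{AH}$ is spacelike, null, or timelike at $p$ according as $\bfg(W,W)>0$, $=0$, or $<0$, and for $W=\alpha L'+\beta\Lb'$ one has $\bfg(W,W)=-4\alpha\beta$. A short computation in the incoming geodesic foliation should give $\alpha=-\tfrac12\bfg(\partial'_{\ub},\Lb')=[e_4(\ub)]^{-1}>0$ (advancing $\ub$ is a displacement along the outgoing direction), so the only null case is the outgoing one $W\parallel L'$; and $\beta$ should have, at each point, the opposite sign to $\phi:=\partial_{\ub}R-\psi$, where $\psi$ is the rate of change, per unit $\ub$ and at fixed angular coordinate, of the radius of the outgoing null hypersurface $\mathcal{N}^+$ issuing from $M_{\ub}$. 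Thus $\mathcal{AH}$ at $p$ is spacelike iff $\phi(p)>0$, outgoing null iff $\phi(p)=0$, timelike iff $\phi(p)<0$, and the theorem reduces to the two assertions: $\phi\ge 0$ on $M_{\ub}$, and $\{\phi=0\}$ is either empty or all of $M_{\ub}$.

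Next I would prove $\phi\ge 0$ from the null comparison principle together with Raychaudhuri's equation. For $\ub'>\ub$ close to $\ub$, consider $\Sigma_{\ub'}:=\mathcal{N}^+\cap\Hb_{\ub'}$: as the transverse intersection of two null hypersurfaces near $M_{\ub}$ it is a smooth spacelike $2$--surface, lying close to $M_{\ub'}$ for $\ub'$ near $\ub$. Since $M_{\ub}$ is a MOTS, $\trch'|_{M_{\ub}}=0$, so Raychaudhuri's equation along $\mathcal{N}^+$ (in vacuum, hence under the null energy condition) keeps $\trch'$ non-positive along the generators once it vanishes, whence $\trch'|_{\Sigma_{\ub'}}\le 0$ on all of $\mathcal{N}^+$ to the future of $M_{\ub}$. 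The null comparison principle on $\Hb_{\ub'}$ then gives $r|_{\Sigma_{\ub'}}\le R(\ub',\cdot)$ pointwise on $\mathbb{S}^2$; subtracting $R(\ub,\cdot)$, dividing by $\ub'-\ub>0$, and letting $\ub'\downarrow\ub$ yields $\psi\le\partial_{\ub}R$, i.e. $\phi\ge 0$. In particular $\mathcal{AH}$ is nowhere timelike along $M_{\ub}$.

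For the dichotomy I would combine this with the linearised MOTS equation. Recall $L(\cdot,\ub')=\tfrac12 f^{-1}\trch'$ with $f^{-1}=e_3(r)<0$. Differentiating $L\big(\rho(\ub'),\ub'\big)=\tfrac12 f^{-1}\,\trch'|_{\Sigma_{\ub'}}$ at $\ub'=\ub$, where $\rho(\ub')$ is the radius function of $\Sigma_{\ub'}$ (so $\rho(\ub)=R(\ub)$ and $\partial_{\ub'}\rho|_{\ub}=\psi$), and using $\trch'|_{M_{\ub}}\equiv 0$ together with the Raychaudhuri inequality at $\trch'=0$, should produce the geometric identity $\partial_R L(R,\ub)[\psi]+\partial_{\ub}L(R,\ub)\ge 0$. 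Subtracting the exact relation $\partial_R L(R,\ub)[\partial_{\ub}R]+\partial_{\ub}L(R,\ub)=0$ (from $\tfrac{d}{d\ub}L(R(\ub),\ub)=0$) gives
\begin{equation*}
\partial_R L(R,\ub)\,[\phi]\ \le\ 0 ,
\end{equation*}
where $\partial_R L$ is the elliptic operator computed in \Cref{Continuity1}, with principal part the Laplace--Beltrami operator of $M_{\ub}$. If $\phi(p_0)=0$ for some $p_0\in M_{\ub}$, then $\phi\ge 0$ attains an interior minimum equal to $0$ at $p_0$, so the Hopf strong maximum principle — which requires no sign condition on the zeroth--order coefficient when the extremal value is $0$ — forces $\phi\equiv 0$ in a neighbourhood of $p_0$; hence $\{\phi=0\}$ is open, and it is closed because $\mathcal{AH}$, and therefore $\phi$, is smooth by \Cref{Main thm 1: existence of AH in K-S}. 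As $M_{\ub}\cong\mathbb{S}^2$ is connected, either $\phi\equiv 0$, in which case $\mathcal{AH}$ is outgoing null everywhere on $M_{\ub}$, or $\phi>0$ everywhere, in which case $\mathcal{AH}$ is spacelike everywhere on $M_{\ub}$.

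The hard part is this last step: a smooth hypersurface can a priori change causal type along $M_{\ub}$, and excluding this is exactly where the sign information $\phi\ge 0$ supplied by the null comparison principle must be married to the ellipticity of the MOTS equation. I also expect to have to be careful, in the first two steps, with the angular drift of the outgoing null flow (so that $\psi$ is genuinely measured at a fixed angular coordinate) and with checking that $\Sigma_{\ub'}$ is truly spacelike and close enough to $M_{\ub'}$ for the null comparison principle to apply; both are harmless for $\ub'-\ub$ small. Everything else — the causal-character computation of Step 1 and the deformation and Raychaudhuri identities of Step 3 — is direct, using only the deformation formula for $\trch'$ recorded before \eqref{intro: MOTS eqn} and the form of $\partial_R L$ from \Cref{Continuity1}.
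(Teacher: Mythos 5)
Your proposal is correct, but it follows a genuinely different, infinitesimal route from the paper's. The paper never differentiates in $\ub$: it shoots the \emph{backward} outgoing cone from $M_{\ub'}$, uses the Raychaudhuri equation together with the fact that the outgoing null expansion at a point of a null hypersurface is independent of the cross-section through that point (\Cref{null expasion indenpendence along null hypersurface}) to get $\trch\ge 0$ on the intersection with $\Hb_{\ub}$, applies the null comparison principle \emph{only on the fixed slice} $\Hb_{\ub}$, upgrades this to the dichotomy $\tilde{R}\equiv R$ or $\tilde{R}>R$ via the strong maximum principle for the difference of radius functions (\Cref{strong comparison theorem}), and then concludes with a purely causal argument: $I^{\pm}$-separation forces spacelike-or-null (\Cref{lemma spacelike or null}), $J^{\pm}$-separation forces spacelike, and the remaining case is handled by a limiting tangency argument along a sequence $\ub'_n\downarrow\ub$. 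You instead reduce the causal character along $M_{\ub}$ to the sign of the scalar $\phi=\partial_{\ub}R-\psi$ (your Step 1 is right: with $\partial_r=fe_3$, $f<0$, the $e_3'$-component of $\partial'_{\ub}$ equals $f\phi$ while the $e_4'$-component equals $1/e_4(\ub)>0$), prove $\phi\ge 0$ by running the cone \emph{forward} from $M_{\ub}$ and applying the comparison principle on the neighbouring slices $\Hb_{\ub'}$, and obtain the dichotomy by feeding $\phi$ into the linearized MOTS operator of \Cref{Continuity1} and using the strong maximum principle with extremal value $0$ (your sign bookkeeping is consistent: $L(\rho(\ub'),\ub')\ge 0$ with equality at $\ub'=\ub$ gives $\partial_R L[\psi]+\partial_{\ub}L\ge 0$, hence $\partial_R L[\phi]\le 0$). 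What your route buys is a clean pointwise criterion ($\phi>0$ spacelike, $\phi\equiv 0$ outgoing null) and the avoidance of the causal-separation lemma and the sequence/tangency case analysis; what it costs is (i) the comparison principle is invoked on the nearby slices $\Hb_{\ub'}$ rather than only on $\Hb_{\ub}$, so your hypothesis is slightly stronger than the paper's, (ii) extra regularity: differentiability in $\ub'$ of the cone radius $\rho$ and the $C^1$-in-$\ub$ identity $\partial_R L[\partial_{\ub}R]+\partial_{\ub}L=0$ (available here by \Cref{Subsec: final state of AH}, but not needed by the paper's finite argument), and (iii) you should make explicit the cross-section-independence step you use implicitly when passing from the non-positivity of $\trch$ on the flow-out sections of $\mathcal{N}^+$ to $\trch|_{\Sigma_{\ub'}}\le 0$, which is exactly \Cref{null expasion indenpendence along null hypersurface}.
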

  The heuristic of the proof can be summarized as follows. Fix $\ub \ge 1$ and let $\ub'>\ub$ be suitable close to $\ub$. We consider the outgoing null cone $H$ originating from the MOTS $M_{\ub'}$, which intersects with incoming hypersurface $\Hb_{\ub}$ at $2$-sphere $\Sigma_{\ub}:=\{r=R'(\theta_1, \theta_2) \}\cap \Hb_{\ub}$. See Figure \ref{intro: proof M_ub' cap I^+(M_ub)=empty} as below.
\begin{figure}[h]
    \centering
     \tikzset{every picture/.style={line width=0.75pt}} 
     
\begin{tikzpicture}[x=0.75pt,y=0.75pt,yscale=-1,xscale=1]

\draw   (230.56,89.18) .. controls (230.56,78.68) and (264.01,70.16) .. (305.28,70.16) .. controls (346.55,70.16) and (380,78.68) .. (380,89.18) .. controls (380,99.68) and (346.55,108.2) .. (305.28,108.2) .. controls (264.01,108.2) and (230.56,99.68) .. (230.56,89.18) -- cycle ;
 
\draw    (220,59.36) -- (229.78,90.9) -- (253.6,167.76) ;

\draw    (391.6,56.16) -- (380.25,91.71) -- (355.6,168.96) ;

\draw  [dash pattern={on 0.84pt off 2.51pt}] (269.2,139.16) .. controls (269.2,134.41) and (285.32,130.56) .. (305.2,130.56) .. controls (325.08,130.56) and (341.2,134.41) .. (341.2,139.16) .. controls (341.2,143.91) and (325.08,147.76) .. (305.2,147.76) .. controls (285.32,147.76) and (269.2,143.91) .. (269.2,139.16) -- cycle ;

\draw    (269.2,139.16) -- (253.6,167.76) ;

\draw    (341.2,139.16) -- (355.6,168.96) ;

\draw    (253.6,167.76) .. controls (282,180.4) and (315.6,190.56) .. (355.6,168.96) ;

\draw  [dash pattern={on 0.84pt off 2.51pt}]  (253.6,167.76) .. controls (300,162.16) and (325.6,165.36) .. (353.6,167.76) ;

\draw    (246.4,44.56) -- (200.44,169.8) ;

\draw    (368.96,45.8) -- (405.6,176.56) ;

\draw (342.16,91) node [anchor=north west][inner sep=0.75pt]  [font=\tiny]  {$M_{\underline{u} '}$};

\draw (295.36,136.6) node [anchor=north west][inner sep=0.75pt]  [font=\tiny]  {$M_{\underline{u}}$};

\draw (298.96,183.8) node [anchor=north west][inner sep=0.75pt]  [font=\tiny]  {$\Sigma _{\underline{u} '}$};

\draw (336.16,154.2) node [anchor=north west][inner sep=0.75pt]  [font=\tiny]  {$\underline{H}_{\underline{u}}$};

\draw (394.56,122.2) node [anchor=north west][inner sep=0.75pt]  [font=\tiny]  {$\underline{H}_{\underline{u} '}$};

\draw (357.63,125.07) node [anchor=north west][inner sep=0.75pt]  [font=\tiny]  {$H$};

\end{tikzpicture}
\caption{}
    \label{intro: proof M_ub' cap I^+(M_ub)=empty}
\end{figure}
Noting that $\trch|_{M_{\ub'}}=0$, it follows from the Raychaudhuri equation that $\trch|_{\Sigma_{\ub}}\ge 0$. According to the definition of null comparison principle, we can conclude that $R'\le R$. Furthermore, based on the strong maximum principle for elliptic equations \cite{G-T} (see also \cite{A-G-H}), we must have either $R'<R$ pointwisely or $R'\equiv R$. As a consequence, we prove that the apparent horizon must be piecewise spacelike or null.

The concept of null comparison principle applies to various  scenarios, such as perturbative Schwarzschild spacetimes studied by Klainerman-Szeftel in \cite{K-S}, the anisotropic gravitational collapse spacetime investigated by the first author and Han in \cite{An-Han}, and the spacetime containing a naked singularity studied by the first author in \cite{An naked singularity}. The above \Cref{intro: NCP section NCP thm} allows us to validate the (local) achronality of the apparent horizon in these different scenarios.  
\vspace{3mm}

In  Klainerman-Szeftel's setting, let $\widetilde{\Sigma}=\{r=\tilde{R}(\theta, \varphi) \}$ be a spacelike $2$-surface and  $M_{\ub}=\{r=R(\theta, \varphi) \}$ be a MOTS which both lie in $\Hb_{\ub}$. Given a smooth 2-surface $\widetilde{\Sigma}$ near $M_{\ub}$, the key observation is that if the outgoing null expansion of $\widetilde{\Sigma}$ is non-negative, then the difference $\tR-R$ satisfies the elliptic inequality
\begin{equation*}
    \Delta_{S_{\tR}}(\tR-R)(\theta, \varphi)-d^i_2\f{\partial}{\partial \theta_i} (\tR-R)(\theta, \varphi)-(\f{1}{4m_\infty^2}+d_1)(\tR-R)(\theta, \varphi)\geq  0
\end{equation*}
with $|d^i_2|\ll 1,|d_1|\ll 1$. Applying the maximum principle then implies that $\tR \le R$.

As for the anisotropic gravitational-collapse spacetimes in \cite{An-Han}, we verify the null comparison principle by examining the linearized operator for the quasilinear elliptic equation of MOTS. More precisely, with double null foliation in $(u, \ub, \theta_1, \theta_2)$ coordinates, we define  
\begin{equation*}
     S(\tilde{\phi})=\f{1}{2}\O^{-1} \tR\tr \tilde{\chi}
\end{equation*}
for the spacelike $2$-surface $\widetilde{\Sigma}=\{u=1-\tilde{R}(\theta_1, \theta_2)= 1-\ub a e^{-\tilde{\phi}(\theta_1, \theta_2)}\}$. Here $\O$ represents the lapse function and remains positive,  $\tr \tilde{\chi}$ denotes the outgoing null expansion of $\widetilde{\Sigma}$ and $a$ is a fixed large parameter. Suppose $\phi(\theta_1, \theta_2)$ solves $S(\phi)=0$ and the corresponding MOTS is $M_{\ub}=\{u=1-R(\theta_1, \theta_2)=1-\ub a e^{-\phi(\theta_1, \theta_2)}  \}$. The null comparison principle can be verified by studying the linearized operator $\partial_{\phi} S(\phi)$.

With naked-singularity initial data, in \cite{An naked singularity} the first author showed that a tiny scale-invariant outgoing perturbation would lead to the apparent horizon formation and the naked-singularity censoring. Our null comparison principle can also apply to that setting.
\begin{remark}
    In \cite{Andersson: stability}, Andersson-Mars-Simon  explored the local achronality of apparent horizon through studying the stability operator of MOTS. They proved that, the apparent horizon is piecewise spacelike or null within a neighbourhood of the strictly stable outermost MOTS. They require that the linearized operator associated with the equation of MOTS possesses a strictly positive principle eigenvalue. Our null comparison principle is based on null foliations and we do not require the strictly positive condition. For example,  we  allow the principle eigenvalue of a corresponding elliptic operator to be zero and the null comparison principle is still applicable. 
\end{remark}

\subsubsection{Connections to Black Hole Thermodynamics}
With the proof of achronality for the apparent horizon $\mathcal{AH}$, we further investigate several formulas about the black hole thermodynamics along the apparent horizon $\mathcal{AH}$ first formulated in \cite{A-K2, A-K1} by Ashtekar-Krishnan. Denoting the null piece and the spacelike piece of  $\mathcal{AH}$ by $\mathcal{AH}_n$ and $\mathcal{AH}_s$, in \Cref{Sec: physical law} we verify and provide new proofs for
\begin{enumerate}
    \item The zeroth law: within the null piece $\mathcal{AH}_n$, the surface gravity  is a constant.
    \item The first law: along $\mathcal{AH}$, the change of energy can be expressed by the change of area and the change of angular momentum in the following differential form: 
\begin{equation*}
	d {E}_M=\f{\kappa}{8\pi}d A_M+\O d {J}_M,
\end{equation*}
where ${E}_M$ is the energy, $\kappa$ is the surface gravity, $A_M$ is the area, $\O$ is the angular velocity, ${J}_M$ is the angular momentum, with respect to the MOTS $M_{\ub}$. The specific definitions of these quantities are given in \Cref{Sec: physical law}.
    \item The second law: the area of MOTS $M_{\ub}$ is non-decreasing along $\mathcal{AH}=\cup_{\ub} M_{\ub}$ as the parameter $\ub$ grows. More precisely, it holds that
    $$\f{d}{d\ub}\text{Area}(M_{\ub})=0 \quad \text{in} \ \mathcal{AH}_n \quad \text{and} \quad \f{d}{d\ub}\text{Area}(M_{\ub})>0 \quad \text{in} \ \mathcal{AH}_s.$$
\end{enumerate}

\subsection{Acknowledgements}
XA is supported by MOE Tier 1 grants A-0004287-00-00, A-0008492-00-00 and MOE Tier 2 grant A-8000977-00-00.  TH acknowledges the support of the President Graduate Fellowship of NUS.

\section{Setup}\label{Sec: Basic setup}
We foliate our spacetime $(\mathcal{M} , \mathbf{g})$ by incoming null hypersurface $\Hb_{\ub}$ with $\ub$ being the incoming optical function. Let $\Lb^{\nu}:=-g^{\mu \nu}\partial_\mu \ub$ be the geodesic generator of $\ub$. Along $\Hb_{\ub}$, we then choose coordinate system $(r, \theta_1, \theta_2)$ with
\begin{equation*}
    \partial_r=f\Lb \quad \textrm{and} \quad f<0 \ \text{is a function to be fixed}.
\end{equation*}
The level set of $r$ on $\Hb_{\ub}$ are compact spacelike $2$-surfaces, and we denote them as $S_{\ub, r}$. Note that the function $r$ need not be a optical function.

Choosing  $e_3$ and $e_4$ to be the incoming and outgoing null vectors orthogonal to $S_{\ub , r}$\footnote{Note that $e_3$ must be parallel to $\Lb$.}, and normalizing them by $g(e_3, e_4)=-2$,  we then decompose curvature components and Ricci coefficients with respect to the null frame $\{ e_1, e_2, e_3, e_4\}$. Here $(e_a)_{a=1, 2}$ is an orthonormal tangent frame on $S_{\ub, r}$. We first define null curvature components of Riemann tensor
 \begin{equation}\label{def curvatures}
\begin{split}
\a_{ab}&=\mathbf{R}(e_a, e_4, e_b, e_4),\quad \, \,\,   \ab_{ab}=\mathbf{R}(e_a, e_3, e_b, e_3),\\
\b_a&= \frac 1 2 \mathbf{R}(e_a,  e_4, e_3, e_4) ,\quad \bb_a =\frac 1 2 \mathbf{R}(e_a,  e_3,  e_3, e_4),\\
\rho&=\frac 1 4 \mathbf{R}(e_4,e_3, e_4,  e_3),\quad \sigma=\frac 1 4  \,^*\mathbf{R}(e_4,e_3, e_4,  e_3). 
\end{split}
\end{equation}
Here $\, ^*\mathbf{R}$ stands for the Hodge dual of $\mathbf{R}$. Denoting $\bfD_a:=\bfD_{e_{a}}$, we proceed to define Ricci coefficients
\begin{equation}\label{def Ricci coefficients}
\begin{split}
&\chi_{ab}=\mathbf{g}(\bfD_a e_4,e_b),\, \,\, \quad \chib_{ab}=\mathbf{g}(\bfD_a e_3,e_b),\\
&\xi_a=\frac 12 \mathbf{g}(\bfD_4 e_4,e_a), \ \ \quad \xib_a=\frac 12 \mathbf{g}(\bfD_3 e_3,e_a),\\
&\eta_a=-\frac 12 \mathbf{g}(\bfD_3 e_a,e_4),\quad \etab_a=-\frac 12 \mathbf{g}(\bfD_4 e_a,e_3),\\
&\omega=-\frac 14 \mathbf{g}(\bfD_4 e_3,e_4),\quad\,\,\, \omegab=-\frac 14 \mathbf{g}(\bfD_3 e_4,e_3),\\
&\zeta_a=\frac 1 2 \mathbf{g}(\bfD_a e_4,e_3).
\end{split}
\end{equation}
Letting $\gamma_{ab}$ be the induced metric on $S_{\ub, r}$, we further decompose $\chi, \chib$ into
$$\chi_{ab}=\f12\tr\chi\cdot \gamma_{ab}+\chih_{ab}, \quad \chib_{ab}=\f12\tr\chib\cdot \gamma_{ab}+\chibh_{ab},$$
and $\chih_{ab}$, $\chibh_{ab}$ are the corresponding traceless parts. 

In below sections, we define $\nab$ to be the induced covariant derivative on $S_{\ub, r}$ and Let $\nab_3$ and $\nab_4$ be the projections of covariant derivatives $\bfD_3$, $\bfD_4$ to $S_{\ub, r}$. For the sake of convenience, we also denote 
\begin{equation*}
    \kappa:=\trch, \quad \underline{\kappa}:=\trchb.
\end{equation*}
With a given a scalar function $f$ on $\mathcal{M}$, we set
\begin{equation*}
    \overline{f}:=\int_{S_{\ub ,r}} f, \quad  \check{f}:=f-\overline{f}.
\end{equation*}

\subsection{Incoming Geodesic Foliation in Klainerman-Szeftel's Work \texorpdfstring{\cite{K-S}}{}}
Recall that in \cite{K-S}, Klainerman and Szeftel proved the global existence of solution to Einstein vacuum equations under axially symmetric polarized perturbations, which can be stated as follows: 
\begin{theorem}[Klainerman-Szeftel \cite{K-S}]\label{K-S thm}
	There exists a globally hyperbolic development $\mathcal{M}$ arising from an axially symmetric polarized perturbation  of Schwarzschild initial data set, which can be separated into two parts $\mathcal{M}={}^{(ext)}\mathcal{M}\cup {}^{(int)}\mathcal{M}$. Furthermore, in the interior region ${}^{(int)}\mathcal{M}=\mathcal{M}\cap\{ r\leq r_{\mathcal{T}}\}$, with coordinate system $(\ub, r, \theta, \varphi)$\footnote{Here $r$ denotes the area radius of $S_{\ub, r}$.} we have
	\begin{enumerate}
		\item the spacetime metric $\mathbf{g}$ takes the form
		\begin{equation}\label{KS metric}
		\mathbf{g}=-\f{4\vsgmb}{r\kbb}d\ub dr+\f{\vsgmb^2(\kb+A)}{\kbb}d\ub^2+\gamma(d\theta-\f{1}{2}\vsgmb b d\ub-\f{\ud{b}}{2}\underline{\Theta})^2+e^{2\Phi}d\varphi^2,
		\end{equation}
		where
		\begin{equation}\label{KS metric component}
		b=e_4(\theta),\quad \ud{b}=e_3(\theta), \quad \gamma=\f{1}{(e_\theta(\theta))^2}, \quad \ud{\Theta}:=\f{4\vsgmb}{r\kb}dr-\vsgmb^2\f{(\kb+A)}{\kbb}d\ub.
		\end{equation}
		\item the coordinate derivatives can be expressed as 
		\begin{equation*}
		\begin{split}
		&\partial_r=\f{2}{r\kbb}e_3-\f{2\sqrt{\gamma}}{r\kbb}\ud{b}e_\theta, \quad \partial_\theta=\sqrt{\gamma}e_\theta, \\
		&\partial_{\ub} =\f{1}{2}\vsgmb\Big[e_4-\f{\kb+A}{\kbb}e_3-\sqrt{\gamma}(b-\f{\kb+A}{\kbb}\ud{b})e_\theta\Big].
		\end{split}
		\end{equation*}
	\item there exist some integer $k_{small}\ge 2$ and  sufficiently small constants $\varepsilon_0, \delta_{dec}>0$ such that, with $1\le k\le k_{small}$, for the Hawking mass $m(\ub, r):=\f{r}{2}(1+\f{1}{16\pi}\int_{S_{\ub, r}} \kappa \ud{\kappa})$,  we have
	\begin{equation*}
	\f{|m-m_\infty|}{m_0}+ |\mathfrak{d}^k m|\lesssim \f{\varepsilon_0}{\ub^{1+\delta_{dec}}},
	\end{equation*}
	where  $\mathfrak{d}:=\{e_3, re_4, r\nab \}$, $m_0$ is the initial mass, $m_\infty$ denotes the final Bondi mass and they satisfy
	\begin{equation*}
	|\f{m}{m_0}-1|+|\f{m_\infty}{m_0}-1|\lesssim \f{\varepsilon_0}{\ub^{1+\delta_{dec}}}.
	\end{equation*}
	\item for $\check{\Gamma}=\{\check{\ud{\kappa}}, \chibh, \zeta, \etab, \check{\kappa}, \chih, \check{\omega}, \xi \}$ and $\check{R}=\{\a,\b, \check{\rho}, \sigma, \bb, \ab
	\}$, with $0\leq k\leq k_{small}$, they obey
	\begin{equation*}
	|\mathfrak{d}^k (\check{\Gamma}, \check{R})|\lesssim \f{\varepsilon_0}{\ub^{1+\delta_{dec}}}.
	\end{equation*}
	Note that the Ricci coefficient $\omegab$ is vanishing in their setting.
	\item  for the averages of $\ud{\kappa}, \kappa, \omega, \rho$, with $0\leq k\leq k_{small}$, the below estimates hold
	\begin{equation*}
	|\mathfrak{d}^k (\kbb+\f{2}{r}, \overline{\omega}+\f{m_{\infty}}{r^2}, \kb-\f{2}{r}(1-\f{2m_{\infty}}{r}), \overline{\rho}+\f{2m_{\infty}}{r^3})|\lesssim \f{\varepsilon_0}{\ub^{1+\delta_{dec}}}.
	\end{equation*}
	\item for the metric coefficients, they are bounded via
	\begin{equation*}
	\max\limits_{0\leq k\leq k_{small}}\sup\limits_{{}^{(int)}\mathcal{M}}\ub^{1+\delta_{dec}}\Big(|\mathfrak{d}^k(\vsgmb-1)|+|\mathfrak{d}^k(\f{\gamma}{r^2}-1)+|\mathfrak{d}^k b|+|\mathfrak{d}^k\ud{b}|+|\mathfrak{d}^k(\f{e^\Phi}{r\sin \theta}-1)|+|\mathfrak{d}^k A|\Big) \lesssim \varepsilon_0.
	\end{equation*}
\end{enumerate}
\end{theorem}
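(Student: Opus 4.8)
Theorem~\ref{K-S thm} is not established in the present paper: it is, in the incoming geodesic foliation adapted here, a restatement of the main result of Klainerman--Szeftel \cite{K-S} together with the interior (beyond-$\mathcal{T}$) estimates they derive, so the only thing a proposal can reasonably do is recall the architecture of that argument and indicate which parts supply items (1)--(6). The plan has four stages. \emph{Stage 1 (the gauge).} One works inside the class of generally covariant modulated (GCM) admissible spacetimes: a development of the perturbed data foliated, in the far region, by a suitable outgoing null foliation whose last-slice spheres are GCM $2$-spheres, and, in ${}^{(int)}\mathcal{M}=\mathcal{M}\cap\{r\le r_{\mathcal{T}}\}$, by the incoming geodesic foliation $(\ub,r,\theta,\varphi)$ of the statement. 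The GCM conditions prescribe the $\ell=0,1$ modes of $\kappa$, $\underline\kappa$ and of selected Ricci coefficients on the last-slice spheres; this is exactly what removes, or renders time-integrable, the otherwise non-decaying or slowly-decaying modes, and it is how the mass parameter $m_\infty$ and the $\ell=1$ (``center of mass''/``angular momentum'') degrees of freedom get pinned down.

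\emph{Stage 2 (hyperbolic estimates via the Regge--Wheeler mechanism).} Fix a bootstrap region and assume $|\mathfrak{d}^k(\check\Gamma,\check R)|\lesssim\varepsilon\,\ub^{-1-\delta_{dec}}$ with $\varepsilon$ to be recovered. The driving linear structure is a Regge--Wheeler-type scalar wave equation satisfied by a gauge-invariant quantity built from the extreme curvature component $\alpha$; for it one runs a combined Morawetz and $r^p$-weighted energy hierarchy (in the spirit of the $r^p$-method of Dafermos--Rodnianski), commutes with the symmetry generators, and converts the $r^p$-flux bounds into the pointwise rate $\ub^{-1-\delta_{dec}}$ via Sobolev embedding on the $S$-spheres. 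The polarized axial symmetry is essential here, since it collapses the coupled Teukolsky system into a single genuine scalar equation.

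\emph{Stage 3 (propagation, closing the bootstrap, and the interior).} One then integrates the null-structure transport equations, i.e.\ the $\nabla_4$ and $\nabla_3$ systems for the Ricci and metric coefficients, using the GCM conditions as last-slice data so that no logarithmic or non-decaying terms are generated; this yields the decay of all of $\check\Gamma$, of $m-m_\infty$, and of the metric quantities. Since every error term in these equations is at least quadratic, it is of size $\varepsilon^2\ub^{-2-2\delta_{dec}}$, which is absorbable and recovers $\varepsilon\mapsto\varepsilon_0$. The region beyond $r_{\mathcal{T}}$ is then handled by a second, shorter bootstrap: the incoming cones $\Hb_{\ub}$ are transversal to the horizon, the red-shift vector field gives a coercive energy estimate near $r=2m_\infty$, and the same transport equations, now integrated inward along $\partial_r$, carry the $\ub^{-1-\delta_{dec}}$ decay of $(\check\Gamma,\check R)$, of $\vsgmb-1$, $\gamma/r^2-1$, $b$, $\ud{b}$, $e^\Phi/(r\sin\theta)-1$, $A$, and of the averages of $\underline\kappa,\kappa,\omega,\rho$ down to and across $r=2m_\infty$ --- which is precisely items (1)--(6).

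\emph{Stage 4 (final state) and the main difficulty.} Finally $m_\infty$ is identified with the limit of the Hawking mass $m(\ub,r)$ as $\ub\to\infty$, and $|m-m_\infty|\lesssim\varepsilon_0\ub^{-1-\delta_{dec}}$ follows from the decay of $\kappa\underline\kappa$. The genuine obstacle is not any single inequality but the coupling: the GCM gauge choice, the top-order derivative count (the Regge--Wheeler variable costs two derivatives of $\alpha$, so losses must be rerouted through the elliptic Hodge systems on the $S$-spheres), and the requirement that each nonlinear error be \emph{strictly} better than $\ub^{-1-\delta_{dec}}$ all have to be arranged simultaneously. It is this simultaneous bookkeeping, rather than any isolated estimate, that is the heart of \cite{K-S}; in the present paper we take its conclusions as a black box.
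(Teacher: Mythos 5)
Your proposal correctly identifies that Theorem \ref{K-S thm} is not proved in this paper at all: it is quoted verbatim (in the incoming-geodesic-foliation formulation) from Klainerman--Szeftel \cite{K-S} and used as a black box, which is exactly how the paper treats it. Your sketch of the GCM/Regge--Wheeler/bootstrap architecture of \cite{K-S} is a reasonable summary of the external argument, so there is nothing to compare against internally and no gap to flag.
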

Based on above hyperbolic existence theorem, we conduct a deformation on $\Hb_{\ub}$ to solve for the MOTS.
\subsection{Deformation Equation}
 Along each incoming hypersurface $\Hb_{\ub}$, we will solve for a deformed $2$-sphere $M$. Each point on $M$ has coordinates $(\ub, r, \theta_1 ,\theta_2)=(\ub, R(\ub, \theta_1 ,\theta_2), \theta_1 ,\theta_2)$, with $R(\ub, \theta_1 ,\theta_2)$ being a function to be solved. Based on $M$, the corresponding null frame adapted to it, denoted by $(e'_a, e'_3, e'_4)\ (a=1,2)$, can be expressed as
\begin{equation}\label{new frames}
e_3'=e_3, \quad e_a '=e_a+f e_a (R) e_3, \quad e_4'=e_4+2fe^a(R) e_a+f^2|\nab R|^2 e_3,
\end{equation}
where $f=e_3(r)^{-1}<0$. With this frame, we have
$$e_a '(\ub)=0,  \quad e_a '(r-R)=f e_a (R) e_3(r)-e_a (R)=0.$$ Noting that $e_3$ is orthogonal to $T\Hb_{\ub}=\text{span}\{e_1, e_2, e_3 \}$, with this we can verify that
\begin{equation*}
g(e_a', e_b')=g(e_a, e_b)=\delta_{ab},\quad g(e_4', e_a')=g(e_4', e_4')=0,\quad g(e_3', e_4')=-2.
\end{equation*}
Since $e_3(R)=f^{-1}\partial_r (R)=0$, we also have $e_a '(R)=e_a (R)$. 

We proceed to derive the formula for $\trch'$.
\begin{lemma}\label{proposition deformation formula}
	The trace of the null second fundamental form $\chi'$, relatively to the new frame \eqref{new frames}, can be expressed as
	\begin{equation}\label{deformation formula}
	\trch'=\trch+2f\Delta R+2(\nab f+f(\eta+\zeta))\cdot \nab R-4f^2\chibh_{bc}\nab^b R\nab^c R+(2fe_3(f)-f^2\trchb-4\omegab f^2)|\nab R|^2.
	\end{equation}
 Here $\Delta$ represents the Laplace-Beltrami operator on  2-sphere $S_{\ub, r}$.
\end{lemma}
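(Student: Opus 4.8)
The plan is to compute $\trch'=\delta^{ab}\,\bfg(\bfD_{e'_a}e'_4,e'_b)$ directly, by substituting the new frame \eqref{new frames} and expanding with the Leibniz rule for $\bfD$, in the spirit of the analogous computation in \cite{An: AH}. The expansion is long but essentially mechanical, because the contractions eliminate most of the terms: since $\bfg(e_3,e_3)=\bfg(e_3,e_a)=0$, $\bfg(e_3,e_4)=-2$ and $\bfg(e_a,e_b)=\delta_{ab}$, any piece of $\bfD_{e'_a}e'_4$ proportional to $e_3$ contracts to zero against $e'_a=e_a+f\,e_a(R)\,e_3$, a piece along $e_a$ survives only against the $e_a$-part of $e'_a$, and a piece along $e_4$ survives only against the $f\,e_a(R)\,e_3$-part of $e'_a$ (via $\bfg(e_3,e_4)=-2$). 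Throughout I would use the structure relations
\[
\bfg(\bfD_a e_4,e_b)=\chi_{ab},\qquad \bfg(\bfD_a e_3,e_b)=\chib_{ab},\qquad \bfg(\bfD_a e_4,e_3)=2\zeta_a,\qquad \bfg(\bfD_3 e_4,e_a)=2\eta_a,\qquad \bfg(\bfD_3 e_4,e_3)=-4\omegab,
\]
the orthogonal decomposition $\bfD_{e_a}e_b=\nab_{e_a}e_b+\tfrac12\chi_{ab}e_3+\tfrac12\chib_{ab}e_4$, the fact that the null generators of $\Hb_{\ub}$ are geodesic (hence $\xib=0$, so $\bfg(\bfD_3 e_c,e_3)=0$ and $\bfD_3 e_3\parallel e_3$), and that $R$ does not depend on $r$, so $e_3(R)=0$ and $e'_a(R)=e_a(R)$.

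I would organize the bookkeeping by the four pairings coming from $\bfD_{e'_a}e'_4=\bfD_a e'_4+f\,e_a(R)\,\bfD_3 e'_4$ contracted against the two summands of $e'_a$. The $R$-independent terms give $\delta^{ab}\chi_{ab}=\trch$. The $\Delta R$-contribution comes entirely from $\bfD_a\big(2f\,\nab R\big)$ with $\nab R=\sum_c e_c(R)\,e_c$: the second coordinate derivatives $e_a(e_c(R))$ combine with the tangential part $\nab_{e_a}e_c$ of $\bfD_{e_a}e_c$ to build the Hessian of $R$ on $S_{\ub,r}$, whose trace is $2f\,\Delta R$, while the single derivative $e_a(f)$ traces out $2\,\nab f\cdot\nab R$. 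The remaining first-order terms come from $\bfg(\bfD_a e_4,e_3)=2\zeta_a$ and $\bfg(\bfD_3 e_4,e_a)=2\eta_a$ paired with $f\,e_a(R)\,e_3$, giving $2f(\eta+\zeta)\cdot\nab R$; together with the previous line this is the coefficient $2\big(\nab f+f(\eta+\zeta)\big)$ in \eqref{deformation formula}. Finally, before any simplification, the terms quadratic in $\nab R$ collect to
\[
f^2\,\trchb\,|\nab R|^2\;-\;2f^2\,\chib_{ac}\,\nab^a R\,\nab^c R\;+\;2f\,e_3(f)\,|\nab R|^2\;+\;f^2\,e_3\big(|\nab R|^2\big)\;-\;4\,\omegab\,f^2\,|\nab R|^2,
\]
where the first term is $f^2|\nab R|^2\,\bfg(\bfD_a e_3,e_a)$, the second is the $e_3$-contraction of the $e_4$-component of $2f\,e_c(R)\,\bfD_{e_a}e_c$ against $f\,e_a(R)\,e_3$, the third and fourth arise from $e_3$-differentiating the coefficients $2f$ and $f^2|\nab R|^2$ of $e'_4$, and the last is $f^2|\nab R|^2\,\bfg(\bfD_3 e_4,e_3)$.

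The one step beyond bookkeeping — the part I expect to be the real obstacle — is to rewrite the term $f^2\,e_3\big(|\nab R|^2\big)$ so that the expression collapses to \eqref{deformation formula} rather than staying in the cumbersome form above. The point is that on $\Hb_{\ub}$ the vector field $e_3$ is a multiple of $\partial_r$ and $R$ is $r$-independent, so the $e_3$-variation of $|\nab R|^2=\gamma^{ab}\partial_a R\,\partial_b R$ comes entirely from the variation of the induced metric along the generators, which is governed by the null second fundamental form: $\partial_r\gamma_{ab}=2f\,\chib_{ab}$, equivalently
\[
e_3\big(|\nab R|^2\big)=-2\,\chib_{ac}\,\nab^a R\,\nab^c R=-\trchb\,|\nab R|^2-2\,\chibh_{ac}\,\nab^a R\,\nab^c R
\]
(this is the same metric-variation computation that underlies \Cref{change laplacian}; cf.\ \cite{An: AH}). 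Substituting this identity and expanding $\chib_{ab}=\tfrac12\trchb\,\gamma_{ab}+\chibh_{ab}$ in the quadratic block, the $\trchb$-terms sum to $-f^2\,\trchb\,|\nab R|^2$ and the $\chibh$-terms sum to $-4f^2\,\chibh_{ac}\,\nab^a R\,\nab^c R$, so the quadratic block reduces to $\big(2f\,e_3(f)-f^2\,\trchb-4\,\omegab f^2\big)|\nab R|^2-4f^2\,\chibh_{ac}\,\nab^a R\,\nab^c R$; adding back the zeroth- and first-order terms yields exactly \eqref{deformation formula}. The residual difficulty is purely clerical, and I would control it by never expanding $\bfD_{e'_a}e'_4$ all at once but only within the four-pairing scheme, and by carefully distinguishing terms that are genuinely along $e_3$ (which drop) from those, like $e_3(f)\,\nab R$ and $\sum_c e_3(e_c(R))\,e_c$, that are $e_3$-derivatives of coefficients of tangential vectors (which do not).
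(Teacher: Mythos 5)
Your proposal is correct and follows essentially the same route as the paper: expand $\trch'=\sum_a\bfg(\bfD_{e'_a}e'_4,e'_a)$ in the deformed frame, take the trace, and collapse the quadratic block using the identity $e_3(|\nab R|^2)=-\trchb\,|\nab R|^2-2\chibh_{ac}\nab^a R\,\nab^c R$, valid since $e_3(R)=0$ and $\xib=0$. The only (cosmetic) difference is that you derive this identity from the first variation of the induced metric, $\partial_r\gamma_{ab}=2f\chib_{ab}$, whereas the paper obtains it from the commutation formula $[\nab_3,\nab]R$ together with $\nab_3 R=0$; the two derivations are equivalent and yield the same final bookkeeping.
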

\begin{proof}
	Denoting $F=f\nab R$, by \eqref{new frames} we have $$e_4'=e_4+2F+|F|^2 e_3, \quad e_a'=e_a+F_a e_3.$$
	Applying the definition, we get
	\begin{equation}\label{formula for chi'}
	\begin{split}
	\chi'_{ab}=\bfg(D_a e_4, e_b)=\chi_{ab}&+(\nab_a F_b+\nab_b F_a)+\nab_3 (F_a F_b)+(\eta_b+\zeta_b)F_a+(\eta_a+\zeta_a)F_b \\
	&+|F|^2 \chib_{ab}-F_bF^c \chib_{ac}-F_aF^c\chib{bc}-4\omegab F_a F_b.
	\end{split}
	\end{equation}
	Taking the trace, we then deduce
	\begin{equation}
	\trch'=\trch+2\div F+\nab_3|F|^2+2(\eta+\zeta)\cdot F-2\chih_{bc} F^b F^c-4\omegab |F|^2.
	\end{equation}
    Using $\nab_3 R=0$ and the commutation formula for a scalar function
    \begin{equation*}
    [\nab_3, \nab]h=(\eta-\zeta)\nab_3 h-\chib \cdot \nab h,
    \end{equation*}
    we derive
    \begin{equation}
    \begin{split}
    \nab_3 F= \nab_3(f\nab R)=&e_3(f) \nab R+ f\nab_3  \nab R
    =(e_3(f)-f\chib)\cdot \nab R.
    \end{split}
    \end{equation}
    This further gives
    \begin{equation}
    \nab_3 |F|^2=2\nab_3 F \cdot F=(2f e_3(f)-f^2\trchb)|\nab R|^2-2f^2 \chibh_{bc}\nab^b R\nab^c R.
    \end{equation}
    Combining above together, we arrive at
    \begin{equation}
    \begin{split}
    \trch'=&\trch+2 \div(f\nab R)+(2f e_3(f)-f^2\trchb)|\nab R|^2-2f^2 \chibh_{bc}\nab^b R\nab^c R \\
    &+2f(\eta+\zeta)\cdot \nab R-2f^2 \chibh_{bc}\nab^b R \nab^c R-4\omegab f^2 |\nab R|^2 \\
    =&\trch+2f\Delta R+2\Big(\nab f+f(\eta+\zeta)\Big)\cdot \nab R-4f^2 \chibh_{bc}\nab^b R \nab^c R \\&+(2f e_3(f)-f^2\trchb-4f^2 \omegab)|\nab R|^2.
    \end{split}
    \end{equation}
\end{proof}
With the intention of future use, here we reformulate $\Delta R$.
\begin{lemma}(\cite{An: AH})\label{change laplacian} On $M$, its Laplace-Beltrami operator $\Delta'_M$  satisfies
	\begin{equation}\label{3.11} 
	\Delta_M R =\Delta'_M R+2f \chibh_{bc}\nab^b R\nab^c R.
	\end{equation}
\end{lemma}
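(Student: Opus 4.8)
The plan is to compare the two induced metrics on $M$ coming from the two null frames: the frame $(e_1,e_2,e_3,e_4)$ adapted to $S_{\ub,r}$ and the deformed frame $(e_1',e_2',e_3',e_4')$ adapted to $M$. The key structural fact is that $M=\{r=R(\theta_1,\theta_2)\}$ sits inside the incoming null hypersurface $\Hb_{\ub}$, and since $e_3$ generates $\Hb_{\ub}$ and is normal to every $S_{\ub,r}$, the tangent vectors $e_a'=e_a+fe_a(R)e_3$ differ from $e_a$ only by a multiple of $e_3$, which is orthogonal to $T\Hb_{\ub}=\mathrm{span}\{e_1,e_2,e_3\}$. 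Hence $g(e_a',e_b')=g(e_a,e_b)=\delta_{ab}$, so the \emph{intrinsic} metric of $M$ agrees with that of $S_{\ub,r}$; it is only the Laplace–Beltrami operators that differ, and that discrepancy is precisely the content of \Cref{change laplacian} to be proved.

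First I would record that $\Delta_M$ acting on $R$, where $R$ is regarded as a function on the abstract manifold $\mathbb{S}^2$ with coordinates $(\theta_1,\theta_2)$, is computed using the metric $\gamma_{ab}$ of $S_{\ub,r}$ (since that metric, in the $(\theta_1,\theta_2)$ coordinates, is literally the same array of functions as the metric on $M$), whereas $\Delta'_M R$ is computed using the covariant derivative $\nabla'$ associated to the frame $(e_a')$ on $M$. Writing $\Delta_M R=\gamma^{ab}\nabla_a\nabla_b R$ and $\Delta'_M R=\delta^{ab}\nabla'_{e_a'}\nabla'_{e_b'}R$, the difference reduces to the difference of the two connections. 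Concretely, $\nabla'_{e_a'}e_b'$ versus $\nabla_{e_a}e_b$: the extra term arises because $e_a'$ has an $e_3$-component, so $\bfD_{e_a'}e_b'$ picks up contributions involving $\bfD_{e_a}e_3$, $\bfD_{e_3}e_b$, etc., and when one projects onto $TM$ one is left with a Christoffel-type correction. Projecting $\bfD_{e_a'}e_b'$ onto $TM=\mathrm{span}\{e_1',e_2'\}$ and tracing, the $e_3$-directed pieces contribute exactly through $\chib_{ab}=\bfg(\bfD_a e_3,e_b)$, producing the term $2f\chibh_{bc}\nabla^b R\nabla^c R$ after one uses $e_3(R)=0$ and the traceless part appears because the pure-trace part of $\chib$ has already been absorbed into a conformal factor comparison or cancels against the $\gamma^{ab}$ versus $\delta^{ab}$ bookkeeping.

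The cleanest route, which I would actually write out, is: express $\Delta_M R$ via the second fundamental form of $M$ inside $\Hb_{\ub}$, or equivalently differentiate the identity $e_a'(r-R)=0$ once more. Apply $e_b'$ to $e_a'(r-R)=f e_a(R)e_3(r)-e_a(R)=0$; since $f e_3(r)=1$ this is the tautology $0=0$, so instead one differentiates $\nabla'$-covariantly: $0=\nabla'_{e_b'}\big(e_a'(r-R)\big)$, expand using $e_a'=e_a+fe_a(R)e_3$, and compare with the analogous computation of $\nabla_{e_b}(e_a(r-R))$ on $S_{\ub,r}$; here $r-R$ is constant on $M$ but $r$ is constant on $S_{\ub,r}$ and $R$ is constant on $S_{\ub,r}$ too, so each Hessian is controlled and the mismatch is a single commutator term. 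Tracing the resulting identity with $\delta^{ab}$ and using $\chibh_{ab}=\bfg(\bfD_a e_3,e_b)-\frac12\trchb\,\gamma_{ab}$ together with $e_3(R)=0$ yields \eqref{3.11}. The main obstacle is purely bookkeeping: keeping straight which connection, which metric inverse ($\gamma^{ab}$ versus $\delta^{ab}$), and which frame each derivative refers to, and making sure the trace part $\frac12\trchb$ drops out so that only $\chibh$ survives; once the frame relations \eqref{new frames} and the orthogonality $g(e_3,TM)=0$ are used carefully, everything collapses to the stated one-line correction, exactly as in \cite{An: AH}.
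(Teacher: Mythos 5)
Your overall strategy---compare the two adapted frames via \eqref{new frames}, use $e_3(R)=0$, and let the second fundamental form $\chib_{ab}=\bfg(\bfD_a e_3,e_b)$ carry the correction---is essentially the paper's, but your sketch has a genuine gap exactly at the point that makes the lemma non-trivial: why only the traceless part $\chibh$, and not the full $\chib$, survives. You assert that the trace part ``has already been absorbed into a conformal factor comparison or cancels against the $\gamma^{ab}$ versus $\delta^{ab}$ bookkeeping''; that is not the mechanism, and no conformal factor enters. In the actual computation the frame change produces the full $\chib$ contribution,
\begin{equation*}
e^a e_a(R)=e^{a'}e_{a'}(R)+f\,\trchb\,|\nab R|^2+2f\,\chibh_{ac}\nab^a R\,\nab^c R,
\end{equation*}
and the trace term is cancelled only when each frame-Hessian trace is converted into the corresponding intrinsic Laplacian by subtracting the normal components of $\bfD_{e^{a'}}e_{a'}$ (resp.\ $\bfD_{e^{a}}e_{a}$): one gets $e^{a'}e_{a'}(R)=\Delta'_M R-\f12\trchb'\,e_4'(R)$ and $e^{a}e_{a}(R)=\Delta_M R-\f12\trchb\,e_4(R)$, and since $\trchb'=\trchb$ while $e_4'(R)-e_4(R)=2f|\nab R|^2$ by \eqref{new frames}, the difference of these two normal-direction terms is exactly $f\,\trchb\,|\nab R|^2$, which kills the trace part. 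Your proposal never tracks the $e_4(R)$ and $e_4'(R)$ terms at all (they come from the $\bfg(\bfD_{e^{a'}}e_{a'},e_3')\,e_4'$ piece of the projection), so as written you cannot exhibit this cancellation, and the explanation you offer for it is incorrect.

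The ``cleanest route'' you say you would actually write out is also not viable as described: you concede that differentiating $e_a'(r-R)=0$ yields a tautology, the proposed covariant re-differentiation is left vague, and the supporting claim that ``$R$ is constant on $S_{\ub,r}$'' is false --- $R=R(\theta_1,\theta_2)$ varies over each sphere; what holds is only $e_3(R)=0$, i.e.\ constancy along the generators of $\Hb_{\ub}$. To repair the argument, do what the paper does: expand $e_a=e_a'-fe_a(R)e_3$, use $(e_3e_a)R=-\chib_{ab}e^b(R)$ (a consequence of $e_3(R)=0$ and the torsion-free property) to compute $e^ae_a(R)$ in terms of $e^{a'}e_{a'}(R)$ plus $\chib$-terms, then convert both frame-Hessian traces into $\Delta_M R$ and $\Delta'_M R$ by removing their $e_3,e_4$ (resp.\ $e_3',e_4'$) components, and verify the $\trchb$ cancellation indicated above.
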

\begin{proof}
	The proof of this lemma is similar to the corresponding one in \cite{An: AH}. But here we have an additional factor $f$. Denoting $(e_3 e_a) R=e_3 e_a (R)=\bfD^2_{e_3, e_a} R$,  we have
	\begin{equation*}
	\begin{split}
		(e_3 e_a) R=(e_a e_3) R=e_a(e_3(R))-(\bfD_{e_a} e_3 )R
		=-(\chib_{ab}e^b+\zeta_a e_3)R=-\chib_{ab}e^b(R).
	\end{split}
	\end{equation*}
	Recalling $e_3(R)=0, e_{a'}(R)=e_a(R)$ and noting that $\bfD_{e_3} e_3=-2\omegab e_3$ (since $e_3$ is pre-geodesic  in ${}^{(int)} \mathcal{M}$), we deduce 
	\begin{equation}\label{3.10}
	\begin{split}
	&e^a e_a(R)\\
	=&(e^{a'}-fe^a(R) e_3)\Big(e_{a'} (R)-fe_a(R) e_3(R)\Big)-\Big(\bfD_{e^{a'}-fe^a(R) e_3} (e_{a'}-fe_a(R) e_3)\Big) R \\
	=&e^{a'}e_{a'} (R)-fe^a(R)e_3 e_a (R)+fe_a(R)(\bfD_{e^a}e_3) R\\
	=&e^{a'}e_{a'} (R)+f\trchb\nab^a R \nab_a R+2f\chibh_{ac}\nab^a R \nab^c R.
	\end{split}
	\end{equation}
	
	On the other side, we also have
	\begin{equation*}
	\bfD_{e^{a'}}e_{a'}(R)=\overline{\bfD_{e^{a'}}e_{a'}}(R)-\f{1}{2}\bfg(\bfD_{e^{a'}}e_{a'}, e_3')e_4'(R)-
\f{1}{2}\bfg(\bfD_{e^{a'}}e_{a'}, e_4')e_3'(R),
	\end{equation*}
	where $\overline{\bfD_{e^{a'}}e_{a'}}$ is the projection of $\bfD_{e^{a'}}e_{a'}$ to the tangent space $T_p M$ at point $p$.  Thus, we can write
	\begin{equation*}
	\begin{split}
	e^{a'}e_{a'}(R)=&e^{a'}\big(e_{a'}(R)\big)-\bfD_{e^{a'}}e_{a'}(R) \\
	=&e^{a'}\big(e_{a'}(R)\big)-\overline{\bfD_{e^{a'}}e_{a'}}(R)-\f12 \trchb e_4'(R) \\
	=& \Delta'_M R-\f12 \trchb' e_4'(R).
	\end{split}
	\end{equation*}
	Combining with \eqref{3.10}, we then derive
	\begin{equation}\label{3.10.1}
	\begin{split}
	e^a e_a(R)&=e^{a'}e_{a'} (R)+f\trchb\nab^a R \nab_a R+2f\chibh_{ac}\nab^a R \nab^c R\\
	&=	\Delta'_M R-\f{1}{2}\trchb e_4'(R)+f\trchb\nab^a R \nab_a R+2f\chibh_{ac}\nab^a R \nab^c R.
	\end{split}
	\end{equation}
	Meanwhile, observing 
	\begin{equation*}
	\begin{split}
		\bfD_{e^{a}}e_{a}(R)=&\overline{\bfD_{e^{a}}e_{a}}(R)-\f{1}{2}\bfg(\bfD_{e^{a}}e_{a}, e_3)e_4(R)-\f{1}{2}\bfg(\bfD_{e^{a}}e_{a}, e_4)e_3(R) \\
		=&\overline{\bfD_{e^{a}}e_{a}}(R)+\f{1}{2}\trchb e_4 (R),
	\end{split}
	\end{equation*}
	it yields
	\begin{equation*}
	\begin{split}
	e^{a}e_{a}(R)=&e^{a}\big(e_{a}(R)\big)-\bfD_{e^{a}}e_{a}(R) \\
	=&e^{a}\big(e_{a}(R)\big)-\overline{\bfD_{e^{a}}e_{a}}(R)-\f{1}{2} \trchb e_4 (R)\\
	=& \Delta_M R-\f{1}{2}\trchb e_4(R).
	\end{split}
	\end{equation*}
	Comparing with \eqref{3.10.1} and utilizing \eqref{new frames}, we arrive at \eqref{3.11}.
\end{proof}

Back to \Cref{proposition deformation formula}, we can rewrite it as 
\begin{proposition}
	The trace of the null second fundamental form $\chi'$, relatively to the new frame \eqref{new frames}, satisfies
	\begin{equation}\label{null expansion on MOTS}
	\trch'=\trch+2f\Delta'_M R+2(\nab f+f(\eta+\zeta))\cdot \nab R+(2fe_3(f)-f^2\trchb-4\omegab f^2)|\nab R|^2.
	\end{equation}
\end{proposition}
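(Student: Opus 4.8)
The plan is to derive \eqref{null expansion on MOTS} by combining the deformation formula \eqref{deformation formula} of \Cref{proposition deformation formula} with the change-of-Laplacian identity \eqref{3.11} of \Cref{change laplacian}. These two facts are already proved, so the argument amounts to substitution and cancellation rather than any new geometric input.

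Concretely, I would start from \eqref{deformation formula}:
\begin{equation*}
\trch'=\trch+2f\Delta_M R+2(\nab f+f(\eta+\zeta))\cdot \nab R-4f^2\chibh_{bc}\nab^b R\nab^c R+(2fe_3(f)-f^2\trchb-4\omegab f^2)|\nab R|^2.
\end{equation*}
By \eqref{3.11}, the term $\Delta_M R$ rewrites as $\Delta'_M R+2f\chibh_{bc}\nab^b R\nab^c R$. Multiplying by $2f$ produces $2f\Delta'_M R+4f^2\chibh_{bc}\nab^b R\nab^c R$, and this last quadratic term exactly cancels the $-4f^2\chibh_{bc}\nab^b R\nab^c R$ present in \eqref{deformation formula}. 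Everything else is untouched, so one arrives immediately at \eqref{null expansion on MOTS}. I would present this as a two- or three-line computation.

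There is essentially no obstacle here, since the statement is a bookkeeping consequence of results established just above; the only point requiring mild care is making sure the shear term $\chibh_{bc}\nab^b R\nab^c R$ carries the same coefficient ($4f^2$ with opposite signs) in the two inputs so that the cancellation is exact, which it does. I would also remark that the hypothesis $\nab_3 R = 0$ (equivalently $e_3(R)=0$, valid because $r=R(\ub,\theta_1,\theta_2)$ is constant along $\partial_r$) underlies both \Cref{proposition deformation formula} and \Cref{change laplacian}, and is what makes the manipulation legitimate; no further assumptions are needed.
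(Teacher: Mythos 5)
Your proposal is correct and matches the paper's route exactly: the paper obtains \eqref{null expansion on MOTS} by substituting the identity \eqref{3.11} of \Cref{change laplacian} into the deformation formula \eqref{deformation formula}, with the $4f^2\chibh_{bc}\nab^b R\nab^c R$ terms cancelling just as you describe. Your remark that $e_3(R)=0$ underlies both inputs is also consistent with the paper's setup.
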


\subsection{An Elliptic Operator Associated with the MOTS}\label{Subsec: MOTS eqn and its derivative}
Let $I$ be a connected interval. For fixed constants $0<R_1<R_2$,  with $\ub \in I $ and $R=R(\theta_1, \theta_2)\in  C^2(\mathbb{S}^2)$ satisfying $R_1\le R\le R_2$, we consider the below elliptic operator at $S_{\ub, R}$: 
\begin{equation*}
\begin{split}
    L(R, \ub):=(\f{1}{2} f^{-1} \trch')|_{S_{\ub, R}}=
&\Delta'_M R+\big(f^{-1}\nab f+(\eta+\zeta)\big)\cdot \nab R\\&+(e_3(f)-\f{1}{2}f\trchb-2\omegab f)|\nab R|^2+\f{1}{2}f^{-1}\trch.
\end{split}
\end{equation*}
Set
\begin{equation*}
    \mathscr{G}:=\{\mathbf{g}(\prub, e_4), \mathbf{g}(\prub, e_3), \mathbf{g}(\prub, e_a) \} \quad \text{with} \quad a=1,2.
\end{equation*}
We also denote
\begin{equation*}
    M:=\sum_{k=0}^2\max\limits_{I\times [R_1, R_2]\times \mathbb{S}^2} |(e_4, e_3, \nab)^{k} (\Gamma, \mathscr{R}, \log |f|,  \mathscr{G})|(\ub, r, \theta_1, \theta_2).
\end{equation*}
Here $\Gamma$ represents the collection of all the Ricci coefficients, and $\mathscr{R}$ denotes the set comprising all the curvature components.

We proceed to calculate the Fr\'echet derivative of $L$ with respect to $R$.
\begin{lemma}\label{partial R L}
    For any $(\ub, R)\in I \times C^2(\mathbb{S}^2)$, with $R_1\le R\le R_2$, we have
 \begin{equation*}
\begin{split}
&\partial_R L(R, \ub)[W]=\Delta'_{S_{\ub, R}} W+B_a \nab^a W+\l C+\f{1}{2}f\nab_3 (f^{-1}{\rm tr\chi}) \r W
\end{split}
\end{equation*}
with  $B_a, C$  satisfying
\begin{equation*}
    |B_a|+|C|\le C(M, R_1, R_2)\Big[|\nab^2 R|+|\nab R|+|\nab R|^2 \Big].
\end{equation*}
\end{lemma}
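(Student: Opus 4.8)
The plan is to compute the Fréchet derivative directly from the formula for $L$ given in \Cref{Subsec: MOTS eqn and its derivative}, differentiating term by term in the perturbation parameter $\epsilon$ at $\epsilon = 0$, where $R$ is replaced by $R + \epsilon W$. The subtlety is that $L(R,\ub)$ depends on $R$ not only through the explicit appearances of $R$, $\nab R$, and the second-order operator $\Delta'_M$, but also through the fact that every geometric coefficient ($f$, $\eta$, $\zeta$, $\trchb$, $\omegab$, $\trch$, and the metric defining $\nab$ and $\Delta'_M$) is evaluated on the $2$-sphere $S_{\ub,R}$, i.e.\ at the radial value $r = R(\theta_1,\theta_2)$. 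So I would organize the computation as: (i) the variation coming from differentiating the second-order principal term $\Delta'_M R$; (ii) the variation from the first-order term $\big(f^{-1}\nab f + (\eta+\zeta)\big)\cdot \nab R$; (iii) the variation from the quadratic gradient term $(e_3(f) - \tfrac12 f\trchb - 2\omegab f)|\nab R|^2$; (iv) the variation from the zeroth-order term $\tfrac12 f^{-1}\trch$; and within each, separate the "explicit'' variation of $R$ and $\nab R$ from the "coefficient'' variation induced by moving the evaluation point $r = R \mapsto r = R + \epsilon W$, which to first order produces a factor $W \cdot \partial_r(\,\cdot\,) = W f^{-1} e_3(\,\cdot\,)$ acting on each coefficient (using $\partial_r = f e_3$ along $\Hb_{\ub}$, after the coordinate normalization, though here one must be careful: in the setup section $\partial_r = f\Lb$ and $e_3 \parallel \Lb$, so $\partial_r$ acting on a scalar is $f e_3$ up to the normalization already fixed).

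The key output to extract is the structure, not the exact coefficients. The principal-symbol term: differentiating $\Delta'_M R$ in the explicit $R$ gives $\Delta'_{S_{\ub,R}} W$; the coefficient variation of the metric inside $\Delta'_M$ contributes only lower-order terms in $W$ (no second derivatives of $W$), with coefficients controlled by $M$ and multiplied by $|\nab^2 R| + |\nab R|$ — these get absorbed into $B_a$ and $C$. The first-order term in $W$: differentiating the explicit $\nab R$ in $\big(f^{-1}\nab f + (\eta+\zeta)\big)\cdot\nab R$ and in the $|\nab R|^2$ term gives a $B_a\nab^a W$ contribution with $B_a$ bounded by $C(M,R_1,R_2)(1 + |\nab R|)$, hence by $C(M,R_1,R_2)(|\nab^2 R| + |\nab R| + |\nab R|^2)$ after allowing the stated slack. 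The zeroth-order term in $W$: the coefficient variation of $f^{-1}\trch$ contributes exactly $\tfrac12 f \nab_3(f^{-1}\trch)\, W$ — this is the distinguished term singled out in the statement, coming from $\partial_r(\tfrac12 f^{-1}\trch) = \tfrac12 f e_3(f^{-1}\trch) = \tfrac12 f \nab_3(f^{-1}\trch)$ (scalar, so $e_3 = \nab_3$). All remaining zeroth-order contributions — the coefficient variations of $f^{-1}\nab f + \eta + \zeta$ dotted with $\nab R$, the coefficient variation of $e_3(f) - \tfrac12 f\trchb - 2\omegab f$ times $|\nab R|^2$, the explicit variation of $R$ inside coefficients that depend on $R$ only through $r$-evaluation in the zeroth-order slot — all carry either a factor $|\nab R|$ or $|\nab R|^2$ or $|\nab^2 R|$ and thus go into $C$, with the bound $|C| \le C(M,R_1,R_2)(|\nab^2 R| + |\nab R| + |\nab R|^2)$.

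I expect the main obstacle to be \emph{bookkeeping rather than genuine difficulty}: one must track the interplay between the coordinate frame $(e_3,e_4,e_a)$, the induced frame $(e'_3,e'_4,e'_a)$ on $M$, and the chain rule for $\partial_r = f e_3$, and verify that no hidden second-order-in-$W$ terms survive outside the Laplacian and that no zeroth-order term escapes the claimed bound. In particular, care is needed with the variation of $\Delta'_M$: the operator $\Delta'_M$ acts with respect to the induced metric on $M_{\ub+\epsilon\cdots}$, which itself varies; by \Cref{change laplacian} this is $\Delta_M R = \Delta'_M R + 2f\chibh_{bc}\nab^b R\nab^c R$, and differentiating $\Delta_M$ (the Laplacian of the fixed $2$-sphere $S_{\ub,R}$ with respect to the angular metric, reevaluated at shifted $r$) is cleaner, so I would route the principal-term variation through $\Delta_M$ and then correct by the variation of the explicit quadratic term $2f\chibh_{bc}\nab^b R\nab^c R$, which is manifestly first-order-and-lower in $W$ with the right bound. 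Once the reduction is set up this way, each of the four pieces is a short computation, and collecting terms yields exactly the stated form with $B_a, C$ satisfying the asserted estimate; the smallness of $\varepsilon_0$ and the decay estimates \eqref{intro: decay est} are not needed here — only the crude bound by $M$, $R_1$, $R_2$, which is why the lemma is stated for a general interval $I$.
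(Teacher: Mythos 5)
Your proposal is correct and follows essentially the same route as the paper's proof: a term-by-term Fréchet differentiation that separates the explicit variation of $R$, $\nab R$ from the coefficient variation caused by moving the evaluation point $r=R$, converts the latter via $\partial_r = f e_3=f\nab_3$ on scalars, routes the principal term through \Cref{change laplacian}, isolates $\tfrac12 f\nab_3(f^{-1}\trch)W$ as the distinguished zeroth-order contribution, and dumps everything else into $B_a$, $C$ with the stated bound. Two minor remarks: the paper carries out the expansion of $f\nab_3(\Delta_{S_{\ub,R}}R)$ explicitly using the commutation formulas \eqref{commute nab3 nab} and \eqref{commute nab3 laplace}, which is the only computational content you leave implicit, and your early expression ``$W f^{-1}e_3(\cdot)$'' for the coefficient variation is a slip (it should be $W f e_3(\cdot)$, as you in fact use correctly when extracting the $\tfrac12 f\nab_3(f^{-1}\trch)W$ term).
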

\begin{proof}
According to the definition of Fr\'echet derivative, for any $W\in  C^2(\mathbb{S}^2)$, we obtain
\begin{equation*} 
\begin{split}
\partial_R L(R, \ub)[W]
=&\lim\limits_{\varepsilon\to 0}\f{1}{\varepsilon}\Big(F(R+\varepsilon W, \ub)-F(R, \ub)\Big)\\
=&\lim\limits_{\varepsilon\to 0}\f{1}{\varepsilon}\Big(\Delta'_{S_{\ub, R+\varepsilon W}} (R+\varepsilon W)-\Delta'_{S_{\ub, R}} R\Big)\\
&+\lim\limits_{\varepsilon\to 0}\f{1}{\varepsilon}\Big[\l(e_3(f)-\f{1}{2}f\trchb-2\omegab f)|\nab (R+\varepsilon W)|^2\r|_{\ub, R+\varepsilon W}\\&-\l(e_3(f)-\f{1}{2}f\trchb-2\omegab f)|\nab R|^2\r|_{\ub, R} \Big]\\
&+\lim\limits_{\varepsilon\to 0}\f{1}{\varepsilon}\Big[ \l\big(f^{-1}\nab f+(\eta+\zeta)\big)\cdot \nab (R+\varepsilon W)+\f{1}{2}f^{-1}\trch\r|_{\ub, R+\varepsilon W}\\&-\l\big(f^{-1}\nab f+(\eta+\zeta)\big)\cdot \nab R+\f{1}{2}f^{-1}\trch\r|_{\ub, R}\Big]\\
=:&I_1+I_2+ I_3.
\end{split}
\end{equation*}
To evaluate $I_1, I_2$, $I_3$, we appeal to the following commutation formulas in \cite{An: AH}. For  
any scalar function $f$, we have
	\begin{equation}\label{commute nab3 nab}
	\begin{split}
	[\nab_3, \nab]f=\f12(\eta+\etb)\nab_3 f-\chib\cdot\nab f 
	\end{split}
	\end{equation}
and	
 \begin{equation}\label{commute nab3 laplace}
	\begin{split}
	[\nab_3, \D]f=&-\tr\chib \D f-2\chibh\cdot\nab^2 f+\beb\cdot\nab f+\f12(\eta+\etb)\cdot \nab_3 \nab f-\eta\cdot\chibh\cdot \nab f\\
	&+\f12 \tr\chib\eta\cdot\nab f+\div\l\f12(\eta+\etb)\nab_3 f\r-\div \chib\cdot \nab f.
	\end{split}
	\end{equation}
We start to compute $I_i$ for $i=1, 2, 3$.
\begin{lemma}\label{est I1}
For $I_1$, the following equality holds
	\begin{equation*}
	\begin{split}
	I_1=&\lim\limits_{\varepsilon\to 0}\f{1}{\varepsilon}\Big(\Delta'_{S_{\ub, R+\varepsilon W}} R-\Delta'_{S_{\ub, R}} R\Big) 
	=f\nab_3(\Delta_{S_{\ub, \tR}} \tR)W-f\nab_3\Big(2f \chibh_{ab}\nab^a R \nab^b R\Big)W.
	\end{split}
	\end{equation*}
\end{lemma}
\begin{proof}
	Set $g$ and $\theta_1, \theta_2$ to be the induced metric and independent angular variables on $S_{\ub, r}$. For a scalar function $f$, we have
	\begin{equation}\label{3.13}
	\begin{split}
	\D_{S_{\ub, r}} f=&\f{1}{\sqrt{\det g}}\f{\partial}{\partial \theta_i} (\sqrt{\det g}\,g^{\theta_i \theta_l}\f{\partial f}{\partial \theta_l})\\
	=&g^{\theta_1 \theta_1}\f{\partial^2 f}{\partial \theta_1 \partial \theta_1}+g^{\theta_2 \theta_2}\f{\partial^2 f}{\partial \theta_2 \partial \theta_2}+2 g^{\theta_1 \theta_2}\f{\partial^2 f}{\partial \theta_1 \partial \theta_2}+\f{\partial}{\partial \theta_1} (g^{\theta_1 \theta_1})\f{\partial f}{\partial \theta_1}+\f{\partial}{\partial \theta_1} (g^{\theta_1 \theta_2})\f{\partial f}{\partial \theta_2}\\
	&+\f{\partial}{\partial \theta_2} (g^{\theta_2 \theta_1})\f{\partial f}{\partial \theta_1}+\f{\partial}{\partial \theta_2} (g^{\theta_2 \theta_2})\f{\partial f}{\partial \theta_2}+\f12 g^{\theta_k \theta_j}\f{\partial g_{\theta_j \theta_k}}{\partial \theta_i} g^{\theta_i \theta_l}\f{\partial f}{\partial \theta_l},
	\end{split}
	\end{equation}
	where $i,j,k,l=1,2$ and $g^{\theta_k \theta_j}$ depends on $(\ub, r, \theta, \varphi)$. Here we also use the formula
	$$\f{\partial}{\partial \theta_i}\det g=\det g \cdot g^{\theta_k \theta_j} \cdot \f{\partial g_{\theta_j \theta_k}}{\partial \theta_i}.$$
	Applying \Cref{change laplacian}, together with $\partial_r R=0$, we deduce
	\begin{equation*}
	\begin{split}
	\lim\limits_{\varepsilon\to 0}\f{1}{\varepsilon}\Big(\Delta'_{S_{\ub, R+\varepsilon W}} R-\Delta'_{S_{\ub, R}} R\Big) 
	=&\lim\limits_{\varepsilon\to 0}\f{1}{\varepsilon}\Big(\Delta_{S_{\ub, R+\varepsilon W}} R-\Delta_{S_{\ub, R}} R\Big)-\lim\limits_{\varepsilon\to 0}\f{1}{\varepsilon}\Big[ \l 2f \chibh_{bc}\nab^b \tR\nab^c \tR \r|_{S_{\ub, \tR+\varepsilon W}}\\&-\l2f \chibh_{bc}\nab^b \tR\nab^c \tR\r |_{S_{\ub, \tR}}\Big]\\
	=&\f{\partial}{\partial r}(\Delta_{S_{\ub, \tR}} \tR) W-\f{\partial}{\partial r}\Big(2f \chibh_{bc}\nab^b \tR\nab^c \tR\Big)W \\
	=&f\nab_3(\Delta_{S_{\ub, \tR}} \tR)W-f\nab_3\Big(2f \chibh_{ab}\nab^a R \nab^b R\Big)W,
	\end{split}
	\end{equation*}
 where in the last line we use the fact that $\partial_r=fe_3$.
\end{proof}

Applying \Cref{est I1}, along with \eqref{commute nab3 nab} and \eqref{commute nab3 laplace} we can express $I_1$ and $I_2$ as
\begin{equation*}
\begin{split}
I_1=&\lim\limits_{\varepsilon\to 0}\f{1}{\varepsilon}\Big(\Delta'_{S_{\ub, R+\varepsilon W}} (R+\varepsilon W)-\Delta'_{S_{\ub, R}} R\Big) \\
=&f\nab_3(\Delta_{S_{\ub, R}} R)W-f\nab_3\Big(2f \chibh_{ab}\nab^a R \nab^b R\Big)W+\Delta'_{S_{\ub, R}} W\\
=&f\Big([\nab_3, \Delta]R+\Delta(\nab_3 R)\Big)W-f\nab_3\Big(2f \chibh_{ab}\nab^a R \nab^b R\Big)W+\Delta'_{S_{\ub, R}} W \\
=&f\Big(-\tr\chib \D R-2\chibh\cdot\nab^2 R+\beb\cdot\nab R+\f12(\eta+\etb)\cdot \nab_3 \nab R-\eta\cdot\chibh\cdot \nab R
+\f12 \tr\chib\eta\cdot\nab R\\&+\div\l\f12(\eta+\etb)\nab_3 R\r-\div \chib\cdot \nab R-\nab_3 \chibh_{ab}\nab^a R \nab^b R+2\chibh_{ab}\chib^a_{c}\nab^c R \nab^b R \Big) W
+\Delta'_{S_{\ub, R}} W,
\end{split}
\end{equation*}
\begin{equation*}
\begin{split}
I_2=&\lim\limits_{\varepsilon\to 0}\f{1}{\varepsilon}\Big[\l(e_3(f)-\f{1}{2}f\trchb-2\omegab f)|\nab (R+\varepsilon W)|^2\r|_{\ub, R+\varepsilon W}-\l(e_3(f)-\f{1}{2}f\trchb-2\omegab f)|\nab R|^2\r|_{\ub, R} \Big]\\
=&\f{\partial}{\partial r}\l (e_3(f)-\f{1}{2}f\trchb-2\omegab f)|\nab R|^2\r  W+2(e_3(f)-\f{1}{2}f\trchb-2\omegab f )\nab R \cdot \nab W \\
=&\f{\partial}{\partial r}(e_3(f)-\f{1}{2}f\trchb-2\omegab f )|\nab R|^2 W+(e_3(f)-\f{1}{2}f\trchb-2\omegab f )\f{\partial}{\partial r}|\nab R|^2 W\\
&+2(e_3(f)-\f{1}{2}f\trchb-2\omegab f )\nab R \cdot \nab W \\
=&fe_3\Big(e_3(f)-\f{1}{2}f\trchb-2\omegab f \Big)|\nab R|^2 W+f(e_3(f)-\f{1}{2}f\trchb-2\omegab f )(-\trchb|\nab R|^2-2\chibh_{bc}\nab^b R\nab^c R) W\\
&+2(e_3(f)-\f{1}{2}f\trchb-2\omegab f )\nab R \cdot \nab W.
\end{split}
\end{equation*}
Note that in the derivation of $I_2$ we use the equality
\begin{equation*}
\nab_3 |\nab R|^2=2\nab_3 \nab R \cdot \nab R=-\trchb|\nab R|^2-2\chibh_{bc}\nab^b R\nab^c R.
\end{equation*}
In the same manner, we also obtain
\begin{align*}
    I_3=&\partial_r \l\big(f^{-1}\nab f+(\eta+\zeta)\big)\cdot \nab R+\f{1}{2}f^{-1}\trch\r W+\big(f^{-1}\nab f+(\eta+\zeta)\big)\cdot \nab W\\
    =&f\l \nab_3 \big(f^{-1}\nab f+(\eta+\zeta)\big)-\big(f^{-1}\nab f+(\eta+\zeta)\big)\cdot \chib \r \cdot \nab R \\&+\f{1}{2}\nab_3 (f^{-1}\trch) \r W+\big(f^{-1}\nab f+(\eta+\zeta)\big)\cdot \nab W.
\end{align*}
Summarizing all the above, we then conclude
\begin{equation*}
\begin{split}
&\partial_R L(R, \ub)[W]=I_1+I_2+ I_3
=\Delta'_{S_{\ub, \tR}} W+B_a \nab^a W+\l C+\f{1}{2}f\nab_3 (f^{-1}\trch) \r W,
\end{split}
\end{equation*}
where
\begin{equation*}
    |B_a|+|C|\le C(M, R_1, R_2)\Big[|\nab^2 R|+|\nab R|+|\nab R|^2 \Big].
\end{equation*}
\end{proof}

For future use, in below we also evaluate $\partial_{\ub} L$.
\begin{lemma}\label{partial ub L}
     For any $(\ub, R)\in I \times C^2(\mathbb{S}^2)$  with $R_1\le R\le R_2$, the following inequality holds
     \begin{equation*}
         |\partial_{\ub} L(R, \ub)-\f{1}{2}\partial_{\ub}(f^{-1}{\rm tr\chi})|\le C(M, R_1, R_2)\Big[|\nab^2 R|+|\nab R|+|\nab R|^2 \Big].
     \end{equation*}
\end{lemma}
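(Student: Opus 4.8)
The plan is to differentiate the formula for $L(R,\ub)$ term by term in $\ub$, keeping $R=R(\theta_1,\theta_2)$ fixed as a function of the angular variables. Set
\[
\mathcal{L}_0(R,\ub):=\Delta'_M R+\big(f^{-1}\nab f+(\eta+\zeta)\big)\cdot\nab R+(e_3(f)-\tfrac12 f\trchb-2\omegab f)|\nab R|^2,
\]
so that $L(R,\ub)=\mathcal{L}_0(R,\ub)+\tfrac12 f^{-1}\trch$. Since $R$ has no $\ub$-dependence and the term $\tfrac12 f^{-1}\trch$ is evaluated at $r=R$, we get $\partial_{\ub}L(R,\ub)-\tfrac12\partial_{\ub}(f^{-1}\trch)=\partial_{\ub}\mathcal{L}_0(R,\ub)$, so it suffices to prove $|\partial_{\ub}\mathcal{L}_0(R,\ub)|\le C(M,R_1,R_2)\big(|\nab^2 R|+|\nab R|+|\nab R|^2\big)$. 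The structural point is that every term of $\mathcal{L}_0$ carries a factor of $\nab R$ or $\nab^2 R$, and differentiating in $\ub$ cannot remove all such factors.

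To make this precise I would first pass to coordinates on $S_{\ub,r}$: by \Cref{change laplacian}, $\Delta'_M R=\Delta_{S_{\ub,R}}R-2f\chibh_{bc}\nab^b R\nab^c R$, and $\Delta_{S_{\ub,R}}R$ is expanded via the coordinate identity \eqref{3.13} (formed at fixed $r$ and then restricted to $r=R$). This presents $\mathcal{L}_0$ as a finite sum of terms $\mathfrak{a}\cdot\partial^2_{ij}R$, $\mathfrak{a}\cdot\partial_i R$, or $\mathfrak{a}\cdot\partial_i R\,\partial_j R$, where each coefficient $\mathfrak{a}$ is a product of coordinate metric components, Ricci coefficients, curvature components, $f$, $e_3(f)$, and their $\theta$-derivatives. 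Exactly as in the proof of \Cref{partial R L}, each $\mathfrak{a}$ is bounded by $C(M,R_1,R_2)$, the metric components being controlled by the Klainerman--Szeftel estimates of \Cref{K-S thm} together with $R_1\le r\le R_2$. The same bound holds for $\partial_{\ub}\mathfrak{a}$: decomposing $\partial_{\ub}=-\tfrac12\bfg(\prub,e_4)e_3-\tfrac12\bfg(\prub,e_3)e_4+\bfg(\prub,e_a)e^a$ with coefficients belonging to $\mathscr{G}$, one $\partial_{\ub}$-derivative of any of these quantities is a combination of $(e_3,e_4,\nab)$-derivatives already accounted for in $M$, and $\partial_{\ub}$ commutes with $\partial_\theta$.

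Now apply $\partial_{\ub}$ to $\mathcal{L}_0$ via the Leibniz rule. When the derivative lands on a coefficient $\mathfrak{a}$ it produces $(\partial_{\ub}\mathfrak{a})$ times the same $\partial R$-monomial, contributing at most $C(M,R_1,R_2)(|\partial^2 R|+|\partial R|+|\partial R|^2)$; when it lands on one of the monomials $\partial^2_{ij}R$, $\partial_i R$, $\partial_i R\,\partial_j R$ it produces zero, since these are independent of $\ub$. Finally I convert coordinate norms to covariant ones: $\partial^2_{ij}R$ differs from $\nab^2_{ij}R$ by the Christoffel symbols of $S_{\ub,R}$ contracted with $\partial R$, and these Christoffel symbols are bounded by $C(M,R_1,R_2)$, so $|\partial^2 R|\le C(M,R_1,R_2)(|\nab^2 R|+|\nab R|)$; together with the uniform equivalence $|\partial R|\simeq|\nab R|$ on $I\times[R_1,R_2]\times\mathbb{S}^2$ (coming from $\gamma/r^2\approx1$, $e^\Phi/(r\sin\theta)\approx1$ and $R_1\le r\le R_2$) this yields the claimed bound, and hence, via $\partial_{\ub}L-\tfrac12\partial_{\ub}(f^{-1}\trch)=\partial_{\ub}\mathcal{L}_0$, the lemma.

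I do not expect a genuine obstacle here: this is essentially the same computation as in the proof of \Cref{partial R L}, with $\partial_R$ replaced by $\partial_{\ub}$, and in fact simpler, since $\partial_{\ub}$ annihilates the $\partial R$ and $\partial^2 R$ factors rather than generating a new second-order operator. The points that require attention are keeping track of which quantities are controlled by $M$ alone and which additionally need the a priori bound $R_1\le R\le R_2$ (so that the induced metric stays uniformly non-degenerate and the coordinate/covariant norm comparisons are valid), and commuting $\partial_{\ub}$ with the $\theta$-derivatives of the metric that appear in \eqref{3.13}.
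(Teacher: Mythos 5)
Your reduction to bounding $\partial_{\ub}\mathcal{L}_0$, and the observation that $\partial_{\ub}$ annihilates the coordinate monomials $\partial_i R$, $\partial^2_{ij}R$ so that only the coefficients get differentiated, is the same structural idea that underlies the paper's computation. The gap is in how you bound those coefficients. The lemma is stated in the general setting of Section 2, with a constant depending only on $(M,R_1,R_2)$, and $M$ controls frame derivatives of $\Gamma,\mathscr{R},\log|f|,\mathscr{G}$ only: it says nothing about the coordinate components $g^{\theta_i\theta_j}$, their $\theta$- or $\ub$-derivatives, or the Christoffel symbols of $S_{\ub,R}$ in the $(\theta_1,\theta_2)$ chart, which is exactly what your expansion via \eqref{3.13} requires. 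These quantities cannot be bounded by $C(M,R_1,R_2)$ --- an angular reparametrization changes them arbitrarily without changing $M$ --- and even in the Klainerman--Szeftel application the estimates \eqref{estimates for metric} you invoke do not make them bounded: the coefficient of $\partial^2_{\varphi\varphi}R$ in \eqref{3.13} is $g^{\varphi\varphi}=e^{-2\Phi}\approx (r\sin\theta)^{-2}$, which blows up at the axis, and likewise the Christoffel symbols ($\sim\cot\theta$) entering your conversion $|\partial^2R|\lesssim|\nab^2R|+|\nab R|$ and the claimed equivalence $|\partial R|\simeq|\nab R|$ degenerate there. So the key intermediate claims (``each $\mathfrak{a}$ and $\partial_{\ub}\mathfrak{a}$ is bounded by $C(M,R_1,R_2)$'', bounded Christoffels, uniform norm equivalence) fail as stated, and appealing to \Cref{K-S thm} inside this lemma also imports hypotheses that are not part of its statement.

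The repair is to stay covariant, which is what the paper does: using $\prub R=0$, write $\partial_{\ub}L$ by the Leibniz rule as $[\prub,\Delta]R$ and $[\prub,\nab]R$ terms plus $\prub$ of the coefficient functions $f\chibh$, $f^{-1}\nab f+(\eta+\zeta)$, $e_3(f)-\frac12 f\trchb-2\omegab f$, then decompose $\partial_{\ub}=-\frac12\bfg(\prub,e_3)e_4-\frac12\bfg(\prub,e_4)e_3+\bfg(\prub,e_a)e^a$ and apply the commutation formulas \eqref{commute nab3 nab}, \eqref{commute nab3 laplace} and their $e_4$ analogues acting on $R$. Every factor produced this way is one of $\Gamma,\mathscr{R},\log|f|,\mathscr{G}$ or at most two of their frame derivatives, hence controlled by $M$, and each surviving term carries a factor $\nab R$, $|\nab R|^2$ or $\nab^2 R$, which gives the stated bound with a constant genuinely of the form $C(M,R_1,R_2)$. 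If you want to keep the coordinate route, you would need uniformly non-degenerate charts and a proof, from $M$ alone, of bounds on the coordinate metric and its $\partial_{\ub}$-derivative; neither is available in the generality in which the lemma is formulated.
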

\begin{proof}
Via a direct computation, together with \Cref{change laplacian} and $\prub R=0$, we obtain
\begin{align*}
    \partial_{\ub} L(R, \ub)=&[\partial_{\ub}, \D] R-2\prub (f \chibh)_{ab}  \nab^a R \nab^b R-4f \chibh_{ab}[\prub, \nab^a] R \nab^b R
+\prub \big(f^{-1}\nab f+(\eta+\zeta)\big)\cdot \nab R\\&+\big(f^{-1}\nab f+(\eta+\zeta)\big)\cdot [\prub, \nab] R+\prub (e_3(f)-\f{1}{2}f\trchb-2\omegab f)|\nab R|^2\\&+2(e_3(f)-\f{1}{2}f\trchb-2\omegab f)[\prub, \nab] R \cdot \nab R+\f12 \prub(f^{-1}\trch).
\end{align*}
    To estimate $\partial_{\ub} L(R, \ub)$, we appeal to the commutation formulae
\begin{equation*}
\begin{split}
[e_4, \nab]R=&\f12(\eta+\etb)e_4 R-\chi\cdot\nab R, \\ 
[e_4, \D]R
=&-\tr\chi \D R-2\chih\cdot\nab^2 R+\beta\cdot\nab R+\f12(\eta+\etb)\cdot \nab_4 \nab R-\etb\cdot\chih\cdot \nab R\\
&+\f12 \tr\chi\etb\cdot\nab R+\div\l\f12(\eta+\etb)e_4 R\r-\div \chi\cdot \nab R.
\end{split}
\end{equation*}
Also notice that
$$\partial_{\ub} =-\f{1}{2}\mathbf{g}(\partial_{\ub}, e_3)e_4-\f12\mathbf{g}(\partial_{\ub}, e_4)e_3+\mathbf{g}(\partial_{\ub}, e_a)e^a.$$
Plugging all equalities above into the expression of $\partial_{\ub} L(R, \ub)$, combining with the definition of $M$, we thus conclude
\begin{equation*}
   |\partial_{\ub} L(R, \ub)-\f{1}{2}\partial_{\ub}(f^{-1}\trch)|\le C(M, R_1, R_2)\Big[|\nab^2 R|+|\nab R|+|\nab R|^2 \Big].
\end{equation*}
\end{proof}

\section[Existence and Asymptotics of the Apparent Horizon in the Setting of Klainerman-Szeftel]{\texorpdfstring{Existence and Asymptotics of the Apparent Horizon \\ in the Setting of Klainerman-Szeftel}{Existence and Asymptotics of the Apparent Horizon in the Setting of Klainerman-Szeftel}}\label{sec existence and asymptotics of apparent horizon in K-S setting}

\subsection{Coordinates transformation}
In \cite{K-S}, when Klainerman-Szeftel studied the interior region ${}^{(int)}\mathcal{M}$, they adopted the incoming geodesic foliation with coordinate $(\ub, r, \theta, \varphi)$.

 In ${}^{(int)}\mathcal{M}$, their spacetime metric takes the form
	\begin{equation*}
	g=-\f{4\vsgmb}{r\kb}d\ub dr+\f{\vsgmb^2(\kb+A)}{\kbb}d\ub^2+\gamma(d\theta-\f{1}{2}\vsgmb b d\ub-\f{\ud{b}}{2}\underline{\Theta})^2+e^{2\Phi}d\varphi^2,
	\end{equation*}
	where
	\begin{equation*}
	b=e_4(\theta),\quad \ud{b}=e_3(\theta), \quad \gamma=\f{1}{(e_\theta(\theta))^2},
	\quad \ud{\Theta}:=\f{4\vsgmb}{r\kb}dr-\vsgmb^2\f{(\kb+A)}{\kbb}d\ub.
	\end{equation*}
 The coordinate derivatives in ${}^{(int)}\mathcal{M}$ can also be expressed as 
	\begin{equation}\label{coordinates derivatives}
	\begin{split}
	&\partial_r=\f{2}{r\kbb}e_3-\f{2\sqrt{\gamma}}{r\kbb}\ud{b}e_\theta,\quad \partial_\theta=\sqrt{\gamma}e_\theta, \\
	&\partial_{\ub} =\f{1}{2}\vsgmb\Big[e_4-\f{\kb+A}{\kbb}e_3-\sqrt{\gamma}(b-\f{\kb+A}{\kbb}\ud{b})e_\theta\Big].
	\end{split}
	\end{equation}
Moreover, they show that for some integer $k_{small}\ge 2$ and  sufficiently small constants $\varepsilon_0, \delta_{dec}>0$, the following estimates hold 
	\begin{equation}\label{estimates for metric}
	\max\limits_{0\leq k\leq k_{small}}\sup\limits_{{}^{(int)}\mathcal{M}}\ub^{1+\delta_{dec}}\Big(|\mathfrak{d}^k(\vsgmb-1)|+|\mathfrak{d}^k(\f{\gamma}{r^2}-1)+|\mathfrak{d}^k b|+|\mathfrak{d}^k\ud{b}|+|\mathfrak{d}^k(\f{e^\Phi}{r\sin \theta}-1)|+|\mathfrak{d}^k A|\Big) \lesssim \varepsilon_0,
	\end{equation}
 where $\mathfrak{d}:=\{e_3, re_4, r\nab \}$. 
 
 Note that here $\partial_r$ is not necessarily a null direction, since $g(\partial_r, \partial_r)=\ud{b}^2\gamma$ may be non-zero.

To solve for the apparent horizon, in ${}^{(int)}\mathcal{M}$ we consider a new coordinates system $(\widetilde{\ub}, \widetilde{r}, \widetilde{\theta}, \widetilde{\varphi})$. We fix $$\widetilde{\ub}=\ub, \quad \widetilde{r}=r, \quad \widetilde{\varphi}=\varphi,$$ 
slightly modify $\theta$ and set $$\widetilde{\theta}:=\theta+f_1(\ub, r, \theta, \varphi).$$
By the chain rule we get
\begin{equation*}
\partial_r=\partial_{\widetilde{r}}+\frac{\partial f_1}{\partial r}\partial_{\widetilde{\theta}}, \quad \partial_\theta=(1+\frac{\partial f_1}{\partial \theta})\partial_{\widetilde{\theta}}.
\end{equation*}
This implies that	
\begin{equation*}
\partial_{\widetilde{r}}=\frac{2}{r \kbb}\Big(e_3+(1+\frac{\partial f_1}{\partial \theta})^{-1}e_3(f_1+\theta)\partial_\theta \Big). 
\end{equation*}

To ensure the vector $\partial_{\widetilde{r}}$ to be null, we set $$e_3(f_1+\theta)=0 \quad \text{with} \ 1+\frac{\partial f_1}{\partial \theta}\neq 0.$$ 
In this new setting, we have the below estimate for $f_1$.
\begin{lemma}\label{coordinates transform}
	 Let $f_1$ be the solution to $e_3(f_1+\theta)=0$ with $f_1=0$ along $r=r_{\mathcal{T}}$. Then in ${}^{(int)}\mathcal{M}$, $f_1$ verifies the estimate 
	 \begin{equation*}
	 	|f_1|+|\mathfrak{d} f_1|\lesssim \f{\varepsilon_0}{\ub^{1+\delta_{dec}}}.
	 \end{equation*}
\end{lemma}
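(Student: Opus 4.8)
The plan is to solve the transport equation $e_3(f_1+\theta)=0$ along the incoming null geodesics generating $\Hb_{\ub}$, with initial condition $f_1=0$ on the timelike hypersurface $r=r_{\mathcal{T}}$, and then to propagate the bound $|f_1|+|\mathfrak{d}f_1|\lesssim \varepsilon_0\ub^{-1-\delta_{dec}}$ from the Klainerman--Szeftel estimates collected in \Cref{K-S thm}. First I would rewrite the equation as $e_3(f_1)=-e_3(\theta)=-\ud b$, which is an inhomogeneous linear ODE along the integral curves of $e_3$; since $\partial_r=\tfrac{2}{r\kbb}(e_3-\sqrt\gamma\,\ud b\,e_\theta)$, integrating in the $r$-variable along $\Hb_{\ub}$ from $r_{\mathcal{T}}$ down to a given $r$ and using $f_1(r_{\mathcal{T}})=0$ gives a representation of $f_1$ as an integral of $\ud b$ (and lower-order terms coming from the $e_\theta$-part after changing frame) over the null segment. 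Because $\ub$ is constant along $e_3$ and $r$ ranges over a bounded interval in ${}^{(int)}\mathcal{M}=\mathcal{M}\cap\{r\le r_{\mathcal T}\}$ (with $r$ bounded below, staying near $2m_\infty$), the length of the integration segment is uniformly bounded, so the $L^\infty$ bound $|f_1|\lesssim \varepsilon_0\ub^{-1-\delta_{dec}}$ follows immediately from $|\ud b|\lesssim \varepsilon_0\ub^{-1-\delta_{dec}}$ in \eqref{estimates for metric}.

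For the derivative bound $|\mathfrak{d}f_1|\lesssim \varepsilon_0\ub^{-1-\delta_{dec}}$ with $\mathfrak{d}=\{e_3,re_4,r\nab\}$, I would commute $e_3$ with each element of $\mathfrak d$ and derive transport equations for $e_3(f_1)$ (trivial), $e_4(f_1)$, and $\nab f_1$. Applying $r\nab$ and $re_4$ to $e_3(f_1)=-\ud b$ and using the commutators $[\nab,e_3]$, $[e_4,e_3]$ (which bring in Ricci coefficients $\eta,\etab,\zeta,\omega,\chi,\chib$, all controlled at the level $\varepsilon_0\ub^{-1-\delta_{dec}}$ with one more $\mathfrak d$-derivative by \Cref{K-S thm}(4)(5)) produces transport equations of the schematic form $e_3(\mathfrak d f_1)=\mathfrak d(\ud b)+\Gamma\cdot\mathfrak d f_1+(\text{lower order})$. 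These are then integrated along $e_3$ from $r_{\mathcal T}$ exactly as before; Grönwall's inequality in the $r$-parameter absorbs the $\Gamma\cdot\mathfrak d f_1$ term since $\Gamma$ is small and the segment length is bounded, and the inhomogeneity $\mathfrak d(\ud b)$ is again $\lesssim \varepsilon_0\ub^{-1-\delta_{dec}}$. Iterating this for the (finitely many) derivatives in $\mathfrak d$ closes the estimate. One must also check the nondegeneracy condition $1+\partial f_1/\partial\theta\neq 0$, but this is automatic once $|\mathfrak d f_1|\ll 1$, which holds for $\varepsilon_0$ small.

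The main obstacle I anticipate is purely bookkeeping rather than conceptual: one has to be careful that integrating along $e_3$ within a fixed $\Hb_{\ub}$ really does keep $\ub$ fixed (so the decay weight $\ub^{-1-\delta_{dec}}$ is constant along the characteristic and can be pulled out of the integral) and that the $e_\theta$-component in the expression for $\partial_r$ does not spoil this — i.e.\ one should integrate the equation $e_3(f_1)=-\ud b$ directly along the null generator parametrized by an affine parameter or by $r$ via $\partial_r = f e_3$ with $f=[e_3(r)]^{-1}$, rather than along $\partial_r$ in the old coordinates. A secondary technical point is ensuring the integration segment has length bounded uniformly in $\ub$; this follows because in ${}^{(int)}\mathcal M$ the area radius $r$ stays in a fixed compact interval around $2m_\infty$ (using $\kbb+\tfrac2r = O(\varepsilon_0\ub^{-1-\delta_{dec}})$ from \Cref{K-S thm}(5) to control the relation between $r$ and affine parameter), so the weights and the Grönwall factor are all uniformly controlled.
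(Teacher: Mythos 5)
Your proposal is correct and follows essentially the same route as the paper: integrate the transport equation along the $e_3$ direction from $r=r_{\mathcal T}$ (the bound $m_0\le r\le 3m_0$ in ${}^{(int)}\mathcal M$ keeps the integration segment uniformly bounded), commute with $\mathfrak d$ using the $[e_3,e_4]$ and $[\nab_3,\nab]$ formulas together with $\omegab=0$, and close via Gr\"onwall and the Klainerman--Szeftel decay estimates for $\ud b$ and the Ricci coefficients. The only cosmetic differences are that you keep $-\ud b$ as a source for $f_1$ whereas the paper writes the (homogeneous) equations for $\mathfrak d(f_1+\theta)$, and that the Gr\"onwall coefficient $\chib$ is merely bounded rather than small --- which is all that is needed on a bounded $r$-interval.
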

\begin{proof}
    Via integrating $e_3(f_1+\theta)=0$ along $e_3$ direction from $r=r_{\mathcal{T}}$, together with the fact that $m_0\le r\le 3m_0$ in ${}^{(int)}\mathcal{M}$, we derive the estimates for $f_1$. When $\mathfrak{d}=e_3$, we have 
    \begin{equation*}
        |e_3(f_1)|=|-e_3(\theta)|=|\underline{b}|\lesssim \f{\varepsilon_0}{\ub^{1+\delta_{dec}}}.
    \end{equation*}
 For $\mathfrak{d}\in \{e_4, \nab \} $, commuting $e_3(f_1+\theta)$ with $\mathfrak{d}$ and noting that $\omegab=0$ in ${}^{(int)}\mathcal{M}$,  we infer that
    \begin{align*}
    &e_3\big(e_4(f_1+\theta)\big)=[e_3, e_4](f_1+\theta)=2(\eta-\etab) \cdot \nab (f_1+\theta),
    \\
   & e_3\big(\nab(f_1+\theta)\big)=[\nab_3, \nab](f_1+\theta)=-\chib \cdot \nab (f_1+\theta).
    \end{align*}
    Therefore, together with Gr\"onwall's inequality, the desired bound of $\mathfrak{d} f_1$ follows from integrating the corresponding equation along $e_3$ direction.
\end{proof}
\begin{remark}
	In ${}^{(int)}\mathcal{M}$ we do not change the null frame employed in Klainerman-Szeftel \cite{K-S}. We only adopt a slightly changed  $(\ub, r, \theta, \varphi)$ coordinate system. The estimates established in \cite{K-S} still hold in this paper.
\end{remark}

\subsection{A Priori Estimates}

For notational simplicity, when there is no danger of confusion, we still write $(\ub, r, \theta, \varphi)$  instead of $(\widetilde{\ub}, \widetilde{r}, \widetilde{\theta}, \widetilde{\varphi})$ to represent our new coordinate system. Along each null hypersurface $\Hb_{\ub}$, at $S_{\ub, R}$ we introduce the elliptic operator
\begin{equation*}
\begin{split}
L(R(\theta, \varphi)):=&\Delta'_M R+\big(f^{-1}\nab f+(\eta+\zeta)\big)\cdot \nab R\\&+(e_3(f)-\f{1}{2}f\trchb-2\omegab f)|\nab R|^2+\f{1}{2}f^{-1}\trch.
\end{split}
\end{equation*}
Here $f=e_3(r)^{-1}=\f{2}{r} \kbb^{-1}<0$, and $\Delta'_M$ denotes the Laplace-Beltrami operator on the 2-dimensional sphere $(\ub, r, \theta, \varphi)=(\ub, R(\ub,\theta, \varphi),\theta, \varphi)$ along fixed $\ub$ hypersurface. 

Recall that in \cite{K-S}, the following estimates in ${}^{(int)}\mathcal{M}$ are proved
\begin{enumerate}
	\item For the Hawking mass $m=\f{r}{2}(1+\f{1}{16\pi}\int_S \kappa \ud{\kappa})$, there holds
	\begin{equation*}
	\f{|m-m_\infty|}{m_0}+|\mathfrak{d}^k (e_3(m), e_4(m))|\lesssim \f{\varepsilon_0}{\ub^{1+\delta_{dec}}}.
	\end{equation*}
	Here $m_0$ is the initial mass and $m_\infty$ is the final Bondi mass satisfying
	\begin{equation*}
	|\f{m}{m_0}-1|+|\f{m_\infty}{m_0}-1|\lesssim \f{\varepsilon_0}{\ub^{1+\delta_{dec}}}.
	\end{equation*}
	\item For $\check{\Gamma}=\{\check{\ud{\kappa}}, \chibh, \zeta, \etab, \check{\kappa}, \chih, \check{\omega}, \xi \}$ and $\check{R}=\{\a,\b, \check{\rho}, \sigma, \bb, \ab
	 \}$, with $0\leq k\leq k_{small}$, they obey
	 \begin{equation*}
	  |\mathfrak{d}^k (\check{\Gamma}, \check{R})|\lesssim \f{\varepsilon_0}{\ub^{1+\delta_{dec}}}.
	 \end{equation*}
	 Note that $\omegab=0$ in ${}^{(int)}\mathcal{M}$.
	 \item  For $\kbb, \overline{\omega}, \kb, \overline{\rho}$ (the averages of $\ud{\kappa}, \kappa, \omega, \rho$), with $0\leq k\leq k_{small}$, there hold
	 \begin{equation*}
	  |\mathfrak{d}^k (\kbb+\f{2}{r}, \overline{\omega}+\f{m_{\infty}}{r^2}, \kb-\f{2}{r}(1-\f{2m_{\infty}}{r}), \overline{\rho}+\f{2m_{\infty}}{r^3})|\lesssim \f{\varepsilon_0}{\ub^{1+\delta_{dec}}}.
	 \end{equation*}
\end{enumerate}
Here we use the convention $\mathfrak{d}=\{e_3, re_4, r\nab \}$.

Since $f=e_3(r)^{-1}=\f{2}{r} \kbb^{-1}$ and $m_0\leq r\leq 3m_0$ in ${}^{(int)}\mathcal{M}$, we obtain
\begin{equation*}
|f+1|=|\kbb^{-1}(\f{2}{r}+\kbb)|\lesssim \f{\varepsilon_0}{\ub^{1+\delta_{dec}}},
\end{equation*}
and
\begin{equation*}
\begin{split}
e_3(f)=&e_3(\f{2}{r} \kbb^{-1})=e_3(\f{2}{r}) \kbb^{-1} -\f{2}{r}\f{e_3(\kbb)}{\kbb^2} 
=e_3(\f{2}{r}) \kbb^{-1}-\f{2}{r}\f{1}{\kbb^2}\Big(e_3(\kbb+\f{2}{r})-e_3(\f{2}{r})\Big) \\
=&-\f{2}{r}\f{1}{\kbb^2}e_3(\kbb+\f{2}{r})+\kbb^{-2} e_3 (\f{2}{r})(\kbb+\f{2}{r}) 
=-\f{2}{r}\f{1}{\kbb^2}e_3(\kbb+\f{2}{r})-\f{1}{r}\f{1}{\kbb} (\kbb+\f{2}{r}).
\end{split}
\end{equation*}
This implies 
\begin{equation*}
|e_3(f)|\lesssim \f{\varepsilon_0}{\ub^{1+\delta_{dec}}}.
\end{equation*}
Differentiating $e_3(f)$ with respect to $e_3$, we further deduce
\begin{align*}
e_3\big(e_3(f)\big)=&-\f12 e_3\l rf^2 e_3(\kbb+\f{2}{r})+f(\kbb+\f{2}{r}) \r\\=&-\f12 \l fe_3(\kbb+\f{2}{r})+2rf e_3(f) e_3(\kbb+\f{2}{r})+rf^2 e_3(e_3(\kbb+\f{2}{r}))+e_3(f)(\kbb+\f{2}{r})+ fe_3(\kbb+\f{2}{r})   \r,
\end{align*}
which provides the bound
\begin{equation*}
|e_3\big(e_3(f)\big)|\lesssim \f{\varepsilon_0}{\ub^{1+\delta_{dec}}}.
\end{equation*}

In below context, we also use $R(\theta, \varphi)$ as a shorthand notation for $R(\ub, \theta, \varphi)$.
Noting $\trch'=2f L(R(\theta, \varphi))$  and the fact that $f<0$ within the region ${}^{(int)}\mathcal{M}$, we have that $\trch'=0$ holds true if and only if $L(R(\theta, \varphi))=0$. This gives the elliptic equation
\begin{equation}\label{MOTS equation}
\Delta'_M R+\big(f^{-1}\nab f+(\eta+\zeta)\big)\cdot \nab R+(e_3(f)-\f{1}{2}f\trchb-2\omegab f)|\nab R|^2+\f{1}{2}f^{-1}\trch
=0.
\end{equation}
The aim of this section is to solve this quasilinear elliptic equation \eqref{MOTS equation} on the sphere $\mathbb{S}^2$. 

With $0\leq \lambda \leq 1$, we first establish a priori estimates for solutions to
\begin{equation}\label{continuity eqn 1}
\begin{split}
    0=&\Delta'_M R(\theta, \varphi)+(e_3(f)-\f{1}{2}f\trchb-2\omegab f)|\nab R(\theta, \varphi)|^2\\&+\lambda\Big((f^{-1}\nab f+(\eta+\zeta)+f^{-1}c)\cdot \nab R+\f{1}{2}f^{-1}\trch\Big)\\&+(1-\lambda)\l-\f{1}{R(\theta, \varphi)}+\f{2m_\infty}{R(\theta, \varphi)^2}\r.
\end{split}
\end{equation}

In ${}^{(int)}\mathcal{M}$, using the fact that $m_0\le R(\theta, \varphi)\le 3m_0$, for any fixed $0\leq \lambda \leq 1$, we can rewrite \eqref{continuity eqn 1} in the form below
\begin{equation}\label{continuity eqn general form}
\begin{split}
0=&\Delta'_M R(\theta, \varphi)-\f{1}{R(\theta, \varphi)}|\nab R(\theta, \varphi)|^2-\f{1}{R(\theta, \varphi)}+\f{2m_{\infty}}{R(\theta, \varphi)^2}\\&+\f{1}{R(\theta, \varphi)^2}\l d_{1a}\nab^a R+d_{2bc}\nab^b R \nab^c R+d_3 \r.
\end{split}
\end{equation}
Here $d_{1a}, d_{2bc}, d_3$ are functions of $(\ub, r, \theta, \varphi)$ obeying
$$\max\limits_{0\leq k\leq 1}|\mathfrak{d}^k(d_{1a}, d_{2bc}, d_3)|\lesssim \f{\varepsilon_0}{\ub^{1+\delta_{dec}}}.$$

\begin{remark}
    For the sake of clarity, throughout this paper, we will utilize the symbol $d, d_{a}, d_{bc}$ to represent functions obeying
\begin{equation*}
    |d|, |d_a|, |d_{bc}| \lesssim  \f{\varepsilon_0}{\ub^{1+\delta_{dec}}}.
\end{equation*}
\end{remark}

We now can start to derive a priori estimates for solutions $R(\theta, \varphi)$ to the equation \eqref{continuity eqn general form}.
\subsubsection{$C^0$ Estimate}\label{Subsubsec: C0 est} A direct application of the maximum (minimum) principle on $\mathbb{S}^2$ yields
\begin{lemma}\label{lemma C0 Est}
    For $\ub \ge 1$, with $R(\theta, \varphi)$ being the solution to \eqref{continuity eqn general form},  the following inequality holds
   \begin{equation}\label{C^0 estimate}
|R(\theta, \varphi)-2m_\infty|\lesssim \f{\varepsilon_0}{\ub^{1+\delta_{dec}}}.
\end{equation}
\end{lemma}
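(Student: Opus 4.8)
The plan is to apply the maximum and minimum principles directly to the quasilinear equation \eqref{continuity eqn general form} on the closed surface $\mathbb{S}^2$. Since $R\in C^2(\mathbb{S}^2)$ it attains its maximum at some point $p_{\max}$ and its minimum at some point $p_{\min}$, and the point is that at these two points every term carrying a factor of $\nab R$ drops out, so the equation degenerates to an algebraic relation for the value of $R$ alone.

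First I would evaluate \eqref{continuity eqn general form} at $p_{\max}$, where $\nab R=0$ and $\Delta'_M R\le 0$. The terms $-R^{-1}|\nab R|^2$, $R^{-2}d_{1a}\nab^a R$ and $R^{-2}d_{2bc}\nab^b R\nab^c R$ all vanish there, so the (possibly unfavorable) signs of their coefficients never enter, and what remains is the pointwise inequality
\begin{equation*}
0\ \le\ -\frac{1}{R(p_{\max})}+\frac{2m_\infty}{R(p_{\max})^2}+\frac{d_3(p_{\max})}{R(p_{\max})^2}\ =\ \frac{1}{R(p_{\max})^2}\Big(-R(p_{\max})+2m_\infty+d_3(p_{\max})\Big),
\end{equation*}
hence $R(p_{\max})\le 2m_\infty+d_3(p_{\max})$. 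The identical computation at $p_{\min}$, where now $\Delta'_M R\ge 0$, gives $R(p_{\min})\ge 2m_\infty+d_3(p_{\min})$. Since the graph of $R$ lies in ${}^{(int)}\mathcal{M}$ we have $m_0\le R\le 3m_0$, so $|d_3|\lesssim \f{\varepsilon_0}{\ub^{1+\delta_{dec}}}$ holds uniformly; combining the two extremal bounds then yields \eqref{C^0 estimate}. What makes the argument close is that the map $\rho\mapsto -\rho^{-1}+2m_\infty\rho^{-2}$ vanishes at $\rho=2m_\infty$ and is strictly decreasing in a neighbourhood of it, so these one-sided inequalities genuinely trap $R$ near $2m_\infty$ rather than merely giving a crude two-sided bound.

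There is no real analytic obstacle here; the only thing to keep straight is the logical status of the claim. The coefficients $d_{1a},d_{2bc},d_3$ were isolated using $R\in[m_0,3m_0]$, so the above should be read as an a priori estimate valid for every $C^2$ solution whose graph lies in ${}^{(int)}\mathcal{M}$, which is exactly the form consumed by the continuity argument in \Cref{Continuity1}. Equivalently one may view it as a one-step bootstrap: the maximum principle upgrades the crude enclosure $m_0\le R\le 3m_0$ to $|R-2m_\infty|\lesssim \f{\varepsilon_0}{\ub^{1+\delta_{dec}}}$, which is self-consistent once $\varepsilon_0$ is small and $m_\infty$ is close to $m_0$.
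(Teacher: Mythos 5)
Your proposal is correct and follows essentially the same route as the paper: evaluate \eqref{continuity eqn general form} at the maximum and minimum points of $R$ on $\mathbb{S}^2$, observe that all terms containing $\nab R$ vanish there and that $\Delta'_M R$ has a favorable sign, and read off $R \le 2m_\infty + d_3$ and $R \ge 2m_\infty + d_3$ at the respective extremal points, with $|d_3|\lesssim \varepsilon_0\,\ub^{-1-\delta_{dec}}$. The remarks on the a priori nature of the estimate and the role of the enclosure $m_0\le R\le 3m_0$ are consistent with how the paper uses the lemma in the continuity argument.
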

\begin{proof}
    Denote 
    $$R(\theta_1, \varphi_1):=\max\limits_{(\theta, \varphi)\in \mathbb{S}^2} R(\theta, \varphi), \quad R(\theta_2, \varphi_2):=\min\limits_{(\theta, \varphi)\in \mathbb{S}^2} R(\theta, \varphi).$$
    Then at $(\theta_1, \varphi_1)$, we have $\nab' R(\theta_1, \varphi_1)=\nab R(\theta_1, \varphi_1)=0$ and $\Delta'_M R(\theta_1, \varphi_1)\leq 0$. This gives
\begin{equation*}
\begin{split}
0=&\Delta'_M R(\theta_1, \varphi_1)-\f{1}{R(\theta_1, \varphi_1)}+\f{2m_{\infty}}{R(\theta_1, \varphi_1)^2}+\f{d_3}{R(\theta_1, \varphi_1)^2} 
\leq \f{-R(\theta_1, \varphi_1)+2m_{\infty}+d_3}{R(\theta_1, \varphi_1)^2}.
\end{split}
\end{equation*}
Hence we deduce 
\begin{equation*}
R(\theta_1, \varphi_1)\leq 2m_{\infty}+d_3\leq 2m_\infty\Big(1+O(\f{\varepsilon_0}{\ub^{1+\delta_{dec}}})\Big).
\end{equation*}

In the same fashion, we can also prove
\begin{equation*}
R(\theta_2, \varphi_2)\geq  2m_\infty\Big(1-O(\f{\varepsilon_0}{\ub^{1+\delta_{dec}}})\Big).
\end{equation*} 
Combining these two estimates, we thus obtain \eqref{C^0 estimate}.
\end{proof}

\subsubsection{$C^1$ Estimate} \label{Subsubsec: C1 est}
To establish the gradient estimate, we will employ Bochner's formula, which takes the form:
\begin{equation*}
\Delta'_M |\nab' R|^2 = 2|\nab'^2 R|^2 + 2\textrm{Ric}_M (\nab' R, \nab' R) + 2\nab^{'a} R\nab'_a(\Delta'_M R).
\end{equation*}
Here $\nab'$ represents the induced covariant derivative on the deformed 2-sphere $M$. It is important to note that $\nab' R=\nab R$, with $\nab$ being the the induced covariant derivative on $S_{\ub, r}$. In the below, when there is no danger of confusion, we will use   $\nab R$ to substitute $\nab'R$. 

To derive the desired $C^0$ estimate for $\nab R$, we express $\Delta'_M\Big(h(R)|\nab R|^2\Big)$ with 
\begin{equation*}
h(R)=1+\f{1}{2m_\infty ^2}(R-2m_\infty)^2.\footnote{Such type of auxiliary function was introduced by the first auther in \cite{An: AH}.}
\end{equation*} 
Applying the Bochner's formula and plugging in \eqref{continuity eqn general form}, we obtain
\begin{equation}\label{3.12}
\begin{split}
&\Delta'_M\Big(h(R)|\nab R|^2\Big) \\
=&h'(R)\Delta_M' R|\nab R|^2+h''(R)|\nab R|^4+\f{2h'(R)\nab^{'a} R}{h(R)}\cdot \nab'_a\Big(h(R)|\nab R|^2\Big)-\f{2h'(R)h'(R)}{h(R)}|\nab R|^4\\
&+h(R)\l(2|\nab'^2 R|^2+2\textrm{Ric}_M(\nab R, \nab R)+2\nab^{'a} R\nab'_a(\Delta_M' R)\Big)\\
=&h'(R)|\nab R|^2\l\f{1}{R}|\nab R|^2+\f{1}{R}-\f{2m_{\infty}}{R^2}-\f{1}{R^2}\big( d_{1a}\nab^a R+d_{2bc}\nab^b R \nab^c R+d_3 \big) \r+h''(R)|\nab R|^4\\
&+\f{2h'(R)\nab^{'a} R}{h(R)}\cdot \nab'_a\Big(h(R)|\nab R|^2\Big)-\f{2h'(R)h'(R)}{h(R)}|\nab R|^4 
+h(R)\Big(2|\nab'^2 R|^2+2\textrm{Ric}_M(\nab R, \nab R)\Big)\\
&+2h(R)\nab^{'a} R\nab'_a\l\f{1}{R}|\nab R|^2+\f{1}{R}-\f{2m_{\infty}}{R^2}-\f{1}{R^2}\big( d_{1a}\nab^a R+d_{2bc}\nab^b R \nab^c R+d_3 \big) \r.
\end{split}
\end{equation}

In view of the fact
\begin{align*}
\nab'_{a} \l d(R(\theta, \varphi), \theta, \varphi) \r=& \l \nab_a (d)+ f \nab_a R \cdot e_3 (d) \r(R(\theta, \varphi), \theta, \varphi)
=\f{d_{1a} +d_2 \nab_a R}{R},
\end{align*}
we deduce
\begin{equation*}
\begin{split}
&\nab^{'a} R\nab'_a\l\f{1}{R}|\nab R|^2+\f{1}{R}-\f{2m_{\infty}}{R^2}-\f{1}{R^2}\big( d_{1a}\nab^a R+d_{2bc}\nab^b R \nab^c R+d_3 \big) \r\\
=&-\f{1}{R^2}|\nab R|^4+(-\f{1}{R^2}+\f{4m_{\infty}}{R^3})|\nab R|^2+\f{1}{R} \nab^a R \nab'_a (|\nab R|^2) +\f{1}{R^2} \nab^a R \l d_{6b}+d_{7bc} \nab^c R  \r \nab'_a \nab^{'b} R \\
&+\f{1}{R^3}\l d_{1a} |\nab R|^2 \nab^a R+d_{2ab} \nab^a R \nab^b R|\nab R|^2+d_{3ab} \nab^a R \nab^b R+d_4|\nab R|^2+d_{5a}\nab^a R \r.
\end{split}
\end{equation*}
Together with
\begin{equation*}
2h(R)\cdot \nab^{a} R\nab'_a(|\nab R|^2)=2\nab^a R \nab'_a(h(R)|\nab R|^2)- 2h'(R) |\nab R|^4, 
\end{equation*}
we then arrive at the below estimate
\begin{equation}\label{Laplace M ineq main}
\begin{split}
\Delta'_M\Big(h(R)|\nab R|^2\Big) 
\geq&h(R)\Big(2|\nab'^2 R|^2+2\textrm{Ric}_M(\nab R, \nab R)\Big)
+\l \f{2h'(R)}{h(R)} +\f{2}{R} \r\nab'^a R \nab'_a\Big(h(R)|\nab R|^2\Big)\\
&+\Big(\f{h'(R)}{R}+h''(R)-\f{2h'(R)h'(R)}{h(R)}-\f{2h(R)}{R^2}-\f{2h'(R)}{R}+\f{d_1}{R^2}\Big)|\nab R|^4 \\
&+\Big(h'(R)(\f{1}{R}-\f{2m}{R^2}+\f{d_2}{R^2})+2h(R)(\f{d_3}{R^3}-\f{1}{R^2}+\f{4m_{\infty}}{R^3})\Big)|\nab R|^2\\
&
+ \f{h(R)}{R^2} \nab^a R \l d_{5b}+d_{6bc} \nab^c R  \r \nab'_a \nab^{'b} R\\&+d_4\l |\f{h'(R)}{R^2}|\cdot |\nab R|^3+\f{h(R)}{R^3}|\nab R|^3+\f{h(R)}{R^3}|\nab R| \r.
\end{split}
\end{equation}
Using the construction
\begin{equation*}
h(R)=1+\f{1}{2m_\infty ^2}(R-2m_\infty)^2,
\end{equation*}
and the established $C^0$ estimate $$|R-2m_\infty|\lesssim \f{\varepsilon_0}{\ub^{1+\delta_{dec}}} \quad \text{with} \ \varepsilon_0 \ \text{being sufficiently small},$$  we derive
\begin{equation*}
\begin{split}
&|h(R)-1|\lesssim \f{\varepsilon_0}{\ub^{1+\delta_{dec}}},\quad |h'(R)|=\f{1}{m_\infty^2}|R-2m_\infty|\lesssim \f{\varepsilon_0}{\ub^{1+\delta_{dec}}}, \quad h''(R)=\f{1}{m_\infty^2}, \\
&\f{h'(R)}{R}+h''(R)-\f{2h'(R)h'(R)}{h(R)}-\f{2h(R)}{R^2}-\f{2h'(R)}{R}+\f{d_1}{R^2}\geq \f{1}{4m_\infty^2},\\
&h'(R)(\f{1}{R}-\f{2m}{R^2}+\f{d_2}{R^2})+2h(R)(\f{d_3}{R^3}-\f{1}{R^2}+\f{4m_{\infty}}{R^3})\geq \f{1}{R^2}.
\end{split}
\end{equation*}
With these estimates, we go back to \eqref{Laplace M ineq main} and obtain
\begin{equation*}
\begin{split}
&\Delta'_M\Big(h(R)|\nab R|^2\Big)-\f{2h'(R)\nab^{'a} R}{h(R)}\cdot \nab'_a\Big(h(R)|\nab R|^2\Big)-\f{2}{R}\nab'_a(h(R)|\nab R|^2)\\
\geq&h(R)\Big(2|\nab'^2 R|^2+2\textrm{Ric}_M(\nab R, \nab R)+\f{d_1}{R^2} (|\nab R| +|\nab R|^2  ) |\nab'^2 R|\Big)\\
&+\f{1}{4m_\infty^2}|\nab R|^4+\f{1}{R^2}|\nab R|^2+\f{d_2}{R^3} (|\nab R|^3+|\nab R|).
\end{split}
\end{equation*}

To treat $2\textrm{Ric}_M(\nab R, \nab R)$, we have the following lemma.
\begin{lemma} 
For $\ub \ge 1$, with $R(\theta, \varphi)$ being the solution to \eqref{continuity eqn general form}, the following inequality holds 
	\begin{equation}\label{estimate for Ric}
	2\textrm{Ric}_M(\nab R, \nab R)\geq -d(|\nab R|+|\nab R|^2+|\nab^2 R|)|\nab R|^2 \quad \text{with} \quad 0\le d\lesssim \f{\varepsilon_0}{\ub^{1+\delta_{dec}}}.
	\end{equation}
\end{lemma}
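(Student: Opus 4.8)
The plan is to control the Ricci curvature of the deformed sphere $M$ by comparing it to the Ricci curvature of the ambient $2$-sphere $S_{\ub, r}$, which in turn is close to that of the round sphere of radius $r\approx 2m_\infty$ by the estimates \eqref{estimates for metric} and the Klainerman-Szeftel bounds. Since $M$ is $2$-dimensional, its Ricci tensor is $\mathrm{Ric}_M = K_M \, g_M$, where $K_M$ is the Gauss curvature, so it suffices to show $K_M = \frac{1}{4m_\infty^2} + O\!\left(\f{\varepsilon_0}{\ub^{1+\delta_{dec}}}\right) + (\text{terms controlled by } |\nab R|, |\nab R|^2, |\nab^2 R|)$. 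The natural tool is the Gauss equation for the embedding of $M$ into the spacetime $\mathcal{M}$ (or the Gauss equation relating $S_{\ub,r}$ and $M$ as graphs over each other): $K_M$ differs from $K_{S_{\ub, r}}$ by quadratic expressions in the second fundamental form of the graph, which are built from $\nab^2 R$, $\nab R$, and Ricci coefficients, all of which are either small (of size $\varepsilon_0 \ub^{-1-\delta_{dec}}$) or carry at least one factor of $\nab R$ or $\nab^2 R$.

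First I would write down the relation between $K_M$ and $K_{S_{\ub,r}}$ explicitly. Using the frames \eqref{new frames}, the induced metric on $M$ equals $g_{ab}' = \delta_{ab}$ in the $(e_a')$ frame, and one can compute the Gauss curvature of $M$ via the null structure equation $K_M = -\f14 \trch' \trchb' + \f12 \chih' \cdot \chibh' - \rho'$ (the standard Gauss equation for a spacelike $2$-surface in a $3+1$ Lorentzian spacetime). Then I would expand each primed quantity using the transformation laws: $\trch'$ is given by \eqref{deformation formula} (and vanishes on the MOTS, but here I only need it is $O(\nab^2 R) + O(\nab R) + O(|\nab R|^2) + O(\trch)$), $\trchb' = \trchb$, $\rho' = \rho + (\text{terms with factors of } F = f\nab R)$, and $\chih', \chibh'$ transform with corrections linear and quadratic in $F$. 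Combining \eqref{estimates for metric}, the bounds $|\mathfrak d^k(\check\Gamma, \check R)|\lesssim \varepsilon_0\ub^{-1-\delta_{dec}}$, the average estimates for $\overline\rho + \f{2m_\infty}{r^3}$, and the $C^0$ estimate \eqref{C^0 estimate} (so $r = R \approx 2m_\infty$), one gets $K_{S_{\ub,r}} = \f{1}{R^2} + O(\varepsilon_0 \ub^{-1-\delta_{dec}}) = \f{1}{4m_\infty^2} + O(\varepsilon_0\ub^{-1-\delta_{dec}})$, and all the correction terms from the frame change are schematically $d \cdot (1 + |\nab R| + |\nab R|^2 + |\nab^2 R|)$. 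Contracting with $\nab R \otimes \nab R$ and collecting, we obtain
\[
2\,\mathrm{Ric}_M(\nab R, \nab R) = 2 K_M |\nab R|^2 \geq \f{1}{2m_\infty^2}|\nab R|^2 - d\,(|\nab R| + |\nab R|^2 + |\nab^2 R|)|\nab R|^2,
\]
which is stronger than \eqref{estimate for Ric} (and in particular implies it, since the leading positive term can be discarded for the stated lower bound); alternatively one simply records \eqref{estimate for Ric} as written.

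The main obstacle I anticipate is bookkeeping rather than conceptual: one must carefully track that every term appearing in $K_M - \f{1}{4m_\infty^2}$ either comes with a smallness factor $\varepsilon_0\ub^{-1-\delta_{dec}}$ coming from $\check\Gamma$, $\check R$, $f+1$, or the metric-coefficient estimates, or else carries an explicit factor of $\nab R$ or $\nab^2 R$ from the frame transformation \eqref{new frames}; the potentially dangerous terms are the pure second-fundamental-form squares $|\nab^2 R|^2$-type contributions in the Gauss equation, but in $\mathrm{Ric}_M(\nab R,\nab R)$ these are contracted against $\nab R\otimes \nab R$ and thus appear as $|\nab^2 R||\nab R|^2$ at worst (after one integration by parts or using the explicit form of $\chih'$, which is linear, not quadratic, in $\nab^2 R$). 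A secondary subtlety is that $\rho$ itself is not small — only $\check\rho = \rho - \overline\rho$ is — so one must use the average estimate $|\mathfrak d^k(\overline\rho + \f{2m_\infty}{r^3})|\lesssim \varepsilon_0\ub^{-1-\delta_{dec}}$ to cancel $\overline\rho$ against the $-\f14\trch\trchb$ and $\f12\chih\cdot\chibh$ pieces of $K_{S_{\ub,r}}$, leaving exactly $\f{1}{r^2} + O(\varepsilon_0\ub^{-1-\delta_{dec}})$. With those cancellations in hand the lemma follows.
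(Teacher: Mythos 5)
Your overall skeleton is the same as the paper's: in two dimensions $\mathrm{Ric}_M=K'g_M$, and $K'$ is computed from the Gauss equation \eqref{Gausss eqn} in the adapted frame \eqref{new frames}, using $\trchb'=\trchb$, $\chibh'=\chibh$, the transformation of $\rho'$, and the Klainerman--Szeftel bounds. However, there is a genuine gap at the one point where you explicitly deviate from the paper: your decision to use only the schematic bound $\trch'=O(\nab^2R)+O(\nab R)+O(|\nab R|^2)+O(\trch)$ rather than the information coming from the equation. The dangerous product in \eqref{Gausss eqn} is $-\f14\trch'\trchb'$, because $\trchb'=\trchb\approx -\f{2}{r}$ is \emph{not} small. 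The term $2f\Delta R$ inside \eqref{deformation formula} then produces a contribution to $K'$ of size $\f{1}{r}|\nab^2R|\approx\f{1}{2m_\infty}|\nab^2R|$ with an order-one coefficient, so after contracting you get an error of the form $-C|\nab^2R||\nab R|^2$ with $C\sim\f{1}{2m_\infty}$, which is \emph{not} bounded below by the right-hand side of \eqref{estimate for Ric}, where every term carries the small factor $d\lesssim \varepsilon_0\ub^{-1-\delta_{dec}}$. Your claim that ``all the correction terms from the frame change are schematically $d\cdot(1+|\nab R|+|\nab R|^2+|\nab^2R|)$'' fails precisely for this term; the genuinely harmless ones are $\f12\chih'\cdot\chibh'$ (since $\chibh'=\chibh$ is small, the $\nab^2R$ dependence of $\chih'$ is linear and comes with a small factor) and $\rho'-\rho$ (quadratic in $f\nab R$ against $\bb,\ab$, which are small).

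The fix is exactly what the paper uses: on the surfaces in question, $\trch'$ is not a free second-order expression but is constrained by the elliptic equation satisfied by $R$. For the MOTS ($\lambda=1$) one has $\trch'=0$, so the $-\f14\trch'\trchb'$ term drops out entirely and $K'\ge -\rho'+\f12\chih'\cdot\chibh'-(\text{small})\ge \f{2m_\infty}{R^3}-d(1+|\nab R|+|\nab R|^2+|\nab^2R|)$, after which the $d\cdot 1$ piece is absorbed by the positive $\f{2m_\infty}{R^3}$ term (using \eqref{C^0 estimate}), giving \eqref{estimate for Ric}. For solutions of the continuity-family equation \eqref{continuity eqn general form} with $\lambda<1$, one should likewise substitute the equation (equivalently, eliminate $\Delta'_M R$ using it) to see that $\trch'$ consists only of terms of size $\varepsilon_0\ub^{-1-\delta_{dec}}$ or terms carrying factors of $\nab R$, so that $-\f14\trch'\trchb'$ again contributes only admissible errors. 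Without invoking the equation at this step, the inequality as stated, with the uniform small factor $d$ in front of $|\nab^2R||\nab R|^2$, does not follow; at best you would prove a weaker bound with an order-one coefficient on $|\nab^2R||\nab R|^2$, which is not the lemma and would force you to re-examine the absorption of that term in the subsequent $C^1$ estimate.
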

\begin{proof}
	For the 2-dimensional manifold $M$, we have $$\text{Ric}_{M}=K' g_M$$ 
	with $K'$ being its Gaussian curvature.    Hence  we get $$2\textrm{Ric}_M(\nab R, \nab R)=2K'|\nab R|^2.$$
	 Using the frame $\{e_1', e_2', e_3', e_4'\}$, by the Gauss equation, we can express $K'$ as
	\begin{equation}\label{Gausss eqn}
	K'=-\rho'+\f{1}{2}\chih'\cdot \chibh'-\f{1}{4}\trch'\trchb'.
	\end{equation}
	Employing \eqref{formula for chi'} and observing that $\trch'=0$ on $M$, we infer
	\begin{equation*}
	\begin{split}
	\chih'=\chi'_{ab}-\trch' g'_{ab} 
	=&\chi_{ab}+\nab_a f \nab_b R+f\nab_a \nab_b R+\nab_b f\nab_a R+f\nab_b \nab_a R\\
	&+e_3(f)f(\nab_a R\nab_b R+\nab_b R\nab_a R)-f^2(\chib_{ac}\nab^c R\nab_b R+\chib_{bc}\nab^c R\nab_a R)\\
	&+f(\eta_b+\zeta_b)\nab_a R+f(\eta_a+\zeta_a)\nab_b R+f^2|\nab R|^2\chi_{ab}-f^2\chi_{ac}\nab_b R \nab^c R\\
	&-f^2\chi_{bc}\nab_a R \nab^c R-4\omegab f^2\nab_a R \nab_b R.
	\end{split}
	\end{equation*}
Recall that in our setting the deformed null frame  $\{e_1', e_2', e_3', e_4'\}$ and $\{e_1, e_2, e_3, e_4\}$ have the following connection 
  \begin{equation*}
     e_3'=e_3, \quad e_a '=e_a+f e_a (R) e_3, \quad e_4'=e_4+2fe^a(R) e_a+f^2|\nab R|^2 e_3.
  \end{equation*}
  Therefore, a direct computation yields 
	\begin{equation*}
	\begin{split}
	\chib'(e_a', e_b')=&g(D_{e'_a}e_3, e_b')=g(D_{e_a}e_3+f e_a (R) e_3, e_b+f e_b (R) e_3) 
	=g(D_{e_a}e_3, e_b)=\chib(e_a, e_b),
	\end{split}
	\end{equation*}
	which gives \begin{equation*}
	\trchb'=\trchb, \quad  \chibh'=\chibh.
	\end{equation*}
	
	As for $\rho'$, by definition we have
	\begin{equation*}
	\begin{split}
	\rho'&=\f{1}{4}R(e_4+2fe^a(R) e_a+f^2|\nab R|^2 e_3, e_3, e_4+2fe^b(R) e_b+f^2|\nab R|^2 e_3, e_3)\\
	&=\rho-2f\bb_b\nab^b R+f^2\ab_{bc}\nab^c R\nab^b R.
	\end{split}
	\end{equation*}
	
 Plugging this into \eqref{Gausss eqn}, together with the estimate for $f$ as well as hyperbolic estimates in \cite{K-S},  we can bound $K'$ from below as
	\begin{equation*}
	\begin{split}
	K'=&-\rho+2f\bb_b\nab^b R-f^2\ab_{bc}\nab^c R\nab^b R+\f{1}{2}\chibh^{ab}\chi_{ab}\\
	&+\f{1}{2}\chibh^{ab}(\nab_a f \nab_b R+f\nab_a \nab_b R+\nab_b f\nab_a R+f\nab_b \nab_a R)\\
	&+\f{1}{2}\chibh^{ab}e_3(f)f(\nab_a R\nab_b R+\nab_b R\nab_a R)-\f{1}{2}\chibh^{ab}f^2(\chib_{ac}\nab^c R\nab_b R+\chib_{bc}\nab^c R\nab_a R)\\
	&+\f{1}{2}\chibh^{ab}f(\eta_b+\zeta_b)\nab_a R+\f{1}{2}\chibh^{ab}f(\eta_a+\zeta_a)\nab_b R+\f{1}{2}\chibh^{ab}f^2|\nab R|^2\chi_{ab}-\f{1}{2}\chibh^{ab}f^2\chi_{ac}\nab_b R \nab^c R\\
	&-\f{1}{2}\chibh^{ab}f^2\chi_{bc}\nab_a R \nab^c R-2\chibh^{ab}\omegab f^2\nab_a R \nab_b R\\
	\geq& \f{2m_{\infty}}{R^3}-d(|\nab R|+|\nab R|^2+|\nab^2 R|+1)
	\geq -d(|\nab R|+|\nab R|^2+|\nab^2 R|)  \quad \text{with} \ 0\le d\lesssim \f{\varepsilon_0}{\ub^{1+\delta_{dec}}}.
	\end{split}
	\end{equation*}
 Thus, we can conclude \eqref{estimate for Ric}.
\end{proof}

We also need to compare $|\nab'^2 R|$ with $|\nab^2 R|$. By the property of covariant derivative, we have
\begin{equation*}
\begin{split}
\nab_a' \nab_b' R=\nab_a' \nab_b R
=&(\nab_a\nab_b+f\nab_a R \nab_3\nab_b)R \\
=&\nab_a\nab_b R-f\chibh_{bc}\nab_a R \nab^c R-\f{1}{2}f\trchb\nab_a R \nab_b R.
\end{split}
\end{equation*}
Using established estimates for $f, \chibh, \trchb$ in \cite{K-S}, this implies
\begin{equation*}
|\nab^{\prime 2 } R|\geq |\nab^{2} R|-\f{1}{2R}|\nab R|^2.
\end{equation*}
Incorporating with \eqref{estimate for Ric}, we hence deduce
\begin{align*}
&2|\nab'^2 R|^2+2\textrm{Ric}_M(\nab R, \nab R)+\f{d_1}{R^2} (|\nab R| +|\nab R|^2  ) |\nab'^2 R|\\
\ge& 2|\nab'^2 R|^2-|d| \l|\nab R|+|\nab R|^2+|\nab'^2 R|+\f{1}{2R}|\nab R|^2\r |\nab R|^2-|d_1| (|\nab R| +|\nab R|^2  ) |\nab'^2 R| \\
\ge& (2-|d|) |\nab'^2 R|^2- |d_2| \l|\nab R|^2+|\nab R|^4  \r.
\end{align*}

Combining all above estimates, with $\varepsilon_0$ being sufficiently small, we then arrive at
\begin{equation}
\begin{split}
&\Delta'_M\Big(h(R)|\nab R|^2\Big)-\f{2h'(R)\nab^{'a} R}{h(R)}\cdot \nab'_a\Big(h(R)|\nab R|^2\Big)-\f{2}{R^2}\nab'_a(h(R)|\nab R|^2) \\
\geq& -|d|(|\nab R|+|\nab R|^2+|\nab R|^3+|\nab R|^4)+\f{1}{4m_\infty^2}|\nab R|^4+\f{1}{R^2}|\nab R|^2 \\
\geq& \f{1}{8m_\infty^2}|\nab R|^4+\f{1}{2R^2}|\nab R|^2-|d|^2,
\end{split}
\end{equation}
where $|d|\lesssim \f{\varepsilon_0}{\ub^{1+\delta_{dec}}}$.

We denote $$[h(R)|\nab R|^2](\theta_3, \varphi_3):=\max\limits_{\theta, \varphi\in \mathbb{S}^2}[h(R)|\nab R|^2](\theta, \varphi).$$ 
Applying the maximum principle, we obtain
\begin{equation*}
0\geq \f{1}{8m_\infty^2}|\nab R|^4(\theta_3, \varphi_3)+\f{1}{2R^2}|\nab R|^2(\theta_3, \varphi_3)-d\geq \f{1}{2R^2}|\nab R|^2(\theta_3, \varphi_3)-|d|^2.
\end{equation*}
By the construction of $h$, we further deduce
\begin{equation*}
h(R)|\nab R|^2\leq h\big(R(\theta_3, \varphi_3)\big)|\nab R|^2(\theta_3, \varphi_3)\leq 4|d|^2 R^2.
\end{equation*}
Noting that $1/2\leq h(R)\leq 2$, we thus derive 
\begin{equation*}
|\nab R|^2 \leq 8|d|^2 R^2 \lesssim (\f{\varepsilon_0}{\ub^{1+\delta_{dec}}})^2.
\end{equation*}
This gives the desired $C^1$ estimate
\begin{equation}\label{C^1 estimate}
|\nab R|(\theta, \varphi) \lesssim \f{\varepsilon_0}{\ub^{1+\delta_{dec}}} \quad \text{for all} \  (\theta, \varphi)\in \mathbb{S}^2.
\end{equation}

\subsubsection{$C^{1,q}$ Estimate and $C^{2,q}$ Estimate} \label{Subsubsec: C 1 q est}
 Examining the proof of Theorem 13.6 in \cite{G-T}, one can see that obtaining $C^{1, q}$ estimate from the gradient estimate for $R$ follows directly. We have
\begin{proposition}\label{C^{1,q} estimate}
    Assuming that $R$ is the solution to \eqref{continuity eqn general form}, then there exists $0<q<1$ independent of $\varepsilon_0$, $\ub$, such that 
    \begin{equation*}
        \|\nab R\|_{C^{0, q}(M_{\ub})} \lesssim  \f{\varepsilon_0}{\ub^{1+\delta_{dec}}}.
    \end{equation*}
\end{proposition}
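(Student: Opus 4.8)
The plan is to read \eqref{continuity eqn general form} as a \emph{linear} second-order elliptic equation for the small quantity $u:=R-2m_\infty$ and to combine the global $W^{2,p}$ estimate on $\mathbb{S}^2$ with Sobolev embedding, which is precisely what the proof of \cite[Theorem~13.6]{G-T} amounts to once an a priori gradient bound is available. Using $\frac{1}{R}-\frac{2m_\infty}{R^2}=\frac{R-2m_\infty}{R^2}$ and $\nab R=\nab u$, equation \eqref{continuity eqn general form} becomes
\begin{equation*}
\Delta'_M u=\frac{|\nab u|^2}{R}+\frac{u}{R^2}-\frac{1}{R^2}\big(d_{1a}\nab^a u+d_{2bc}\nab^b u\,\nab^c u+d_3\big)=:h .
\end{equation*}
By \Cref{lemma C0 Est}, \eqref{C^1 estimate} and the bounds $|\mathfrak{d}^k(d_{1a},d_{2bc},d_3)|\lesssim \varepsilon_0\ub^{-1-\delta_{dec}}$, the dominant contributions to $h$ are $\frac{u}{R^2}$ and $\frac{d_3}{R^2}$, so that $\|h\|_{L^\infty(M_{\ub})}\lesssim \varepsilon_0\ub^{-1-\delta_{dec}}$.

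Next I would note that $\Delta'_M$ is, in a fixed finite atlas of $\mathbb{S}^2$, a uniformly elliptic operator whose coefficients are bounded in $C^{0,1}$ uniformly in $\ub$. Indeed, since $\partial_r$ is null in our coordinates the induced metric on $M_{\ub}$ in the variables $(\theta,\varphi)$ equals $g_{\theta_i\theta_j}$ evaluated along $r=R(\theta,\varphi)$, and the hyperbolic estimates of \cite{K-S} recorded in \Cref{K-S thm} (in particular $\gamma/r^2\to1$, $\frac{e^{\Phi}}{r\sin\theta}\to1$, and the $\mathfrak{d}^k$-bounds with $k_{small}\ge2$) give uniform ellipticity, while composition with $R\in C^1$ keeps these coefficients uniformly Lipschitz. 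Since $\Delta'_M$ annihilates constants, the global Calder\'on--Zygmund estimate on the closed manifold $\mathbb{S}^2$ applied to $\Delta'_M u=h$ gives, for every $p<\infty$,
\begin{equation*}
\|u\|_{W^{2,p}(M_{\ub})}\lesssim \|h\|_{L^p(M_{\ub})}+\|u\|_{L^p(M_{\ub})}\lesssim \frac{\varepsilon_0}{\ub^{1+\delta_{dec}}} .
\end{equation*}
Choosing $p=4$ and using $W^{2,4}(\mathbb{S}^2)\hookrightarrow C^{1,1/2}(\mathbb{S}^2)$ yields $\|R-2m_\infty\|_{C^{1,q}(M_{\ub})}\lesssim \varepsilon_0\ub^{-1-\delta_{dec}}$ with the fixed exponent $q=\tfrac12$, which in particular is the asserted bound for $\|\nab R\|_{C^{0,q}(M_{\ub})}$; here $q$ depends neither on $\varepsilon_0$ nor on $\ub$.

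The one subtle point — and the only real obstacle — is tracking the dependence on the small parameter $\varepsilon_0\ub^{-1-\delta_{dec}}$: the generic gradient-H\"older bound of \cite[Theorem~13.6]{G-T} carries a constant depending on $\sup|\nab R|$ and on the structure functions, but not a priori proportional to them. This is circumvented precisely because $R$ is an $O(\varepsilon_0\ub^{-1-\delta_{dec}})$ perturbation of the constant $2m_\infty$: the equation for $u$ is then genuinely linear with right-hand side and $L^p$-norm both of that size, and the elliptic estimate is linear in these. Once $R\in C^{1,q}$ is known, the coefficients and the right-hand side of \eqref{continuity eqn general form} lie in $C^{0,q}$ with norm $\lesssim\varepsilon_0\ub^{-1-\delta_{dec}}$, and a final application of the Schauder estimate on $\mathbb{S}^2$ promotes this to the $C^{2,q}$ bound $\|R-2m_\infty\|_{C^{2,q}(M_{\ub})}\lesssim\varepsilon_0\ub^{-1-\delta_{dec}}$ announced in \Cref{Subsubsec: C 1 q est}.
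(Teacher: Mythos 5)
Your argument is correct, but it follows a genuinely different route from the paper. The paper settles this step by invoking the quasilinear interior gradient--H\"older estimate of Theorem 13.6 in \cite{G-T} (a De Giorgi--Nash type result applied to the differentiated equation), asserting that once the gradient bound \eqref{C^1 estimate} is in hand the $C^{1,q}$ bound ``follows directly,'' with $q$ produced by that theory. You instead exploit the smallness already established: since $u=R-2m_\infty$ and $\nab R$ are both $O(\varepsilon_0\ub^{-1-\delta_{dec}})$ by \Cref{lemma C0 Est} and \eqref{C^1 estimate}, you absorb every quasilinear term of \eqref{continuity eqn general form} into a source $h$ with $\|h\|_{L^\infty}\lesssim \varepsilon_0\ub^{-1-\delta_{dec}}$, apply the linear $W^{2,p}$ (Calder\'on--Zygmund) estimate for $\Delta'_M$ on the closed surface, and conclude by the embedding $W^{2,4}(\mathbb{S}^2)\hookrightarrow C^{1,1/2}$. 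This is legitimate: in the adapted coordinates the induced metric on $M_{\ub}$ is $g_{\theta_i\theta_j}$ composed with $R$, so the principal coefficients are uniformly elliptic and uniformly Lipschitz by \Cref{K-S thm} and the $C^0$, $C^1$ bounds, making the $W^{2,p}$ constant uniform in $\ub$ and $\varepsilon_0$; the only imprecision is your claim that \emph{all} coefficients are uniformly $C^{0,1}$ --- the first-order coefficients of $\Delta'_M$ involve $\nab R$ and at this stage are merely bounded, but boundedness of lower-order terms (with continuity of the principal part) is all the $W^{2,p}$ theory needs, so the argument is unaffected. What your route buys is an explicit exponent $q=\tfrac12$ and a completely transparent proportionality to $\varepsilon_0\ub^{-1-\delta_{dec}}$, precisely the point the paper delegates to ``examining the proof'' of \cite{G-T}; what the paper's route buys is independence from $L^p$ theory and an argument that would survive without smallness of the data, since the De Giorgi--Nash exponent depends only on the structure constants. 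Your closing Schauder step recovering the $C^{2,q}$ bound agrees with \Cref{C^{2,q} estimate}.
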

With a priori $C^{1, q}$ estimates for $R$, through the direct application of the standard interior Schauder estimate, we obtain the $C^{2, q}$ estimates.
\begin{lemma}\label{C^{2,q} estimate}
    Assuming that $R$ is the solution to \eqref{continuity eqn general form}, with the same $q\in (0, 1)$ as in \Cref{C^{1,q} estimate}, the following inequality holds
    \begin{equation*}
        \| R-2m_{\infty}\|_{C^{2, q}(M_{\ub})} \lesssim  \f{\varepsilon_0}{\ub^{1+\delta_{dec}}}.
    \end{equation*}
\end{lemma}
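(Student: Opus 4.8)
The plan is to read \eqref{continuity eqn general form} as a \emph{linear} second-order elliptic equation for $R$ with H\"older right-hand side, and then quote the Schauder estimate. Using the identity $-\f{1}{R}+\f{2m_\infty}{R^2}=-\f{R-2m_\infty}{R^2}$, I would first rewrite \eqref{continuity eqn general form} in the form $\Delta'_M R = F$ with
\begin{equation*}
F := \f{1}{R}|\nab R|^2+\f{R-2m_\infty}{R^2}-\f{1}{R^2}\Big(d_{1a}\nab^a R+d_{2bc}\nab^b R\,\nab^c R+d_3\Big).
\end{equation*}
The whole point of this rearrangement is that every term on the right is manifestly of size $O(\varepsilon_0\ub^{-1-\delta_{dec}})$ once the earlier a priori estimates are in hand: the first term is quadratic in the already-small gradient (by \eqref{C^1 estimate}), the second is controlled by the $C^0$ estimate \eqref{C^0 estimate}, and the third by the smallness of $d_{1a},d_{2bc},d_3$, all multiplied by factors that are $O(1)$ because $m_0\le R\le 3m_0$.

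The next step is to estimate $\|F\|_{C^{0,q}(M_{\ub})}$. The coefficients $d_{1a},d_{2bc},d_3$ are functions of $(\ub,r,\theta,\varphi)$ with $|\mathfrak{d}^k(d_{1a},d_{2bc},d_3)|\lesssim \varepsilon_0\ub^{-1-\delta_{dec}}$ for $k\le 1$; restricting them to $M_{\ub}=\{r=R(\theta,\varphi)\}$ and using that $R$ is $C^{1,q}$-bounded (\Cref{C^{1,q} estimate}), their compositions are $C^{0,q}$ in the angular variables with norm $\lesssim\varepsilon_0\ub^{-1-\delta_{dec}}$ — the chain rule along $M_{\ub}$ only contributes an additional factor of $\nab R$, which is itself small. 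Combining this with \Cref{C^{1,q} estimate}, \Cref{lemma C0 Est}, the fact that $1/R$ and the metric coefficients $g'^{ij}$ are $C^{0,q}$-bounded, and the multiplicativity of H\"older norms, one obtains $\|F\|_{C^{0,q}(M_{\ub})}\lesssim \varepsilon_0\ub^{-1-\delta_{dec}}$.

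It then remains to invoke the Schauder estimate for the linear equation $\Delta'_M R=F$ on the compact surface $(\mathbb{S}^2,g')$. Here I would record, exactly as in the computation of the induced metric carried out earlier, that $g'_{\theta_i\theta_j}(\theta,\varphi)=g_{\theta_i\theta_j}(\ub,R(\theta,\varphi),\theta,\varphi)$ is the composition of the smooth background metric with the $C^{1,q}$ map $R$; since $R$ is $C^{1,q}$-close to the constant $2m_\infty$, the metric $g'$ is uniformly (in $\ub$) close to the round metric of radius $2m_\infty$, so $\Delta'_M$ is uniformly elliptic with $C^{0,q}$ coefficients whose norms are bounded independently of $\ub$. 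Patching the interior Schauder estimate of \cite{G-T} over a fixed finite atlas of $\mathbb{S}^2$ then gives
\begin{equation*}
\|R-2m_\infty\|_{C^{2,q}(M_{\ub})}\le C\Big(\|F\|_{C^{0,q}(M_{\ub})}+\|R-2m_\infty\|_{C^0(M_{\ub})}\Big)\lesssim \f{\varepsilon_0}{\ub^{1+\delta_{dec}}},
\end{equation*}
where $C$ is independent of $\ub$ and the last inequality uses \Cref{lemma C0 Est}. No nonlinear bootstrap is required: since $R$ is already known to lie in $C^{1,q}$, both the coefficients of $\Delta'_M$ and the right-hand side $F$ are genuinely fixed $C^{0,q}$ functions, so the linear theory applies directly. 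The single point demanding care — and, in my view, the main (though mild) obstacle — is the uniformity in $\ub$ of the Schauder constant $C$; this is precisely what the uniform ellipticity and uniform $C^{0,q}$ bounds on $g'$, furnished by the a priori estimates of the preceding subsections, supply.
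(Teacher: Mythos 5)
Your proposal is correct and follows the same route the paper intends: the paper disposes of this lemma in one sentence as a ``direct application of the standard interior Schauder estimate'' given the $C^{1,q}$ bound of \Cref{C^{1,q} estimate}, and your write-up simply fills in those details (freezing the equation as $\Delta'_M R=F$ with $\|F\|_{C^{0,q}}\lesssim \varepsilon_0\ub^{-1-\delta_{dec}}$, uniform ellipticity and uniform $C^{0,q}$ coefficient bounds for $g'$, and patching interior estimates over a finite atlas, with the $C^0$ estimate supplying the lower-order term for $R-2m_\infty$). No gaps; the uniformity in $\ub$ you single out is indeed the only point needing attention, and it is handled exactly as you describe.
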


\subsection{Continuity Argument}\label{Continuity1}
With above derived a priori estimates, we conduct the method of continuity to solve for the MOTS. We will solve the below quasilinear elliptic equation
\begin{equation*}
\Delta'_M R+\big(f^{-1}\nab f+(\eta+\zeta)\big)\cdot \nab R+(e_3(f)-\f{1}{2}f\trchb-2\omegab f)|\nab R|^2+\f{1}{2}f^{-1}\trch
=0.
\end{equation*}
Based on this equation, we insert a parameter $\lambda \in [0, 1]$ and consider
\begin{equation*}
\begin{split}
      F(R(\theta, \varphi), \lambda):&=\Delta'_M R(\theta, \varphi)+(e_3(f)-\f{1}{2}f\trchb-2\omegab f)|\nab R(\theta, \varphi)|^2\\&+\lambda\Big(\big(f^{-1}\nab f+(\eta+\zeta)\big)\cdot \nab R+\f{1}{2}f^{-1}\trch\Big)\\&+(1-\lambda)\l-\f{1}{R(\theta, \varphi)}+\f{2m_\infty}{R(\theta, \varphi)^2}\r.
\end{split}
\end{equation*}
For $\lambda=0$, we verify that $R(\theta, \varphi)\equiv 2m_\infty$ is a solution to
\begin{equation*}
 F(R(\theta, \varphi), 0)=\Delta'_M R(\theta, \varphi)+(e_3(f)-\f{1}{2}f\trchb-2\omegab f)|\nab R(\theta, \varphi)|^2-\f{1}{R(\theta, \varphi)}+\f{2m_\infty}{R(\theta, \varphi)^2}
=0.
\end{equation*}
For $0\le \tilde{\lambda}\le 1$, we assume that $\tilde{R}(\theta, \varphi)$ is a solution to
\begin{align*}
0=&\Delta'_{S_{\ub, \tilde{R}}}\tilde{R}(\theta, \varphi)+(e_3(f)-\f{1}{2}f\trchb-2\omegab f)|\nab \tilde{R}(\theta, \varphi)|^2\\&+\tilde{\lambda}\Big(\big(f^{-1}\nab f+(\eta+\zeta)\big)\cdot \nab \tR+\f{1}{2}f^{-1}\trch\Big)\\&+(1-\tilde{\lambda})\l-\f{1}{\tilde{R}(\theta, \varphi)}+\f{2m_\infty}{\tilde{R}(\theta, \varphi)^2}\r.
\end{align*}
Given any $\lambda \in [0, 1]$ (which may be different from $\tilde{\lambda}$), we are interested in the Fréchet derivative of $F$ with respect to $\tR$ at $(\tR(\theta, \varphi), \tilde{\lambda})$. We have
\begin{equation}\label{banach derivative for F} 
\begin{split}
\partial_R F(\tilde{R}(\theta, \varphi), \lambda)[W]
=&\lim\limits_{\varepsilon\to 0}\f{1}{\varepsilon}\Big(F(\tilde{R}+\varepsilon W, \lambda)-F(\tilde{R}, \lambda)\Big)\\
=&\lim\limits_{\varepsilon\to 0}\f{1}{\varepsilon}\Big(\Delta'_{S_{\ub, \tilde{R}+\varepsilon W}} (\tilde{R}+\varepsilon W)-\Delta'_{S_{\ub, \tilde{R}}} \tilde{R}\Big)\\
&+\lim\limits_{\varepsilon\to 0}\f{1}{\varepsilon}\Big[\l(e_3(f)-\f{1}{2}f\trchb-2\omegab f)|\nab (\tilde{R}+\varepsilon W)|^2\r|_{\ub, \tilde{R}+\varepsilon W}\\&-\l(e_3(f)-\f{1}{2}f\trchb-2\omegab f)|\nab \tilde{R}|^2\r|_{\ub, \tilde{R}} \Big]\\
&+\lambda\lim\limits_{\varepsilon\to 0}\f{1}{\varepsilon}\Big[ \l\big(f^{-1}\nab f+(\eta+\zeta)\big)\cdot \nab (\tilde{R}+\varepsilon W)+\f{1}{2}f^{-1}\trch\r|_{\ub, \tilde{R}+\varepsilon W}\\&-\l\big(f^{-1}\nab f+(\eta+\zeta)\big)\cdot \nab \tilde{R}+\f{1}{2}f^{-1}\trch\r|_{\ub, \tilde{R}}\Big]\\
&+(1-\lambda)\lim\limits_{\varepsilon\to 0}\f{1}{\varepsilon}\Big(-\f{1}{\tilde{R}+\varepsilon W}
+\f{1}{\tilde{R}}+\f{2m_\infty}{(\tilde{R}+\varepsilon W)^2}-\f{2m_\infty}{\tilde{R}^2}\Big)\\
=:&I_1+I_2+\lambda I_3+(1-\lambda) I_4.
\end{split}
\end{equation}

Utilizing \Cref{partial R L} and its proof, we now establish the invertibility of $\partial_R F(\tR,\lambda)$.
\begin{proposition}\label{FR invert}
    Let $\tilde{\lambda}\in [0, 1]$ and assume that $\tR$ is the solution to $F(\tR, \tilde{\lambda})=0$. Then the operator $\partial_R F(\tR,\lambda)[W]: C^{2, q}(\mathbb{S}^2)\rightarrow C^{0,q}(\mathbb{S}^2)$ is invertible for all $\lambda\in [0, 1]$. Here $q\in (0, 1)$ is the same H\"older exponent as in the a priori estimates for  $\tR$.
\end{proposition}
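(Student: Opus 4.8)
The plan is to combine the explicit formula for $\partial_R F(\tR,\lambda)$ obtained from the computation in \Cref{partial R L} with the a priori estimates for $\tR$ (namely \Cref{lemma C0 Est}, \eqref{C^1 estimate} and \Cref{C^{2,q} estimate}), and then invoke the standard Fredholm theory for second-order linear elliptic operators on the closed manifold $\mathbb{S}^2$. First I would carry out the four limits $I_1,I_2,\lambda I_3,(1-\lambda)I_4$ in \eqref{banach derivative for F} exactly as in the proof of \Cref{partial R L}: the terms $I_1,I_2,I_3$ produce $\Delta'_{S_{\ub,\tR}}W+B_a\nabla^a W+(C+\tfrac12 f\nabla_3(f^{-1}\mathrm{tr}\chi))W$ with $|B_a|+|C|\lesssim |\nabla^2\tR|+|\nabla\tR|+|\nabla\tR|^2$, while the last term gives $I_4=\bigl(\tR^{-2}-\tfrac{4m_\infty}{\tR^3}\bigr)W$. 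Using the hyperbolic estimates of \cite{K-S} together with the a priori bounds $|\tR-2m_\infty|+|\nabla\tR|+|\nabla^2\tR|\lesssim \varepsilon_0 \ub^{-1-\delta_{dec}}$, the zeroth-order coefficient of $\partial_R F(\tR,\lambda)$ can be written as $\tR^{-3}(-2m_\infty+d)$ with $|d|\lesssim \varepsilon_0\ub^{-1-\delta_{dec}}$, exactly the expression quoted in the introduction; in particular it is strictly negative once $\varepsilon_0$ is small, uniformly in $\lambda\in[0,1]$.

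Next I would record that $\partial_R F(\tR,\lambda)$ is a uniformly elliptic operator of the form $\Delta'_{S_{\ub,\tR}}+B_a\nabla^a+V$ with H\"older-continuous coefficients $B_a,V\in C^{0,q}(\mathbb{S}^2)$ (the regularity $C^{2,q}$ of $\tR$ supplied by \Cref{C^{2,q} estimate} is precisely what makes $B_a,V$ lie in $C^{0,q}$) and with $V<0$ everywhere. Then I would apply the maximum principle: if $W\in C^{2,q}(\mathbb{S}^2)$ satisfies $\partial_R F(\tR,\lambda)[W]=0$, evaluating at an interior maximum of $W$ gives $V(p_{\max})W(p_{\max})\ge 0$, hence $W(p_{\max})\le 0$ since $V<0$; symmetrically $W(p_{\min})\ge 0$, so $W\equiv 0$. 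Thus the kernel is trivial, and by the Fredholm alternative for elliptic operators on compact manifolds (Schauder theory, e.g.\ \cite{G-T} Chapter 6, together with the compactness of the inclusion $C^{2,q}\hookrightarrow C^{0,q}$), the bounded linear map $\partial_R F(\tR,\lambda):C^{2,q}(\mathbb{S}^2)\to C^{0,q}(\mathbb{S}^2)$ has index zero; trivial kernel then forces surjectivity, so it is an isomorphism. Finally one notes the argument is uniform in $\lambda\in[0,1]$ and in $\tilde\lambda$, since the sign of $V$ and the ellipticity constants depend only on $m_\infty$ and on the a priori bounds, not on the particular value of $\lambda$.

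I do not expect any genuine obstacle here, since the statement is a textbook consequence of the maximum principle plus Fredholm theory once the sign of the zeroth-order term is pinned down; the only point requiring care is the bookkeeping that shows that every term contributing to the zeroth-order coefficient other than $-2m_\infty\tR^{-3}$ is genuinely $O(\varepsilon_0\ub^{-1-\delta_{dec}})$. This relies on: (i) the $C^{2,q}$ a priori estimate to control $|\nabla^2\tR|$, $|\nabla\tR|$, $|\nabla\tR|^2$ appearing in $B_a$ and $C$ (so that, when multiplied against the already-small Ricci and curvature coefficients $\Gamma,\mathscr R$ of \cite{K-S}, they remain small); (ii) the explicit expansions $\tfrac12 f\nabla_3(f^{-1}\mathrm{tr}\chi)=\tfrac12 f\,e_3\bigl(f^{-1}(\tfrac2r-\tfrac{4m_\infty}{r^2})\bigr)+d = \tR^{-2}-\tfrac{4m_\infty}{\tR^3}+d$ combined with the similarly expanded $I_4$, so that the principal non-perturbative part of $V$ is $\bigl(\tR^{-2}-\tfrac{4m_\infty}{\tR^3}\bigr)+\bigl(\tR^{-2}-\tfrac{4m_\infty}{\tR^3}\bigr)\cdot 0$—wait, more precisely the $\lambda$-combination $\lambda\bigl(\tfrac12 f\nabla_3(f^{-1}\mathrm{tr}\chi)\bigr)+(1-\lambda)I_4$ collapses, up to $O(\varepsilon_0\ub^{-1-\delta_{dec}})$, to $\tR^{-2}-\tfrac{4m_\infty}{\tR^3}$ independently of $\lambda$; and (iii) the $C^0$ estimate $\tR\approx 2m_\infty$, which gives $\tR^{-2}-\tfrac{4m_\infty}{\tR^3}\approx \tfrac1{4m_\infty^2}-\tfrac1{2m_\infty^2}=-\tfrac1{4m_\infty^2}<0$, i.e.\ $\tR^{-3}(-2m_\infty+d)$ with $d$ small. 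Assembling these, $V\le -\tfrac1{8m_\infty^2}<0$ for $\varepsilon_0$ small, and the invertibility conclusion follows as above.
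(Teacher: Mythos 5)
Your proposal is correct and follows essentially the same route as the paper: carry out the limits $I_1,\dots,I_4$ using \Cref{partial R L} together with the a priori $C^0$, $C^1$ and $C^{2,q}$ estimates, identify the zeroth-order coefficient as $\tR^{-3}(-2m_\infty+d)<0$ uniformly in $\lambda\in[0,1]$, and conclude invertibility. The paper compresses the final functional-analytic step into the phrase ``solvability condition for elliptic equations,'' which is precisely the maximum-principle (trivial kernel) plus Fredholm-alternative argument you spell out.
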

\begin{proof}
Applying \Cref{partial R L}, together with the established hyperbolic estimates in \cite{K-S}, $C^1$ estimate \eqref{C^1 estimate} and $C^{2,q}$ estimate in \Cref{C^{2,q} estimate}, we deduce that
\begin{equation*}
\begin{split}
I_1+I_2=&\Delta'_{S_{\ub, \tR}} W+d_{1a} \nab^a W+d_2 W, \\
    I_3=&\partial_r \l\big(f^{-1}\nab f+(\eta+\zeta)\big)\cdot \nab \tilde{R}+\f{1}{2}f^{-1}\trch\r W+\big(f^{-1}\nab f+(\eta+\zeta)\big)\cdot \nab W\\
    =&f \nab_3 \l\big(f^{-1}\nab f+(\eta+\zeta)\big)\cdot \nab \tilde{R}+\f{1}{2}f^{-1}\trch\r W+\big(f^{-1}\nab f+(\eta+\zeta)\big)\cdot \nab W \\
    =&(\f{1}{\tR^2}-\f{4m_{\infty}}{\tR^3}+d_3)W+d_{4a} \nab^a W, \\
    I_4=&\l\f{1}{\tilde{R}^2}-\f{4m_\infty}{\tilde{R}^3}\r W .
\end{split}
\end{equation*}
Here $d_{1a}, d_2, d_3, d_{4a}$ are functions obeying
\begin{equation*}
    |(d_{1a}, d_2, d_3, d_{4a})|\lesssim\f{\varepsilon_0}{\ub^{1+\delta_{dec}}}.
\end{equation*}
We also use the estimate
\begin{equation*}
    e_3(\trch)|_{S_{\ub, \tR}}=e_3\big(\f{2}{r}(1-\f{2m_{\infty}}{r})\big)_{S_{\ub, \tR}}+O(\f{\varepsilon_0}{\ub^{1+\delta_{dec}}})=(-\f{2}{\tR^2}+\f{8m_{\infty}}{\tR^3})+O(\f{\varepsilon_0}{\ub^{1+\delta_{dec}}}).
\end{equation*}
\vspace{1mm}

Summarizing these expressions for $I_i$ ($i=1, 2, 3, 4$), we arrive at
\begin{equation*}
\begin{split}
&\partial_R F(\tilde{R}(\theta, \varphi), \lambda)[W]=I_1+I_2+\lambda I_3+(1-\lambda) I_4
=\Delta'_{S_{\ub, \tR}} W+d_{1a}\nab^a W+\Big[\f{1}{\tilde{R}^2}-\f{4m_\infty}{\tilde{R}^3}+d_2 \Big] W.
\end{split}
\end{equation*}
Notice that in this expression, the coefficient in front of $W$ in $\partial_R F(\tilde{R}(\theta, \varphi), \lambda)[W]$  reads
\begin{equation*}
\begin{split}
\f{1}{\tilde{R}^2}-\f{4m_\infty}{\tilde{R}^3}+d_2 =\tR^{-3}\Big[-2m_{\infty} +d_3\Big]
\end{split}
\end{equation*}
and it is negative due to the fact that
\begin{equation*}
    |d_3|\lesssim  \f{\varepsilon_0}{\ub^{1+\delta_{dec}}}\le \varepsilon_0 \ll 2m_{\infty}.
\end{equation*}
 In view of the solvability condition for elliptic equations, since $W$ has a negative coefficient, we then show that  the elliptic operator $\partial_R F(\tR(\theta, \varphi),\lambda)[W]: C^{2, q}(\mathbb{S}^2)\rightarrow C^{0,q}(\mathbb{S}^2)$ is invertible for $W$.
\end{proof}

By the implicit function theorem, the invertibility of $\partial_R F(\tR(\theta, \varphi),\lambda)$ guarantees that with $\tilde{\lambda}$ replaced by a nearby $\lambda$, we still have the solution $R(\theta, \varphi)$ to $F(R(\theta, \varphi),\lambda)=0$. Repeating this process, from solving $F(R(\theta, \varphi), 0)=0$, we can eventually solve  $F(R(\theta, \varphi), 1)=0$ and find the solution verifying the equation of MOTS, i.e.,
$$\Delta'_M R+\big(f^{-1}\nab f+(\eta+\zeta)\big)\cdot \nab R+(e_3(f)-\f{1}{2}f\trchb-2\omegab f)|\nab R|^2+\f{1}{2}f^{-1}\trch
=0.$$

\subsection{Uniqueness of MOTS}\label{Subsec: uniqueness of MOTS}
In the previous subsection, we find a solution (being a MOTS) to \eqref{MOTS equation} along each $\Hb_{\ub}$. Here we demonstrate that the MOTS $M_{\ub}$ just constructed is unique within $\Hb_{\ub}$. To demonstrate so, we remark that from \Cref{change laplacian}, the equation of MOTS \eqref{MOTS equation} is equivalent to
\begin{equation}\label{MOTS equation 2}
\begin{split}
0=&\Delta_{S_{\ub, R}} R+\big(f^{-1}\nab f+(\eta+\zeta)\big)\cdot \nab R+(e_3(f)-\f{1}{2}f\trchb-2\omegab f)|\nab R|^2\\&-2(f\chibh_{bc})\nab^b R \nab^c R+\f{1}{2}f^{-1}\trch.
\end{split}
\end{equation}
\begin{proposition}\label{uniqueness of MOTS}
    For a fixed $\ub \ge 1$, assume that both $R(\theta, \varphi)$ and $\tR(\theta, \varphi)$ are solutions to \eqref{MOTS equation} along $\Hb_{\ub}$. Then we must have $R(\theta, \varphi)\equiv\tR(\theta, \varphi)$.
\end{proposition}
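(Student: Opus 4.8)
The plan is to turn the comparison of two solutions into a linear elliptic equation for the difference $w:=R-\tR$ whose zeroth-order coefficient is strictly negative, and then to invoke the maximum principle on the closed surface $\mathbb S^2$, in the same spirit as the invertibility argument of \Cref{FR invert}. Since $R$ and $\tR$ are both solutions of \eqref{MOTS equation} along $\Hb_{\ub}$, each satisfies the a priori bounds of \Cref{lemma C0 Est}, the gradient estimate \eqref{C^1 estimate}, and \Cref{C^{2,q} estimate}; in particular both are $C^{2,q}$-close to $2m_\infty$. Consequently every interpolant $R_t:=\tR+t\,w$, $t\in[0,1]$, stays in the admissible range near $2m_\infty$, and by linearity $|\nab R_t|+|\nab^2 R_t|\lesssim \varepsilon_0\ub^{-1-\delta_{dec}}$ uniformly in $t$.

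First I would use the equivalent form \eqref{MOTS equation 2} of the MOTS equation (so that the operator involves the intrinsic Laplacian on $S_{\ub,R}$ and no hidden dependence on the deformed frame) and write
\begin{equation*}
0=L(R,\ub)-L(\tR,\ub)=\int_0^1\frac{d}{dt}L(R_t,\ub)\,dt=\int_0^1 \partial_R L(R_t,\ub)[w]\,dt .
\end{equation*}
By the formula for $\partial_R L$ in \Cref{partial R L}, together with the refined computation carried out in the proof of \Cref{FR invert} (which only uses that the base point obeys the a priori bounds, not that it solves the equation), the integrand equals $\Delta'_{S_{\ub,R_t}}w+d^t_a\nab^a w+R_t^{-3}(-2m_\infty+d^t)\,w$ with $|d^t_a|+|d^t|\lesssim\varepsilon_0\ub^{-1-\delta_{dec}}$. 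Averaging in $t$, the difference $w$ therefore solves a linear equation
\begin{equation*}
\mathscr A w:=a^{ij}\partial_i\partial_j w+\mathfrak b^i\partial_i w+\mathfrak c\,w=0 ,
\end{equation*}
where $a^{ij}=\int_0^1(\text{inverse induced metric of }S_{\ub,R_t})^{ij}\,dt$ is uniformly elliptic (each metric being close to the round metric of radius $2m_\infty$), $|\mathfrak b^i|\lesssim\varepsilon_0\ub^{-1-\delta_{dec}}$, and, the decisive point, $\mathfrak c=\int_0^1 R_t^{-3}(-2m_\infty+d^t)\,dt<0$ once $\varepsilon_0$ is small.

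It then remains to run the maximum principle on $\mathbb S^2$: at a maximum point $p$ of $w$ one has $\partial_i w(p)=0$ and $a^{ij}\partial_i\partial_j w(p)\le 0$, so $0=\mathscr A w(p)\le \mathfrak c(p)\,w(p)$, which forces $w(p)\le 0$ since $\mathfrak c(p)<0$; the symmetric argument at a minimum point gives $\min_{\mathbb S^2}w\ge 0$. Hence $w\equiv 0$, i.e. $R\equiv\tR$. (Equivalently one may apply the strong maximum principle of \cite{G-T}, as in the heuristic for \Cref{intro: NCP section NCP thm}.)

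The only step deserving genuine care — and the main potential obstacle — is checking that after integration in $t$ the operator $\mathscr A$ remains uniformly elliptic and, above all, that its zeroth-order coefficient $\mathfrak c$ stays strictly negative. Both follow from the observation that every interpolant $R_t$ inherits (by linearity) the a priori estimates satisfied by $R$ and $\tR$, so that all the contributions in \Cref{partial R L} other than the leading term $R_t^{-3}(-2m_\infty)$ are $O(\varepsilon_0\ub^{-1-\delta_{dec}})$ and hence harmless; everything else is the routine application of the maximum principle described above.
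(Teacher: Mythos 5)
Your proposal is correct and follows essentially the same route as the paper: both reduce uniqueness to a linear elliptic equation for the difference whose zeroth-order coefficient is close to $-\tfrac{1}{4m_\infty^2}<0$ (thanks to the a priori $C^{2,q}$ closeness of both solutions to $2m_\infty$ and the hyperbolic estimates of \cite{K-S}), and then conclude by the maximum principle on $\mathbb{S}^2$. The only cosmetic difference is that you linearize via $\int_0^1\partial_R L(R_t,\ub)[w]\,dt$ along the interpolants (correctly observing that the coefficient computation of \Cref{partial R L} and \Cref{FR invert} needs only the a priori bounds, which the convex combinations inherit), whereas the paper subtracts the two equations and estimates the resulting terms directly.
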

\begin{proof}
Employing \eqref{MOTS equation 2}, we have that
\begin{equation*}
\begin{split}
0=&\Delta_{S_R} R(\theta, \varphi)+\big(f^{-1}\nab f+(\eta+\zeta)\big)(R(\theta, \varphi),\theta, \varphi)\cdot \nab R(\theta, \varphi)\\&+(e_3(f)-\f{1}{2}f\trchb-2\omegab f)(R(\theta, \varphi), \theta, \varphi)|\nab R(\theta, \varphi)|^2\\&-2(f\chibh_{bc})(R(\theta, \varphi), \theta, \varphi)\nab^b R(\theta, \varphi) \nab^c R(\theta, \varphi)
+\f{1}{2}(f^{-1}\trch)(R(\theta, \varphi), \theta, \varphi)
\end{split}
\end{equation*}
and
\begin{equation*}
\begin{split}
0=&\Delta_{S_{\tR}} \tR(\theta, \varphi)+\big(f^{-1}\nab f+(\eta+\zeta)\big)(\tR(\theta, \varphi),\theta, \varphi)\cdot \nab \tR(\theta, \varphi)\\&+(e_3(f)-\f{1}{2}f\trchb-2\omegab f)(\tR(\theta, \varphi), \theta, \varphi)|\nab \tR(\theta, \varphi)|^2\\
&-2(f\chibh_{bc})(\tR(\theta, \varphi), \theta, \varphi)\nab^b \tR(\theta, \varphi) \nab^c \tR(\theta, \varphi)+\f{1}{2}(f^{-1}\trch)(\tR(\theta, \varphi), \theta, \varphi).
\end{split}
\end{equation*}
Note that in \eqref{MOTS equation 2} we utilize $\Delta_{S_{\ub, R}}$ instead of $\Delta'_M$.

We then consider
\begin{equation}\label{3.14}
\begin{split}
\Delta_{S_{\tR}}(\tR(\theta, \varphi)-R(\theta, \varphi))
=&-(\Delta_{S_{\tR}}-\Delta_{S_R})R(\theta, \varphi)+\l\Delta_{S_{\tR}} \tR(\theta, \varphi)- \Delta_{S_R} R(\theta, \varphi)\r\\
=&:I+II.
\end{split}
\end{equation}
Back to \eqref{3.13}, for a scalar function $f$, we have
	\begin{equation*}
	\begin{split}
	\D_{S_{\ub, r}} f=&\f{1}{\sqrt{\det g}}\f{\partial}{\partial \theta_i} (\sqrt{\det g}\,g^{\theta_i \theta_l}\f{\partial f}{\partial \theta_l})\\
	=&g^{\theta_1 \theta_1}\f{\partial^2 f}{\partial \theta_1 \partial \theta_1}+g^{\theta_2 \theta_2}\f{\partial^2 f}{\partial \theta_2 \partial \theta_2}+2 g^{\theta_1 \theta_2}\f{\partial^2 f}{\partial \theta_1 \partial \theta_2}+\f{\partial}{\partial \theta_1} (g^{\theta_1 \theta_1})\f{\partial f}{\partial \theta_1}+\f{\partial}{\partial \theta_1} (g^{\theta_1 \theta_2})\f{\partial f}{\partial \theta_2}\\
	&+\f{\partial}{\partial \theta_2} (g^{\theta_2 \theta_1})\f{\partial f}{\partial \theta_1}+\f{\partial}{\partial \theta_2} (g^{\theta_2 \theta_2})\f{\partial f}{\partial \theta_2}+\f12 g^{\theta_k \theta_j}\f{\partial g_{\theta_j \theta_k}}{\partial \theta_i} g^{\theta_i \theta_l}\f{\partial f}{\partial \theta_l},
	\end{split}
	\end{equation*}
	where $i,j,k,l=1,2$ and $g^{\theta_k \theta_j}$ depends on $(\ub, r, \theta, \varphi)$.

With the assistance of a priori estimates and estimates for metric coefficients from \eqref{estimates for metric}, we deduce that
\begin{equation*}
I=d_1(\tR-R),
\end{equation*} 
\begin{equation*}
\begin{split}
II=&\f{1}{\tR}|\nab \tR|^2+\f{1}{\tR}-\f{2m(\tR, \theta, \varphi)}{\tR^2}+d_{1a}(\tR, \theta, \varphi)\nab^a \tR+d_{2bc}(\tR, \theta, \varphi)\nab^b \tR \nab^c \tR+d_3(\tR, \theta, \varphi)\\
&-\Big(\f{1}{R}|\nab R|^2+\f{1}{R}-\f{2m(R, \theta, \varphi)}{R^2}+d_{1a}(R, \theta, \varphi)\nab^a R+d_{2bc}(R, \theta, \varphi)\nab^b R \nab^c \tR+d_3(R, \theta, \varphi)\Big) \\
=&(-\f{1}{\tR R}+\f{2m(\tR+R)}{\tR^2 R^2}+d_1)(\tR-R)+d^i_2\f{\partial}{\partial \theta_i} (\tR-R)\\
=&(\f{1}{4m_\infty^2}+d_2)(\tR-R)+d^i_3\f{\partial}{\partial \theta_i} (\tR-R)
\end{split}
\end{equation*}
with  $d_1, d_2,  d^i_3$ being functions of $(\ub, r, \theta, \varphi)$ that satisfy the bound
$$|(d_{1}, d_{2}, d^i_3)|\lesssim \f{\varepsilon_0}{\ub^{1+\delta_{dec}}}.$$
Hence, we can reformulate \eqref{3.14} as
\begin{equation*}
\Delta_{S_{\tR}}(\tR-R)(\theta, \varphi)-d^i_3\f{\partial}{\partial \theta_i} (\tR-R)(\theta, \varphi)-(\f{1}{4m_\infty^2}+d_1+d_2)(\tR-R)(\theta, \varphi)=0.
\end{equation*}
Note that the coefficient in front of $(\tR-R)(\theta, \varphi)$ is negative. Applying the maximum principle, we then conclude
\begin{equation*}
\tR(\theta, \varphi)=R(\theta, \varphi) \qquad \text{for any} \  (\theta, \varphi)\in \mathbb{S}^2.
\end{equation*}
\end{proof}

\subsection{Regularity of Apparent Horizon}\label{Subsec: regularity of AH}
With various values of $\ub$, the collection of these MOTSs $\{ M_{\ub} \}$ forms an apparent horizon. Denote $\mathcal{AH}:=\cup_{\ub \ge 1} M_{\ub}$. In this subsection, we study the regularity of $\mathcal{AH}$. We have
\begin{proposition}
    The apparent horizon $\mathcal{AH}:=\cup_{\ub \ge 1} M_{\ub}$ constructed in our setting is a smooth three-dimensional hypersurface.
\end{proposition}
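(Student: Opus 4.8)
The plan is to exhibit $\mathcal{AH}$ as the graph $\{r=R(\ub,\theta,\varphi)\}$ of a function that is jointly smooth in all of its arguments, and then to observe that such a graph is automatically a smooth embedded hypersurface. The engine is the implicit function theorem applied to the elliptic operator $L(\cdot,\ub)$, whose linearization $\partial_R L$ has already been shown to be invertible in \Cref{FR invert}. First I would upgrade the angular regularity of $R(\ub,\cdot)$: for each fixed $\ub\ge 1$ the solution constructed in \Cref{Continuity1} lies a priori in $C^{2,q}(\mathbb{S}^2)$ by \Cref{C^{2,q} estimate}, and since the coefficients of the quasilinear elliptic equation \eqref{MOTS equation} are smooth functions of $(\ub,r,\theta,\varphi)$ — inherited from the smoothness of the Klainerman--Szeftel metric in \Cref{K-S thm} together with the coordinate change in \Cref{coordinates transform} — a standard Schauder bootstrap (differentiate the equation, apply interior Schauder estimates iteratively) gives $R(\ub,\cdot)\in C^\infty(\mathbb{S}^2)$.

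Next I would prove smooth dependence on $\ub$. For each integer $k\ge 0$, regard
\[
L:C^{k+2,q}(\mathbb{S}^2)\times[1,\infty)\longrightarrow C^{k,q}(\mathbb{S}^2),\qquad (R,\ub)\mapsto L(R,\ub),
\]
which is $C^\infty$ jointly in $(R,\ub)$ because all the geometric coefficients are smooth in $(\ub,r,\theta,\varphi)$ and $R$ enters them only through the composition $r=R(\theta,\varphi)$. Fix $\ub_0\ge 1$ and let $R_0:=R(\ub_0,\cdot)$. By \Cref{partial R L} the Fréchet derivative $\partial_R L(R_0,\ub_0)=\Delta'_{S_{\ub_0,R_0}}+B_a\nab^a+\big(\f{1}{R_0^2}-\f{4m_\infty}{R_0^3}+d\big)$ has strictly negative zeroth-order coefficient $R_0^{-3}(-2m_\infty+d)$ once $\varepsilon_0$ is small, hence is an isomorphism $C^{k+2,q}(\mathbb{S}^2)\to C^{k,q}(\mathbb{S}^2)$, exactly as in \Cref{FR invert}. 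The implicit function theorem then produces a $C^\infty$ curve $\ub\mapsto R(\ub,\cdot)\in C^{k+2,q}(\mathbb{S}^2)$ solving $L(R(\ub,\cdot),\ub)=0$ in a neighbourhood of $\ub_0$; at the left endpoint $\ub=1$ I would invoke the one-sided version of the theorem (or, equivalently, observe that the Klainerman--Szeftel estimates extend slightly past $r=r_{\mathcal{T}}$, i.e. to $\ub$ slightly below $1$). By the uniqueness statement \Cref{uniqueness of MOTS} these local solutions agree with the globally constructed $R$, so $R$ is $C^\infty$ in $\ub$ with values in $C^{k+2,q}(\mathbb{S}^2)$ for every $k$; combined with the first step and elliptic regularity this yields that $R$ is jointly smooth in $(\ub,\theta,\varphi)$.

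Finally, the map $\Psi(\ub,\theta,\varphi)=(\ub,R(\ub,\theta,\varphi),\theta,\varphi)$ is a smooth injective immersion with image $\mathcal{AH}$; equivalently $\mathcal{AH}$ is the zero set of the smooth function $(\ub,r,\theta,\varphi)\mapsto r-R(\ub,\theta,\varphi)$, whose differential never vanishes since its $\partial_r$-component equals $1$. Hence $\mathcal{AH}$ is a smooth three-dimensional hypersurface. I expect the only genuine work here to be bookkeeping rather than a conceptual obstacle: one must check that $L$ is smooth as a map between the Hölder scales — in particular that evaluation at $r=R(\theta,\varphi)$ and the $\ub$-dependence are smooth in the Banach-space sense — and that the half-open parameter interval causes no trouble at $\ub=1$.
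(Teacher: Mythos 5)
Your proposal is correct and takes essentially the same route as the paper: the paper also proves smoothness of $R(\ub,\theta,\varphi)$ by applying the implicit function theorem to $L(R,\ub)$, using the invertibility of the linearized operator $\partial_R L=\partial_R F(\cdot,\cdot,\lambda=1)$ established in the continuity argument. The only difference is that the paper states the argument directly at the $C^\infty(\mathbb{S}^2)$ level, whereas you carry it out on the H\"older scale $C^{k+2,q}\to C^{k,q}$ together with a Schauder bootstrap and a uniqueness-based gluing, which is a more careful rendering of the same idea rather than a different approach.
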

\begin{proof}
    For $\mathcal{AH}=\{r=R(\ub, \theta, \varphi) \}$, with $\ub \ge 1 $, it suffices to show that $R(\ub, \theta, \varphi)$ is a smooth function of both $\ub \ge 1$ and $(\theta, \varphi)\in \mathbb{S}^2$. Recall that $R(\ub, \theta, \varphi)$ solves the equation 
    \begin{equation*}
    \begin{split}
        0=L(R, \ub):=&\Delta_{S_{\ub, R}} R+\big(f^{-1}\nab f+(\eta+\zeta)\big)\cdot \nab R+(e_3(f)-\f{1}{2}f\trchb-2\omegab f)|\nab R|^2\\&-2(f\chibh_{bc})\nab^b R \nab^c R+\f{1}{2}f^{-1}\trch.
\end{split}
    \end{equation*}
   In \Cref{Continuity1}, we prove that the operator $L(\ub, R): [1, \infty)\times C^{\infty}(\mathbb{S}^2) \to C^{\infty}(\mathbb{S}^2)$ smoothly depends on $\ub$, $R$, and its Fréchet derivative with respect to $R$, i.e.,
    \begin{equation*}
        \partial_R L (R, \ub)=\partial_R F (\ub, R, \lambda=1): C^{\infty}(\mathbb{S}^2) \to C^{\infty}(\mathbb{S}^2)
    \end{equation*}
    is invertible. This is verified in \Cref{FR invert}. Employing the implicit function theorem, we thus show that $R(\ub, \theta, \varphi)$ is smooth of $\ub$ and $(\theta, \varphi)\in \mathbb{S}^2$. Hence, our constructed $\mathcal{AH}$ is a smooth hypersurface.
\end{proof}

\subsection{Final State of Apparent Horizon}\label{Subsec: final state of AH}
In this subsection, we proceed to establish 
\begin{theorem}\label{Theorem AH asymp null converge to EH}
    The apparent horizon $\mathcal{AH}$ solved in our setting is an asymptotically null hypersurface, and would converge to the event horizon eventually.
\end{theorem}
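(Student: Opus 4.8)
The plan is to promote the $C^{2,q}$ a priori bound $\|R-2m_\infty\|_{C^{2,q}(M_{\ub})}\lesssim \varepsilon_0\ub^{-1-\delta_{dec}}$ to a matching decay estimate for $\partial_{\ub}R$, then substitute the latter into the explicit formulas for the induced metric $g'$ on $\mathcal{AH}$ recorded in the introduction to see that the $\partial'_{\ub}$-direction degenerates, and finally compare $R$ with Klainerman--Szeftel's localization of the event horizon. Since \Cref{Subsec: regularity of AH} establishes that $R=R(\ub,\theta,\varphi)$ depends smoothly on $\ub$, differentiating the MOTS equation $L(R,\ub)=0$ in $\ub$ is legitimate and yields $\partial_R L(R,\ub)[\partial_{\ub}R]=-\partial_{\ub}L(R,\ub)$.

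First I would bound the right-hand side: by \Cref{partial ub L}, combined with the $C^{2,q}$ estimate and the hyperbolic estimates of \cite{K-S} applied to $\partial_{\ub}(f^{-1}\trch)$ — expressing $\partial_{\ub}$ in the null frame via \eqref{coordinates derivatives} and matching against the $\mathfrak{d}=\{e_3,re_4,r\nab\}$-weighted decay of $\kbb+2/r$, $\kappa-2/r+4m_\infty/r^2$ and $f+1$ — one gets $\max_{\mathbb{S}^2}|\partial_{\ub}L(R,\ub)|\lesssim \varepsilon_0\ub^{-1-\delta_{dec}}$. Next, \Cref{FR invert} and its proof show that $\partial_R L(R,\ub)=\Delta'_{S_{\ub,R}}+d_{1a}\nab^a+\big(\tR^{-3}(-2m_\infty+d_3)\big)$ has zeroth-order coefficient bounded above by $-2m_\infty/(3m_0)^3<0$ uniformly in $\ub\ge 1$, and the underlying 2-sphere geometry is uniformly controlled by the $C^{2,q}$ bound; hence the standard elliptic estimate applies with a constant independent of $\ub$, giving
\begin{equation*}
\max_{\mathbb{S}^2}|\partial_{\ub}R|\lesssim \max_{\mathbb{S}^2}|\partial_{\ub}L(R,\ub)|\lesssim \f{\varepsilon_0}{\ub^{1+\delta_{dec}}}.
\end{equation*}

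With this in hand, plugging \eqref{C^1 estimate} and the $\partial_{\ub}R$ bound into the components of $g'$, and using the Klainerman--Szeftel metric estimates $\vsgmb-1,\ b,\ A=O(\varepsilon_0\ub^{-1-\delta_{dec}})$ and $\kbb+2/r=O(\varepsilon_0\ub^{-1-\delta_{dec}})$, gives $|g'_{\ub\,\ub}|+|g'_{\theta\ub}|+|g'_{\varphi\ub}|\lesssim \varepsilon_0\ub^{-1-\delta_{dec}}\to 0$, while $g'_{\theta_i\theta_j}=g_{\theta_i\theta_j}$ stays uniformly non-degenerate of size $\sim r^2\sim m_\infty^2$; thus, restricted to $M_{\ub}$, the tangent vector $\partial'_{\ub}$ of $\mathcal{AH}$ normal to $M_{\ub}$ becomes a degenerate null direction of the induced metric as $\ub\to\infty$, i.e.\ $\mathcal{AH}$ is asymptotically null. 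For convergence to the event horizon, I would invoke the bound from \cite{K-S} placing $\mathcal{H}_{+}=\{r=R_{\mathcal{H}_{+}}(\ub,\theta,\varphi)\}$ inside $2m_\infty(1-\sqrt{\varepsilon_0}\ub^{-1-\delta_{dec}})\le r\le 2m_\infty(1+\sqrt{\varepsilon_0}\ub^{-1-\delta_{dec}})$; combined with the $C^0$ estimate \eqref{C^0 estimate} and the triangle inequality this yields $|R-R_{\mathcal{H}_{+}}|\lesssim \sqrt{\varepsilon_0}\,\ub^{-1-\delta_{dec}}\to 0$ uniformly in $(\theta,\varphi)$, so $\mathcal{AH}\to\mathcal{H}_{+}$ as $\ub\to+\infty$.

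The main obstacle is the estimate for $\partial_{\ub}R$ — specifically, controlling $\partial_{\ub}(f^{-1}\trch)$ and, above all, keeping the constant $C(M,R_1,R_2)$ in \Cref{partial ub L} together with the elliptic constant for $\partial_R L$ uniform as $\ub\to\infty$: one must decompose $\partial_{\ub}$ in the Klainerman--Szeftel null frame, track it through their $\mathfrak{d}$-weighted decay hierarchy, and verify that the zeroth-order coefficient of $\partial_R L$ does not degenerate. Once the uniform-in-$\ub$ bound $\max_{\mathbb{S}^2}|\partial_{\ub}R|\lesssim \varepsilon_0\ub^{-1-\delta_{dec}}$ is secured, the remaining two assertions reduce to substitution into formulas already derived.
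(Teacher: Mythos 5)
Your proposal is correct and follows essentially the same route as the paper: differentiate $L(R,\ub)=0$ in $\ub$, use the invertibility structure of $\partial_R L$ from \Cref{FR invert} (negative zeroth-order coefficient, so the maximum principle gives $\max_{\mathbb{S}^2}|\partial_{\ub}R|\lesssim\max_{\mathbb{S}^2}|\partial_{\ub}L(R,\ub)|$), estimate $\partial_{\ub}L$ via \Cref{partial ub L} and the Klainerman--Szeftel decay hierarchy, then substitute into the induced-metric components and finally compare $R$ with the event-horizon localization using the $C^0$ estimate. The only cosmetic difference is that the paper controls the key term $\partial_{\ub}(f^{-1}\trch)$ by combining the null structure equation for $e_4(\trch)$ with the smallness of $\trch|_{M_{\ub}}$, and phrases your ``standard elliptic estimate'' step explicitly as a maximum-principle argument.
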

To prove this theorem, the crucial step is to derive the below estimate for $\partial_{\ub} R$.
\begin{lemma}\label{partial ub R est}
   Let $\mathcal{AH}=\cup_{\ub \ge 1} R(\ub, \theta, \varphi)=\{r=R(\ub, \theta, \varphi): \ub \ge 1  \}$ be the apparent horizon solved in our setting. Then for any $\ub \ge 1$, we have
    \begin{equation*}
        |\partial_{\ub} R(\ub, \theta, \varphi)|\lesssim \f{\varepsilon_0}{\ub^{1+\delta_{dec}}}.
    \end{equation*}
\end{lemma}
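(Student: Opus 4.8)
The plan is to differentiate the MOTS equation $L(R,\ub)=0$ in $\ub$ and then invert its linearization. By the regularity result of \Cref{Subsec: regularity of AH}, the solution $R(\ub,\theta,\varphi)$ of $L(R,\ub)=0$ depends smoothly on $\ub\ge 1$, so $\prub R$ is a well-defined smooth function on $\mathbb{S}^2$. Differentiating the identity $L(R(\ub,\cdot),\ub)\equiv 0$ with respect to $\ub$ gives
\begin{equation*}
\partial_R L(R,\ub)[\prub R]=-\partial_{\ub} L(R,\ub),
\end{equation*}
so the task reduces to (i) estimating the right-hand side and (ii) reading off $\prub R$ from this elliptic equation.

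For (i), by \Cref{partial ub L} one has $|\partial_{\ub} L(R,\ub)-\tfrac12\prub(f^{-1}\trch)|\le C(M,R_1,R_2)\big(|\nab^2 R|+|\nab R|+|\nab R|^2\big)$, and the $C^{2,q}$ a priori estimate of \Cref{C^{2,q} estimate} together with the $C^1$ bound \eqref{C^1 estimate} controls the bracket by $\varepsilon_0\ub^{-1-\delta_{dec}}$. For the leading term, write $f^{-1}\trch=e_3(r)\,\trch=\tfrac{r\kbb}{2}\trch$; the Schwarzschild part $-\tfrac{2}{r}(1-\tfrac{2m_\infty}{r})$ is $\ub$-independent, and the remainder obeys an $\varepsilon_0\ub^{-1-\delta_{dec}}$ bound by the hyperbolic estimates in \cite{K-S}. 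Expressing $\partial_{\ub}$ through the null frame as in \eqref{coordinates derivatives}, whose coefficients are bounded via the metric estimates \eqref{estimates for metric}, and noting that $\partial_{\ub}$ is then a bounded combination of $\mathfrak{d}=\{e_3,re_4,r\nab\}$, the Klainerman--Szeftel bounds on $\mathfrak{d}^k(\check\Gamma,\check R)$ and on $\mathfrak{d}^k$ of the averages $\kbb+\tfrac2r$, $\kb-\tfrac2r(1-\tfrac{2m_\infty}{r})$, $\overline\rho+\tfrac{2m_\infty}{r^3}$ yield $|\prub(f^{-1}\trch)|\lesssim \varepsilon_0\ub^{-1-\delta_{dec}}$. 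Hence $\max_{\mathbb{S}^2}|\partial_{\ub} L(R,\ub)|\lesssim \varepsilon_0\ub^{-1-\delta_{dec}}$.

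For (ii), by \Cref{partial R L} and the computation carried out in the proof of \Cref{FR invert}, the linearized operator has the form $\partial_R L(R,\ub)[W]=\Delta'_{S_{\ub,R}}W+B_a\nab^a W+\mathfrak{c}\,W$ with $|B_a|\lesssim\varepsilon_0\ub^{-1-\delta_{dec}}$ and zeroth-order coefficient $\mathfrak{c}=R^{-3}(-2m_\infty+d)$, so $\mathfrak{c}<0$ and $|\mathfrak{c}|\gtrsim 1$ once $\varepsilon_0$ is small (this is precisely the sign that underlies the invertibility in \Cref{FR invert}). Evaluating $\partial_R L(R,\ub)[\prub R]=-\partial_{\ub} L(R,\ub)$ at a point of $\mathbb{S}^2$ where $\prub R$ attains its maximum, where $\Delta'_{S_{\ub,R}}(\prub R)\le 0$ and $\nab(\prub R)=0$, yields $\mathfrak{c}\cdot\max_{\mathbb{S}^2}\prub R\le -\partial_{\ub} L(R,\ub)$, hence, dividing by $\mathfrak{c}<0$, $\max_{\mathbb{S}^2}\prub R\lesssim\varepsilon_0\ub^{-1-\delta_{dec}}$; the symmetric argument at a minimum point gives the matching lower bound, so $|\prub R(\ub,\theta,\varphi)|\lesssim\varepsilon_0\ub^{-1-\delta_{dec}}$.

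The main obstacle is step (i): one must check with care that \emph{every} $\ub$-derivative appearing in $\partial_{\ub} L(R,\ub)$ --- the $\prub(f^{-1}\trch)$ term and the $\prub$-commutators with $\Delta'$ and $\nab$ hidden inside \Cref{partial ub L} --- genuinely decays like $\varepsilon_0\ub^{-1-\delta_{dec}}$, which forces one to combine the full $C^{2,q}$ a priori control of $R$ with the complete set of interior hyperbolic decay estimates of \cite{K-S}, including the decay of the relevant averages. Once this is in hand, the maximum-principle comparison in step (ii) is routine, since the sign and size of the zeroth-order coefficient were already pinned down in \Cref{FR invert}.
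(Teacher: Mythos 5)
Your proposal follows the same overall strategy as the paper: differentiate $L(R,\ub)=0$ in $\ub$, use the structure of $\partial_R L$ from \Cref{partial R L}/\Cref{FR invert} (negative zeroth-order coefficient $\approx -\tfrac{1}{4m_\infty^2}$) together with the maximum principle to get $\max_{\mathbb{S}^2}|\prub R|\lesssim \max_{\mathbb{S}^2}|\partial_{\ub}L(R,\ub)|$, and then bound $\partial_{\ub}L$ via \Cref{partial ub L} plus the $C^1$ and $C^{2,q}$ a priori estimates. The one place where you genuinely diverge is the bound on $\prub(f^{-1}\trch)$: the paper expands $\partial_{\ub}$ in the null frame and then controls $e_4(\trch)$ through the null structure (Raychaudhuri-type) equation, using crucially that $\trch|_{M_{\ub}}=O(\varepsilon_0\ub^{-1-\delta_{dec}})$ so that the otherwise $O(1)$ terms (e.g.\ $\omega\trch$, $e_3(\trch)$ multiplied by $\kb, A$) become small on the MOTS; you instead write $f^{-1}\trch=\tfrac{r\kbb}{2}\trch$, peel off the $r$-only Schwarzschild value $-\tfrac{2}{r}(1-\tfrac{2m_\infty}{r})$, which the coordinate derivative $\prub$ (at fixed $r,\theta,\varphi$) annihilates, and bound $\prub$ of the remainder by the Klainerman--Szeftel $\mathfrak{d}$-estimates on $\check\kappa$, $\check{\ud{\kappa}}$, $\kbb+\tfrac2r$, $\kb-\tfrac2r(1-\tfrac{2m_\infty}{r})$, since $\partial_{\ub}$ is a bounded combination of the frame derivatives. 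This is a valid alternative: it avoids the structure equation and does not even need the smallness of $\trch$ on $M_{\ub}$, at the price of having to spell out (product rule) that the remainder and its first $\mathfrak{d}$-derivatives are genuinely $O(\varepsilon_0\ub^{-1-\delta_{dec}})$, which they are. One small slip to fix in step (ii): at an interior maximum of $\prub R$ the equation gives $\mathfrak{c}\,\prub R\ge -\partial_{\ub}L$ (not $\le$), i.e.\ $|\mathfrak{c}|\,\max_{\mathbb{S}^2}\prub R\le \max_{\mathbb{S}^2}|\partial_{\ub}L|$; as written, dividing your inequality by the negative $\mathfrak{c}$ would only produce a lower bound. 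The intended comparison is standard and the conclusion is unaffected.
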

\begin{proof}
Taking $\partial_{\ub}$ on both sides of \eqref{MOTS equation 2}, we get
    \begin{equation}\label{partial R L chain rules}
    \partial_R L(R, \ub)[\partial_{\ub} R]+\partial_{\ub} L(R, \ub)=0.
\end{equation}
Following the proof of \Cref{FR invert}, by setting $(\tR, \tilde{\lambda})=(R, 1)$, we derive that
\begin{equation*}
    \partial_R L(R, \ub)[W]=\Delta'_{S_{\ub, R}} W+d_{1a}\nab^a W+\Big[-\f{1}{4m_{\infty}^2}+d_2 \Big] W,
\end{equation*}
where $d_{1a}, d_2$ satisfy
\begin{equation*}
    |(d_{1a}, d_2)|\lesssim\f{\varepsilon_0}{\ub^{1+\delta_{dec}}}.
\end{equation*}
Substituting it into \eqref{partial R L chain rules} gives
\begin{equation*}
   \Delta'_{S_{\ub, R}} (\partial_{\ub} R)+d_{1a}\nab^a (\partial_{\ub} R)+\Big[-\f{1}{4m_{\infty}^2}+d_2 \Big] (\partial_{\ub} R)=-\partial_{\ub} L(R, \ub).
\end{equation*}
Applying maximum principle for $\partial_{\ub} R$ and $-\partial_{\ub} R$ respectively, we deduce
\begin{equation}\label{partial ub R ineq}
     \max\limits_{\mathbb{S}^2} |\partial_{\ub} R|\lesssim \max\limits_{\mathbb{S}^2}| \partial_{\ub} L(R, \ub)|.
\end{equation}
Next we turn to estimate $\partial_{\ub} L(R, \ub)$. Notice that
$$\partial_{\ub} =\f{1}{2}\vsgmb\Big[e_4-\f{\kb+A}{\kbb}e_3-\sqrt{\gamma}(b-\f{\kb+A}{\kbb}\ud{b})e_\theta\Big].$$
By \Cref{partial ub L} (with $(R_1, R_2)=(m_0, 3m_0)$), together with hyperbolic estimates from \cite{K-S}, we obtain the bound
\begin{equation}\label{pr ub L ineq}
         |\partial_{\ub} L(R, \ub)-\f{1}{2}\partial_{\ub}(f^{-1}\trch)|\lesssim|\nab^2 R|+|\nab R|+|\nab R|^2.
     \end{equation}
To estimate  $\partial_{\ub}(f^{-1}\trch)$, we use the null structure equation for $\trch$:\footnote{See Proposition 7.4.1 in \cite{Chr-Kl} by Christodoulou-Klainerman.}
\begin{equation*}
    e_4 (\trch)+\frac 12 (\trch)^2=2\div \xi -2\omega \trch+2\xi \cdot (\eta+\etab+2\zeta)-|\chih|^2.
\end{equation*}
Utilizing hyperbolic estimates from \cite{K-S} again and the fact that $\trch|_{S_{\ub, R}}=O(\varepsilon_0 \ub^{-1-\delta_{dec}})$, it  follows
\begin{equation*}
    |\partial_{\ub}(f^{-1}\trch)|\lesssim |e_4 (\trch) |+(|\kb|+|A|)|e_3(\trch)|+|\nab\trch|+|\trch|\lesssim \f{\varepsilon_0}{\ub^{1+\delta_{dec}}} \quad \text{on} \quad M_{\ub}.
\end{equation*}
Inserting the inequality above as well as $C^1$ estimate \eqref{C^1 estimate}, $C^{2,q}$ estimate in \Cref{C^{2,q} estimate} into \eqref{pr ub L ineq}, we hence deduce
\begin{equation*}
    | \partial_{\ub} L(R, \ub)|\lesssim \f{\varepsilon_0}{\ub^{1+\delta_{dec}}}.
\end{equation*}
Back to \eqref{partial ub R ineq}, we then arrive at
\begin{equation*}
        |\partial_{\ub} R(\ub, \theta, \varphi)|\lesssim \f{\varepsilon_0}{\ub^{1+\delta_{dec}}}.
    \end{equation*}
\end{proof}
To proceed, we note that the components of the induced metric on $\mathcal{AH}$ take the form below
\begin{equation*}
	g'_{\theta_i \theta_j}=g_{\theta_i \theta_j}+\partial_{\theta_i} R\cdot\partial_{\theta_j} R\cdot g(\partial_r, \partial_r)=g_{\theta_i \theta_j}, 
\end{equation*}
\begin{equation*}
	g'_{\ub\, \ub}=g_{\ub\, \ub}+2\partial_{\ub} R\cdot g(\partial_r, \partial_{\ub})=\vsgmb^2\l\f{\kb+A}{\kbb}+\f{1}{4}\gamma b^2\r-2\partial_{\ub} R\cdot\f{2\vsgmb}{r\kbb},
\end{equation*}
\begin{equation*}
	g'_{\theta \ub}=g_{\theta \ub}+\partial_{\theta} R \cdot g(\partial_r, \partial_{\ub})=-\f{1}{2}\vsgmb \gamma b-\partial_{\theta} R\cdot\f{2\vsgmb}{r\kbb},
\end{equation*}
\begin{equation*}
	g'_{\varphi \ub}=g_{\varphi \ub}+\partial_{\varphi} R \cdot g(\f{\partial}{\partial r}, \partial_{\ub})=-\partial_{\varphi} R\cdot\f{2\vsgmb}{r\kbb}.
\end{equation*}
Here we use \eqref{coordinates derivatives}, namely,
\begin{equation*}
	\begin{split}
	&\partial_r=\f{2}{r\kbb}e_3-\f{2\sqrt{\gamma}}{r\kbb}\ud{b}e_\theta,\quad \partial_\theta=\sqrt{\gamma}e_\theta, \\
	&\partial_{\ub} =\f{1}{2}\vsgmb\Big[e_4-\f{\kb+A}{\kbb}e_3-\sqrt{\gamma}(b-\f{\kb+A}{\kbb}\ud{b})e_\theta\Big]
	\end{split}
	\end{equation*}
and the fact that $\ud{b}=0$ in our new $(\ub, r, \theta, \varphi)$ coordinates.

Combining $C^1$ estimate \eqref{C^1 estimate}, \Cref{partial ub R est}, hyperbolic estimates and metric estimates \eqref{estimates for metric} in \cite{K-S}, we arrive at
\begin{equation*}
	|(g'_{\ub\, \ub}, g'_{\theta \ub}, g'_{\varphi \ub})|\lesssim \f{\varepsilon_0}{\ub^{1+\delta_{dec}}},
\end{equation*}
which implies that
\begin{equation*}
	\lim\limits_{\ub\to +\infty}g'_{\ub\, \ub}=\lim\limits_{\ub\to +\infty} g'_{\theta \ub}=\lim\limits_{\ub\to +\infty}g'_{\varphi \ub}=0.
\end{equation*}

Hence, $\partial'_{\ub}$ is an asymptotically degenerate direction for our apparent horizon $\mathcal{AH}$. In other words, the apparent horizon we construct is asymptotically null.
\vspace{4mm}

The second conclusion of \Cref{Theorem AH asymp null converge to EH} is a direct consequence of our $C^0$ estimate 
\begin{equation*}
    |R(\ub, \theta, \varphi)-2m_{\infty}|\lesssim \f{\varepsilon_0}{\ub^{1+\delta_{dec}}}
\end{equation*}
established in \Cref{lemma C0 Est} and the asymptotic of the future event horizon as presented in \cite{K-S}, i.e., the event horizon $\mathcal{H}_{+}$ of $\mathcal{M}$ is located in the following region of ${}^{(int)}\mathcal{M}$:
\begin{equation*}
    2m_{\infty}(1-\f{\sqrt{\varepsilon_0}}{\ub^{1+\delta_{dec}}}) \le r \le 2m_{\infty}(1+\f{\sqrt{\varepsilon_0}}{\ub^{1+\delta_{dec}}}) \quad \text{for any} \quad \ub \ge 1.
\end{equation*}
Denote $\mathcal{H}_{+}:=\{r=R_{\mathcal{H}_{+}}(\ub, \theta, \varphi): \ub\ge 1 \}$. We obtain
\begin{equation*}
    |R(\ub, \theta, \varphi)-R_{\mathcal{H}_{+}}(\ub, \theta, \varphi)|\lesssim \f{\sqrt{\varepsilon_0}}{\ub^{1+\delta_{dec}}} \quad \text{for all} \quad (\theta, \varphi)\in \mathbb{S}^2,
\end{equation*}
which implies
\begin{equation*}
    \lim\limits_{\ub\to +\infty} \l R(\ub, \theta, \varphi)-R_{\mathcal{H}_{+}}(\ub, \theta, \varphi) \r=0.
\end{equation*}
Therefore, the apparent horizon $\mathcal{AH}$ constructed in our setting would converge to the event horizon $\mathcal{H}_{+}$ as $\ub$ approaches the infinity.

\section{Comparison Principle along the incoming null hypersurface}\label{Section: NCP}
In this section, we will study the (local) achronality of the apparent horizon. In this paper, our spacetime $(\mathcal{M}, \mathbf{g})$ is foliated by incoming null hypersurfaces $\Hb_{\ub}$ with $\ub$ being the optical function satisfying 
\begin{equation*}
    \mathbf{g}^{\mu \nu}\partial_{\mu} \ub \partial_{\nu} \ub=0.
\end{equation*}
Along $\Hb_{\ub}$, we choose a coordinate system $(r, \theta_1, \theta_2)$ so that 
    \begin{equation*}
    \partial_r=-f \mathbf{g}^{\mu \nu}\partial_{\mu} \ub \partial_{\nu} \quad \textrm{with} \quad f<0 \quad \text{being a fixed function}.
\end{equation*}
 
 Our apparent horizon $\mathcal{AH}$ can be decomposed as $\mathcal{AH}=\cup_{\ub \in J} M_{\ub}$ with $M_{\ub}$ lying in $\Hb_{\ub}$. Here $J$ is a connected interval. Assuming that in $(\ub, r, \theta_1, \theta_2) $ coordinates  $M_{\ub}$ is characterised as $\{ r=R(\ub, \theta_1, \theta_2)\}$ along $\Hb_{\ub}$, we then introduce the null comparison principle as below.
\begin{definition}\label{comparison principle}
    For a given $\ub\in J$, we say that the null comparison principle with respect to the MOTS $M_{\ub}=\{ r=R(\ub, \theta_1, \theta_2)\}$ holds if the following is true: for any smooth 2-surface $\widetilde{\Sigma}=\{r=\tilde{R}(\theta_1, \theta_2) \}$ lying in $\Hb_{\ub}$ close to $M_{\ub}$, if its outgoing null expansion satisfies $\tr \tilde{\chi}\le 0$, then we have $\tilde{R}(\theta_1, \theta_2) \leq R(\theta_1, \theta_2)$ and if $\tr \tilde{\chi}\ge 0$, then it yields $\tilde{R}(\theta_1, \theta_2) \ge R(\theta_1, \theta_2)$.
\end{definition}
Leveraging the strong maximum principle for elliptic equations arising from the definition of outgoing null expansion, we can exclude situations  that $\tilde{R} = R$ only occurs at some but not all points $(\theta_1, \theta_2)\in \mathbb{S}^2$.
\begin{proposition}\label{strong comparison theorem}
	Assume that the null comparison principle holds for the MOTS $M_{\ub}=\{ r=R(\ub, \theta_1, \theta_2)\}$. Let $\widetilde{\Sigma}=\{r=\tilde{R}(\theta_1, \theta_2) \}$ be a smooth $2$-surface lying on $\Hb_{\ub}$. Then if its outgoing null expansion $\tr \tilde{\chi}|_{\widetilde{\Sigma}} \leq 0$, we have either $\tR(\theta_1, \theta_2)\equiv R(\theta_1, \theta_2)$ or $\tR(\theta_1, \theta_2)<  R(\theta_1, \theta_2)$, and if $\tr \tilde{\chi}|_{\widetilde{\Sigma}} \geq 0$, it holds either $\tR(\theta_1, \theta_2)\equiv R(\theta_1, \theta_2)$ or $\tR(\theta_1, \theta_2)>  R(\theta_1, \theta_2)$.
\end{proposition}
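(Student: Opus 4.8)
The plan is to reduce the proposition to Hopf's strong maximum principle on the closed surface $\mathbb{S}^2$, applied to the difference $v:=\tR-R$ of the two radial height functions. This is the same mechanism that underlies the uniqueness argument behind \Cref{uniqueness of MOTS}; the only new feature is that the relevant second-order differential inequality for $v$ now has a one-sided sign rather than being an equality, and that is exactly what produces the stated dichotomy. To set it up, I would first restate the hypothesis as an elliptic differential inequality. Applying the deformation formula \eqref{null expansion on MOTS} (equivalently \eqref{deformation formula}) to $\widetilde{\Sigma}=\{r=\tR(\theta_1,\theta_2)\}$ in place of $M_\ub$ — legitimate since $\partial_r\tR=0$ — gives $\tr\tilde{\chi}|_{\widetilde{\Sigma}}=2f\,L(\tR,\ub)$, where $L$ is the quasilinear operator of \eqref{MOTS equation 2} and $f<0$ along $\Hb_\ub$. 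Hence $\tr\tilde{\chi}|_{\widetilde{\Sigma}}\le 0$ is equivalent to $L(\tR,\ub)\ge 0$ and $\tr\tilde{\chi}|_{\widetilde{\Sigma}}\ge 0$ to $L(\tR,\ub)\le 0$, while $M_\ub$ being a MOTS means $L(R,\ub)=0$. The null comparison principle (\Cref{comparison principle}) already supplies $v\le 0$ in the first case and $v\ge 0$ in the second, so it remains only to promote this weak inequality to the dichotomy that either $v\equiv 0$ or $v$ is strictly signed.

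The second step is to express $L(\tR,\ub)-L(R,\ub)$ as a single linear second-order operator acting on $v$. Using \Cref{change laplacian} to compare $\Delta'$, $\Delta_{S_{\ub,\tR}}$ and $\Delta_{S_{\ub,R}}$ (precisely as in the passage from \eqref{3.14} onward) and differencing the remaining lower-order terms of $L$ by the fundamental theorem of calculus along the segment $s\mapsto R+s(\tR-R)$, one obtains $L(\tR,\ub)-L(R,\ub)=\mathcal{P}v$ with
\[
 \mathcal{P}=a^{ij}(\theta_1,\theta_2)\,\partial_i\partial_j+b^i(\theta_1,\theta_2)\,\partial_i+c(\theta_1,\theta_2),
\]
where $(a^{ij})$ is positive definite, since the principal part of $L$ is the Laplace--Beltrami operator of the smooth induced metric on the relevant level sphere, and $a^{ij},b^i,c$ are bounded and Hölder continuous because $R$ is of class $C^2$ and $\widetilde{\Sigma}$ is smooth. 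Thus $\mathcal{P}$ is uniformly elliptic; crucially, no sign condition on $c$ is needed in what follows.

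The third step is the strong maximum principle. Consider the case $\tr\tilde{\chi}|_{\widetilde{\Sigma}}\le 0$, so $v\le 0$ and $\mathcal{P}v=L(\tR,\ub)-L(R,\ub)\ge 0$ on $\mathbb{S}^2$. Since $\mathbb{S}^2$ is compact without boundary, $v$ attains its maximum $v_*:=\max_{\mathbb{S}^2}v\le 0$. If $v_*<0$ then $v<0$ everywhere, i.e. $\tR<R$ pointwise. If $v_*=0$, write $c=c^+-c^-$ with $c^{\pm}=\max(\pm c,0)\ge 0$, and set $\widehat{\mathcal{P}}:=\mathcal{P}-c^+$, which has nonpositive zeroth-order coefficient $-c^-$; then $\widehat{\mathcal{P}}v=\mathcal{P}v-c^+v\ge 0$, because $\mathcal{P}v\ge 0$ and $c^+v\le 0$. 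As $v$ attains its nonnegative maximum $0$ at an interior point of $\mathbb{S}^2$, Hopf's strong maximum principle (\cite[Theorem 3.5]{G-T}; see also \cite{A-G-H}) forces $v\equiv 0$, i.e. $\tR\equiv R$. The case $\tr\tilde{\chi}|_{\widetilde{\Sigma}}\ge 0$ is entirely symmetric, applying the same argument to $-v$.

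I expect the only delicate point to be the reduction in the second step: one must verify that, notwithstanding the quasilinearity of $L$ and the fact that $R$ and $\tR$ parametrise level spheres of different area radius, $L(\tR,\ub)-L(R,\ub)$ genuinely is a uniformly elliptic \emph{linear} expression in $v$ with bounded Hölder coefficients. This is exactly the computation already performed in the proof of \Cref{uniqueness of MOTS} (see \eqref{3.14} and the lines that follow), and — as the argument above shows — the sign of the zeroth-order coefficient found there is irrelevant here, which is precisely why the null comparison principle, unlike the stability-operator approach, tolerates a vanishing principal eigenvalue.
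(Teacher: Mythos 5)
Your proposal is correct and follows essentially the same route as the paper's own proof: use the null comparison principle for the weak inequality, write $L(\tR,\ub)-L(R,\ub)=\int_0^1\partial_R L\big(tR+ (1-t)\tR\big)[\tR-R]\,dt$ as a uniformly elliptic linear operator acting on $\tR-R$, split the zeroth-order coefficient into its positive and negative parts, and invoke the strong maximum principle (Theorem 3.5 in \cite{G-T}) on the closed surface. The only difference is presentational (you spell out the dichotomy at the maximum point slightly more explicitly), so no further comment is needed.
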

\begin{proof}
 Without loss of generality, we assume that $\tr \tilde{\chi}|_{\widetilde{\Sigma}}\le 0$. From  \Cref{comparison principle} we have $\tR\leq R$.  Recalling $L(R):=(\f{1}{2} f^{-1} \trch')|_{S_{\ub, R}}$ and using the condition $f<0$, we get 
\begin{equation*}
    L(\tR)\ge  L(R)=0.
\end{equation*}
We then define the below operator $\mathcal{L}$ and it satisfies
\begin{equation*}
    \mathcal{L}(\tR-R)=L(\tR)-L(R)=\int_0^1 \partial_R L\big(t\tR+(1-t) R \big)[\tR-R] dt \ge 0 ,
\end{equation*}
Following the derivation for $\partial_R L$ in \Cref{partial R L}, we can express  $\mathcal{L}$ (a linear elliptic operator) as
\begin{equation*}
	\mathcal{L}[\tR-R]=A^{i j} D_{ij} (\tR-R) +B^iD_i (\tR-R) +C(\tR-R).
\end{equation*}
With smooth $\widetilde{\Sigma}$, we further have that the coefficients of $\mathcal{L}$ (depending on $\tR, R$) satisfy
\begin{equation*}
		D^{-1} |\xi|^2\le A^{i j}\xi_i \xi_j \le 	D |\xi|^2
\end{equation*}
and 
\begin{equation*}
		\max\limits_{i, j}|A^{ij}|+\max\limits_{i} |B^i |+|C|\le D.
\end{equation*}
with some constant $D\geq 1$ depending on $\tR$ and $R$. Denote $C^+=\max(C, 0)$ and $C^-=\min(C, 0)$. We also get
\begin{equation*}
   A^{i j} D_{ij} (\tR-R) +B^iD_i (\tR-R)+C^- (\tR-R)=( \mathcal{L}-C^+ )[\tR-R] \ge  -C^+ (\tR-R)\ge 0.
\end{equation*}
The desired result thus follows from the strong maximum principle (refer to Theorem 3.5 in \cite{G-T}) and noting that the above elliptic operator on the left has a non-positive zeroth order coefficient.
\end{proof}
The rest of this section is to show that the apparent horizon $\mathcal{AH}=\cup_{\ub \in J} M_{\ub}$ must be piecewise spacelike or null if the null comparison principle holds. And we will demonstrate three typical scenarios about gravitational collapse, in which the null comparison principle applies. Finally, we verify  several physical laws about dynamical horizon and isolated horizon.

\subsection{Local Achronality of Apparent Horizon}
In this subsection,  we prove that for any $p\in M_{\ub}$, the non-zero vector $X_p$ tangent to $\mathcal{AH}$ and normal to $M_{\ub}$ must be spacelike or (outgoing) null, if $M_{\ub}$ satisfies the null comparison principle. We first introduce some basic definitions and notations in Lorentzian geometry. \footnote{For more discussions about Lorentzian geometry, interested readers are  referred to \cite{Galloway}.}
\begin{definition}
    Let  $p, q$ be two points in spacetime $\mathcal{M}$.

    (1) We say that $q$ is in the timelike future (resp. past) of $p$ provided that there exists a future-directed (resp. past-directed) timelike curve in $\mathcal{M}$ from $p$ to $q$.

    (2)  We say that $q$ is in the causal future (resp. past) of $p$ if there exists a future-directed (resp. past-directed)  causal curve in $\mathcal{M}$ from $p$ to $q$. 
\end{definition}
\begin{definition}
  Given a set $S\subset \mathcal{M}$, the timelike future and causal future (resp. past) of $S$, are defined as
\begin{equation*}
		I^{\pm}(S)=\{q\in \mathcal{M}: \ q \ \textrm{is in the timelike future (resp. past) of} \ p\  \textrm{for some}\  p \in S \}
  \end{equation*}
  and
  \begin{equation*}
		J^{\pm}(S)=\{q\in \mathcal{M}: \ q \ \textrm{is in the causal future (resp. past) of} \ p\  \textrm{for some}\  p \in S \}.
\end{equation*}
\end{definition}
\noindent Note that the above definition immediately implies that $I^{\pm}(S)\subseteq 	J^{\pm}(S)$. 

We also define the local achronality of a hypersurface $\Delta\subset \mathcal{M}$ as follows:
\begin{definition}
    Let $\Delta$ be a smooth hypersurface in $\mathcal{M}$ foliated by spacelike sections $\Sigma_s$. For a fixed $s$, we say that

  (1)  $\Delta$ is (outgoing) null when restricted to $\Sigma_s$, if any non-zero tangent vector $X_p$ of $\mathcal{AH}$ at $p\in \Sigma_s$, that is normal to $\Sigma_s$, is (outgoing) null;

    (2)   $\Delta$ is spacelike when restricted to $\Sigma_s$, if any non-zero tangent vector $X_p$ of $\mathcal{AH}$ at $p\in \Sigma_s$, that is normal to $\Sigma_s$, is spacelike.
\end{definition}
\vspace{2mm}

Now we present a lemma that provides a sufficient condition to guarantee the (local) achronality of a hypersurface $\Delta$.
	\begin{lemma}\label{lemma spacelike or null}
	Let $\Delta$ be a hypersurface in $\mathcal{M}$ foliated by spacelike sections $\Sigma_s$. For a fixed $s$, if there exists some $\delta>0$, such that
	\begin{equation}\label{spcacelike or null cond}
	\Sigma_{s'}\cap I^{\pm} (\Sigma_s)=\emptyset \qquad \textrm{for all} \quad s'\in (s, s+\delta),
	\end{equation}
	then $\Delta$ is spacelike or (outgoing) null, when restricted to $\Sigma_s$. Moreover, if we replace $I^{\pm} (\Sigma_s)$ by $J^{\pm} (\Sigma_s)$ in \eqref{spcacelike or null cond}, then $\Delta$ must be spacelike, when restricted to $\Sigma_s$.
\end{lemma}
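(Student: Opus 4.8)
The plan is to argue pointwise and by contradiction, translating the causal character of the normal direction of $\Delta$ into the existence of a causal curve that ends on a nearby leaf. Fix $s$ and a point $p\in\Sigma_s$, and let $X_p\neq 0$ be tangent to $\Delta$ at $p$ and normal to $T_p\Sigma_s$. Since $\Sigma_s$ has codimension one in $\Delta$ and is spacelike, the line $\mathbb{R}X_p$ is exactly the $\mathbf{g}$-orthogonal complement of $T_p\Sigma_s$ inside $T_p\Delta$, and $X_p$ is transverse to the leaves of the foliation $\{\Sigma_{s'}\}$ on $\Delta$; replacing $X_p$ by $-X_p$ if necessary (which changes neither its causal type nor anything relevant, as \eqref{spcacelike or null cond} involves both $I^+$ and $I^-$), I may assume the foliation parameter is strictly increasing along $X_p$. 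I then choose a $C^1$ curve $\gamma:(-\epsilon_0,\epsilon_0)\to\Delta$ with $\gamma(0)=p$ and $\gamma'(0)=X_p$; after shrinking $\epsilon_0$, for each $t\in(0,\epsilon_0)$ the point $\gamma(t)$ lies on a leaf $\Sigma_{s(t)}$ with $s(t)\in(s,s+\delta)$.

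The core of the argument is the timelike case. Suppose $X_p$ is timelike; without loss of generality future-directed. Since the set of future-directed timelike vectors is open and $\gamma'$ is continuous, there is $\epsilon_1\in(0,\epsilon_0)$ with $\gamma'(t)$ future-directed timelike for all $t\in[0,\epsilon_1]$, so $\gamma|_{[0,\epsilon_1]}$ is a future-directed timelike curve from $p$ to $\gamma(\epsilon_1)$. Hence $\gamma(\epsilon_1)\in I^+(p)\subseteq I^+(\Sigma_s)$; but $\gamma(\epsilon_1)\in\Sigma_{s(\epsilon_1)}$ with $s(\epsilon_1)\in(s,s+\delta)$, contradicting \eqref{spcacelike or null cond}. (If $X_p$ were past-directed instead, the same computation yields a point of $\Sigma_{s(\epsilon_1)}\cap I^-(\Sigma_s)$, again contradicting \eqref{spcacelike or null cond}.) Therefore $X_p$ cannot be timelike, and since this holds at every $p\in\Sigma_s$ the normal direction is everywhere spacelike or (outgoing) null, which is the first assertion. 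The refinement that the null case, when it occurs, is the \emph{outgoing} null normal (rather than the incoming one) follows from the chosen orientation of the foliation, so I would only need to verify that orientation matches the statement's convention.

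For the refinement with $J^\pm$ in place of $I^\pm$, I would run the same scheme but now must rule out $X_p$ being \emph{causal}. If $X_p$ is timelike, the previous paragraph applies verbatim since $I^\pm\subseteq J^\pm$. If $X_p$ is null, say future-directed, the strategy is again to produce a causal curve from $\Sigma_s$ that lands on a leaf $\Sigma_{s'}$ with $s'\in(s,s+\delta)$: either the curve $\gamma$ above is causal on some short interval $[0,\epsilon_1]$ (so $\gamma(\epsilon_1)\in J^+(p)\subseteq J^+(\Sigma_s)$, a contradiction), or it is not, in which case I would approximate by replacing $X_p$ with the timelike vectors $X_p+\lambda T_p$ (for a fixed future-directed timelike $T_p$, $\lambda\downarrow 0$), take corresponding curves $\gamma_\lambda\subset\Delta$, produce points of $\Sigma_{s'}\cap I^\pm(\Sigma_s)$ for each $\lambda>0$ and pass to the limit to land in $\Sigma_{s'}\cap\overline{J^\pm(\Sigma_s)}$, once more contradicting $\Sigma_{s'}\cap J^\pm(\Sigma_s)=\emptyset$. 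Concluding that $X_p$ is spacelike at every $p\in\Sigma_s$ then gives that $\Delta$ is spacelike when restricted to $\Sigma_s$.

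The step I expect to be the main obstacle is precisely the null sub-case of the $J^\pm$-refinement. A $C^1$ curve in $\Delta$ with a null tangent at $p$ need not have causal tangent for $t>0$ — it does so only if $\Delta$ is itself null or timelike along it — so the bare curve argument is not enough, and the limiting argument sketched above must be carried out with enough care that the limit point genuinely stays on a leaf $\Sigma_{s'}$ with $s'\in(s,s+\delta)$ and genuinely lies in the closure of $J^\pm(\Sigma_s)$ in a way that contradicts the hypothesis; this is where I would need to exploit that, when $X_p$ is null, $T_p\Delta$ is a null hyperplane whose only causal vectors are the multiples of $X_p$, and to combine this with the second-order behaviour of $\Delta$ at $p$. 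By contrast, the timelike case — hence the first, main assertion of the lemma — is immediate from the openness of the timelike cone.
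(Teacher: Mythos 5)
Your handling of the first (main) assertion is correct and is essentially the paper's own proof: assuming a timelike $X_p\in T_p\Delta$ normal to $\Sigma_s$, you run a curve in $\Delta$ with initial velocity $X_p$, use openness of the timelike cone to keep its tangent timelike on a short parameter interval, and use transversality of $X_p$ to the leaves to land on some $\Sigma_{s'}$ with $s'\in(s,s+\delta)$ inside $I^{+}(\Sigma_s)$ (or $I^{-}(\Sigma_s)$ in the past-directed case), contradicting \eqref{spcacelike or null cond}. The paper argues identically, except that instead of normalizing $X_p$ so that the foliation parameter increases along it, it lets the curve run both ways and picks whichever endpoint lies on a leaf with $s'>s$; the content is the same. (Neither you nor the paper actually addresses the parenthetical ``outgoing'' label inside the lemma, so no issue there.)

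For the ``Moreover'' part your proposal is incomplete, as you acknowledge, and the repair you sketch does not work as written: if $T_p\Delta$ is a null hyperplane it contains no timelike vectors, so $X_p+\lambda T_p$ is not tangent to $\Delta$ and there are no curves $\gamma_\lambda\subset\Delta$ with that initial velocity; moreover, a limiting argument would at best place a point of $\Sigma_{s'}$ in the closure of $J^{\pm}(\Sigma_s)$, which does not contradict $\Sigma_{s'}\cap J^{\pm}(\Sigma_s)=\emptyset$. You should know, however, that the paper's own proof of this half consists only of the remark that one should ``change the timelike curve $\alpha$ to be causal'', which founders on precisely the point you isolated: the causal cone is closed, not open, so a curve in $\Delta$ with null initial velocity need not remain causal, and nothing certifies that its later points lie in $J^{\pm}(\Sigma_s)$. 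In fact the second assertion cannot be derived from the stated hypotheses alone. In Minkowski space $\mathbb{R}^{1+2}$ with $\mathbf{g}=-dt^2+dx^2+dy^2$, take $\Delta=\{t=x-x^3\}$ foliated by the spacelike lines $\Sigma_s=\Delta\cap\{x=s\}$. Then $J^{+}(\Sigma_0)=\{t\ge|x|\}$ and $J^{-}(\Sigma_0)=\{t\le-|x|\}$, while $|s'-s'^3|<|s'|$ for $0<s'<1$, so $\Sigma_{s'}\cap J^{\pm}(\Sigma_0)=\emptyset$ for all such $s'$; yet the tangent direction of $\Delta$ normal to $\Sigma_0$ is $\partial_x+\partial_t$, which is null rather than spacelike. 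So the $J^{\pm}$ refinement needs additional input beyond \eqref{spcacelike or null cond} (in the paper's application, extra structure coming from the MOTS foliation), and your instinct that the null sub-case is the genuine obstacle is exactly right: it is a gap in the paper's argument as well, not something the paper resolves and you merely failed to reproduce.
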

\begin{proof}
	It suffices to show that any non-zero tangent vector $X\in T\Delta$ normal to $\Sigma_s$ is spacelike or null. For the sake of contradiction, assume that there exist a point $p\in \Sigma_s$ and a future directed timelike vector $X_p\in T_p \Delta$. Let $\alpha: (-t_0, t_0)\to \Delta$ be a curve with $\alpha(0)=p, \alpha'(0)=X_p$. We can then find a sufficiently small $\epsilon>0$ such that $\alpha|_{[-\epsilon, \epsilon]}$ is contained within $\cup_{|s'-s|<\delta} \Sigma_{s'}$ and the following properties hold
\begin{enumerate}
	\item The tangent vector $\alpha'|_{[-\epsilon, \epsilon]}$ remains future-directed timelike;
	\item The curve $\alpha|_{[-\epsilon, \epsilon]}$ only intersects with $\Sigma_s$ at $p$.
\end{enumerate}
	Now we select $q\in\{\alpha(\epsilon), \alpha(-\epsilon) \}$ so that $q\in \Sigma_{s'}$ with some $s<s'<s+\delta$. If $q=\alpha(\epsilon)$, then we have $q\in I^+(p)\subset I^+(\Sigma_s)$, which yields a contradiction since $\Sigma_{s'}\cap I^{+} (\Sigma_s)\neq \emptyset$.
	On the other hand, if $q=\a(-\epsilon)$, then $q\in I^-(p)\subset I^-(\Sigma_{s})$. This is contrary to $\Sigma_{s'}\cap I^{-} (\Sigma_{s})\neq \emptyset$.

Hence we have proved the first conclusion of this lemma. The second conclusion can be derived in a similar manner if we change the timelike curve  $\a$ to be causal in the preceding proof.
\end{proof}

Our next lemma reveals that the pointwise value of the outgoing null expansion along the corresponding null hypersurface is independent of the choice of 2-sphere.
\begin{lemma}\label{null expasion indenpendence along null hypersurface}
	Let $H$ be a null hypersurface and $L$ be its normal null vector field along it. Assume that $\Sigma_1$ and $\Sigma_2$ are two spacelike $2$-surfaces within $H$ and they intersect at $p$. Denote $\trch_i$ as the outgoing null expansion of $\Sigma_i$ with respect to $L$ for $i=1,2$. Then at $p$ we have $\trch_1|_p=\trch_2|_p$.
\end{lemma}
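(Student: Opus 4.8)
The plan is to show that the outgoing null expansion, evaluated pointwise, depends only on the null hypersurface $H$ and its chosen null generator $L$, not on the particular spacelike cross-section. The key observation is that the outgoing null expansion of a spacelike $2$-surface $\Sigma \subset H$ with respect to $L$ can be written purely in terms of the Lie derivative of the induced area element along $L$, which in turn is governed by the null second fundamental form $\theta_{L}(X,Y) = \mathbf{g}(\bfD_X L, Y)$ for $X,Y$ tangent to $\Sigma$. But $\bfD_X L$ for $X \in T_p H$ is determined by $L$ and its derivatives along $H$ alone: since $L$ is the (fixed) normal null vector field on $H$, the quantity $\mathbf{g}(\bfD_X L, Y)$ makes sense for any $X, Y \in T_p H$ that are orthogonal to $L$, and this bilinear form on the screen distribution (the quotient $T_p H / \mathbb{R} L$) is intrinsic to $(H, L)$. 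The trace of $\theta_L$ over a spacelike section is then just the trace of this intrinsic form, which is the same regardless of which $2$-surface one uses to realize the screen.

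First I would set up the argument at the point $p$ where $\Sigma_1$ and $\Sigma_2$ meet. Pick the null generator $L$ of $H$; at $p$, both $T_p\Sigma_1$ and $T_p\Sigma_2$ are $2$-dimensional spacelike subspaces of $T_p H$ that are complementary to $\mathbb{R}L$, hence each projects isomorphically onto the screen space $\mathcal{S}_p := (L^\perp \cap T_p H)/\mathbb{R}L$. Define the bilinear form $\bar\theta$ on $\mathcal{S}_p$ by $\bar\theta([X],[Y]) = \mathbf{g}(\bfD_X L, Y)$; one checks this is well-defined, because shifting $X$ by a multiple of $L$ changes $\mathbf{g}(\bfD_X L, Y)$ by a multiple of $\mathbf{g}(\bfD_L L, Y)$, and $\mathbf{g}(\bfD_L L, Y) = 0$ for $Y \perp L$ (as $L$ is geodesic or pre-geodesic along $H$, and in any case $\mathbf{g}(\bfD_L L, L)=0$ and the component along $L$ drops out after quotienting), while shifting $Y$ by a multiple of $L$ changes the value by $\mathbf{g}(\bfD_X L, L) = \tfrac12 X(\mathbf{g}(L,L)) = 0$ since $\mathbf{g}(L,L)=0$ on $H$. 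Then the outgoing null expansion $\trch_i|_p$ is precisely the trace of the pullback of $\bar\theta$ to $T_p\Sigma_i$ with respect to the induced metric — equivalently, the trace of $\bar\theta$ with respect to the metric induced on $\mathcal{S}_p$, which is the same spacelike inner product regardless of $i$.

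The second step is to make the identification ``$\trch_i = \operatorname{tr}_{\gamma_i}(\theta_L|_{\Sigma_i})$ is the trace of $\bar\theta$'' precise: given an orthonormal frame $\{e_1, e_2\}$ for $T_p\Sigma_i$ (orthonormal for the induced metric, which coincides with the restriction of $\mathbf{g}$), we have $\trch_i|_p = \sum_a \mathbf{g}(\bfD_{e_a} L, e_a)$. Replacing each $e_a$ by $e_a + c_a L$ for the appropriate constants $c_a$ moves the frame into $T_p\Sigma_j$ without changing the screen projection or the induced inner product on the screen, and by the well-definedness computation above the sum $\sum_a \mathbf{g}(\bfD_{e_a}L, e_a)$ is unchanged. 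Hence $\trch_1|_p = \trch_2|_p$. The main obstacle — really the only subtlety — is verifying the two vanishing identities $\mathbf{g}(\bfD_L L, Y) = 0$ for $Y \perp L$ and $\mathbf{g}(\bfD_X L, L) = 0$, i.e., confirming that the normalization of $L$ along $H$ (that it is null on all of $H$, and that its integral curves are the generators so $\bfD_L L \parallel L$) is exactly what forces $\bar\theta$ to descend to the screen; this is where one must be careful about whether $L$ is taken geodesic or merely tangent to the generators, but in either convention the component along $L$ that could obstruct well-definedness is annihilated after the quotient, so the conclusion holds.
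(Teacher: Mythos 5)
Your proposal is correct and takes essentially the same route as the paper's proof: both arguments rest on the observation that at $p$ an orthonormal frame of $T_p\Sigma_2$ differs from an orthonormal frame of $T_p\Sigma_1$ only by additions of multiples of $L$, and that the resulting cross terms in $\sum_a \mathbf{g}(\mathbf{D}_{e_a}L,e_a)$ vanish because $\mathbf{g}(L,L)=0$ along $H$ and $\mathbf{D}_L L$ is parallel to $L$. Your screen-space quotient formulation is just a more abstract packaging of that same computation.
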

\begin{proof}
	Set $\{e_a^{(i)} \}_{a=1,2}$ to be the orthonormal frame of $T\Sigma_i$ for $i=1, 2$. At $p\in \Sigma_1\cap \Sigma_2$, since $(e_a^{(i)})_{a=1, 2}\in TH$, there exists $(\wideparen{\alpha}_a, \wideparen{\beta}_a, \wideparen{\gamma}_a)_{a=1,2}$ such that
	\begin{equation*}
	e_{a}^{(2)}=\wideparen{\alpha}_{a} e_{1}^{(1)}+\wideparen{\beta}_a e_{2}^{(1)}+\wideparen{\gamma}_{a} L.
	\end{equation*}
	We observe that 
 \begin{equation*}
     \{ e_{a}^{(3)}:=\wideparen{\alpha}_{a} e_{1}^{(1)}+\wideparen{\beta}_a e_{2}^{(1)} \}_{a=1, 2} 
 \end{equation*}
   also serves as an orthonormal basis of $T\Sigma_1$. Consequently, a direct computation gives
	\begin{equation*}
	\begin{split}
	\trch_2|_p=\sum_{A=1}^{2}\bfg(\bfD_{e_a^{(2)}}L, e_a^{(2)}) 
	=&\sum_{A=1}^{2}\bfg(\bfD_{e_a^{(3)}+\wideparen{\gamma}_a L}L, e_a^{(3)}+\wideparen{\gamma}_a L) \\
	=& \sum_{A=1}^{2}\bfg(\bfD_{e_a^{(3)}}L, e_a^{(3)})=\trch_1|_p.
	\end{split}
	\end{equation*}
Here we use $\bfD$ to denote the spacetime covariant derivative of $\mathcal{M}$.
\end{proof}
\vspace{2mm}
We are ready to prove the main conclusion of this section.
\begin{proposition}\label{M_ub' cap I^+(M_ub)=empty}
If the null comparison principle holds for MOTS $M_{\ub}$ along $\Hb_{\ub}$, then there exists some $\delta>0$, so that for all $\ub'\in(\ub, \ub+\delta)$, we have
\begin{equation*}
M_{\ub'}\cap I^{\pm} (M_{\ub})=\emptyset.
\end{equation*}
\end{proposition}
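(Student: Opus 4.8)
The plan is to realize the geometric picture in Figure \ref{intro: proof M_ub' cap I^+(M_ub)=empty}: we shoot the outgoing null hypersurface $H$ from $M_{\ub'}$ backwards, track where it lands on $\Hb_{\ub}$, and use the null comparison principle together with the strong maximum principle of \Cref{strong comparison theorem} to conclude that $M_{\ub'}$ cannot be reached from $M_{\ub}$ by a causal curve. First I would fix $\ub$ and, invoking the smoothness of $\mathcal{AH}$ established in \Cref{Subsec: regularity of AH}, choose $\delta>0$ small enough that for all $\ub'\in(\ub,\ub+\delta)$ the MOTS $M_{\ub'}$ lies in the coordinate patch near $M_{\ub}$ on which the null comparison principle is assumed to hold; this also ensures $M_{\ub'}$ is close to $M_{\ub}$ in the $C^2$ sense so that the elliptic comparison argument applies. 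Let $H$ be the outgoing null hypersurface emanating (towards the past, i.e.\ towards smaller $\ub$) from $M_{\ub'}$, and let $\Sigma_{\ub}:=H\cap\Hb_{\ub}=\{r=R'(\theta_1,\theta_2)\}$ be its intersection with $\Hb_{\ub}$, which is a smooth spacelike $2$-surface for $\delta$ small.

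The key geometric input is the Raychaudhuri equation along the generators of $H$. Since $\trch|_{M_{\ub'}}=0$ (because $M_{\ub'}$ is a MOTS) and the generators of $H$ are the outgoing null rays, the Raychaudhuri equation $\frac{d}{ds}\trch = -\tfrac12(\trch)^2 - |\chih|^2 - \mathbf{R}(L,L) \le 0$ (using the null energy condition, which holds in vacuum) implies that $\trch$ is non-increasing as one moves to the future along $H$; equivalently, moving from $M_{\ub'}$ to the past down to $\Sigma_{\ub}$, the outgoing null expansion has $\trch|_{\Sigma_{\ub}}\ge 0$ at every point. By \Cref{null expasion indenpendence along null hypersurface}, this value of $\trch$ at a point of $\Sigma_{\ub}$ is intrinsic to $\Hb_{\ub}$ (it does not depend on the $2$-surface within $\Hb_{\ub}$ chosen to compute it), so $\Sigma_{\ub}$ genuinely has non-negative outgoing null expansion as a $2$-surface inside $\Hb_{\ub}$. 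Now the null comparison principle (\Cref{comparison principle}) applies with $\widetilde{\Sigma}=\Sigma_{\ub}$: since $\trch|_{\Sigma_{\ub}}\ge 0$, we get $R'(\theta_1,\theta_2)\ge R(\theta_1,\theta_2)$ pointwise, and by \Cref{strong comparison theorem} either $R'\equiv R$ or $R'>R$ everywhere. In either case $\Sigma_{\ub}$ lies on or outside $M_{\ub}$ within $\Hb_{\ub}$, i.e.\ $\Sigma_{\ub}$ is nowhere strictly to the past (in $r$) of $M_{\ub}$ along $\Hb_{\ub}$.

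From here I would argue that $M_{\ub'}\cap I^{\pm}(M_{\ub})=\emptyset$. For the future: suppose some point $q\in M_{\ub'}$ lay in $I^{+}(M_{\ub})$, so there is a future-directed timelike curve from a point $p\in M_{\ub}$ to $q$. Following this curve backwards from $q$ along $H$'s causal structure, or more directly: $q\in M_{\ub'}\subset H$, and since $H$ is generated by outgoing null geodesics from $\Sigma_{\ub}$, the causal past of $q$ meets $\Hb_{\ub}$ inside $J^{-}(\Sigma_{\ub})\cap\Hb_{\ub}$, which — because $\Hb_{\ub}$ is null and $\Sigma_{\ub}=\{r=R'\}$ with $\partial_r$ past-directed null — consists of the region $\{r\le R'\}$ on $\Hb_{\ub}$ (points with smaller $r$ are to the future along the incoming null cone). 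But $p\in M_{\ub}=\{r=R\}$ with $R\le R'$, so $p$ is not strictly in the timelike past of $q$ unless the timelike curve degenerates, contradicting timelikeness; the case $R'\equiv R$ is handled by noting that then $M_{\ub}=\Sigma_{\ub}\subset H$ so $M_{\ub}$ and $q$ lie on the same null hypersurface $H$ and cannot be timelike-related. The past case $I^{-}(M_{\ub})$ is symmetric, using that $M_{\ub'}$ lies to the future of $\Hb_{\ub}$ along the foliation (for $\ub'>\ub$) while $\Sigma_{\ub}$ lies on or outside $M_{\ub}$. \textbf{The main obstacle} I anticipate is making the causal-structure bookkeeping in this last paragraph fully rigorous: one must carefully verify that on the null hypersurface $\Hb_{\ub}$ the relation "$r$ decreasing" coincides with "moving to the causal future," that $H$ stays inside the region where the coordinates and estimates are valid (no focusing/caustics before reaching $\Hb_{\ub}$, which is why $\delta$ must be small), and that $\Sigma_{\ub}$ is genuinely spacelike — the cleanest route is probably to invoke \Cref{lemma spacelike or null} once this proposition is established, but proving the proposition itself requires the honest causal argument above rather than appealing to that lemma circularly.
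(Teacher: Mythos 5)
Your construction of $H$ from $M_{\ub'}$, the Raychaudhuri argument combined with \Cref{null expasion indenpendence along null hypersurface} to obtain $\trch\ge 0$ on the intersection surface $\Sigma_{\ub}=\{r=R'\}\subset\Hb_{\ub}$, and the application of the null comparison principle to conclude $R'\ge R$ are exactly the paper's route. The genuine gap is in the closing causal step, which is the one piece the paper isolates as a separate inclusion. With the paper's convention that $\partial_r$ is \emph{past}-directed along $\Hb_{\ub}$ (so moving to the future along the incoming generators decreases $r$), the causal past of $\Sigma_{\ub}$ within $\Hb_{\ub}$ is $\{r\ge R'\}$, not $\{r\le R'\}$ as you assert: a point with $r>R'$ reaches $\Sigma_{\ub}$ by flowing forward along its generator, whereas a point with $r<R'$ lies to the future of $\Sigma_{\ub}$ and cannot be in its causal past. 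Worse, even granting your identification, the deduction you draw from it is a non sequitur: you conclude that $p\in M_{\ub}$, having $r=R\le R'$, ``is not strictly in the timelike past of $q$,'' but membership of $p$ in the very region you claim contains $J^-(q)\cap\Hb_{\ub}$ would be perfectly consistent with $p\in I^-(q)$, so nothing is excluded.

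What is actually needed is the opposite-signed statement, which is the inclusion the paper records after constructing $H$: the region $\{r\le R'\}\cap\Hb_{\ub}$ is disjoint from $I^-(M_{\ub'})$, because the timelike past of $M_{\ub'}$ meets $\Hb_{\ub}$ only strictly beyond the null hypersurface $H$, i.e.\ in $\{r>R'\}$. Combined with $R\le R'$, which gives $M_{\ub}\subset\{r\le R'\}\cap\Hb_{\ub}$, this immediately yields $M_{\ub}\cap I^-(M_{\ub'})=\emptyset$, equivalently $M_{\ub'}\cap I^+(M_{\ub})=\emptyset$; note that only the weak inequality from \Cref{comparison principle} is needed here, the strong dichotomy of \Cref{strong comparison theorem} being reserved for \Cref{AH piecewise spacelike or null}. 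For the other half, no comparison argument is required at all: since $\ub$ is an optical function whose gradient is the future-directed null generator, $\ub$ is strictly increasing along future-directed timelike curves, so $I^-(M_{\ub})$ lies in the region where the optical function is $<\ub$ and hence cannot meet $\Hb_{\ub'}\supset M_{\ub'}$; your appeal to ``symmetry'' and to $\Sigma_{\ub}$ lying outside $M_{\ub}$ is not what does the work there. Your handling of the degenerate case $R'\equiv R$ via achronality of $H$ is fine but becomes unnecessary once the inclusion is stated for $I^-$ rather than $J^-$.
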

\begin{proof}
With $\delta>0$ we select $\ub'\in (\ub, \ub+\delta)$. To characterise the boundary of the causal past of the MOTS $M_{\ub'}$, we construct an outgoing null hypersurface $H$ originating from $M_{\ub'}$. For any $p\in M_{\ub'}$, we designate  $L_p$ to be the future-directed outgoing null vector that is normal to $M_{\ub'}$. Along $L_p$ there exists an unique geodesic $l_p$ emanating from $p$. We then extend $L_p$ to be vector $L$ along $l_p$ satisfying $D_L L=0$. The collection of all such geodesics $\{l_p\}$ constitutes a null hypersurface $H=\cup_{p\in M_{\ub'}} l_p$ that contains $M_{\ub'}$ and $L$ serves as the null vector normal to $H$.  For future use, we denote the intersection of $H$ and $\Hb_{\ub}$ to be $\Sigma_{\ub'}$ and assume that with coordinates on $\Hb_{\ub}$ we can represent it as $\Sigma_{\ub'}=\{r=\tR(\ub', \theta_1, \theta_2) \} $. Notice that from the construction of $H$ and $\Sigma_{\ub'}$,  we have that $\Sigma_{\ub'}$ is a spacelike 2-surface and it holds
\begin{equation}\label{proof M_ub' cap I^+(M_ub)=empty inclusion}
	\Big( \{r\le \tR(\ub', \theta_1, \theta_2) \}\cap \Hb_{\ub} \Big)\cap I^-(M_{\ub'})=\emptyset.
	\end{equation}
\begin{figure}[h]
    \centering
 \tikzset{every picture/.style={line width=0.75pt}} 

\begin{tikzpicture}[x=0.75pt,y=0.75pt,yscale=-1,xscale=1]

\draw   (230.56,89.18) .. controls (230.56,78.68) and (264.01,70.16) .. (305.28,70.16) .. controls (346.55,70.16) and (380,78.68) .. (380,89.18) .. controls (380,99.68) and (346.55,108.2) .. (305.28,108.2) .. controls (264.01,108.2) and (230.56,99.68) .. (230.56,89.18) -- cycle ;
 
\draw    (220,59.36) -- (229.78,90.9) -- (253.6,167.76) ;

\draw    (391.6,56.16) -- (380.25,91.71) -- (355.6,168.96) ;

\draw  [dash pattern={on 0.84pt off 2.51pt}] (269.2,139.16) .. controls (269.2,134.41) and (285.32,130.56) .. (305.2,130.56) .. controls (325.08,130.56) and (341.2,134.41) .. (341.2,139.16) .. controls (341.2,143.91) and (325.08,147.76) .. (305.2,147.76) .. controls (285.32,147.76) and (269.2,143.91) .. (269.2,139.16) -- cycle ;

\draw    (269.2,139.16) -- (253.6,167.76) ;

\draw    (341.2,139.16) -- (355.6,168.96) ;

\draw    (253.6,167.76) .. controls (282,180.4) and (315.6,190.56) .. (355.6,168.96) ;

\draw  [dash pattern={on 0.84pt off 2.51pt}]  (253.6,167.76) .. controls (300,162.16) and (325.6,165.36) .. (353.6,167.76) ;

\draw    (246.4,44.56) -- (200.44,169.8) ;

\draw    (368.96,45.8) -- (405.6,176.56) ;

\draw (342.16,91) node [anchor=north west][inner sep=0.75pt]  [font=\tiny]  {$M_{\underline{u} '}$};

\draw (295.36,136.6) node [anchor=north west][inner sep=0.75pt]  [font=\tiny]  {$M_{\underline{u}}$};

\draw (298.96,183.8) node [anchor=north west][inner sep=0.75pt]  [font=\tiny]  {$\Sigma _{\underline{u} '}$};

\draw (336.16,154.2) node [anchor=north west][inner sep=0.75pt]  [font=\tiny]  {$\underline{H}_{\underline{u}}$};

\draw (394.56,122.2) node [anchor=north west][inner sep=0.75pt]  [font=\tiny]  {$\underline{H}_{\underline{u} '}$};

\draw (357.63,125.07) node [anchor=north west][inner sep=0.75pt]  [font=\tiny]  {$H$};

\end{tikzpicture}
    \caption{Relevant region for applying the null comparison principle.}
    \label{proof M_ub' cap I^+(M_ub)=empty}
\end{figure}
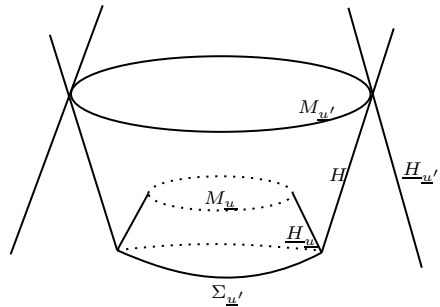

	We proceed to consider the flow map $\mathcal{F}_\tau$ generated by $L$ along $H$ and we require $L\tau=1$ and $\tau=0$ on $M_{\ub'}$. Define $S_\tau=\mathcal{F}_\tau (M_{\ub'})$ and let $\chi(\tau)$ be the second fundamental form of $S_\tau$ with respect to $L$. By the Raychaudhuri equation, we deduce 
	\begin{equation*}
	\f{d}{d \tau} \trch=-\f{1}{2}(\trch)^2-|\chih|_{\slashed{g}}^2\leq 0.
	\end{equation*}
	Here $\slashed{g}$ represents the induced metric on $S_\tau$ and $\chih=\chi-\f{1}{2}\trch \slashed{g}$ is the traceless part of $\chih$.
	
	Since $\trch=0$ along $M_{\ub}$, we have
	\begin{equation}\label{4.4.1}
	\trch|_{S_\tau}\leq  0 \qquad \textrm{if} \  \tau> 0, \quad \text{and} \quad \trch|_{S_\tau}\geq  0 \qquad \textrm{if} \  \tau< 0.
	\end{equation}
	Applying \Cref{null expasion indenpendence along null hypersurface}, the null expansion of $\Sigma_{\ub'}$ with respect to $L$ then obeys
	\begin{equation*}
	\trch|_{\Sigma_{\ub'}}\ge 0.
	\end{equation*}
	Recall that the MOTS $M_{\ub}$ possesses the coordinate $\{r=R(\theta_1, \theta_2) \}$ along $\Hb_{\ub}$. Employing the null comparison principle along $\Hb_{\ub}$, we hence obtain that $\tR(\ub', \theta_1, \theta_2)\ge R(\theta_1, \theta_2)$ for all $(\theta_1, \theta_2)\in \mathbb{S}^2$. In other words, the following inclusion relation for $M_{\ub}$ is obtained
	\begin{equation*}
	M_{\ub} \subset \{r\le \tR(\ub', \theta_1, \theta_2) \}\cap \Hb_{\ub}.
	\end{equation*}
	Together with \eqref{proof M_ub' cap I^+(M_ub)=empty inclusion} this implies $M_{\ub}\cap I^{-} (M_{\ub'})=\emptyset$, or equivalently $M_{\ub'}\cap I^{+} (M_{\ub})=\emptyset$.
 
 The remaining task is to show 
 \begin{equation*}
	M_{\ub'}\cap I^{-} (M_{\ub})=\emptyset.
	\end{equation*}
 It follows from the fact that 
 \begin{equation*}
     M_{\ub'}\subset \Hb_{\ub'} \quad \text{and} \quad  \Hb_{\ub'} \cap I^{-} (M_{\ub})=\emptyset.
 \end{equation*}
 We therefore finish the proof of this proposition.
\end{proof}
Combining \Cref{lemma spacelike or null} and \Cref{M_ub' cap I^+(M_ub)=empty}, together with \Cref{strong comparison theorem} we then prove the desired conclusion:
\begin{theorem}\label{AH piecewise spacelike or null}
	For each MOTS $M_{\ub}$ along the apparent horizon $\mathcal{AH}=\cup_{\ub \in J} M_{\ub}$, assuming that the null comparison principle holds for MOTS $M_{\ub}$ along $\Hb_{\ub}$, then $\mathcal{AH}$ is either spacelike everywhere or (outgoing) null everywhere, when restricted to $M_{\ub}$.
\end{theorem}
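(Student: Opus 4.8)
The plan is to combine the two technical inputs that have just been assembled: \Cref{M_ub' cap I^+(M_ub)=empty}, which says that for $\ub'$ slightly larger than $\ub$ the MOTS $M_{\ub'}$ meets neither $I^+(M_{\ub})$ nor $I^-(M_{\ub})$, and \Cref{lemma spacelike or null}, which says that a foliated hypersurface is spacelike-or-null at a leaf $\Sigma_s$ provided nearby leaves $\Sigma_{s'}$ ($s'\in(s,s+\delta)$) avoid $I^{\pm}(\Sigma_s)$. Setting $\Sigma_s=M_{\ub}$ and $\Sigma_{s'}=M_{\ub'}$ in $\mathcal{AH}=\cup_{\ub\in J}M_{\ub}$, the hypotheses of \Cref{lemma spacelike or null} are exactly the conclusion of \Cref{M_ub' cap I^+(M_ub)=empty}. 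Hence the non-zero tangent vector of $\mathcal{AH}$ at any $p\in M_{\ub}$ that is normal to $M_{\ub}$ is either spacelike or (outgoing) null. This already gives a pointwise dichotomy; what remains is to upgrade it to the \emph{uniform} statement that $\mathcal{AH}$ is spacelike \emph{everywhere} on $M_{\ub}$ or null \emph{everywhere} on $M_{\ub}$, ruling out a mixture.

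For the uniform dichotomy I would argue as follows. The normal direction to $M_{\ub}$ inside $\mathcal{AH}$ can be represented, after the coordinate choice of \Cref{Section: NCP}, by the vector $\partial_{\ub}'$ tangent to $\mathcal{AH}$, and whether it is spacelike or null at $p$ is detected by the sign of $g'_{\ub\,\ub}(p)$ (with the convention that null corresponds to $g'_{\ub\,\ub}=0$ and spacelike to $g'_{\ub\,\ub}>0$; the timelike case $g'_{\ub\,\ub}<0$ is excluded by the pointwise step above). Equivalently, tracking the displacement of the MOTS as $\ub'\downarrow\ub$: consider the auxiliary $2$-surface $\Sigma_{\ub'}=H\cap\Hb_{\ub}=\{r=\tR(\ub',\cdot)\}$ constructed in the proof of \Cref{M_ub' cap I^+(M_ub)=empty}, where $H$ is the outgoing null cone from $M_{\ub'}$. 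By the Raychaudhuri argument there, $\trch|_{\Sigma_{\ub'}}\ge 0$, so by \Cref{strong comparison theorem} (the strong-maximum-principle strengthening of the null comparison principle) we have the strict dichotomy $\tR(\ub',\cdot)>R(\cdot)$ everywhere on $\mathbb{S}^2$ or $\tR(\ub',\cdot)\equiv R(\cdot)$. In the first case $M_{\ub}$ lies strictly to the causal past of $H$ at every angle, which forces $\partial_{\ub}'$ spacelike everywhere on $M_{\ub}$; in the second case $M_{\ub}$ lies exactly on the null cone $H$, forcing $\partial_{\ub}'$ outgoing null everywhere on $M_{\ub}$. Passing to the limit $\ub'\downarrow\ub$ preserves the alternative because the two cases are disjoint and closed/open in the appropriate sense, so exactly one holds in a right-neighbourhood of $\ub$, and the corresponding conclusion holds on $M_{\ub}$ itself.

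I would write this up in two short steps: first invoke \Cref{lemma spacelike or null} with $\Sigma_s=M_{\ub}$, $\Sigma_{s'}=M_{\ub'}$ and \Cref{M_ub' cap I^+(M_ub)=empty} to exclude timelike normals; then run the $\tR$ versus $R$ dichotomy from \Cref{strong comparison theorem} to promote ``spacelike or null at each point'' to ``spacelike everywhere or null everywhere'', observing that the global strict inequality $\tR>R$ propagates spacelikeness to the whole of $M_{\ub}$ and the global equality $\tR\equiv R$ propagates nullity to the whole of $M_{\ub}$. The main obstacle, and the only nontrivial point, is this last upgrade: one must be careful that the pointwise classification cannot genuinely jump along $M_{\ub}$, and the right tool is precisely the strong maximum principle encoded in \Cref{strong comparison theorem}, which says the comparison $\tR\ge R$ between a subsolution surface and the MOTS can only be an identity or a strict inequality, never a mixture; the Raychaudhuri sign $\trch|_{\Sigma_{\ub'}}\ge0$ supplies the one-sided comparison needed to apply it. Everything else is a direct citation of the lemmas already proved.
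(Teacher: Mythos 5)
Your route is the paper's: \Cref{M_ub' cap I^+(M_ub)=empty} together with \Cref{lemma spacelike or null} rules out timelike normals, and the everywhere-dichotomy is extracted from the strict alternative of \Cref{strong comparison theorem} applied to $\Sigma_{\ub'}=H\cap\Hb_{\ub}$, whose sign condition $\trch|_{\Sigma_{\ub'}}\ge 0$ comes from Raychaudhuri and \Cref{null expasion indenpendence along null hypersurface}. The one place where your write-up has a genuine gap is the passage from the per-$\ub'$ alternative to a conclusion on $M_{\ub}$. The claim that ``the two cases are disjoint and closed/open in the appropriate sense, so exactly one holds in a right-neighbourhood of $\ub$'' is not justified and is not the right dichotomy: the set of parameters $\ub'$ with $\tR(\ub',\cdot)\equiv R$ could, a priori, be only a sequence accumulating at $\ub$, with strict inequality at all other nearby parameters, so neither alternative need hold throughout a right-neighbourhood. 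The exhaustive split (the one the paper uses) is: (i) $\tR(\ub',\cdot)>R$ for \emph{all} $\ub'$ in some $(\ub,\ub+\delta')$, in which case $M_{\ub}\subset\{r<\tR(\ub',\cdot)\}\cap\Hb_{\ub}$ upgrades the avoidance to $M_{\ub'}\cap J^{\pm}(M_{\ub})=\emptyset$ and the $J^{\pm}$-version of \Cref{lemma spacelike or null} gives spacelikeness everywhere on $M_{\ub}$; or (ii) $\tR(\ub'_n,\cdot)\equiv R$ along some sequence $\ub'_n\downarrow\ub$, in which case every $M_{\ub'_n}$ lies on the outgoing null cone $H$ emanating from $M_{\ub}$, and the convergence $M_{\ub'_n}\to M_{\ub}$ forces $T_p\mathcal{AH}\subset T_pH$ at each $p\in M_{\ub}$, i.e.\ nullity everywhere.

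Relatedly, your pointwise inferences at a \emph{single} $\ub'$ are too quick in both directions: strictness at one $\ub'$ does not by itself make the normal of $\mathcal{AH}$ along $M_{\ub}$ spacelike (the curve argument behind \Cref{lemma spacelike or null} needs the causal avoidance for all leaves in an interval), and equality at one $\ub'$ says nothing directly about $T_p\mathcal{AH}$ at $M_{\ub}$ — the null conclusion genuinely requires the accumulating sequence and the limiting tangency $T_p\mathcal{AH}\subset T_pH$. Your phrase ``passing to the limit preserves the alternative'' gestures at this but does not supply it; once you replace the ``exactly one in a right-neighbourhood'' claim by the split (i)/(ii) above and add the tangency argument in case (ii), the proof coincides with the paper's.
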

\begin{proof}
 Recall that in the proof of \Cref{M_ub' cap I^+(M_ub)=empty}, by virtue of Raychaudhuri equation and \Cref{null expasion indenpendence along null hypersurface}, we derive that
\begin{equation*}
	\trch\big|_{\Sigma_{\ub'}} \ge 0.
\end{equation*}
Applying \Cref{strong comparison theorem}, for any $\ub'\in (\ub, \ub+\delta)$, we obtain that either $\tR(\ub', \theta_1, \theta_2)\equiv R(\theta_1, \theta_2)$ or $\tR(\ub', \theta_1, \theta_2)> R(\theta_1, \theta_2)$ for all $(\theta_1, \theta_2) \in \mathbb{S}^2$. In the former case, it follows that $\Sigma_{\ub'}=M_{\ub}$ and thus $M_{\ub'}$ lies within the outgoing null hypersurface generated by $M_{\ub}$. We proceed to consider the following two scenarios:
\begin{enumerate}
\item  \textbf{Case 1}: \textit{There exist $\delta'>0$, so that for any  $\ub'\in (\ub, \ub+\delta')$, we have pointwise inequality $\tR(\ub', \theta_1, \theta_2)> R(\theta_1, \theta_2)$.} For this scenario, we have a stronger inclusion relation, i.e., 
\begin{equation*}
		M_{\ub} \subset \{r< \tR(\ub', \theta_1, \theta_2) \}\cap \Hb_{\ub}.
\end{equation*}
Together with
\begin{equation*}
		\left( \{r< \tR(\ub', \theta_1, \theta_2) \}\cap \Hb_{\ub} \right) \cap J^-(M_{\ub'})=\emptyset,
\end{equation*}
this yields
\begin{equation*}
	M_{\ub}\cap J^{-} (M_{\ub'})=\emptyset, \quad \text{that is,} \quad M_{\ub'}\cap J^{+} (M_{\ub})=\emptyset.
\end{equation*}
Also noting that
 \begin{equation*}
     M_{\ub'}\subset \Hb_{\ub'} \quad \text{and} \quad  \Hb_{\ub'} \cap J^{-} (M_{\ub})=\emptyset,
 \end{equation*}
 we hence deduce
\begin{equation*}
	M_{\ub'}\cap J^{-} (M_{\ub})=\emptyset.
\end{equation*}
Utilizing \Cref{lemma spacelike or null}, we then conclude that $\mathcal{AH}$ is spacelike when restricted to $M_{\ub}$.
\vspace{2mm}
    \item \textbf{Case 2}: \textit{There exists a sequence  $\{\ub'_n\}\subset (\ub, \ub+\delta)$, such that $\lim\limits_{n\to \infty} \ub'_n=\ub$ and $\tR(\ub'_n, \theta_1, \theta_2)\equiv R(\theta_1, \theta_2)$ for all $n\in \mathbb{N}$.} Consequently, this implies that each MOTS $M_{\ub'_n}$ lies in the outgoing null hypersurface originating from $M_{\ub}$. We still denote it by $H$. Since $M_{\ub'_n}$ converges to $M_{\ub}$ as $n\to \infty$, for any $p\in M_{\ub}$ it must hold
    \begin{equation*}
        T_p \mathcal{AH} \subset T_p H.
    \end{equation*}
    Therefore, we conclude that $\mathcal{AH}$ is tangent to the outgoing null hypersurface $H$ at any $p\in M_{\ub}$, that is, $\mathcal{AH}$ is (outgoing) null everywhere when restricted to $M_{\ub}$.
\end{enumerate}
\end{proof}

\subsection{Applications of the Null Comparison Principle}\label{Subsec: application NCP}
In three typical scenarios about gravitational collapse, we verify that the null comparison principle holds.
\subsubsection{In the Perturbed Schwarzschild Spacetime}\label{Subsubsec: application NCP Sch}
\begin{proposition}\label{comparison theorem}
	For every $\ub\geq 1$, along $\Hb_{\ub}$ the null comparison principle holds for the MOTS $M_{\ub}$ constructed in \Cref{Main thm 1: existence of AH in K-S} based on Klainerman-Szeftel's setting in \cite{K-S}. Hence the apparent horizon $\mathcal{AH}$ constructed in \Cref{Main thm 1: existence of AH in K-S} is either piecewise spacelike or piecewise null.
\end{proposition}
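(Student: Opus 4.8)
The plan is to deduce the null comparison principle from the maximum principle on $\mathbb{S}^2$, exploiting the fact that the linearization $\partial_R L$ of the MOTS operator has a strictly negative zeroth order coefficient in the perturbative Schwarzschild regime. Fix $\ub\ge 1$ and let $M_{\ub}=\{r=R(\theta,\varphi)\}$ be the MOTS of \Cref{Main thm 1: existence of AH in K-S}, which by \Cref{C^{2,q} estimate} obeys $\|R-2m_\infty\|_{C^{2,q}(M_{\ub})}\lesssim \varepsilon_0\ub^{-1-\delta_{dec}}$. Let $\widetilde\Sigma=\{r=\tilde R(\theta,\varphi)\}$ be a smooth spacelike $2$-surface in $\Hb_{\ub}$ near $M_{\ub}$, where ``near'' is read in the $C^2$ topology, so that $\tilde R$ is $C^2$-close to $2m_\infty$. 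Since the outgoing null expansion of $\widetilde\Sigma$ equals $\tr\tilde{\chi}=2f\,L(\tilde R)$ with $f<0$ in ${}^{(int)}\mathcal{M}$, and since $L(R)=0$, the hypothesis $\tr\tilde{\chi}\ge 0$ on $\widetilde\Sigma$ translates into $L(\tilde R)\le 0=L(R)$, while $\tr\tilde{\chi}\le 0$ translates into $L(\tilde R)\ge 0=L(R)$.

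Next I would set $w:=\tilde R-R$ and, along the segment $R_t:=(1-t)R+t\tilde R$ (each $R_t$ lying in a small $C^2$-neighbourhood of $2m_\infty$), write
\[
\mathcal{L}[w]\ :=\ L(\tilde R)-L(R)\ =\ \int_0^1 \partial_R L(R_t,\ub)[w]\,dt .
\]
By \Cref{partial R L} and the computation carried out in the proof of \Cref{FR invert} --- using the Klainerman--Szeftel bounds \eqref{intro: decay est} restricted to $m_0\le r\le 3m_0$ together with the smallness of $|\nab R_t|$ and $|\nab^2 R_t|$ --- each $\partial_R L(R_t,\ub)[W]$ takes the form $\Delta'_{S_{\ub,R_t}}W+B^{(t)}_a\nab^a W+C^{(t)}W$ with $C^{(t)}=\tfrac{1}{R_t^2}-\tfrac{4m_\infty}{R_t^3}+d^{(t)}$, where $|B^{(t)}_a|+|d^{(t)}|\lesssim \varepsilon_0\ub^{-1-\delta_{dec}}$. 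Since $R_t=2m_\infty+O(\varepsilon_0\ub^{-1-\delta_{dec}})$, one has $C^{(t)}=R_t^{-3}\big(-2m_\infty+O(\varepsilon_0\ub^{-1-\delta_{dec}})\big)<0$ for $\varepsilon_0$ small. Integrating in $t$, $\mathcal{L}=a^{ij}D_{ij}+b^iD_i+c$ is a uniformly elliptic second order operator on $\mathbb{S}^2$ with smooth coefficients and $c=\int_0^1 C^{(t)}\,dt<0$.

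The conclusion is then the maximum principle on the closed manifold $\mathbb{S}^2$. If $\tr\tilde{\chi}\ge 0$ on $\widetilde\Sigma$ then $\mathcal{L}[w]\le 0$: were $w$ to attain a negative minimum at some $p$, then $Dw(p)=0$, $D^2w(p)\ge 0$, hence $a^{ij}D_{ij}w(p)\ge 0$ and $\mathcal{L}[w](p)\ge c(p)w(p)>0$, a contradiction; so $w\ge 0$, i.e.\ $\tilde R\ge R$. Symmetrically, if $\tr\tilde{\chi}\le 0$ on $\widetilde\Sigma$ then $\mathcal{L}[w]\ge 0$, and inspection of a positive maximum of $w$ forces $w\le 0$, i.e.\ $\tilde R\le R$ (see \cite{G-T}). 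This is precisely the null comparison principle of \Cref{comparison principle} with respect to $M_{\ub}$, valid for every $\ub\ge 1$. Combining it with \Cref{AH piecewise spacelike or null} shows that, restricted to each $M_{\ub}$, the apparent horizon $\mathcal{AH}$ of \Cref{Main thm 1: existence of AH in K-S} is spacelike everywhere or null everywhere, i.e.\ $\mathcal{AH}$ is piecewise spacelike or piecewise null.

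The only genuine obstacle is the uniform control of the coefficients of $\mathcal{L}$ along the entire segment $R_t$ rather than merely at $R$: this is what forces one to interpret ``near $M_{\ub}$'' in a $C^2$ (or $C^{2,q}$) sense, so that every $R_t$ remains in the regime where \Cref{partial R L} applies with the Klainerman--Szeftel estimates and where the zeroth order coefficient stays $\approx -\tfrac{1}{4m_\infty^2}<0$. Once this is secured, the remainder is an elementary application of the maximum principle, with no symmetry assumption entering anywhere.
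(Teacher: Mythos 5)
Your proposal is correct and follows essentially the same route as the paper: both reduce the comparison to a linear elliptic inequality for $\tilde R-R$ whose zeroth-order coefficient is approximately $-\tfrac{1}{4m_\infty^2}<0$ thanks to the Klainerman--Szeftel decay estimates and the $C^2$-closeness of $\tilde R$ and $R$ to $2m_\infty$, and then conclude with the maximum principle on $\mathbb{S}^2$. The only cosmetic difference is that you package the linearization as $\int_0^1\partial_R L(R_t,\ub)[\tilde R-R]\,dt$ (exactly the device the paper uses in \Cref{strong comparison theorem}), whereas the paper's proof of \Cref{comparison theorem} expands the difference of the two quasilinear equations directly.
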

\begin{proof}
	Along $\Hb_{\ub}$,  let $\widetilde{\Sigma}=\{r=\tilde{R}(\theta, \varphi) \}$ be a spacelike $2$-surface with property $\|\tR-R\|_{C^2(\mathbb{S}^2)}\ll 1$. We denote $\tr \tilde{\chi}$ as the outgoing null expansion with respect to $\widetilde{\Sigma}$ and assume first that $\tr \tilde{\chi} \leq  0$. According to the deformation formula \eqref{deformation formula}, we have
	\begin{equation*}
	\tr \tilde{\chi}=\trch+2f\Delta_{S_{\tilde{R}}} \tilde{R}+2(\nab f+f(\eta+\zeta))\cdot \nab \tilde{R}-4f^2\chibh_{bc}\nab^b \tilde{R}\nab^c \tilde{R}+(2fe_3(f)-f^2\trchb-4\omegab f^2)|\nab \tilde{R}|^2.
	\end{equation*}
	Since $f=[e_3(r)]^{-1}<0$, this implies
\begin{equation*}
\begin{split}
0\leq  &\Delta_{S_{\tR}} \tR(\theta, \varphi)+\big(f^{-1}\nab f+(\eta+\zeta)\big)(\tR(\theta, \varphi),\theta, \varphi)\cdot \nab \tR(\theta, \varphi)\\&+(e_3(f)-\f{1}{2}f\trchb-2\omegab f)(\tR(\theta, \varphi), \theta, \varphi)|\nab \tR(\theta, \varphi)|^2\\
&-2(f\chibh_{bc})(\tR(\theta, \varphi), \theta, \varphi)\nab^b \tR(\theta, \varphi) \nab^c \tR(\theta, \varphi)+\f{1}{2}(f^{-1}\trch)(\tR(\theta, \varphi), \theta, \varphi).
\end{split}
\end{equation*}
For MOTS $M_{\ub}=\{r=R(\theta, \varphi) \}$ along $\Hb_{\ub}$, recall that $R(\theta, \varphi)$ satisfies
\begin{equation*}
\begin{split}
0=&\Delta_{S_R} R(\theta, \varphi)+\big(f^{-1}\nab f+(\eta+\zeta)\big)(R(\theta, \varphi),\theta, \varphi)\cdot \nab R(\theta, \varphi)\\&+(e_3(f)-\f{1}{2}f\trchb-2\omegab f)(R(\theta, \varphi), \theta, \varphi)|\nab R(\theta, \varphi)|^2\\&-2(f\chibh_{bc})(R(\theta, \varphi), \theta, \varphi)\nab^b R(\theta, \varphi) \nab^c R(\theta, \varphi)
+\f{1}{2}(f^{-1}\trch)(R(\theta, \varphi), \theta, \varphi).
\end{split}
\end{equation*}
Now we consider the elliptic equation for $\tR-R$:
\begin{equation}\label{Laplace tR-R}
\begin{split}
\Delta_{S_{\tR}}(\tR-R)
=-(\Delta_{S_{\tR}}-\Delta_{S_R})R+\l\Delta_{S_{\tR}} \tR- \Delta_{S_R} R\r.
\end{split}
\end{equation}
By \eqref{3.13}, we can write the Laplace-Beltrami  operator $\D_M$ in local coordinates as 
\begin{equation*}
\begin{split}
\D_M f=&g^{\theta_1 \theta_1}\f{\partial^2 f}{\partial \theta_1 \partial \theta_1}+g^{\theta_2 \theta_2}\f{\partial^2 f}{\partial \theta_2 \partial \theta_2}+2 g^{\theta_1 \theta_2}\f{\partial^2 f}{\partial \theta_1 \partial \theta_2}\\
&+\f{\partial}{\partial \theta_1} (g^{\theta_1 \theta_1})\f{\partial f}{\partial \theta_1}+\f{\partial}{\partial \theta_1} (g^{\theta_1 \theta_2})\f{\partial f}{\partial \theta_2}\\
&+\f{\partial}{\partial \theta_2} (g^{\theta_2 \theta_1})\f{\partial f}{\partial \theta_1}+\f{\partial}{\partial \theta_2} (g^{\theta_2 \theta_2})\f{\partial f}{\partial \theta_2}\\
&+\f12 g^{\theta_k \theta_j}\f{\partial g_{\theta_j \theta_k}}{\partial \theta_i} g^{\theta_i \theta_l}\f{\partial f}{\partial \theta_l},
\end{split}
\end{equation*}
where $i,j,k,l=1,2$ and $g^{\theta_k \theta_j}$ depends on $(\ub, r, \theta, \varphi)$.  Combining a priori estimates 
\begin{equation}\label{a priori}
\|R-2m_{\infty}\|_{C^2(\mathbb{S}^2)} \ll 1 
\end{equation}
and estimates of metric coefficients \eqref{estimates for metric}, we deduce that
\begin{equation*}
-(\Delta_{S_{\tR}}-\Delta_{S_R})R=d_1(\tR-R) \quad \text{with} \quad |d_1|\ll 1.
\end{equation*} 

For $\Delta_{S_{\tR}} \tR- \Delta_{S_R} R$ in \eqref{Laplace tR-R},  utilizing a priori estimates \eqref{a priori} and \eqref{estimates for metric} again, together with hyperbolic estimates for Ricci coefficients in \cite{K-S} and the condition $\|\tR-R\|_{C^2(\mathbb{S}^2)}\ll 1$, we arrive at
\begin{equation*}
\begin{split}
&\Delta_{S_{\tR}} \tR- \Delta_{S_R} R\\ \geq &\f{1}{\tR}|\nab \tR|^2+\f{1}{\tR}-\f{2m_{\infty}}{\tR^2}+d_{1a}(\tR, \theta, \varphi)\nab^a \tR+d_{2bc}(\tR, \theta, \varphi)\nab^b \tR \nab^c \tR+d_3(\tR, \theta, \varphi)\\
&-\Big(\f{1}{R}|\nab R|^2+\f{1}{R}-\f{2m_{\infty}}{R^2}+d_{1a}(R, \theta, \varphi)\nab^a R+d_{2bc}(R, \theta, \varphi)\nab^b R \nab^c \tR+d_3(R, \theta, \varphi)\Big) \\
=&(-\f{1}{\tR R}+\f{2m_{\infty}(\tR+R)}{\tR^2 R^2}+d_1')(\tR-R)+d^i_2\f{\partial}{\partial \theta_i} (\tR-R)\\
=&(\f{1}{4m_\infty^2}+d_1')(\tR-R)+d^i_2\f{\partial}{\partial \theta_i} (\tR-R).
\end{split}
\end{equation*}
Here we have $|d_{1a}|, |d_{2bc}|, |d_3|, |d_1'|, |d_2^i|\ll 1$. 

Back to \eqref{Laplace tR-R}, we therefore deduce
\begin{equation*}
\Delta_{S_{\tR}}(\tR-R)-d^i_2\f{\partial}{\partial \theta_i} (\tR-R)-(\f{1}{4m_\infty^2}+d_1+d_1')(\tR-R)\geq  0.
\end{equation*}
Applying the maximum principle, we then conclude
\begin{equation*}
\tR(\theta, \varphi)\leq  R(\theta, \varphi) \qquad \textrm{for any} \quad (\theta, \varphi)\in \mathbb{S}^2.
\end{equation*}
In the same manner, we can draw the desired conclusion for the case $ \tr \tilde{\chi} \geq  0$ as well, by changing the direction of each inequality in the proceeding proof.
\end{proof}
\begin{remark}\label{Rmk non-negative low bound}
    With the null frame $\{e'_1, e'_2, e_3', e_4'\}$ associated with the MOTS $M_{\ub}$, we can write the tangent vector of $\mathcal{AH}=\{r=R(\ub, \theta, \varphi)  \}$ normal to $M_{\ub}$ as
    \begin{equation*}
        X=e_3'(r-R)e_4'-e_4'(r-R) e_3'.
    \end{equation*}
    Hence, the local achronality of $\mathcal{AH}$ depends on the non-negativity for the sign of
    \begin{equation*}
        \bfg(X, X)=4e_3'(r-R)e_4'(r-R).
    \end{equation*}
    Recall \eqref{new frames}
         \begin{equation*}
e_3'=e_3, \quad e_4'=e_4+2fe^a(R) e_a+f^2|\nab R|^2 e_3.
\end{equation*}
Together with the fact that $e_3 (R)=0$ and $e_3(r)=f^{-1}<0$, we have
\begin{equation*}
    e_3'(r-R)e_4'(r-R)=f^{-1}\big( e_4(r-R)-f|\nab R|^2 \big),
\end{equation*}
which is non-negative if and only if
\begin{equation}\label{achronal condition eqn}
    e_4(R-r)+f|\nab R|^2 \ge 0.
\end{equation}
Via the purely analytic approach, plugging the estimates for  $\nab R$, $\prub R$ from   \eqref{C^1 estimate} and \Cref{partial ub R est} as well as established estimates in \cite{K-S}, noting that
\begin{equation*}
    \partial_{\ub} =\f{1}{2}\vsgmb\Big[e_4-\f{\kb+A}{\kbb}e_3-\sqrt{\gamma}(b-\f{\kb+A}{\kbb}\ud{b})e_\theta\Big],
\end{equation*}
we can only obtain the lower bound
\begin{equation*}\label{weaker lower bound achronal}
    e_4(R-r)+f|\nab R|^2 \ge -\f{\varepsilon_0}{\ub^{1+\delta_{dec}}}.
\end{equation*}
This is not sufficient to imply \eqref{achronal condition eqn}. However, with the assistance of the null comparison principle, we are able to show the desired stronger lower-bound estimate, i.e.,
\begin{equation*}
    e_4(R-r)+f|\nab R|^2 \ge 0.
\end{equation*}
\end{remark}

\subsubsection{In the Anisotropic Gravitational Collapse Spacetime}\label{Subsec application anisotropic spacetime}
In \cite{An-Han} the first author and Han constructed the anisotropic apparent horizon arising in gravitational collapse. Specifically,  they consider a spacetime with double null foliation, whose metric takes the form of
\begin{equation*}
    \bfg=-2\O^2(du\otimes d\ub+d\ub \times du)+\slashed{g}_{AB}(d\theta^A-d^A d\ub)\times (d\theta^B-d^B d\ub).
\end{equation*}
Here $u$ and $\ub$ denote the incoming and outgoing optical functions. They define  the  level sets  of $u$ and $\ub$ by $H_{u}$ and $\Hb_{\ub}$ respectively, and let $S_{u, \ub}$ be the intersection of $H_u$ and $\Hb_{\ub}$. Taking $\{e_1, e_2 \}$ as a tangent frame on the $2$-sphere $S_{u, \ub}$, it can be verified readily that
$   \{ e_1, e_2, \ e_3=\O^{-1}\partial_u, \ e_4=\O^{-1}(\partial_{\ub}+d^A \partial_{\theta^A})  \}$ forms a null frame.

In the short pulse regime, they let parameters $a, \delta$ to be positive constants satisfying $1\ll a\le \delta^{-1}$. For any given $\ub \in (0, \delta]$, they find the MOTS $M_{\ub}=\{u=1-R(\ub, \theta_1, \theta_2)=1-\ub a e^{-\phi(\ub, \theta_1, \theta_2)} \}$ along the incoming null cone $\Hb_{\ub}$ via solving the following quasilinear elliptic equation
\begin{equation}
    \begin{split}
       0=S(\phi, \ub)= \D_{\gamma} \phi+1-\f{1}{2}fe^{\phi}+F(\phi, \ub).
    \end{split}
\end{equation}
In their setting, $\gamma=R^{-2} \slashed{g}$ with $\slashed{g}$ representing the induced metric on $S_{u, \ub}$, and $\gamma$ is uniformly elliptic. They also have
\begin{equation*}
    F(\phi, \ub)=2R^{-1} \O \chibh_{kl} \gamma^{ik} \gamma^{jl} \nab_i \phi \nab_j \phi+R\a_1 |\nab_{\gamma} \phi|^2+R \a_2
\end{equation*}
with
\begin{equation*}
    \begin{split}
        \a_1=-(\f{1}{2}\O \tr_g \chib+\f{1}{R}+4\O \omegab), \quad 
        \a_2=\f{1}{2}\O^{-1} \tr_g \chi-\f{1}{R}+\f{\ub a}{2R^2} f(\omega, \ub).
    \end{split}
\end{equation*}
In their anisotropic setting \cite{An-Han}, the function $f(\theta_1, \theta_2, \ub)$ is smooth on $\mathbb{S}^2\times (0, \delta]$ satisfying $0\le f\le 1$ and, for any $\ub \in(0, \delta]$, it holds
\begin{equation}\label{f condition 2}
    f(\cdot, \ub)\ge m \ \text{on} \ B_{p}(\epsilon)
\end{equation}
for some ball $B_{p}(\epsilon)$ in $\mathbb{S}^2$ centered at $p$ with radius $\epsilon\in (0, \pi/2)$ and some constant $m>0$.
\vspace{3mm}

Picking $r=1-u$, we then verify the null comparison principle for MOTS $M_{\ub}$ along $\Hb_{\ub}$ with $\ub \in (0, \delta]$.
\begin{proposition}
   For the apparent horizon $\mathcal{AH}$ constructed in \cite{An-Han}, for every $\ub \in (0, \delta]$,  along $\Hb_{\ub}$ the null comparison principle holds for the MOTS $M_{\ub}$. Hence $\mathcal{AH}$ is either piecewise spacelike or piecewise null.
\end{proposition}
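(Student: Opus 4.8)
The plan is to mirror the argument of \Cref{comparison theorem} for the perturbed Schwarzschild case, but now working with the exponential parametrization $R = \ub a e^{-\phi}$ and the quasilinear elliptic operator $S(\phi, \ub)$ from \cite{An-Han}, and the key structural fact to exploit is that the linearization of $S$ at the MOTS solution $\phi$ has a zeroth-order coefficient with a favorable sign. First I would fix $\ub \in (0,\delta]$ and let $\widetilde{\Sigma} = \{u = 1 - \ub a e^{-\tilde\phi(\theta_1,\theta_2)}\}$ be a smooth spacelike $2$-surface with $\|\tilde\phi - \phi\|_{C^2(\mathbb{S}^2)} \ll 1$. Using the deformation formula (the analogue of \eqref{deformation formula} in the double null setting, which is exactly what produces $S(\tilde\phi, \ub) = \frac{1}{2}\O^{-1}\tilde R \,\tr\tilde\chi$), the sign condition $\tr\tilde\chi \le 0$ translates, since $\O > 0$ and $\tilde R > 0$, into $S(\tilde\phi, \ub) \le 0$, while $S(\phi, \ub) = 0$ since $M_{\ub}$ is a MOTS.

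Next I would write the difference equation
\begin{equation*}
  0 \ge S(\tilde\phi, \ub) - S(\phi, \ub) = \int_0^1 \partial_\phi S\big(t\tilde\phi + (1-t)\phi, \ub\big)[\tilde\phi - \phi]\, dt =: \mathcal{L}[\tilde\phi - \phi],
\end{equation*}
so that $\mathcal{L}$ is a linear second-order elliptic operator acting on $\tilde\phi - \phi$. The computation of $\partial_\phi S$ proceeds exactly as in \Cref{partial R L}: differentiating $\D_\gamma \phi + 1 - \frac12 f e^\phi + F(\phi,\ub)$ in $\phi$, one finds that the principal part is $\D_{\gamma_{\tilde\phi}}$ (up to harmless first-order terms coming from the $\phi$-dependence of the metric $\gamma = R^{-2}\slashed g$ and of the background quantities evaluated at $u = 1 - R$), the first-order terms are controlled by $\|\tilde\phi-\phi\|_{C^1}$ together with $|\nabla\phi|$, $|\nabla\tilde\phi|$, and — crucially — the zeroth-order coefficient is dominated by $-\frac12 f e^{\phi}$ plus lower-order contributions. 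In the short-pulse regime of \cite{An-Han} the normalization $1 \ll a \le \delta^{-1}$ forces $R = \ub a e^{-\phi}$ to be small and the MOTS equation $S(\phi,\ub)=0$ pins $\frac12 f e^\phi \approx 1 + \frac2R + \cdots$ to be large and positive (this is the content of the $C^0$/$C^1$/$C^2$ a priori estimates established in \cite{An-Han}), so after collecting terms the zeroth-order coefficient of $\mathcal{L}$ is strictly negative. Hence $\mathcal{L}[\tilde\phi - \phi] \le 0$ with a negative zeroth-order coefficient, and the maximum principle yields $\tilde\phi - \phi \le 0$, i.e.\ $\tilde R \le R$, which is precisely the null comparison principle in the case $\tr\tilde\chi \le 0$; the case $\tr\tilde\chi \ge 0$ is identical with all inequalities reversed. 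The final sentence then follows immediately from \Cref{AH piecewise spacelike or null}.

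The main obstacle — and the only place where genuine care is needed rather than bookkeeping — is verifying that the zeroth-order coefficient of $\mathcal{L}$ is negative throughout the relevant range of $\ub$. Unlike the perturbed-Schwarzschild case, where $-\frac{1}{4m_\infty^2} + d_2$ with $|d_2| \ll 1$ is manifestly negative by smallness of the perturbation, here one must combine the explicit lower bound $f \ge m > 0$ on $B_p(\epsilon)$ from \eqref{f condition 2}, the sign and size of $\a_1, \a_2$ (which involve $\O\tr_g\chib$, $\O^{-1}\tr_g\chi$, $\omegab$, $1/R$), and the a priori control on $\phi$ and its first two derivatives from \cite{An-Han}, to show that the $-\frac12 f e^\phi$ term (together with whatever definite-sign piece the $F$-linearization contributes, e.g.\ from $\a_2 \sim -1/R$) dominates. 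Since $e^\phi$ and $R^{-1}$ are large in the short-pulse regime, the expectation is that this term is not merely negative but large in absolute value, which will also make the maximum-principle step and the associated strong maximum principle (needed, via \Cref{strong comparison theorem}, to rule out partial coincidence $\tilde R = R$ on a proper subset) completely robust. One should also check that the $\phi$-dependence entering through the background quantities being evaluated at $u = 1 - \ub a e^{-\phi}$ — i.e.\ the chain-rule terms $-\ub a e^{-\phi}\,\partial_u(\cdot)$ — only contributes to the first- and zeroth-order coefficients at lower order than the main $-\frac12 f e^\phi$ term, which again follows from the hyperbolic estimates of \cite{An-Han}.
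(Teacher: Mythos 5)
Your overall skeleton — converting $\tr\tilde{\chi}\le 0$ into $S(\tilde{\phi})\le 0=S(\phi)$ using $\tr\tilde{\chi}=2\O\tilde{R}^{-1}S(\tilde{\phi})$ with $\O>0$, writing the difference as $\mathcal{L}[\tilde{\phi}-\phi]=\int_0^1\partial_\phi S(t\tilde{\phi}+(1-t)\phi)\,dt\,[\tilde{\phi}-\phi]$, and closing with a maximum principle before invoking \Cref{AH piecewise spacelike or null} — is the same as the paper's. But the step you yourself flag as ``the main obstacle'' is where the argument breaks. You propose to show that the zeroth-order coefficient of $\mathcal{L}$ is \emph{strictly negative pointwise}, dominated by the linearization of $-\frac12 fe^{\phi}$, on the grounds that the MOTS equation forces $\frac12 fe^{\phi}$ to be large and positive. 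In the anisotropic setting of \cite{An-Han} this is false: $f$ satisfies only $0\le f\le 1$ and is bounded below, $f\ge m$, on a single small ball $B_{p}(\epsilon)$ (see \eqref{f condition 2}); it may vanish on most of $\mathbb{S}^2$. Where $f=0$ the MOTS equation reads $\Delta_\gamma\phi+1+F(\phi,\ub)=0$ and the term $-\frac12 fe^{\phi}$ contributes nothing, so the zeroth-order coefficient of $\mathcal{L}$ admits no negative pointwise lower bound and the naive maximum principle you invoke does not apply (this anisotropy is precisely what makes this case harder than the perturbed-Schwarzschild case of \Cref{comparison theorem}, where $-\frac{1}{4m_\infty^2}+d_2$ is manifestly negative). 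The paper's proof avoids this by a different mechanism: for $a$ sufficiently large and using the a priori bounds $\|\phi\|_{C^2(\mathbb{S}^2)}\le K$, $\|\tilde{\phi}-\phi\|_{C^2(\mathbb{S}^2)}\le K$, Lemmas 5.2--5.5 of \cite{An-Han} supply a \emph{positive strict supersolution}, i.e.\ a positive function $\psi$ on $\mathbb{S}^2$ with $\mathcal{L}(\psi)\le -c(m,\epsilon,K)<0$, and the generalized/strong maximum principle (Theorem 2.11 in \cite{H-L: elliptic pde textbook}) then yields either $\tilde{\phi}>\phi$ or $\tilde{\phi}\equiv\phi$ with no sign condition on the zeroth-order coefficient. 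That supersolution input is the missing idea in your proposal, and without it (or an equivalent substitute) the comparison cannot be closed.

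There is also a sign slip you should fix: from $\mathcal{L}[\tilde{\phi}-\phi]\le 0$ the (generalized) maximum principle gives $\tilde{\phi}\ge\phi$, not $\tilde{\phi}\le\phi$; and since $\tilde{R}=\ub a e^{-\tilde{\phi}}$ is \emph{decreasing} in $\tilde{\phi}$, it is exactly $\tilde{\phi}\ge\phi$ that yields the desired $\tilde{R}\le R$. Your stated conclusion ``$\tilde{\phi}-\phi\le 0$, i.e.\ $\tilde{R}\le R$'' is internally inconsistent. Likewise the heuristic ``$\frac12 fe^{\phi}\approx 1+\frac{2}{R}$'' does not follow from $S(\phi)=0$ and cannot hold where $f$ vanishes. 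The reduction of the case $\tr\tilde{\chi}\ge 0$ by reversing inequalities, and the final appeal to \Cref{strong comparison theorem} and \Cref{AH piecewise spacelike or null}, are fine once the comparison step is repaired as above.
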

\begin{proof}
  Let $\widetilde{\Sigma}=\{r=\tilde{R}(\theta_1, \theta_2):=\ub a e^{-\tilde{\phi}(\theta_1, \theta_2)} \}$ be a spacelike $2$-surface along $\Hb_{\ub}$, with $(\theta_1, \theta_2)$ being a local coordinate on $\mathbb{S}^2$. We denote $\tr \tilde{\chi}$ to be the outgoing null expansion with respect to $\widetilde{\Sigma}$ and assume that $\tr \tilde{\chi} \leq 0$ holds. 

    Recall that the MOTS solution $\phi$ obeys the a priori bound
    \begin{equation*}
        \|\phi\|_{C^2(\mathbb{S}^2)} \le K 
    \end{equation*}
for certain positive constant $K$ depending on $\epsilon, m$. By virtue of 
    \begin{equation*}
       \tr \tilde{\chi}=2\O \tR^{-1} S(\tilde{\phi}),
    \end{equation*}
    and noting that $\O>0$, we have
    \begin{equation*}
        S(\tilde{\phi})\le  0=S(\phi).
    \end{equation*}
     With the assistance of the linearized operator $\partial_{\phi} S(\phi)$, we can write
    \begin{equation*}
        \mathcal{L}(\tilde{\phi}-\phi):=S(\tilde{\phi})-S(\phi)=\int_{0}^{1} \partial_\phi S(t\tilde{\phi}+(1-t)\phi)dt [\tilde{\phi}-\phi]\le 0,
    \end{equation*}
    where $\mathcal{L}$ is viewed as a linear elliptic operator acting on $\tilde{\phi}-\phi$. Choosing $\sigma>0$ suitably small so that for any $\|\tR-R\|_{C^2(\mathbb{S}^2)}<\sigma$, it holds $$\|\tilde{\phi}-\phi\|_{C^2(\mathbb{S}^2)}\le K.$$
   Hence, by taking $a$ sufficiently large, it follows from the proof of Lemma 5.2-5.5 in \cite{An-Han} that there exists a positive function $\psi$ on $\mathbb{S^2}$ satisfying
   \begin{equation*}
       \mathcal{L}(\psi)\le -c(m, \epsilon, K)<0.
   \end{equation*}
   Applying the strong maximum principle (see for example  Theorem 2.11 in \cite{H-L: elliptic pde textbook}), we obtain
    either $\tilde{\phi}>   \phi $ or $\tilde{\phi}\equiv \phi$ on $\mathbb{S}^2$. This indicates
    \begin{equation*}
        \tR(\theta_1, \theta_2) \le  R(\theta_1, \theta_2) \quad \text{for all} \ (\theta_1, \theta_2)\in \mathbb{S}^2.
    \end{equation*}
    
For the case $\tr \tilde{\chi} \ge 0$, we have $S(\tilde{\phi})\ge  0=S(\phi)$. Utilizing Lemma 5.5 in \cite{An-Han}  again, we deduce $\tilde{\phi}\le   \phi$ and thus obtain $ \tR(\theta_1, \theta_2) \ge  R(\theta_1, \theta_2) $ for any $(\theta_1, \theta_2)\in \mathbb{S}^2$.
\end{proof}
\subsubsection{In Spacetime with Naked-Singularity Initial Data} 
In \cite{An naked singularity} the first author studied the instability of naked singularities for the Einstein-Scalar field system with no symmetry assumption. Similar to the argument as in \Cref{Subsec application anisotropic spacetime}, we can see that the null comparison principle also holds in this setting and we have  
\begin{proposition}
     For the apparent horizon $\mathcal{AH}$ constructed in \cite{An naked singularity}, it is locally achronal, i.e., piecewise spacelike or piecewise null.
\end{proposition}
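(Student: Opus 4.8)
The plan is to mirror, almost verbatim, the structure of the proof in \Cref{Subsec application anisotropic spacetime}, since the setup in \cite{An naked singularity} is the no-symmetry Einstein-scalar-field analogue of the anisotropic vacuum construction and the MOTS is again obtained by solving a scalar quasilinear elliptic equation $S(\phi,\ub)=0$ on $\mathbb{S}^2$ with the solution enjoying an a priori bound $\|\phi\|_{C^2(\mathbb{S}^2)}\le K$. First I would recall the relevant coordinates: choose $r$ so that $\partial_r$ is along the incoming null generator of $\Hb_{\ub}$ (e.g.\ $r=1-u$ in the double null gauge of \cite{An naked singularity}), write $M_{\ub}=\{r=R(\ub,\theta_1,\theta_2)\}$ with $R=\ub a e^{-\phi}$, and identify the outgoing null expansion of a nearby spacelike $2$-surface $\widetilde{\Sigma}=\{r=\tR(\theta_1,\theta_2)=\ub a e^{-\tilde\phi}\}$ as $\tr\tilde\chi = 2\O\tR^{-1} S(\tilde\phi)$ with $\O>0$. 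So $\tr\tilde\chi\le 0$ (resp.\ $\ge 0$) is equivalent to $S(\tilde\phi)\le 0=S(\phi)$ (resp.\ $S(\tilde\phi)\ge 0=S(\phi)$).

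Next I would form the linear elliptic operator $\mathcal{L}(\tilde\phi-\phi):=S(\tilde\phi)-S(\phi)=\int_0^1 \partial_\phi S(t\tilde\phi+(1-t)\phi)\,dt\,[\tilde\phi-\phi]$, so $\mathcal{L}(\tilde\phi-\phi)\le 0$ in the first case. The key point, as in the anisotropic case, is that although the zeroth-order coefficient of $\mathcal{L}$ is not manifestly signed, one can construct a strict positive supersolution: by choosing the scale parameter large (the analogue of $a\gg 1$ in \cite{An-Han}), there exists a positive function $\psi$ on $\mathbb{S}^2$ with $\mathcal{L}(\psi)\le -c<0$, where $c$ depends only on the structural constants governing the naked-singularity data. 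With such $\psi$ in hand the generalized strong maximum principle (e.g.\ Theorem 2.11 in \cite{H-L: elliptic pde textbook}, or Theorem 3.5 in \cite{G-T} after the standard trick of splitting the zeroth-order coefficient into its positive and negative parts as in \Cref{strong comparison theorem}) forces either $\tilde\phi>\phi$ everywhere or $\tilde\phi\equiv\phi$ on $\mathbb{S}^2$. Translating back through $R=\ub a e^{-\phi}$ gives $\tR\le R$ pointwise (strictly, or identically equal); the case $\tr\tilde\chi\ge 0$ is symmetric. This verifies \Cref{comparison principle} for $M_{\ub}$, and then \Cref{AH piecewise spacelike or null} immediately yields that $\mathcal{AH}$ is piecewise spacelike or piecewise null.

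The main obstacle — and the only place that requires genuinely new input rather than a formal transcription — is establishing the existence of the strict supersolution $\psi$, i.e.\ controlling the sign of the reaction term in $\partial_\phi S$. In \cite{An-Han} this relied on the lower bound $f\ge m$ on a ball $B_p(\epsilon)$ together with the smallness afforded by $a^{-1}$; in the Einstein-scalar-field setting one must check that the corresponding scale-invariant data in \cite{An naked singularity} provides an analogous quantitative positivity (the matter terms entering $S$ must not destroy the sign), and that the a priori $C^2$ bound on $\phi$ persists uniformly so that $\mathcal{L}$ has bounded, uniformly elliptic coefficients on the relevant $C^2$-neighborhood $\|\tR-R\|_{C^2(\mathbb{S}^2)}<\sigma$. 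Once these are in place — and they follow from the estimates in \cite{An naked singularity} exactly as Lemmas 5.2--5.5 of \cite{An-Han} are used above — the rest of the argument is routine and purely geometric via \Cref{lemma spacelike or null} and \Cref{M_ub' cap I^+(M_ub)=empty}.
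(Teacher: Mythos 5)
Your proposal is correct and takes essentially the same route as the paper, which itself only asserts the result by transcribing the verification of the null comparison principle from the anisotropic setting of \cite{An-Han} in \Cref{Subsec application anisotropic spacetime} to the Einstein--scalar-field setting of \cite{An naked singularity} and then invoking \Cref{AH piecewise spacelike or null}. You in fact give more detail than the paper does, and you correctly isolate the only non-formal step, namely that the estimates of \cite{An naked singularity} must supply the strict positive supersolution and the uniform a priori bounds needed for the strong maximum principle argument.
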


\section{Physical laws}\label{Sec: physical law}
Once the apparent horizon $\mathcal{AH}:=\cup_{\ub \in J} M_{\ub}$ is proved to be piecewise spacelike or piecewise outgoing null, in this subsection, with our foliations along $\mathcal{AH}$, we further verify and provide new proofs for several physical laws found by Ashtekar, Krishnan and collaborators. 

\vspace{2mm}
For notational simplicity, we denote the spacelike piece and the null piece of the apparent horizon $\mathcal{AH}$ by $\mathcal{AH}_s$ and $\mathcal{AH}_n$, respectively. In addition, in \Cref{Subsec: the zero law} and \Cref{Subsec: the first law} we use $\{e_1, e_2, e_3, e_4 \}$ to represent the null frame associated with the MOTS $M_{\ub}$, when there is no danger of confusion. In \Cref{Subsec: the second law}, since $\{e_1, e_2, e_3, e_4 \}$ is employed to denote the null frame of the spacetime $\mathcal{M}$, we instead adopt the notation  $\{e'_1, e'_2, e'_3, e'_4 \}$ to stand for the  null frame adapted  to $M_{\ub}$.

\subsection{The zeroth law}\label{Subsec: the zero law}
In black hole thermodynamics the zeroth law states that the surface gravity of the null apparent horizon is constant. Before presenting an explicit form of the zeroth law in our setting, we first introduce some basic setup\footnote{For more detailed relevant notations, interested readers are referred to \cite{A-K1} by Ashtekar and Krishnan.}.

Along $\mathcal{AH}_n$, it is clear that we can find a future-directed null vector $l$ which is normal to $\mathcal{AH}_n$, such that the corresponding outgoing null expansion satisfies $\Theta_{(l)}=0$. Together with the Raychaudhuri equation, this forces
\begin{equation*}
    \mathbf{g}(\bfD_{e_a} l, e_b)=0 \quad \textrm{for any} \quad a,b=1,2,
\end{equation*}
where $\bfD$ represents the spacetime covariant derivative of $\mathcal{M}$ and $\{e_1, e_2 \}$ is a basis of $TM_{\ub}$. Combining with $ \mathbf{g}(D_{X} l, l)=0$ and the fact that for any $ X\in T\mathcal{AH}_n$ it holds
\begin{equation*}
   \mathbf{g}(\bfD_{l} l, X)=-\mathbf{g}( l, \bfD_l X)=-\mathbf{g}( l, [l, X]+\bfD_X l)=-\mathbf{g}( l, \bfD_X l)=0,\footnote{Here we use the fact that $[X, Y]\in T\mathcal{AH}$ for any $X, Y\in T\mathcal{AH}$ since $\mathcal{AH}$ is a submanifold of $\mathcal{M}$.}
\end{equation*}
 we hence obtain
\begin{equation*}
	\mathbf{g}(\bfD_X l, Y)=0 \quad \textrm{for any} \quad X, Y\in T\mathcal{AH}_n.
\end{equation*}
 This implies that there must exist a 1-form $\tilde{\omega}$ on $\mathcal{AH}_n$ such that
\begin{equation}\label{rotational potential}
	\bfD_X l= \tilde{\omega}(X) l \quad \textrm{for any} \ X\in T\mathcal{AH}_n.
\end{equation}
Taking the covariant derivative with respect to $Y\in T\mathcal{AH}_n$, we obtain
\begin{equation}\label{DD l eqn}
    \bfD_Y \bfD_X l=\bfD_Y (\tilde{\omega}(X)) l+\tilde{\omega}(X)\bfD_Y l=\l \bfD_Y (\tilde{\omega}(X))+\tilde{\omega}(X)\tilde{\omega}(Y) \r l,
\end{equation}
which is parallel to $l$. The surface gravity of $\mathcal{AH}_n$ associated with $l$ is defined to be
\begin{equation}\label{surface gravity def}
	\kappa_{(l)}:=\tilde{\omega}(l).
\end{equation}
Following the definition in \cite{A-K1}, we further call $(\mathcal{AH}_n, l)$ the weakly isolated horizon (WIH) if 
\begin{equation}\label{WIH eqn}
    \mathcal{L}_{l} \tilde{\omega}=0,
\end{equation}
where $\mathcal{L}$ is the Lie derivative of spacetime $\mathcal{M}$. Note that we can always find a $l$ such that \eqref{WIH eqn} holds. With this notion, in \cite{A-K1} Ashtekar-Fairhurst-Krishnan demonstrated that the surface gravity $\kappa_{(l)}$ is conserved along the WIH by utilizing the expression of $d\tilde{\omega}$ in terms of the Wely tensor components and by employing the Newman-Penrose formalism. In below, we give an alternative proof via a direct computation using Einstein's field equations.
\begin{proposition}\label{The zeroth law}
	 Along the null piece of apparent horizon $\mathcal{AH}_n$ that is weakly isolated with respect to $l$, the surface gravity $\kappa_{(l)}$ defined in \eqref{surface gravity def} is a constant.
\end{proposition}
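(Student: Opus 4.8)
The plan is to show that $\kappa_{(l)} = \tilde\omega(l)$ is annihilated by every vector field tangent to a leaf $M_{\ub}$ of $\mathcal{AH}_n$ and also by $l$ itself; since these span $T\mathcal{AH}_n$, this forces $\kappa_{(l)}$ to be locally constant, hence constant on each connected component of the null piece. The vanishing of the $l$-derivative is immediate: differentiating $\kappa_{(l)} = \tilde\omega(l)$ along $l$ gives $l(\kappa_{(l)}) = (\mathcal{L}_l \tilde\omega)(l) + \tilde\omega(\mathcal{L}_l l)$, and the first term vanishes by the weakly-isolated condition \eqref{WIH eqn}, while $\mathcal{L}_l l = [l,l] = 0$. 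So the real content is the tangential derivative.

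For $X \in TM_{\ub}$, I would compute $X(\kappa_{(l)}) = X(\tilde\omega(l))$ by writing $\tilde\omega(l) = -\tfrac12 \mathbf{g}(\bfD_l l, e_3)$ (using the normalization with a fixed transversal null $e_3$, so that $\bfD_l l = \kappa_{(l)} l$ implies $\tilde\omega(l) = -\tfrac12 \mathbf{g}(\bfD_l l, e_3)$ up to the chosen normalization $\mathbf{g}(l, e_3) = -2$), and then using $d\tilde\omega$. The key identity is the standard curvature formula: for $X, Y \in T\mathcal{AH}_n$, applying $\bfD_X \bfD_Y l - \bfD_Y \bfD_X l - \bfD_{[X,Y]} l$ and pairing with $e_3$, together with \eqref{rotational potential} and \eqref{DD l eqn}, yields
\begin{equation*}
d\tilde\omega(X,Y) = \mathbf{R}(X, Y, l, \cdot)\text{-type term}.
\end{equation*}
More precisely, from \eqref{DD l eqn} one gets $\bfD_X\bfD_Y l - \bfD_Y\bfD_X l - \bfD_{[X,Y]}l = \big(d\tilde\omega(X,Y)\big) l$, and on the other hand the left side equals $\mathbf{R}(X,Y)l$, so that the Riemann tensor contracted appropriately gives $d\tilde\omega(X,Y) l = \mathbf{R}(X,Y)l$. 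Contracting with $e_3$ and using $\mathbf{g}(l,e_3) = -2$ produces $d\tilde\omega(X,Y) = -\tfrac12 \mathbf{g}(\mathbf{R}(X,Y)l, e_3)$, i.e.\ $d\tilde\omega$ is given by a specific component of the spacetime Riemann tensor. Now Cartan's formula gives $X(\tilde\omega(l)) = (\mathcal{L}_l\tilde\omega)(X) + d\tilde\omega(l,X) + \ldots$; more cleanly, $(\mathcal{L}_l\tilde\omega)(X) = l(\tilde\omega(X)) - \tilde\omega([l,X])$ and $d\tilde\omega(l,X) = l(\tilde\omega(X)) - X(\tilde\omega(l)) - \tilde\omega([l,X])$, so $X(\kappa_{(l)}) = X(\tilde\omega(l)) = (\mathcal{L}_l\tilde\omega)(X) - d\tilde\omega(l,X) = -d\tilde\omega(l,X)$, again using \eqref{WIH eqn}. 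Thus $X(\kappa_{(l)}) = -d\tilde\omega(l,X) = \tfrac12\mathbf{g}(\mathbf{R}(l,X)l, e_3)$.

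It then remains to show this curvature term vanishes. Here is where Einstein's vacuum equations enter: $\mathbf{g}(\mathbf{R}(l,X)l, e_3)$ for $X$ tangent to $M_{\ub}$ is, in null-frame language, essentially a component $\underline\beta$ or $\beta$ of the Weyl curvature contracted with $l$ twice — but more to the point, $\mathbf{R}(l, X, l, e_3)$ can be rewritten using the Ricci identity for the degenerate metric on $\mathcal{AH}_n$, i.e.\ via the null structure equations restricted to the null hypersurface. Specifically, since $\Theta_{(l)} = 0$ and (by Raychaudhuri and the dominant energy condition, or here vacuum) $\hat\chi_{(l)} = 0$ as well, the leaves are totally geodesic in the null direction, and $\mathbf{R}(l,X,l,Y) = 0$ for all $X, Y \in TM_{\ub}$ by the Gauss–Codazzi-type relation together with the constancy of the induced area form; one then extends $\mathbf{R}(l,X,l,e_3)$ using the $\nabla_l$ null structure equation for $\tilde\omega$ (the torsion 1-form / rotation coefficient) — precisely, $\nabla_l \tilde\omega_X + \ldots = -\underline\beta(X) + \ldots$, and the weakly-isolated condition kills the left side, forcing the relevant Weyl component to vanish, hence $d\tilde\omega(l, X) = 0$.

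\textbf{Main obstacle.} The crux — and the step I expect to require the most care — is the chain of identifications showing $d\tilde\omega(l, X) = 0$: one must correctly express $d\tilde\omega$ in terms of spacetime curvature (the derivation via \eqref{DD l eqn} is clean in principle but the contraction bookkeeping with the $\mathbf{g}(l,e_3) = -2$ normalization is error-prone), then invoke the correct null structure / propagation equation along $l$ for the rotation 1-form, and then see that $\mathcal{L}_l\tilde\omega = 0$ eliminates exactly the term one needs. An alternative cleaner route, which I would actually prefer to write up, bypasses curvature entirely: differentiate $\kappa_{(l)}$ tangentially using only \eqref{rotational potential}, \eqref{WIH eqn}, and the identity $d(\iota_l \tilde\omega) = \iota_l d\tilde\omega + \mathcal{L}_l\tilde\omega$ — but one still needs $\iota_l d\tilde\omega$ to vanish on vectors tangent to $\mathcal{AH}_n$, which again reduces to a curvature condition (the vacuum Einstein equations, or the energy condition, guaranteeing that the relevant pullback of the Weyl tensor vanishes on the null generators). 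So in every approach the genuine input is: \emph{vacuum Einstein + $\Theta_{(l)} = 0$ + weakly-isolated $\Rightarrow$ $\iota_l d\tilde\omega|_{T\mathcal{AH}_n} = 0$}, and making that implication rigorous via the null structure equations is the heart of the proof.
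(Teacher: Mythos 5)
Your reduction is sound and is, in substance, the same as the paper's: the Cartan-formula identity $X(\kappa_{(l)})=(\mL_l\tilde\omega)(X)-d\tilde\omega(l,X)$ together with $d\tilde\omega(X,Y)\,l=\bfR(X,Y)l$ for $X,Y\in T\mathcal{AH}_n$ is a repackaging of the paper's computation \eqref{Lie derivative of omega}, and the $l$-derivative is indeed trivial. The genuine gap sits exactly at the step you yourself flag as the heart of the proof: showing $\bfg(\bfR(l,X)l,n)=0$. The mechanism you propose --- the $\nabla_l$ transport equation for the rotation $1$-form, with the weakly isolated condition ``killing the left side'' and thereby ``forcing the relevant Weyl component to vanish'' --- is circular: restricted to $\mathcal{AH}_n$ (where $\chi\equiv 0$), that transport equation is nothing other than \eqref{Lie derivative of omega}, whose contraction with $n$ (using $\bfg(l,n)=-2$) reads $(\mL_l\tilde\omega)(X)=X(\kappa_{(l)})-\tfrac12\bfg(\bfR(l,X)l,n)$. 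Imposing $\mL_l\tilde\omega=0$ therefore only reproduces the proportionality $X(\kappa_{(l)})=\tfrac12\bfg(\bfR(l,X)l,n)$ --- the relation you already obtained from Cartan's formula --- and kills neither side. If instead you intend the transport equation for $\eta$ (which contains no angular derivative of the surface gravity), note that the WIH condition \eqref{WIH eqn} constrains the Lie transport along $l$ of $\tilde\omega$, i.e.\ of the torsion $\zeta$ (up to sign) and of $\kappa_{(l)}$, and gives no control whatsoever on $\nabla_l\eta$, which involves transversal data off the horizon. Your observation that $\bfR(l,X,l,Y)=0$ for $X,Y\in TM_{\ub}$ is correct but concerns the $\alpha$-type components, not the component you need.

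The missing ingredient is the way the vacuum equations are actually used in the paper: not merely to identify Riemann with Weyl, but as the trace identity $\mathrm{Ric}(X,l)=0$, cf.\ \eqref{R l x l n eqn}. In that trace the term $\bfR(n,X,l,l)$ vanishes by antisymmetry, and the angular terms $\bfR(e_a,X,l,e_a)$ vanish because, by \eqref{rotational potential} and \eqref{DD l eqn}, the vector $\bfR(e_a,X)l=\bfD_{e_a}\bfD_X l-\bfD_X\bfD_{e_a}l-\bfD_{[e_a,X]}l$ is parallel to $l$ and hence orthogonal to $e_a$. This forces $\bfR(l,X,l,n)=0$ outright, with no further appeal to structure equations, and then \eqref{WIH eqn} gives $X(\kappa_{(l)})=0$. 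Replacing your transport-equation step with this Ricci-trace argument (or with an equivalent Codazzi-type identity on the horizon, which uses $\mathrm{Ric}=0$ in the same way) closes the proof; as written, the proposal establishes the correct reduction but not the key vanishing.
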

\begin{proof}
    It suffices to show that 
    \begin{equation*}
        X \big(\kappa_{(l)}\big) =0  \quad \text{for any} \ X\in T\mathcal{AH}_n.
    \end{equation*}
    Taking the Lie derivative with respect to $l$ on \eqref{rotational potential}, we deduce
    \begin{align*}
        \bfD_l \bfD_X l-\bfD_{\bfD_X l} l=\mL_{l} \bfD_X l=\mL_{l} \l \tilde{\omega}(X) l \r=(\mL_{l}\tilde{\omega})(X)l+\tilde{\omega}(\mL_l X)l=(\mL_{l}\tilde{\omega})(X) l+\bfD_{\mL_l X} l.
    \end{align*}
    This yields
    \begin{equation}\label{Lie derivative of omega}
    \begin{split}
        (\mL_{l}\tilde{\omega})(X)l=\bfD_l \bfD_X l-\bfD_{\bfD_l X} l=&\bfR(l, X)l+\bfD_X \bfD_l l-\bfD_{\bfD_X l} l \\
        =&\bfR(l, X)l+ \bfD_X \big(\kappa_{(l)} l\big)-\tilde{\omega}(X)\kappa_{(l)} l\\
        =&\bfR(l, X)l+X \big(\kappa_{(l)}\big)l.
    \end{split}
    \end{equation}
    Let $n$ be the incoming null vector that is normal to $M_{\ub}$ and satisfies $ \mathbf{g}(l, n)=-2$. From the Einstein vacuum equations we assert
    \begin{equation}\label{R l x l n eqn}
        -\f{1}{2}\bfR(l, X , l, n)-\f{1}{2}\bfR(n, X , l, l)+\bfR(e_1, X, l, e_1)+\bfR(e_2, X, l, e_2)=\text{Ric}(X, l)=0.
    \end{equation}
    In view of \eqref{rotational potential} and \eqref{DD l eqn}, for all $a=1, 2$, we have that the vector
    \begin{equation*}
        \bfR(e_a, X)l=\bfD_{e_a} \bfD_X l-\bfD_{X} \bfD_{e_a} l-\bfD_{\mL_{e_a} X} l 
    \end{equation*}
    is parallel to $l$. This implies 
    \begin{equation*}
        \bfR(e_1, X, l, e_1)=\bfR(e_2, X, l, e_2)=0.
    \end{equation*}
    Consequently, it follows from \eqref{R l x l n eqn} that 
    \begin{equation*}
        \bfR(l, X , l, n)=\mathbf{g}\big( \bfR(l, X) l, n \big)=0.
    \end{equation*}
     Back to \eqref{Lie derivative of omega}, we thus conclude
     \begin{equation*}
         (\mL_{l}\tilde{\omega})(X)=X \big(\kappa_{(l)}\big).
     \end{equation*}
     Since $(\mathcal{AH}_n, l)$ is  weakly isolated, we hence obtain that
      \begin{equation*}
        X \big(\kappa_{(l)}\big) =0  \quad \text{for any} \, \, X\in T\mathcal{AH}_n.
    \end{equation*}
\end{proof}
     Therefore, we conclude that the surface gravity $\kappa_{(l)}$ is a conserved quantity along $\mathcal{AH}_n$.

\subsection{The first law}\label{Subsec: the first law}
We first state the corresponding formula along the null piece $\mathcal{AH}_n$. With 1-form $\tilde{\omega}$ defined in \eqref{rotational potential}, for a vector field $X$ on  $\mathcal{AH}_n$ that is tangent to $M_{\ub}$, as in \cite{A-K1} we define the associated  angular momentum for MOTS $M_{\ub}$ to be
\begin{equation*}
	J_{M}:=-\f{1}{8\pi}\int_{M_{\ub}} \tilde{\omega}(X). 
\end{equation*}
Let $R_M$ be the area radius of MOTS $M_{\ub}$, i.e., $R_M=(A_M/4\pi)^{1/2}$ with $A_M$ being the area of $M_{\ub}$. The so-called canonical horizon energy of $\mathcal{AH}_n$ is further defined to be
\begin{equation}\label{relation EM RM JM}
	E_M=\f{1}{2R_M}\sqrt{R_M^4+4J_M^2}.
\end{equation} 
Along $\mathcal{AH}_n$, the total differential of \eqref{relation EM RM JM} then gives
\begin{equation}\label{the first law null piece}
	d E_M= \f{\kappa_0}{8\pi}d A_M+\O_0 d J_M.
\end{equation}
Here
\begin{equation*}
	\kappa_0(R_M, J_M):=\f{R_M^4-4J_M^2}{2R_M^3\sqrt{R_M^4+4J_M^2}} \quad \text{and} \quad \O_0(R_M, J_M):=\f{\sqrt{R_M^4+4J_M^2}}{2R_M}.
\end{equation*}
The identity \eqref{the first law null piece} is referred as the first law of apparent horizon (isolated horizon) as in \cite{A-K1}.  

\vspace{5mm}
To establish the first law along $\mathcal{AH}_s$, we pick a vector field $X$ on  $\mathcal{AH}_s$, which is tangential to the MOTS $M_{\ub}$.  Let $\hat{\tau}$ be the unit timelike vector normal to $\mathcal{AH}_s$ and $\hat{r}$ be the unit spacelike vector within $\mathcal{AH}_s$ that is outer normal to $M_{\ub}$. We then fix the outgoing null direction to be $l=\hat{\tau}+\hat{r}$. Denote the intrinsic metric of  $\mathcal{AH}_s$ to be $q_{ij}$ and its extrinsic curvature to be $K_{ij}:=\bfD_i \hat{\tau}_j$ with $i,j=1, 2, 3$. Setting $\{e_a \}_{a=1,2}$ to be a basis of $TM_{\ub}$, with null-frame formalism we define
\begin{equation*}
	\chi_{ab}:=\mathbf{g}(\bfD_{e_a} l, e_b) \quad \text{and} \quad \tilde{\eta}_a:=\mathbf{g}(\bfD_{\hat{r}} l, e_a).
\end{equation*}
Due to the spacelikeness of $AH_s$, the following constraint equations for Cauchy data$(q_{ij}, K_{ij})$ are automatically satisfied along $AH_s$:
    \begin{equation}\label{constraint eqn}
        \begin{split}
            H_S:=&\mathcal{R}+(\tr K)^2-K^{ij}K_{ij}=0, \\
            H^i_V:=&D_j(K^{ij}-\tr K q^{ij})=0.
        \end{split}
    \end{equation}
    Here  $\mathcal{R}$ denotes the scalar curvature of $AH_s$ and $D$ is the induced covariant derivative on $AH_s$. 
     
     \vspace{2mm}
    In below we prove a relation between the scalar curvature of  MOTS and $\chi, \tilde{\eta}$.
\begin{lemma}\label{lemma Hs+ri Hi V eqn}
    Along $\mathcal{AH}_s$, it holds
    \begin{equation}\label{Hs+ri Hi V eqn}
        0=\widetilde{\mathcal{R}}+2\nab_a \tilde{\eta}^a-\chi^{ab} \chi_{ab} -2\tilde{\eta}^a \tilde{\eta}_a.
    \end{equation}
    Here $\nab$ is the induced covariant derivative on $M_{\ub}$, $\widetilde{\mathcal{R}}$ is the scalar curvature of  $M_{\ub}$.
\end{lemma}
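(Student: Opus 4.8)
The identity \eqref{Hs+ri Hi V eqn} should be read as the Hamiltonian constraint along $\mathcal{AH}_s$ rewritten in terms of the null frame adapted to $M_{\ub}$. The plan is to start from $H_S = 0$ in \eqref{constraint eqn}, namely $\mathcal{R} + (\operatorname{tr} K)^2 - K^{ij}K_{ij} = 0$, and then re-express the $3$-dimensional scalar curvature $\mathcal{R}$ of $\mathcal{AH}_s$ and the extrinsic curvature terms through the Gauss--Codazzi relations for the $2$-surface $M_{\ub}\subset \mathcal{AH}_s$. Concretely, I would choose an orthonormal frame $\{\hat r, e_1, e_2\}$ for $T\mathcal{AH}_s$ with $\{e_1,e_2\}$ tangent to $M_{\ub}$, and $\hat\tau$ the unit timelike normal to $\mathcal{AH}_s$, so that $l = \hat\tau + \hat r$. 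The Gauss equation for $M_{\ub}$ inside $\mathcal{AH}_s$ gives $\mathcal{R} = \widetilde{\mathcal{R}} + (\text{second fundamental form of } M_{\ub} \text{ in } \mathcal{AH}_s \text{ terms}) - 2\,\mathrm{Ric}_{\mathcal{AH}_s}(\hat r,\hat r) + \dots$; combining this with the decomposition of $K_{ij}$ along the splitting $\mathbb{R}\hat r \oplus TM_{\ub}$ and with the momentum constraint $H^i_V = 0$ contracted appropriately should collapse everything to \eqref{Hs+ri Hi V eqn}.

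The cleaner route, which I would actually pursue, is to avoid the intrinsic $3$-geometry and instead use the spacetime Gauss equation directly on $M_{\ub}\subset\mathcal{M}$ together with the vacuum Einstein equations. Recall that the Gauss equation on the $2$-surface $M_{\ub}$ with null normals $l, n$ (normalized $\mathbf{g}(l,n) = -2$) reads, schematically, $\widetilde{\mathcal{R}} = \text{(curvature terms)} - \tfrac12 \chi_{ab}\underline\chi^{ab} + \dots$ where $\underline\chi$ is the second fundamental form with respect to $n$. The point is that along $\mathcal{AH}_s$ the relevant incoming second fundamental form is determined: since $l = \hat\tau + \hat r$ and $n = \hat r - \hat\tau$ (up to the normalization), and $\hat\tau, \hat r$ are respectively normal to and tangent to $\mathcal{AH}_s$, one can trade the $n$-derivative of frame vectors for $\hat r$-derivatives, which produces exactly $\tilde\eta_a = \mathbf{g}(\bfD_{\hat r} l, e_a)$ and the divergence term $\nab_a \tilde\eta^a$. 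Then invoking $\mathrm{Ric}(l,l) = 0$ (the Raychaudhuri/null constraint giving $\Theta_{(l)} = 0$ already used for the zeroth law) and $\mathrm{Ric}(l,n) = 0$ eliminates the remaining curvature scalars, leaving $0 = \widetilde{\mathcal{R}} + 2\nab_a\tilde\eta^a - \chi^{ab}\chi_{ab} - 2\tilde\eta^a\tilde\eta_a$.

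The key steps in order: (i) set up the adapted frame $\{\hat\tau, \hat r, e_1, e_2\}$ and record $l = \hat\tau + \hat r$, $n = \hat r - \hat\tau$, along with the sign conventions; (ii) write the Gauss equation for $M_{\ub}$ in $\mathcal{M}$ in null-frame form, isolating $\widetilde{\mathcal{R}}$; (iii) express $\mathbf{g}(\bfD_a n, e_b)$ and its trace in terms of $\chi$, $\tilde\eta$, and $\nab_a\tilde\eta^a$ using that $\hat r \in T\mathcal{AH}_s$ and commuting derivatives (the term $2\tilde\eta^a\tilde\eta_a$ will come from the quadratic correction when re-expanding $n$-covariant derivatives in the $\hat r$ direction, analogous to the computation in \Cref{proposition deformation formula}); (iv) substitute the vacuum conditions $\mathrm{Ric}(l,l) = 0$ and $\mathrm{Ric}(l,n) = 0$ to kill the curvature terms; (v) collect terms to obtain \eqref{Hs+ri Hi V eqn}.

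The main obstacle I anticipate is bookkeeping in step (iii): carefully translating the incoming second fundamental form of $M_{\ub}$ (with respect to $n$) into data intrinsic to $\mathcal{AH}_s$, keeping track of which connection terms are tangential to $\mathcal{AH}_s$ (hence expressible via $\chi$, $\tilde\eta$, and the induced covariant derivative $\nab$ on $M_{\ub}$) and which are not, and ensuring the quadratic term $2\tilde\eta^a\tilde\eta_a$ emerges with the correct sign. This is exactly the kind of frame-change computation already carried out in \Cref{proposition deformation formula} and \Cref{change laplacian}, so the mechanism is available; the care needed is in not double-counting the $\hat r$-versus-$\hat\tau$ contributions when $l$ is written as their sum. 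Once the frame algebra is organized, the vacuum Einstein equations do the rest and \eqref{Hs+ri Hi V eqn} follows.
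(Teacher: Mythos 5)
Your ``cleaner route'' has a genuine gap. The identity \eqref{Hs+ri Hi V eqn} is not a pointwise algebraic consequence of the Gauss equation on $M_{\ub}$ plus vacuum Ricci conditions. With $\trch=0$ the Gauss equation only gives $\widetilde{\mathcal{R}}=2\bfK=\chibh\cdot\chih-2\rho$, so proving \eqref{Hs+ri Hi V eqn} amounts to showing $2\rho=\chibh\cdot\chih-|\chih|^2+2\nab_a\tilde\eta^a-2\tilde\eta^a\tilde\eta_a$. The component $\rho$ is a \emph{Weyl} curvature component; it does not vanish in vacuum (it is $\approx -2m_\infty/r^3$ here) and cannot be killed by $\mathrm{Ric}(l,l)=0$ or $\mathrm{Ric}(l,n)=0$. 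Likewise, no frame change performed pointwise on the fixed $2$-surface (the mechanism of \Cref{proposition deformation formula}) can generate the term $2\nab_a\tilde\eta^a$, which is a divergence of connection coefficients and hence carries derivative information transverse to $M_{\ub}$ inside $\mathcal{AH}_s$. The missing ingredient is precisely this transverse information: either the momentum constraint contracted with $\rh$ (so that one uses $H_S+2\rh_i H^i_V=0$, the route you sketched in your first paragraph and then set aside, and which is the paper's first proof), or, in the null-frame language, the pair of structure equations for $e_4(\trch)$ and $e_3(\trch)$ combined along $\rh=\tfrac12(e_4-e_3)$ together with the fact that $\trch\equiv 0$ on all of $\mathcal{AH}_s$, so that $D_{\rh}\trch=0$ (the paper's second proof). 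It is this combination that simultaneously produces $2\div\tilde\eta$ (via $\div(\xi-\eta)$ with $\tilde\eta_a=\xi_a-\eta_a$) and eliminates $\rho$ through the Gauss equation.

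A related confusion in your step (iv): $\mathrm{Ric}(l,l)=0$ does not ``give'' $\Theta_{(l)}=0$; the vanishing expansion is the MOTS condition, and what the proof actually needs is not its value at one leaf but its persistence along the marginally trapped tube, i.e.\ that the $\rh$-derivative of the expansion vanishes because every leaf of $\mathcal{AH}_s$ is a MOTS. If you repair your outline by inserting that input — either by carrying out the constraint-equation computation with the key algebraic observation $\tr K+\tr\widetilde K-K_{\rh\rh}=\trch=0$ (which makes the relevant flux vector tangent to $M_{\ub}$), or by differencing the two Raychaudhuri-type equations along $\rh$ — you recover exactly the paper's two proofs; as written, the vacuum conditions and the Gauss equation alone cannot close the argument.
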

\begin{remark}
    For the proof of this lemma, we provide two approaches. The first approach is suggested by Ashtekar-Krishnan in  \cite{A-K2}. The second method is new and is based on null-frame formalism.
\end{remark} 
\begin{proof}[First Proof of \Cref{lemma Hs+ri Hi V eqn}]
 From \eqref{constraint eqn}, it suffices to show that
 \begin{equation*}
        H_S+2 r_i H^i_V=\widetilde{\mathcal{R}}+2\nab_a \tilde{\eta}^a-\chi^{ab} \chi_{ab} -2\tilde{\eta}^a \tilde{\eta}_a.
    \end{equation*}
 Note that the Gauss-Codazzi equation along $\mathcal{AH}_s$ can be expressed as
    \begin{equation}\label{Gauss codazzi eqn}
        \mathcal{R}=\widetilde{\mathcal{R}}+(\tr \widetilde{K})^2-\widetilde{K}^{ab}\widetilde{K}_{ab}+2D_i \tilde{\a}^i.
    \end{equation}
    Here $\widetilde{K}_{ab}:=\mathbf{g}(D_{e_a} \hat{r}, e_b)$ is the second fundamental form of the MOTS $M_{\ub}$ along $\mathcal{AH}_s$ and the vector $\tilde{\a}$ is of the form $\tilde{\a}:=D_{\rh} \rh-(D_i \rh^i) \rh=D_{\rh} \rh-(\tr \tK) \rh $. Noting that the outgoing null expansion vanishes for MOTS $M_{\ub}$, we hence have
    \begin{equation}\label{Null expansion zero}
        0=\trch=\mathbf{g}(\bfD_{e_a} l, e^a)=\tr \widetilde{K}+\tr K-K_{\rh \rh}.
    \end{equation}
    Using \eqref{constraint eqn} and \eqref{Gauss codazzi eqn}, a direct computation yields
    \begin{equation}\label{Hs+ r HV eqn}
    \begin{split}
        H_S+2 r_i H^i_V=&\widetilde{\mathcal{R}}+(\tr K)^2+(\tr \tK)^2-K^{ij}K_{ij}-\tK^{ab} \tK_{ab}+2D_j \tilde{\a}^j+2\rh_i  D_j(K^{ij}-\tr K q^{ij}) \\
        =&\widetilde{\mathcal{R}}+(\tr K)^2+(\tr \tK)^2-K^{ij}K_{ij}-\tK^{ab} \tK_{ab}+2D_j \b^j-2(K^{ij}-\tr K q^{ij})D_j r_i
        \end{split}
    \end{equation}
with
\begin{equation*}
    \tilde{\b}^j:=\tilde{\a}^j+\rh_i K^{ij}-\tr K \rh^j=D_{\rh} \rh^j+K_{\rh}^j-(\tr \tK+\tr K)\rh^j.
\end{equation*}
    We then insert \eqref{Null expansion zero} in the expression of $\tilde{\b}$ and deduce
    \begin{equation*}
    \tilde{\b}_{\rh}=K_{\rh \rh}-(\tr \tK+\tr K)=0.
    \end{equation*}
    This implies $\tilde{\b}\in TM_{\ub}$. In view of the definition of $\tilde{\eta}_a$, we also have
    \begin{equation*}
        \tilde{\b}_a=D_{\rh} \rh_a+K_{\rh a}=D_{\rh} \rh_a+D_{\rh} \tauh_a=\tilde{\eta}_a.
    \end{equation*}
    Combining these equations involving $\tilde{\b}$, we hence derive that
    \begin{align*}
        D_j \tilde{\b}^j-(K^{ij}-\tr K q^{ij})D_j r_i=&D_a \tilde{\eta}^a+D_{\rh}\tilde{\eta}_a  \rh^a-K^{ab} D_b \rh_a-K^{a\rh} D_{\rh} \rh_a       +\tr KD^i \rh_i \\
        =&\nab_a \tilde{\eta}^a-\tilde{\eta}_a D_{\rh} \rh^a-K^{ab} \nab_b \rh_a-K^{a\rh} \nab_{\rh} \rh_a       +\tr K \cdot \tr \tK \\
        =&\nab_a \tilde{\eta}^a-(\tilde{\eta}_a+K_{a\rh})(\tilde{\eta}^a-K^{a\rh})-K^{ab} \tK_{ab}    +\tr K \cdot \tr \tK.
    \end{align*}
Back to \eqref{Hs+ r HV eqn}, together with \eqref{Null expansion zero}, we then arrive at
    \begin{align*}
         H_S+2 r_i H^i_V=&\widetilde{\mathcal{R}}+(\tr K+\tr \tK)^2-K^{ij}K_{ij}-\tK^{ab} \tK_{ab}+2\nab_a \tilde{\eta}^a-2\tilde{\eta}^a \tilde{\eta}_a+2K^{a\rh}K_{a\rh}-2K^{ab} \tK_{ab} \\
         =&\widetilde{\mathcal{R}}+2\nab_a \tilde{\eta}^a-\chi^{ab} \chi_{ab} -2\tilde{\eta}^a \tilde{\eta}_a.
    \end{align*}
    Here we use the relation $\chi_{ab}=K_{ab}+\tK_{ab}$.
\end{proof}

Then we move to provide the second proof of \Cref{lemma Hs+ri Hi V eqn} by using the null-frame formalism. We will adopt the null frame $\{e_1, e_2, e_3, e_4 \}$ with $e_4=l=\tauh+\rh$ and $e_3=\tauh-\rh$.
 
\begin{proof}[Second Proof of \Cref{lemma Hs+ri Hi V eqn}]
    With definitions given in \eqref{def curvatures} and \eqref{def Ricci coefficients}, we have the following null structure equations for $\trch$ and the Gauss equation on $M_{\ub}$:\footnote{See the listed null structure equations, null Bianchi equations, constraint equations in  Proposition 7.4.1 of \cite{Chr-Kl} by Christodoulou-Klainerman.}
    \begin{equation*}
        \begin{split}
           e_4 (\trch)+\frac 12 (\trch)^2&=2\div \xi -2\omega \trch+2\xi \cdot (\eta+\etab+2\zeta)-|\chih|^2, \\
e_3 (\trch)+\frac1 2 \trchb \trch &=2\omegab \trch+2\rho- \chih\cdot\chibh+2\div \eta+2(|\eta|^2+\xi\cdot \xib), \\
\bfK&=-\f14\trch \trchb+\f12 \chibh\cdot \chih-\rho.
        \end{split}
    \end{equation*}
    Here $\text{div}$ is the divergence operator on $M_{\ub}$ and $\bfK$ represents the Gauss curvature of $M_{\ub}$. It then follows from $\rh=\f12(e_4-e_3)$ that
    \begin{align*}
        2D_{\rh}( \trch)+\f12(\trch-\trchb) \trch=&-2(\omega+\omegab) \trch+2(\bfK+\f14\trch\trchb)\\&+ 2\div (\xi-\eta)-|\chih|^2-2|\eta|^2+2\xi \cdot (\eta+\etab+2\zeta-\xib).
    \end{align*}
Noting $\trch\equiv 0$ along $\mathcal{AH}_s$ and $\rh\in\mathcal{AH}_s$, together with $\widetilde{\mathcal{R}}\equiv2\bfK$ on $M_{\ub}$, we thus deduce
\begin{equation}\label{tilde R eqn 1}
    0=\widetilde{\mathcal{R}}+2\div (\xi-\eta)-|\chih|^2-2|\eta|^2+2\xi \cdot (\eta+\etab+2\zeta-\xib).
\end{equation}
From the definitions of $\tilde{\eta}_a$ and $K_{a\rh}$, we also have
\begin{align*}
    \tilde{\eta}_a=&\bfg(\bfD_{\rh} l, e_a)=\f12\bfg(\bfD_{e_4-e_3} e_4, e_a)=\xi_a-\eta_a, \\
    K_{a\rh}=&\bfg(\bfD_{a} \tauh, \rh  )=\f{1}{4}\bfg (\bfD_{a} (e_4+e_3), e_4-e_3 )=-\zeta_a.
\end{align*}
Hence, a direct calculation implies
\begin{equation*}
   \etab_a-\xib_a=\f12 \bfg(\bfD_{e_4-e_3} e_3, e_a)=\bfg(\bfD_{\rh} (2\tauh-l), e_a)=2K_{\rh a}-\tilde{\eta}_a=-2\zeta_a-\xi_a+\eta_a.
\end{equation*}
Plugging the equation above into \eqref{tilde R eqn 1}, we therefore arrive at
\begin{align*}
    0=\widetilde{\mathcal{R}}+2\div (\xi-\eta)-|\chih|^2-2|\eta|^2+2\xi \cdot (2\eta-\xi)=\widetilde{\mathcal{R}}+2\div  \tilde{\eta}-|\chih|^2-2| \tilde{\eta}|^2.
\end{align*}
This is equivalent to \eqref{Hs+ri Hi V eqn}, since $\chih_{ab}=\chi_{ab}-\f12 \trch \gamma_{ab}=\chi_{ab}$ along $\mathcal{AH}_s$.
\end{proof}
\vspace{2mm}
Given any function $r=r(R_M)$, we further set the lapse function to be $N_r=|\partial r|$, with $\partial$ denoting the gradient operator on $\mathcal{AH}_s$. With above preparations, we can state the following integral version of the first law for black hole mechanics. This version was first formulated by Ashtekar-Krishnan in \cite{A-K2}.
\begin{proposition}
 Given any vector field $X$ on  $\mathcal{AH}_s$ tangent to MOTSs $M_{\ub}$ and arbitrary function $\O=\O(R_M)$, for any $\ub_2>\ub_1\ge 1$, along $\mathcal{AH}_s$ there holds
	\begin{equation}\label{the first law int}
		\begin{split}
	&4\pi(r_2-r_1)+\int_{M_{\ub_2}} \O j^X-\int_{M_{\ub_1}} \O j^X-\int_{\O_1}^{\O_2}d\O \int_{M_{\ub}}j^X\\=&\f{1}{2}\int_{\mathcal{AH}_s\cap \{\ub_1\le \, \ub \, \le \ub_2 \}} N_r (|\chi|^2+2|\tilde{\eta}|^2)-\f{1}{2}\int_{\mathcal{AH}_s\cap \{\ub_1\le \, \ub \, \le \ub_2 \}} \O P^{ij} \mathcal{L}_X q_{ij}.
\end{split}
	\end{equation}
	Here 
 \begin{equation*}
     j^X:=-K_{ij}X^i \hat{r}^j, \quad P_{ij}:=K_{ij}-\tr K q_{ij}
 \end{equation*}
and
 \begin{equation*}
     r_l=r(R_M(\ub_l)), \quad \O_l=\O(R_M(\ub_l)) \quad  \text{with} \quad  l=1,2.
 \end{equation*}
\end{proposition}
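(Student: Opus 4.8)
The plan is to obtain \eqref{the first law int} as the sum of two scalar balance identities, each obtained by integrating a known relation over the collar region $\mathcal{N}:=\mathcal{AH}_s\cap\{\ub_1\le \ub\le \ub_2\}$. The first identity uses the Gauss-type relation of \Cref{lemma Hs+ri Hi V eqn} weighted by the lapse $N_r$; the second uses the momentum constraint $H^i_V=0$ from \eqref{constraint eqn} contracted with the vector field $\Omega X$. Since the MOTS $M_{\ub}$ foliate $\mathcal{AH}_s$, the region $\mathcal{N}$ has boundary $M_{\ub_2}\cup(-M_{\ub_1})$ and no lateral boundary, the unit normal to $M_{\ub}$ inside $\mathcal{AH}_s$ being $\rh$ (pointing toward increasing $\ub$). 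I will also use that $r=r(R_M)$ and $\Omega=\Omega(R_M)$ are constant on each leaf, together with the co-area identity $\int_{\mathcal{N}}N_r\,g\,dV=\int_{r_1}^{r_2}\big(\int_{M_{\ub}}g\,dA\big)\,dr$, valid because $|\partial r|=N_r$.

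For the first identity, I would multiply the pointwise relation $0=\widetilde{\mathcal{R}}+2\nab_a\tilde{\eta}^a-\chi^{ab}\chi_{ab}-2\tilde{\eta}^a\tilde{\eta}_a$ by $N_r$, integrate over $\mathcal{N}$, and apply the co-area identity to reduce it to $0=\int_{r_1}^{r_2}\int_{M_{\ub}}\big(\widetilde{\mathcal{R}}+2\nab_a\tilde{\eta}^a-|\chi|^2-2|\tilde{\eta}|^2\big)\,dA\,dr$. On each closed leaf one has $\int_{M_{\ub}}\nab_a\tilde{\eta}^a\,dA=0$ by the divergence theorem and $\int_{M_{\ub}}\widetilde{\mathcal{R}}\,dA=8\pi$ by Gauss--Bonnet, while $\chi_{ab}=\chih_{ab}$ on $\mathcal{AH}_s$ since $\trch=0$ there. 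Hence $8\pi(r_2-r_1)=\int_{\mathcal{N}}N_r(|\chi|^2+2|\tilde{\eta}|^2)\,dV$, that is, $4\pi(r_2-r_1)=\frac{1}{2}\int_{\mathcal{N}}N_r(|\chi|^2+2|\tilde{\eta}|^2)\,dV$, which is the first term on each side of \eqref{the first law int}.

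For the second identity, put $P^{ij}=K^{ij}-\tr K\,q^{ij}$, so $D_jP^{ij}=H^i_V=0$ and therefore $D_j(\Omega X_i)P^{ij}=D_j(\Omega X_iP^{ij})$ on $\mathcal{N}$. The divergence theorem yields the boundary term $\big[\int_{M_{\ub}}\Omega\,X_iP^{ij}\rh_j\,dA\big]_{\ub_1}^{\ub_2}$; since $X$ is tangent to $M_{\ub}$ and $\rh$ is normal to it within $\mathcal{AH}_s$, one computes $X_iP^{ij}\rh_j=K_{ij}X^i\rh^j=-j^X$, so this term equals $-\big[\int_{M_{\ub}}\Omega j^X\big]_{\ub_1}^{\ub_2}$. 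Splitting $D_j(\Omega X_i)=(D_j\Omega)X_i+\Omega\,D_jX_i$: for the first piece, $\Omega$ is a function of $R_M$ only, so $D_j\Omega=\Omega'(R_M)|\partial R_M|\,\rh_j$ and $(D_j\Omega)X_iP^{ij}=-\Omega'(R_M)|\partial R_M|\,j^X$; then the co-area identity in the label $R_M$ followed by the substitution $d\Omega=\Omega'(R_M)\,dR_M$ turns $-\int_{\mathcal{N}}(D_j\Omega)X_iP^{ij}\,dV$ into $\int_{\Omega_1}^{\Omega_2}\big(\int_{M_{\ub}}j^X\big)\,d\Omega$; for the second piece, symmetry of $P^{ij}$ gives $\Omega(D_jX_i)P^{ij}=\frac{1}{2}\Omega\,P^{ij}(\mathcal{L}_Xq)_{ij}$. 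Collecting these, $\big[\int_{M_{\ub}}\Omega j^X\big]_{\ub_1}^{\ub_2}-\int_{\Omega_1}^{\Omega_2}\big(\int_{M_{\ub}}j^X\big)\,d\Omega=-\frac{1}{2}\int_{\mathcal{N}}\Omega\,P^{ij}\mathcal{L}_Xq_{ij}\,dV$.

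Adding the two identities just derived reproduces \eqref{the first law int} verbatim. The computation is essentially mechanical once \Cref{lemma Hs+ri Hi V eqn} and the vacuum constraints \eqref{constraint eqn} are in hand; the part that needs care is purely the bookkeeping — the orientation of $\partial\mathcal{N}$ in the divergence theorem, the signs entering $X_iP^{ij}\rh_j=-j^X$, the nested changes of variable $\ub\mapsto R_M\mapsto r$ and $\ub\mapsto R_M\mapsto\Omega$ that convert the $N_r$- and $\Omega$-weighted bulk integrals over $\mathcal{AH}_s$ into the leaf and parameter integrals on the left-hand side, and checking that the foliation of $\mathcal{AH}_s$ by the $M_{\ub}$ leaves $\mathcal{N}$ with no lateral boundary, so that only the caps $M_{\ub_1}$ and $M_{\ub_2}$ contribute.
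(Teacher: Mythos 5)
Your proposal is correct and follows essentially the same route as the paper: an area balance law obtained by integrating the $N_r$-weighted identity of \Cref{lemma Hs+ri Hi V eqn} over the collar with the co-area formula, Gauss--Bonnet and the leafwise divergence theorem, plus an angular-momentum balance law obtained by contracting $H^i_V=0$ with $\O X$, integrating by parts, splitting $D_j(\O X_i)$ and using $X\cdot\rh=0$ together with the symmetry of $P^{ij}$. The sign bookkeeping ($X_iP^{ij}\rh_j=-j^X$, outward normals at the two caps) matches the paper's computation, so no gap remains.
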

\begin{proof}
    Let $\Delta :=\mathcal{AH}_s\cap \{\ub_1\le \ub \le \ub_2 \}$ be a portion of $\mathcal{AH}_s$ bounded between the MOTS $M_{\ub_1}$ and $M_{\ub_2}$.  By applying \Cref{lemma Hs+ri Hi V eqn}, we get
    \begin{equation*}
        0=\f12 \int_\Delta N_r (\widetilde{\mathcal{R}}-\chi^{ab} \chi_{ab} -2\tilde{\eta}^a\tilde{\eta}_a),
    \end{equation*}
    where we use the fact that the integral of  $\nab^a \tilde{\eta}_a$ is $0$ according to the divergence theorem on $M_{\ub}$. By further employing the Gauss-Bonnet theorem, for the first term on the right side of \eqref{the first law int}, we obtain
    \begin{equation}\label{area balance law}
       \f12\int_\Delta N_r(|\chi|^2+2|\tilde{\eta}|^2)=\f12\int_\Delta N_r \widetilde{\mathcal{R}}= \f12 \int_{r_1}^{r_2} dr \int_{M_{\ub}} \widetilde{\mathcal{R}} =4\pi (r_2-r_1).
    \end{equation}
    \vspace{2mm}
    
    To derive the remaining part on the right of \eqref{the first law int}, that concerns the angular momentum, we proceed to contract $H^i_V$ with $X$ and evaluate the integral
    \begin{equation*}
        \int_{\Delta} \O X_i H^i_V=\int_{\Delta} \O X_i \nab_j(K^{ij}-\tr K q^{ij}).
    \end{equation*}
    It follows from the integration by parts that
    \begin{align}\label{angular momentum integral}
         0=-\int_{M_{\ub_2}} \O j^X+\int_{M_{\ub_1}} \O j^X-\int_{\Delta} D_j(\O X)_i P^{ij}.
    \end{align}
    At the same time, with $\O=\O(R_M)$, we have $D_j \O=|\partial \O|\rh_j$. This implies
    \begin{equation*}
        D_j(\O X)_i=( D_j \O) X_i+\O D_j X_i=|\partial \O| X_i \rh_j+\O D_j X_i.
    \end{equation*}
    Substituting the above expression into \eqref{angular momentum integral} yields
    \begin{equation}\label{angular momentum balance law}
        \int_{M_{\ub_2}} \O j^X-\int_{M_{\ub_1}} \O j^X-\int_{\O_1}^{\O_2}d\O \int_{M_{\ub}}j^X=-\f12\int_{\Delta} \O P^{ij} \mL_X q_{ij},
    \end{equation}
    where we use the fact $X^i\rh_i=0$. The sum of \eqref{area balance law} and \eqref{angular momentum balance law} then gives the desired conservation law \eqref{the first law int}.
\end{proof}
\vspace{2mm}

We then move forward to prove the differential form of the first law. This was formulated in \cite{A-K2} by Ashtekar-Krishnan. The flux across $\Delta =\mathcal{AH}_s\cap \{\ub_1\le \ub \le \ub_2 \}$ is defined to be
$$\mathcal{F}_{\Delta}:=\f{1}{16\pi} \int_{\Delta} \l N_r (|\chi|^2+2|\tilde{\eta}|^2)-\O P^{ab} \mathcal{L}_X q_{ab} \r.$$
Assume there is a well-defined horizon energy $\tilde{E}_M$ such that, at each MOTS $M_{\ub}$, with $\mathcal{F}_{\Delta}$ defined above, it holds
\begin{equation*}
    \tilde{E}_M(\ub_2)-\tilde{E}_M(\ub_1)=\mathcal{F}_{\Delta}.
\end{equation*}
We also define the generalized angular momentum with respect to $X$ along $M_{\ub}$ to be
\begin{equation*}
	\tilde{J}^X_M:=\f{1}{8\pi}\int_{M_{\ub}}j^X.
\end{equation*}
With these notations, we can rewrite \eqref{the first law int} as
\begin{equation}\label{int first law new}
   \f12(r_2-r_2)+\O_2\tilde{J}^X_M(\ub_2)-\O_1\tilde{J}^X_M(\ub_1)-\int_{\O_1}^{\O_2} \tilde{J}^X_M d\O= \tilde{E}_M(\ub_2)-\tilde{E}_M(\ub_1).
\end{equation}
Taking a differentiation on both sides of  \eqref{int first law new}, by virtue of the fact
\begin{equation*}
    dr=\f{d r}{d R_M} (\f{d A_M}{d R_M})^{-1} dA_M=\f{1}{8\pi R_M}\f{dr}{d R_M} dA_M
\end{equation*}
and
\begin{equation*}
    d(\O \tilde{J}^X_M)-\tilde{J}^X_M d\O=\O d\tilde{J}^X_M,
\end{equation*}
we then derive the differential form of the first law along $\mathcal{AH}_s$:
\begin{equation*}
	d \tilde{E}_M:=\f{\overline{\kappa}}{8\pi}d A_M+\O d \tilde{J}^X_M.
\end{equation*}
Here $\overline{\kappa}=\f{1}{2R_M}\f{dr}{d R_M}$ is called the effective surface gravity in \cite{A-K1} and $\O $ corresponds to the angular velocity of MOTS $M_{\ub}$. 
\vspace{2mm}

With a specific choice of functions $r=r(R_M)$ and $\O=\O(R_M)$, the horizon energy $\tilde{E}_M$ can be expressed explicitly as suggested in \cite{A-K2, A-K1}. Noting that $\tilde{J}^X_M$ can be regarded as a function of $R_M$ along $\mathcal{AH}_s$, we set 
    \begin{equation*}
       \overline{\kappa}=\kappa_0(R_M, \tilde{J}^X_M), \quad \O= \O_0(R_M, \tilde{J}^X_M),
    \end{equation*}
    where
    \begin{equation*}
	\kappa_0(R, J):=\f{R^4-4J^2}{2R^3\sqrt{R^4+4J^2}} \quad \text{and} \quad \O_0(R, J):=\f{\sqrt{R^4+4J^2}}{2R}.
\end{equation*}
Thus $r=r(R_M)$ can be solved through the equation $\f{1}{2R_M}\f{dr}{d R_M}=\overline{\kappa}(R_M)$. Recall that the canonical horizon mass is defined to be 
\begin{equation*}
    M(R, J):=\f{1}{2R}\sqrt{R^4+4J^2}.
\end{equation*}
Then a straightforward check yields
\begin{equation*}
    d  M:=\f{\kappa_0}{8\pi}d A+\O_0 d J \qquad  \quad \text{with} \ A:=4\pi R^2.
\end{equation*}
According to the choice of $r=r(R_M)$ and $\O=\O(R_M)$, we then conclude
\begin{equation*}
   \tilde{E}_M=M(R_M, \tilde{J}^X_M).
\end{equation*}
\indent We further demonstrate the consistency between the first law along $\mathcal{AH}_s$ and the first law along the null piece $\mathcal{AH}_n$ as presented in \eqref{the first law null piece}. Notice that the angular momentum $\tilde{J}^X_M$ satisfies
\begin{equation*}
    \tilde{J}^X_M=\f{1}{8\pi} \int_{M_{\ub}}j^X=-\f{1}{8\pi}\int_{M_{\ub}}  K_{ij} X^i \rh^j=\f{1}{8\pi} \int_{M_{\ub}}  \zeta^a X_a.
\end{equation*}
For the angular momentum of MOTS $M_{\ub}$ along $\mathcal{AH}_n$, it reads
\begin{equation*}
    J_{M}:=-\f{1}{8\pi}\int_{M_{\ub}} \tilde{\omega}(X).
\end{equation*}
Since $X\in TM_{\ub}$, we deduce
\begin{equation*}
    \tilde{\omega}(X)=-\f12\bfg(\bfD_{X} l, n)=-\zeta^a X_a.
\end{equation*}
This leads to 
\begin{equation*}
    J_{M}=\f{1}{8\pi}\int_{M_{\ub}}  \zeta^a X_a=\tilde{J}^X_M.
\end{equation*}
Hence the definition of the angular momentum of MOTS $M_{\ub}$ along $\mathcal{AH}_s$ matches with the corresponding one defined on $\mathcal{AH}_n$. Merging these together,  we then get the unified form
    \begin{equation*}
	d {E}_M=\f{\kappa}{8\pi}d A_M+\O d {J}_M.
\end{equation*}

\subsection{The second law}\label{Subsec: the second law}
In black hole thermodynamics, the second law asserts that the entropy is a non-decreasing function of towards the future.  Recall that $A_M$ is defined as the area of MOTS $M_{\ub}=\{r=R(\ub, \theta_1, \theta_2)  \}$ along $\Hb_{\ub}$ and it depends only on $\ub$. Defining  $A_M(\ub)$ as the entropy as in \cite{An: AH, An-Han, A-K1}, in below we prove that $A_M(\ub)$ is non-decreasing with respect to $\ub$.
\vspace{2mm}

Notice that at any point on the MOTS $M_{\ub}$, we have a null frame $\{e'_1, e'_2, e_3', e_4'\}$ with $TM_{\ub}=\text{span} \{ e'_1, e'_2 \}$ and  the outgoing (with respect to $e_4'$) and incoming (with respect to $e_3'$) null expansions of $M_{\ub}$ obey $\trch'=0$ and $\trchb'<0$, respectively.
Choose a non-zero vector $X$ tangent to $\mathcal{AH}$ and normal to the MOTS $M_{\ub}$. When it points outwards, since $X\in \text{span} \{ e'_3, e'_4 \}$, we can express it as 
\begin{equation}\label{X e3 e4 expression}
	X=\a e_4'+\b e_3' \qquad \text{with} \ \a\ge 0.
\end{equation}
Note that  $\a\neq 0$ and this is because the incoming null direction $e_3'$ is transverse to $\mathcal{AH}$. Recalling  \eqref{new frames}, we now have
 \begin{equation*}
e_3'=e_3  \quad  \text{and} \quad e_4'=e_4+2fe^a(R) e_a+f^2|\nab R|^2 e_3.
\end{equation*}
 A direct check gives 
\begin{equation*}
    e_3'(\ub)=e_3(\ub)=0, \quad e_4'(\ub)=e_4(\ub)>0.
\end{equation*}
With $X$ acting on $A_M$, we get
\begin{equation}\label{eqn X(AM) 1}
	X(A_M)=c\f{d A_M}{d \ub} \qquad \text{with} \quad c=X(\ub)=\a e_4'(\ub)>0.
\end{equation}
On the other hand, a straightforward computation implies 
\begin{equation}\label{eqn X(AM) 2}
	\begin{split}
			X(A_M)=\int_{M_{\ub}} \div_{M_{\ub}} X=\int_{M_{\ub}} \a \trch'+\b\trchb' 	=\int_{M_{\ub}} \b\trchb'.
	\end{split}
\end{equation}
Within the null piece $\mathcal{AH}_n$,  note that $X$ must be null, and according to \eqref{X e3 e4 expression} we thus have $\b\equiv0$. Applying \eqref{eqn X(AM) 2}, this yields $X(A_M)=0$ and via \eqref{eqn X(AM) 1} this further implies $\f{dA_M}{d\ub}=0$ in the null piece $\mathcal{AH}_n$.  As for the spacelike piece $\mathcal{AH}_s$, observing that $\b<0$ (since $X$ is spacelike) and $\trchb'<0$, combining \eqref{eqn X(AM) 1} and \eqref{eqn X(AM) 2} we hence deduce that $\f{dA_M}{d\ub}>0$.  Summarizing the above analysis, we then confirm the second law of black hole thermodynamics along $\mathcal{AH}$:
\begin{proposition}\label{The second law}
	Defining $A_M(\ub)$ (the area of MOTS $M_{\ub}$) as the entropy, along $\mathcal{AH}$ it is a non-decreasing function with respect to $\ub$. More specifically, we have 
 $$\f{dA_M}{d\ub}=0 \, \, \text{in the null piece} \,\,  \mathcal{AH}_n \quad \text{and} \quad \f{dA_M}{d\ub}>0 \,\, \text{in the spacelike piece} \,\, \mathcal{AH}_s.$$
\end{proposition}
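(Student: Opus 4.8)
\emph{Plan of proof.} The strategy is to differentiate the area $A_M$ along a nonzero vector field $X$ that is tangent to $\mathcal{AH}$ and normal to the leaf $M_{\ub}$, and to extract the sign of $\tfrac{dA_M}{d\ub}$ from the first variation of area together with the two defining features of a MOTS, namely $\trch'=0$ and $\trchb'<0$ on $M_{\ub}$. Concretely, I would work with the null frame $\{e_1',e_2',e_3',e_4'\}$ adapted to $M_{\ub}$ from \eqref{new frames}, for which $TM_{\ub}=\text{span}\{e_1',e_2'\}$, $\trch'=0$, and (recall $\trchb'=\trchb$ on $M_{\ub}$, as in the computation preceding \eqref{estimate for Ric}) $\trchb'<0$ by the ambient estimates. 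Choosing $X$ to point towards increasing $\ub$ and writing $X=\a e_4'+\b e_3'$ as in \eqref{X e3 e4 expression}, one has $\a\neq 0$ because $e_3'=e_3$ is transverse to $\mathcal{AH}$ (the incoming geodesics ruling $\Hb_{\ub}$ cross the graph $\mathcal{AH}=\{r=R(\ub,\theta_1,\theta_2)\}$ transversally in the $r$-direction), so after normalising we may take $\a>0$. Since $e_3'(\ub)=e_3(\ub)=0$ and $e_4'(\ub)=e_4(\ub)>0$, this gives $c:=X(\ub)=\a\,e_4'(\ub)>0$ and hence $X(A_M)=c\,\tfrac{dA_M}{d\ub}$ as in \eqref{eqn X(AM) 1}; it therefore suffices to pin down the sign of $X(A_M)$.

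For that I would invoke the first variation of area in the form $X(A_M)=\int_{M_{\ub}}\div_{M_{\ub}}X$, which is legitimate because $X$ is transverse to the foliation $\{M_{\ub'}\}$ of $\mathcal{AH}$. Expanding $\div_{M_{\ub}}X=\sum_{a=1,2}\bfg(\bfD_{e_a'}X,e_a')$ and using that $\{e_a'\}$ is orthonormal and tangent to $M_{\ub}$ — so the tangential derivatives $(e_a'\a)\,e_4'$ and $(e_a'\b)\,e_3'$ are killed against $e_a'$ — leaves $\div_{M_{\ub}}X=\a\,\trch'+\b\,\trchb'$, whence $X(A_M)=\int_{M_{\ub}}\b\,\trchb'$ after substituting $\trch'=0$; this is \eqref{eqn X(AM) 2}. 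Finally I would read off $\b$ from $\bfg(e_3',e_4')=-2$, which gives $\bfg(X,X)=-4\a\b$: on $\mathcal{AH}_n$ the vector $X$ is null, so $\a\b=0$, forcing $\b=0$ (as $\a>0$), hence $X(A_M)=0$ and $\tfrac{dA_M}{d\ub}=0$; on $\mathcal{AH}_s$ the vector $X$ is spacelike, so $\bfg(X,X)>0$ forces $\b<0$, and then $\b\,\trchb'>0$ pointwise, so $X(A_M)=\int_{M_{\ub}}\b\,\trchb'>0$, and since $c>0$ we conclude $\tfrac{dA_M}{d\ub}>0$.

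The only substantive input is the strict inequality $\trchb'<0$ on $M_{\ub}$, which is \emph{not} a consequence of the MOTS equation $\trch'=0$ by itself and must be taken from the concrete geometry: in the Klainerman–Szeftel setting it follows from $\trchb'=\trchb=-\tfrac{2}{r}+O\!\big(\varepsilon_0\ub^{-1-\delta_{dec}}\big)<0$ on $r\in[m_0,3m_0]$, and the analogous strict negativity of the incoming expansion holds in the short-pulse and naked-singularity constructions of \cite{An-Han, An naked singularity}. Everything else — the transversality $\a\neq 0$, the matching of the orientation of $X$ with increasing $\ub$, and the vanishing of the tangential terms in $\div_{M_{\ub}}X$ — is routine bookkeeping built directly into the graph description of $\mathcal{AH}$.
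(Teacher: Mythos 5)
Your proposal is correct and follows essentially the same route as the paper's proof: the decomposition $X=\a e_4'+\b e_3'$ with $\a>0$ from transversality of $e_3'$, the identity $X(A_M)=c\,\tfrac{dA_M}{d\ub}$ with $c=X(\ub)>0$, the first variation $X(A_M)=\int_{M_{\ub}}(\a\,\trch'+\b\,\trchb')=\int_{M_{\ub}}\b\,\trchb'$, and the sign of $\b$ read off from the causal character of $X$. Your explicit use of $\bfg(X,X)=-4\a\b$ and the justification of $\trchb'=\trchb<0$ from the Klainerman--Szeftel estimates are slightly more detailed than the paper's wording but amount to the same argument.
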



\begin{thebibliography}{99} 
\bibitem{A: Penrose} S. Alexakis, \textit{The Penrose inequality on perturbations of the Schwarzschild exterior}, arXiv preprint arXiv:1506.06400 (2015).


\bibitem{A-L} X. An, J. Luk, \textit{Trapped surfaces in vacuum arising dynamically from mild incoming radiation}, Adv. Theor. Math. Phys., 21 (2017), 1-120.

\bibitem{An: AH} X. An, \textit{Emergence of Apparent Horizon in Gravitational Collapse}, Ann. PDE, 6 (2020): 1-89. 

\bibitem{An-Han} X. An, Q. Han, \textit{Anisotropic dynamical horizons arising in gravitational collapse}, arXiv preprint arXiv:2010.12524 (2020).


\bibitem{An naked singularity} X. An, \textit{Naked singularity censoring with anisotropic apparent horizon}, submitted, 2023. 


\bibitem{A-E} L. Andersson, M. Eichmair, J. Metzger, \textit{Jang’s equation and its applications to marginally trapped surfaces}, Complex Analysis and Dynamical Systems, IV: Part 2 (2011).

\bibitem{A-M} L. Andersson, M. Mars, W.Simon, \textit{Local existence of dynamical and trapping horizons}, Phys. Rev. Lett., 95.11 (2005): 111102.

\bibitem{Andersson: stability} L. Andersson, M. Mars, W.Simon, \textit{Stability of marginally outer trapped surfaces and existence of marginally outer trapped tubes}. Adv. Theor. Math. Phys., 12.4 (2008): 853-888.


\bibitem{A-Met} L. Andersson, J. Metzger, \textit{The area of horizons and the trapped region}, Comm. Math. Phys., 290.3 (2009): 941-972.



\bibitem{A-G} A. Ashtekar, G. Galloway, \textit{Some uniqueness results for dynamical horizons}, Adv. Theor. Math. Phys., 9.1 (2005): 1-30.


\bibitem{A-G-H} L. Andersson, G. J. Galloway and R. Howard, \textit{A strong maximum principle for weak solutions of quasi-linear elliptic equations with applications to Lorentzian and Riemannian geometry}, Commun. in Pure and Applied Math., 51.6 (1998): 581-624.


\bibitem{A-F-K} A. Ashtekar, S. Fairhurst, B. Krishnan, \textit{Isolated horizons: Hamiltonian evolution and the first law}, Phys. Rev., 62.10 (2000): 104025.




\bibitem{A-K2} A. Ashtekar, B. Krishnan, \textit{Dynamical horizons and their properties}, Phys. Rev. D, 68.10 (2003): 104030.

\bibitem{A-K1} A Ashtekar, B Krishnan, \textit{Isolated and dynamical horizons and their applications}, Living Rev. Relativ., 7 (2004): 1-91.

\bibitem{Chr:book} D. Christodoulou, \textit{The Formation of Black Holes in General Relativity}, Monographs in Mathematics, European Mathematical Soc, (2009). 

\bibitem{Chr-Kl} D. Christodoulou, S. Klainerman,\textit{The global nonlinear stability of the Minkowski space}, Princeton mathematical series 41, (1993).

\bibitem{D-H-R} M. Dafermos, G. Holzegel, I. Rodnianski, \textit{A scattering theory construction of dynamical vacuum black holes}, arXiv preprint arXiv:1306.5364 (2013).

\bibitem{D-H-R-T} M. Dafermos, G. Holzegel, I. Rodnianski, M. Taylor,  \textit{The non-linear stability of the Schwarzschild family of black holes},  arXiv preprint arXiv:2104.08222 (2021).

\bibitem{E} M. Eichmair, \textit{The plateau problem for marginally trapped surfaces}, J. Differential Geom, 83.3 (2009): 551-584.

\bibitem{Galloway} G. Galloway,  \textit{Null geometry and the Einstein equations}, The Einstein Equations and the Large Scale Behavior of Gravitational Fields: 50 Years of the Cauchy Problem in General Relativity (pp. 379-400). Basel: Birkhäuser Basel, (2004).

\bibitem{G-T} D. Gilbarg, N. S. Trudinger, \textit{Elliptic Partial Differential Equations of Second Order}, Grundlehren, Vol. 224, Springer-Verlag, Berlin, 1983.

\bibitem{H-L: elliptic pde textbook} Q. Han, F-H. Lin, \textit{Elliptic partial differential equations (2nd ed.)}, Courant lecture notes, Vol. 1, (2011).

\bibitem{K-L-R} S. Klainerman, J. Luk, I. Rodnianski, \textit{A fully anisotropic mechanism for formation of trapped surfaces in vacuum}, Invent. Math., 198 (2014): 1-26.

\bibitem{K-S} S. Klainerman, J. Szeftel, \textit{Global nonlinear stability of Schwarzschild spacetime under polarized perturbations}, preprint (2017), arXiv:1711.07597. 

\bibitem{Le} P. Le, \textit{The intersection of a hyperplane with a lightcone in the Minkowski spacetime}, J. Differential Geom., 109, (2018), no.3, 497-507. 

\bibitem{Le1} P. Le, \textit{The Perturbation Theory of Null Hypersurfaces and the Weak Null Penrose Inequality}, Diss. ETH Zurich, (2018).

\bibitem{Le2} P. Le, \textit{Global regular null hypersurfaces in a perturbed Schwarzschild black hole exterior}, Ann. PDE 8.2 (2022): 13.

\bibitem{M} J. Metzger, \textit{Blowup of Jang’s equation at outermost marginally trapped surfaces}, Comm. Math. Phys., 294.1 (2010): 61-72.


\bibitem{S-Y} R. Schoen, S. T. Yau, \textit{The existence of a black hole due to condensation of matter}, Comm. Math. Phys., 90 (1983): 575-579.

\bibitem{Y} S. T. Yau, \textit{Geometry of three manifolds and existence of black hole due to boundary effect}, Adv. Theor. Math. Phys., 5.4 (2001): 755-767.

\bibitem{W} C. Williams, \textit{Asymptotic Behavior of Spherically Symmetric Marginally Trapped Tubes}, Ann. Henri Poincare. Vol. 9. SP Birkhäuser Verlag Basel, (2008).

\end{thebibliography}
\end{document}